\newtheorem{theorem}{Theorem}[section]
\newtheorem{lemma}[theorem]{Lemma}
\newtheorem{proposition}[theorem]{Proposition}
\newtheorem{corollary}[theorem]{Corollary}
\theoremstyle{definition}
\newtheorem{definition}[theorem]{Definition}
\newtheorem{example}[theorem]{Example}
\theoremstyle{remark}
\newtheorem{remark}[theorem]{Remark}
\numberwithin{equation}{section}
\let\epsilon\varepsilon
\let\phi\varphi
\newcommand{\bc}{\mathbf{c}}
\newcommand{\bo}{\mathbf{0}}
\newcommand{\bw}{\mathbf{w}}
\newcommand{\bx}{\mathbf{x}}
\newcommand{\by}{\mathbf{y}}
\newcommand{\bz}{\mathbf{z}}
\newcommand{\bA}{\mathbb{A}}
\newcommand{\bC}{\mathbb{C}}
\newcommand{\bN}{\mathbb{N}}
\newcommand{\bQ}{\mathbb{Q}}
\newcommand{\bR}{\mathbb{R}}
\newcommand{\bT}{\mathbb{T}}
\newcommand{\bZ}{\mathbb{Z}}
\newcommand{\cA}{\mathcal{A}}
\newcommand{\cB}{\mathcal{B}}
\newcommand{\cE}{\mathcal{E}}
\newcommand{\cF}{\mathcal{F}}
\newcommand{\cH}{\mathcal{H}}
\newcommand{\cS}{\mathcal{S}}
\newcommand{\cT}{\mathcal{T}}
\newcommand{\fg}{\mathfrak{g}}
\newcommand{\Cstar}{{\ensuremath{C^\ast}}}
\newcommand{\G}{\ensuremath{G}}
\newcommand{\GA}{\ensuremath{G}-\ensuremath{\cA}}
\newcommand{\GCstar}{\ensuremath{G}-\ensuremath{C^\ast}}
\newcommand{\Gstar}{\ensuremath{G}-\ensuremath{\ast}}
\newcommand{\Star}{\ensuremath{\ast}}
\newcommand{\inj}{\hookrightarrow}
\newcommand{\iso}{\overset\sim\to}
\newcommand{\surj}{\twoheadrightarrow}
\DeclareMathOperator{\Aut}{Aut}
\DeclareMathOperator{\di}{d}
\DeclareMathOperator{\Der}{Der}
\DeclareMathOperator{\Dom}{Dom}
\DeclareMathOperator{\End}{End}
\DeclareMathOperator{\GL}{GL}
\DeclareMathOperator{\Hom}{Hom}
\DeclareMathOperator{\id}{Id}
\DeclareMathOperator{\im}{Im}
\DeclareMathOperator{\Ker}{Ker}
\DeclareMathOperator{\Lie}{Lie}
\DeclareMathOperator{\Ran}{Ran}
\DeclareMathOperator{\Skew}{Skew}
\DeclareMathOperator{\Span}{Span}
\DeclareMathOperator{\SO}{SO}
\DeclareMathOperator{\SU}{SU}
\DeclareMathOperator{\Tr}{Tr}
\DeclareMathOperator{\U}{U}
\DeclarePairedDelimiterX\abs[1]\lvert\rvert{
	\ifblank{#1}{\:\cdot\:}{#1}
}
\DeclarePairedDelimiterX\norm[1]\lVert\rVert{
	\ifblank{#1}{\:\cdot\:}{#1}
}
\DeclarePairedDelimiterX\hp[2](){
	\ifblank{#1}{\ifblank{#2}{\:\cdot\:,\:\cdot\:}{\:\cdot\:,#2}}{\ifblank{#2}{#1,\:\cdot\:}{#1,#2}}
}
\DeclarePairedDelimiterX\ip[2]\langle\rangle{
	\ifblank{#1}{\ifblank{#2}{\:\cdot\:,\:\cdot\:}{\:\cdot\:,#2}}{\ifblank{#2}{#1,\:\cdot\:}{#1,#2}}
}
\newcommand{\dual}[1]{\widehat{#1}}
\newcommand{\e}[1]{e\!\left({#1}\right)}
\newcommand{\iot}[1]{{\iota\!\left({#1}\right)}}
\newcommand{\alphast}{\left(\alpha_\ast\right)}
\providecommand\given{}
\newcommand\setSymbol[1][]{\nonscript\:#1\vert\nonscript\:\allowbreak}
\DeclarePairedDelimiterX\set[1]\{\}{%
\renewcommand\given{\setSymbol[\delimsize]}
#1
}
\newcommand{\inv}[1]{{#1}^{-1}}
\newcommand{\rest}[2]{\left.{#1}\right|_{#2}}
\newcommand\dif{\mathop{}\!\mathrm{d}}
\newcommand{\cat}[1]{\mathbf{#1}}
\newcommand{\op}{\mathrm{op}}
\newcommand{\nd}{\mathrm{nd}}
\newcommand{\p}[1]{{{#1}^\prime}}
\newcommand{\du}{\mathrm{d}}
\newcommand{\fin}{\mathrm{fin}}
\begin{document}

\title[Connes--Landi deformations]{A reconstruction theorem for Connes--Landi deformations of commutative spectral triples}

\author{Branimir \'Ca\'ci\'c}
\address{Department of Mathematics, Texas A\&M University, College Station, TX 77843-3368}
\curraddr{Department of Mathematics \& Statistics, University of New Brunswick, PO Box 4400, Fredericton, NB, E3B 5A3, Canada}
\email{bcacic@unb.ca}

\subjclass[2010]{Primary 58B34; Secondary 46L55, 46L65, 46L8, 81R60}

\keywords{noncommutative geometry, spectral triple, strict deformation quantisation, Connes--Landi deformation, isospectral deformation, toric noncommutative manifold}


\dedicatory{Dedicated to Marc Rieffel on the occasion of his 75th birthday}

\begin{abstract}
We formulate and prove an extension of Connes's reconstruction theorem for commutative spectral triples to so-called Connes--Landi or isospectral deformations of commutative spectral triples along the action of a compact Abelian Lie group \(G\), also known as toric noncommutative manifolds. In particular, we propose an abstract definition for such spectral triples, where noncommutativity is entirely governed by a deformation parameter sitting in the second group cohomology of the Pontrjagin dual of \(G\), and then show that such spectral triples are well-behaved under further Connes--Landi deformation, thereby allowing for both quantisation from and dequantisation to \(G\)-equivariant abstract commutative spectral triples. We then use a refinement of the Connes--Dubois-Violette splitting homomorphism to conclude that suitable Connes--Landi deformations of commutative spectral triples by a rational deformation parameter are almost-commutative in the general, topologically non-trivial sense.
\end{abstract}

\maketitle

\section{Introduction}\label{intro}

Just as the \(2\)-torus can be deformed along its translation action on itself to obtain the noncommutative \(2\)-tori, whether as \Cstar-algebras, Fr\'echet pre-\Cstar-algebras, or spectral triples, so too can more general smooth manifolds be deformed along an action of an Abelian Lie group to yield noncommutative \Cstar-algebras, Fr\'echet pre-\Cstar-algebras, or spectral triples. In the case of \Cstar-algebras or Fr\'echet pre-\Cstar-algebras, this process is Rieffel's \emph{strict deformation quantisation}~\cite{Rieffel}, whilst in the case of spectral triples and compact Abelian Lie groups, this process is Connes and Landi's \emph{isospectral deformation}~\cite{CL}, which, following Yamashita~\cite{Ya}, we call \emph{Connes--Landi deformation}. In fact, as was first observed by Sitarz~\cite{Sitarz} and V\'arilly~\cite{Varilly}, Connes--Landi deformation can be viewed as none other than the adaptation to spectral triples of strict deformation quantisation along the action of a compact Abelian Lie group.

In this paper, we formulate and prove an extension of Connes's reconstruction theorem for commutative spectral triples~\cite{Con13} to spectral triples that, \emph{a posteriori}, are Connes--Landi deformations along the action of a compact Abelian Lie group \(G\) of spectral triples of the form \((C^\infty(X),L^2(X,E),D)\), where \(X\) is a compact oriented Riemannian \(G\)-manifold and \(D\) is a \G-invariant essentially self-adjoint Dirac-type operator on a \G-equivariant Hermitian vector bundle \(E \to X\), i.e.,  \emph{toric noncommutative manifolds}. More precisely, we propose a suitable abstract definition of \emph{\(\theta\)-commutative spectral triples}, which closely resembles Connes's abstract definition of commutative spectral triple~\cite{Con96,Con13} except for the specification of deformation parameter \(\theta\) in the second group cohomology \(H^2(\dual{G},\bT)\) of the Pontrjagin dual \(\dual{G}\) of \(G\), which completely governs the failure of commutativity; in particular, \(0\)-commutative spectral triples are just \(G\)-equivariant abstract commutative spectral triples. Then, we show that the Connes--Landi deformation of a \(\theta\)-commutative spectral triple by \(\theta^\prime \in H^2(\dual{G},\bT)\) is itself \((\theta+\theta^\prime)\)-commutative, thereby facilitating both quantisation from and dequantisation to \(G\)-equivariant commutative spectral triples, to which we can apply Connes's result. 

In addition to extending Connes's reconstruction theorem, we also clarify a number of aspects of the general theory of Connes--Landi deformation. In particular, we use a refinement of the Connes--Dubois-Violette splitting homomorphism~\cite{CDV} to show that for \(\theta \in H^2(\dual{G},\bT)\) \emph{rational}, viz, of finite order in the group \(H^2(\dual{G},\bT)\), sufficiently well-behaved \(\theta\)-commutative spectral triples are almost-commutative in the general, topologically non-trivial sense proposed by the author~\cite{Ca12,Ca13} and studied by Boeijink and Van Suijlekom~\cite{BVS} and by Boeijink and Van den Dungen~\cite{BVD}. This generalises the now-folkloric example of rational noncommutative \(2\)-tori~\cite{HKS}.

We begin in \S \ref{sec:1} by reviewing Rieffel's theory of strict deformation quantisation  of Fr\'echet pre-\Cstar-algebras in the case of the action of a compact Abelian Lie group~\cite[pp.\ 19-22]{Rieffel}. In particular, we give a detailed, constructive account of the deformation of \G-equivariant finitely generated projective modules over \G-equivariant Fr\'echet pre-\Cstar-algebras, generalising existing results on the deformation of \G-equivariant vector bundles over \G-manifolds~\cite{CDV,BLVS}. In fact, we obtain an explicit formula for the projection onto the deformation of a \G-equivariant finitely generated projective module corresponding to a given \(G\)-invariant projection onto the original module, generalising the concrete examples studied by Connes and Landi~\cite[\S\S II--III]{CL} and by Landi and Van Suijlekom~\cite{LVS05}.

Next, in \S \ref{sec:2}, we recall the general theory of \emph{Connes--Landi deformations} or \emph{isospectral deformations}, first defined by Connes and Landi for \(\bT^2\)-actions on concrete commutative spectral triples~\cite{CL} and then extended by Yamashita to arbitrary \(\bT^2\)-equivariant spectral triples~\cite{Ya}. As Sitarz~\cite{Sitarz} and V\'arilly~\cite{Varilly} first showed, this amounts to a simultaneous strict deformation quantisation of the algebra \(\cA\) of a \(G\)-equivariant spectral triple \((\cA,H,D)\) and of its \(G\)-equivariant representation on the Hilbert space \(H\). In particular, we clarify the role of the group \(H^2(\dual{G},\bT)\) in parametrizing Connes--Landi deformations of a fixed \(G\)-equivariant spectral triple up to \(G\)-equivariant unitary equivalence, and then completely generalise the isomorphisms amongst the Morita--Rieffel equivalences of smooth noncommutative \(n\)-tori parametrized by the densely-defined \(\SO(n,n\vert\bZ)\)-action on the universal cover \(\Skew(n,\bR) \cong \bR^{n(n-1)/2}\) of \(H^2(\bZ^n,\bT) \cong \bT^{n(n-1)/2}\), as introduced by Rieffel and Schwarz~\cite{RS} and studied by Elliott and Li~\cite{EL}.

At last, in \S \ref{sec:3}, we formulate and prove our extension of Connes's reconstruction theorem for commutative spectral triples~\cite[Theorem 1.1]{Con13} to Connes--Landi deformations of \G-equivariant commutative spectral triples. First, by analogy with Connes's abstract definition of commutative spectral triple~\cite{Con96,Con13}, we propose an abstract definition of spectral triples that, \emph{a posteriori}, will be Connes--Landi deformations of \G-equivariant concrete commutative spectral triples, where noncommutativity is entirely governed by a deformation parameter \(\theta \in H^2(\dual{G},\bT)\) through its  associated alternating bicharacter \(\iot{\theta} \in \Hom(\wedge^2\dual{G},\bT)\).

\begin{definition}
Let \((\cA,H,D)\) be a \(G\)-equivariant regular spectral triple, let \(\theta \in H^2(\dual{G},\bT)\), and let \(p \in \bN\). We shall call \((\cA,H,D)\) a \(p\)-dimensional \emph{\(\theta\)-commutative} spectral triple if the following conditions all hold:
\begin{enumerate}\setcounter{enumi}{-1}
	\item \emph{Order zero}: The algebra \(\cA\) is \emph{\(\theta\)-commutative}, viz, \[\forall\bx,\by\in\dual{G}, \; \forall a_\bx \in \cA_\bx, \; \forall b_\by \in \cA_\by, \quad b_\by a_\bx = \e{\iot{\theta}(\bx,\by)} a_\bx b_\by,\] so that the \G-equivariant \Star-representation \(L : \cA \to B(H)\) of \(\cA\) can be deformed to a \G-equivariant \Star-homomorphism  \(R :  \cA^\op \to B(H)\),  such that for all \(a\), \(b \in \cA\), \([L(a),R(b)] = 0\).
	\item \emph{Dimension}: The eigenvalues \(\set{\lambda_n}_{n \in \bN}\) of \((D^2+1)^{-1/2}\), counted with multiplicity and arranged in decreasing order, satisfy \(\lambda_n = O(n^{-1/p})\) as \(n \to +\infty\).
	\item \emph{Order one}: For all \(a\), \(b\ \in \cA\), \([[D,L(a)],R(b)] = 0\).
	\item \emph{Orientability}: Define \(\epsilon_\theta : (\cA^\fin)^{\otimes(p+1)} \to (\cA^\fin)^{\otimes(p+1)}\) by
\begin{multline*}
	\epsilon_\theta(a_0 \otimes a_1 \otimes \cdots \otimes a_p)\\ \coloneqq \frac{1}{p!} \sum_{\pi \in S_p} \exp\left(2\pi i \mkern-18mu \sum_{\substack{i < j\\ \pi(i)>\pi(j)}} \mkern-18mu \iot{\theta}(\bx_{\pi(i)},\bx_{\pi(j)})\right)(-1)^\pi a_0 \otimes a_{\pi(1)} \otimes \cdots \otimes a_{\pi(p)}
\end{multline*}
for isotypic elements \(a_0 \in \cA_{\bx_0}\), \ldots, \(a_p \in \cA_{\bx_p}\) of \(\cA\), and say that \(\bc \in (\cA^\fin)^{\otimes(p+1)}\) is \emph{\(\theta\)-antisymmetric} if \(\epsilon_\theta(\bc) = \bc\). Define \(\pi_D : \cA^{\otimes(p+1)} \to B(H)\) by
\[
	\forall a_0,\; a_1, \dotsc, a_p \in \cA, \quad \pi_D(a_0 \otimes a_1 \otimes \cdots \otimes a_p) \coloneqq L(a_0)[D,L(a_1)] \cdots [D,L(a_p)].
\]
There exists a \(G\)-invariant \(\theta\)-antisymmetric \(\bc \in (\cA^\fin)^{\otimes(p+1)}\), such that \(\chi := \pi_D(\bc)\) is a self-adjoint unitary, satisfying
\[
	\forall a \in \cA, \quad L(a)\chi = \chi L(a), \quad [D,L(a)] \chi = (-1)^{p+1} \chi [D,L(a)].
\]
	\item \emph{Finiteness and absolute continuity}: The subspace \(\cH_\infty := \cap_k \Dom \abs{D}^k\) defines a \(G\)-equivariant finitely-generated projective\ right \(\cA\)-module admitting a \(G\)-equivariant Hermitian metric \(\hp{}{}\), such that
\[
	\forall \xi, \; \eta \in \cH_\infty \quad \ip{\xi}{\eta} = \Tr_\omega\left( R\left(\hp{\xi}{\eta}\right) (D^2+1)^{-p/2}\right),
\]
where \(\Tr_\omega\) is a fixed Dixmier trace.
	\item \emph{Strong regularity}: For all \(T \in \End_{\cA^\op}(\cH_\infty)\),  \(T \in \cap_k \Dom ([\abs{D},\cdot]^k)\).
\end{enumerate}
\end{definition}

\noindent Observe, in particular, that a \(0\)-commutative spectral triple is precisely a \(G\)-equivar\-i\-ant abstract commutative spectral triple.

Our main result is actually the following, which both guarantees that Connes--Landi deformations of \G-equivariant commutative spectral triples do, indeed, satisfy our proposed abstract definition, and yields the promised extension of Connes's reconstruction theorem by facilitating dequantisation of such spectral triples to commutative spectral triples.

\begin{theorem}\label{mainthm}
Let \((\cA,H,D)\) be a \(p\)-dimensional \(\theta\)-commutative spectral triple and let \(\theta^\prime \in H^2(\dual{G},\bT)\). Then the Connes--Landi deformation \((\cA_\theta,H,D)\) of \((\cA,H,D)\) by \(\theta^\prime\) defines a \(p\)-dimensional \((\theta+\theta^\prime)\)-commutative spectral triple.
\end{theorem}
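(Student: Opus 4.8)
The plan is to verify the six conditions of the definition, one at a time, for the deformed triple \((\cA_{\theta^\prime},H,D)\). Throughout I will use the structural picture of Connes--Landi deformation recalled in \S\ref{sec:2}: passing from \((\cA,H,D)\) to \((\cA_{\theta^\prime},H,D)\) changes neither the Hilbert space \(H\), nor the operator \(D\), nor the \(G\)-action on \(H\), nor the grading of everything in sight into \(G\)-isotypic components --- the deformation is isospectral --- and alters the algebra and its representations only in a transparent, controlled way. Fixing a normalised \(2\)-cocycle \(\sigma^\prime\) on \(\dual{G}\) representing \(\theta^\prime\), normalised so that \(\sigma^\prime(\by,\bx)/\sigma^\prime(\bx,\by) = \e{\iot{\theta^\prime}(\bx,\by)}\), the deformed product is \(a_\bx \cdot b_\by = \sigma^\prime(\bx,\by)\,a_\bx b_\by\) on isotypic components, the deformed left representation has the form \(L_{\theta^\prime}(a_\bx) = L(a_\bx)\,\Theta^\prime_\bx\) for a fixed unitary representation \(\bx \mapsto \Theta^\prime_\bx\) of \(\dual{G}\) built from the \(G\)-action and \(\sigma^\prime\), and \(R\) deforms correspondingly to a \(G\)-equivariant \Star-homomorphism \(R_{\theta^\prime}\colon\cA_{\theta^\prime}^\op \to B(H)\). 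Since \(D\) and the \(G\)-action commute, each \(\Theta^\prime_\bz\) commutes with \(D\), so that \([D,L_{\theta^\prime}(a_\bx)] = [D,L(a_\bx)]\,\Theta^\prime_\bx\), and \(L_{\theta^\prime}(a_\bx)\), \([D,L_{\theta^\prime}(a_\bx)]\), \(R_{\theta^\prime}(b_\by)\) are homogeneous of the same weights as their undeformed counterparts; moreover each \(\Theta^\prime_\bz\) passes a weight-homogeneous operator at the cost of a scalar phase, and \(\Theta^\prime_{\bx_1}\cdots\Theta^\prime_{\bx_k}=1\) whenever \(\bx_1+\cdots+\bx_k=0\). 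This bookkeeping underlies every step below. The regularity and \(G\)-equivariance of \((\cA_{\theta^\prime},H,D)\), and the dimension axiom (1) with the same exponent \(p\), are then immediate from \S\ref{sec:2}, since \(D\), \(\abs{D}\) and \(H\) are unchanged.

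I would next dispose of the order-zero, order-one and commutant conditions. A direct cocycle computation on isotypic components, using \(b_\by a_\bx = \e{\iot{\theta}(\bx,\by)}a_\bx b_\by\), additivity of \(\iota\), and the normalisation of \(\sigma^\prime\), yields \(b_\by \cdot a_\bx = \e{\iot{\theta+\theta^\prime}(\bx,\by)}\,a_\bx\cdot b_\by\), so \(\cA_{\theta^\prime}\) is \((\theta+\theta^\prime)\)-commutative; together with the existence of \(R_{\theta^\prime}\) from \S\ref{sec:2} this is axiom (0). The vanishing of \([L_{\theta^\prime}(a),R_{\theta^\prime}(b)]\) (the remainder of axiom (0)) and of \([[D,L_{\theta^\prime}(a)],R_{\theta^\prime}(b)]\) (axiom (2)) then follows from \([L(a),R(b)]=0\) and \([[D,L(a)],R(b)]=0\) by the now-standard Connes--Landi twisting argument: on isotypic components, replace each deformed operator by its undeformed counterpart times a \(\Theta^\prime\), commute the \(\Theta^\prime\)'s past the weight-homogeneous factors, and observe that the scalar phases so produced on the two sides agree and that the residual operator identity is exactly the undeformed one; alternatively, this compatibility of the deformed left and right representations is already part of \S\ref{sec:2}.

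The orientability axiom (3) is the heart of the proof. I would take the given \(G\)-invariant \(\theta\)-antisymmetric \(\bc\in\cA^{\otimes(p+1)}\) with \(\chi = \pi_D(\bc)\) a self-adjoint unitary, and define \(\bc^\prime \in \cA_{\theta^\prime}^{\otimes(p+1)}\) by rescaling \(\bc\) on each homogeneous summand \(\cA_{\bx_0}\otimes\cdots\otimes\cA_{\bx_p}\) by an explicit \(\sigma^\prime\)-phase depending on \((\bx_0,\ldots,\bx_p)\); this plainly preserves \(G\)-invariance. Writing \(\pi_D^\prime\) for the map of axiom (3) attached to the deformed triple, one first checks that \(\pi_D^\prime(\bc^\prime)=\chi\): for isotypic \(a_j\in\cA_{\bx_j}\) with \(\bx_0+\cdots+\bx_p=0\), commuting the unitaries \(\Theta^\prime_{\bx_j}\) in \(L(a_0)\Theta^\prime_{\bx_0}[D,L(a_1)]\Theta^\prime_{\bx_1}\cdots[D,L(a_p)]\Theta^\prime_{\bx_p}\) to the right through the weight-homogeneous operators \([D,L(a_i)]\) produces a total scalar phase and a trailing \(\Theta^\prime_{\bx_0+\cdots+\bx_p}=1\), and the rescaling defining \(\bc^\prime\) is chosen precisely to cancel that phase, whence \(\pi_D^\prime(\bc^\prime)=\pi_D(\bc)=\chi\), again a self-adjoint unitary. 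The relations \(L_{\theta^\prime}(a)\chi = \chi L_{\theta^\prime}(a)\) and \([D,L_{\theta^\prime}(a)]\chi = (-1)^{p+1}\chi [D,L_{\theta^\prime}(a)]\) follow from their undeformed versions because \(\chi\) is literally the same operator and, being \(\pi_D\) of a \(G\)-invariant element, is of weight \(0\), so it commutes with every \(\Theta^\prime_\bx\). The one genuinely new point is that \(\epsilon_{\theta+\theta^\prime}(\bc^\prime)=\bc^\prime\). Unwinding the definition of \(\epsilon\), this reduces to the combinatorial identity that for \(\bx_0+\cdots+\bx_p=0\) and every \(\pi\in S_p\), the difference of the permutation-sums \(\sum_{i<j,\,\pi(i)>\pi(j)}\iot{\theta+\theta^\prime}(\bx_{\pi(i)},\bx_{\pi(j)})\) and \(\sum_{i<j,\,\pi(i)>\pi(j)}\iot{\theta}(\bx_{\pi(i)},\bx_{\pi(j)})\) equals the difference of the rescaling phases of \((\bx_0,\bx_{\pi(1)},\ldots,\bx_{\pi(p)})\) and of \((\bx_0,\bx_1,\ldots,\bx_p)\), so that the rescaling intertwines \(\epsilon_\theta\) with \(\epsilon_{\theta+\theta^\prime}\) on \(G\)-invariant tensors. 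Establishing this identity cleanly --- for instance by recognising \(\epsilon_\theta\) as the antisymmetriser for the \(\iot{\theta}\)-braided \(\dual{G}\)-grading and tracking how Rieffel deformation changes the braiding --- is the step I expect to cost the most effort.

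Finally, axioms (4) and (5) are handled using the deformation theory of \(G\)-equivariant finitely generated projective modules developed in \S\ref{sec:1}. The space \(\cH_\infty = \cap_k\Dom\abs{D}^k\) is unchanged, and \S\ref{sec:1} equips it with a deformed right \(\cA_{\theta^\prime}\)-module structure that is again \(G\)-equivariant finitely generated projective and carries a deformed \(G\)-equivariant Hermitian metric \(\hp{}{}_{\theta^\prime}\); this settles the finiteness part of axiom (4). For absolute continuity, note that \((D^2+1)^{-p/2}\) is \(G\)-invariant, so only the weight-\(0\) part of an operator contributes to \(\Tr_\omega\bigl((\,\cdot\,)(D^2+1)^{-p/2}\bigr)\); since the deformation is trivial at \(0\in\dual{G}\) --- \(\sigma^\prime(0,\,\cdot\,)=1\) and \(\Theta^\prime_0=1\) --- the weight-\(0\) parts of \(R_{\theta^\prime}(\hp{\xi}{\eta}_{\theta^\prime})\) and \(R(\hp{\xi}{\eta})\) coincide, and the trace formula for \((\cA_{\theta^\prime},H,D)\) collapses to the one assumed for \((\cA,H,D)\). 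For strong regularity, \S\S\ref{sec:1}--\ref{sec:2} identify \(\End_{\cA_{\theta^\prime}^\op}(\cH_\infty)\), as a \(G\)-equivariant algebra of operators on \(H\), with the Connes--Landi deformation of \(\End_{\cA^\op}(\cH_\infty)\): an operator is a \(\cA_{\theta^\prime}^\op\)-endomorphism of the deformed module \(\cH_\infty\) precisely when an explicit \(\Theta^\prime\)-twist of it is a \(\cA^\op\)-endomorphism of the undeformed module \(\cH_\infty\). Since \(D\) is \(G\)-invariant, each \(\Theta^\prime_\bz\) commutes with \(\abs{D}\) and hence lies in \(\cap_k\Dom([\abs{D},\cdot]^k)\), so this twisting preserves \(\cap_k\Dom([\abs{D},\cdot]^k)\); thus strong regularity of \((\cA,H,D)\) yields strong regularity of \((\cA_{\theta^\prime},H,D)\), and the verification is complete.
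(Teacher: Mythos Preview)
Your proposal is correct and follows essentially the same route as the paper's proof in \S\ref{proofsection}: verify the axioms one by one, using the twisting unitaries for order zero and order one, a phase-rescaling map on \(\cA^{\otimes(p+1)}\) for orientability (the paper's Lemma~\ref{voldeform}, which handles exactly the \(\epsilon_\theta\)-to-\(\epsilon_{\theta+\theta^\prime}\) intertwining you flag as the hard step, by a direct cocycle computation), \(G\)-invariance of the Dixmier trace for absolute continuity (the paper's Lemma~\ref{dixmier}), and Corollary~\ref{enddeform} for strong regularity.

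Two technical points to tighten. First, your claim that \(\bx\mapsto\Theta^\prime_\bx\) is a genuine unitary representation of \(\dual{G}\), and hence that \(\Theta^\prime_{\bx_1}\cdots\Theta^\prime_{\bx_k}=1\) when \(\sum\bx_i=0\), is \emph{false} for a general \(2\)-cocycle \(\sigma^\prime\); the paper circumvents this by invoking Theorem~\ref{kleppner} at the outset to choose \(\sigma^\prime\) to be a bicharacter, and you should do the same. Second, for finiteness you need the deformed right action \(R_{\theta^\prime}\) on \(\cH_\infty\) to coincide with the abstract module deformation \((\cH_\infty)_{\theta^\prime}\) from \S\ref{sec:1}; this requires knowing that the Sobolev Fr\'echet topology on \(\cH_\infty=\cap_k\Dom\abs{D}^k\) agrees with its canonical topology as a f.g.p.\ \(\cA\)-module, which the paper extracts as a separate lemma (following Connes~\cite{Con13}*{Proposition 2.3}) rather than treating as part of \S\ref{sec:1}.
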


This result, whose statement and proof require only our abstract definition and the general machinery of strict deformation quantisation and Connes--Landi deformation recalled in \S\S \ref{sec:1}--\ref{sec:2}, yields an abstract generalisation of the seminal results of Connes and Landi~\cite[Theorem 6]{CL} and of Connes and Dubois-Violette~\cite[Theorem 9]{CDV} on deformations of concrete, viz, explicitly differential-geometric, commutative spectral triples along torus actions.

Now, if \((\cA,H,D)\) is \(\theta\)-commutative, then \((\cA_{-\theta},H,D)\) is \(0\)-commutative and hence commutative \emph{simpliciter} by Theorem~\ref{mainthm}. Thus, Connes's reconstruction theorem for commutative spectral triples and Theorem~\ref{mainthm} immediately yield our extension of the reconstruction theorem.

\begin{theorem}
Let \((\cA,H,D)\) be a \(p\)-dimensional \(\theta\)-commutative spectral triple. There exist a compact \(p\)-dimensional oriented Riemannian \(G\)-manifold \(X\), a \(G\)-equivariant Hermitian vector bundle \(E \to X\), and a \(G\)-invariant essentially self-adjoint Dirac-type operator \(D_E\) on \(E\)  such that
\[
	(\cA,H,D) \cong (C^\infty(X)_\theta, L^2(X,E),D_E).
\]
\end{theorem}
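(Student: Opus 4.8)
The plan is to \emph{dequantise}, invoke Connes's reconstruction theorem, and then \emph{requantise}. First I would apply Theorem~\ref{mainthm} with \(\theta^\prime = -\theta\): the Connes--Landi deformation \((\cA_{-\theta},H,D)\) of \((\cA,H,D)\) by \(-\theta\) is then a \(p\)-dimensional \(\bigl(\theta + (-\theta)\bigr)\)-commutative, i.e.\ \(0\)-commutative, spectral triple, which, as observed above, is exactly a \(p\)-dimensional \(G\)-equivariant abstract commutative spectral triple in the sense of Connes~\cites{Con96,Con13}.

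Next I would apply Connes's reconstruction theorem~\cite{Con13}*{Theorem 1.1}, enhanced to keep track of the \(G\)-action, to \((\cA_{-\theta},H,D)\). This produces a compact \(p\)-dimensional oriented Riemannian manifold \(X\), a Hermitian vector bundle \(E \to X\), and an essentially self-adjoint Dirac-type operator \(D_E\) on \(E\), together with a unitary equivalence \((\cA_{-\theta},H,D) \cong (C^\infty(X),L^2(X,E),D_E)\); it then remains to transport the \(G\)-equivariant structure along this construction. Since the action of \(G\) on \(\cA_{-\theta} \cong C^\infty(X)\) by \Star-automorphisms is part of the \(G\)-equivariant structure, and every \Star-automorphism of \(C^\infty(X)\) is pullback by a diffeomorphism of \(X\), this action is realised by a smooth action of \(G\) on \(X\) by diffeomorphisms, making \(X\) a compact \(G\)-manifold; the \(G\)-invariance of \(D\), together with the \(G\)-invariance of the orientability cycle \(\bc\), forces \(G\) to act by orientation-preserving isometries; and the \(G\)-equivariance of the representation of \(\cA_{-\theta}\) on \(L^2(X,E)\) endows \(E \to X\) with a \(G\)-equivariant structure with respect to which the Hermitian metric and \(D_E\) are \(G\)-invariant. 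Hence the isomorphism above is one of \(G\)-equivariant spectral triples.

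Finally I would requantise, by Connes--Landi deforming both sides of this \(G\)-equivariant isomorphism by \(\theta^\prime = \theta\). Since Connes--Landi deformation respects \(G\)-equivariant unitary equivalence and composing deformations adds the corresponding parameters in \(H^2(\dual{G},\bT)\) (both recalled in \S\ref{sec:2}), this yields a \(G\)-equivariant isomorphism
\[
	\bigl((\cA_{-\theta})_\theta,H,D\bigr) \;\cong\; \bigl(C^\infty(X)_\theta,L^2(X,E),D_E\bigr),
\]
in which \((\cA_{-\theta})_\theta\), being the Connes--Landi deformation of \(\cA\) by \(-\theta + \theta = 0\), is just \((\cA,H,D)\); the theorem follows.

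Of these three steps, the first and the last are purely formal consequences of Theorem~\ref{mainthm} and the general machinery of \S\S\ref{sec:1}--\ref{sec:2}. The one point demanding genuine care is the \(G\)-equivariant enhancement of Connes's reconstruction theorem used in the middle step---namely, the verification that the smooth, Riemannian, oriented, Hermitian, and Dirac-type data produced by Connes's argument can be, and by the naturality of that argument automatically are, chosen \(G\)-equivariantly---and I expect this to be the main obstacle.
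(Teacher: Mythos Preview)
Your proposal is correct and follows essentially the same three-step dequantise--reconstruct--requantise approach as the paper, which likewise derives the result as an immediate corollary of Theorem~\ref{mainthm} (there labelled Theorem~\ref{deform}) applied with \(\theta' = -\theta\), together with Connes's reconstruction theorem. The paper, like you, leaves the \(G\)-equivariant enhancement of the reconstruction step largely implicit, though its Remark~\ref{reconunique} (invoking Mr\v{c}un's theorem that every \Star-automorphism of \(C^\infty(X)\) is pullback by a diffeomorphism) supplies exactly the ingredient you identify as needed to transport the \(G\)-action to \(X\), \(E\), and \(D_E\).
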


Finally, in \S \ref{sec:4}, we study the structure of \emph{rational} Connes--Landi deformations, viz, Connes--Landi deformations by \(\theta \in H^2(\dual{G},\bT)\) of finite order---if \(G = \bT^2\), then \(\theta \in H^2(\bZ^2,\bT) \cong \bR/\bZ\) is rational if and only if \(\theta \in \bQ/\bZ\). We begin by formulating and proving a refinement of Connes and Dubois-Violette's splitting homomorphism~\cite[Theorem 6]{CDV} that allows us to effect Connes--Landi deformation by \(\theta \in H^2(\dual{G},\bT)\) using only the subgroup \(
	\im\theta \coloneqq \overline{\set{\iot{\theta}(\bx,\cdot) \given \bx \in \dual{G}}}
\)
of \(G\), which is finite if and only if \(\theta\) has finite order. Then, we use our extended reconstruction theorem and our refined splitting homomorphism to conclude that for \(\theta \in H^2(\dual{G},\bT)\) of finite order, any \(\theta\)-commutative spectral triple is almost-commutative, in the general, topologically non-trivial sense first proposed by the author~\cite{Ca12,Ca13} and studied by Boeijink and Van Suijlekom~\cite{BVS} and by Boeijink and Van den Dungen~\cite{BVD}, whenever the finite group \(\im\theta\) acts freely and properly on the underlying manifold. This last result not only recovers H{\o}egh-Krohn and Skjelbred's now-folkloric characterisation of rational noncommutative \(2\)-tori~\cite{HKS}, but also has immediate implications for the computation of the spectral action on rational \(\theta\)-commutative spectral triples.

An extremely preliminary version of some results appeared in the author's doctoral dissertation~\cite[Chapter 7]{mythesis}, which is now completely superseded by this paper.

%

\subsection*{Update} 
The map \(\Theta_\ast\) of Lemma~\ref{voldeform} as given in the published version is ill-defined: in general, it will map the \emph{algebraic} tensor product \(\cA^{\otimes(p+1)}\) to the \emph{topological} tensor product \(\cA_\Theta^{\widehat{\otimes}(p+1)}\).
Nonetheless, we do obtain a map \(\Theta_\ast : (\cA^\fin)^{\otimes(p+1)} \to (\cA_\Theta^\fin)^{\otimes(p+1)}\), where, given a Fr\'{e}chet \(G\)-\(\ast\)-algebra \(\cB\), we set \(\cB^\fin := \bigoplus_{\bx \in \dual{G}}^{\mathrm{alg}} \cB_\bx\).
This motivates us to correct Definition~\ref{def:thetacomm}.\ref{item:orient}, the orientability condition of our definition of abstract \(\theta\)-commutative spectral triple, to require that the orientation cycle \(\bc\) lie in \((\cA^\fin)^{\otimes(p+1)}\).
Beyond the obvious superficial corrections required by this change, we have also added a short appendix where we prove that a \(G\)-equivariant concrete commutative spectral triple still defines a \(0\)-commutative spectral triple (Proposition~\ref{prop:appx}).

The author is currently supported by NSERC Discovery Grant RGPIN-2024-04467.

\section{Strict deformation quantisation}\label{sec:1}

\subsection{Group-theoretic preliminaries} 

We begin our review of strict deformation quantisation by fixing notation related to compact Abelian Lie groups and their Pontrjagin duals, and then recalling basic facts about the second group cohomology of discrete Abelian groups.

First, let us denote the additive and multiplicative avatars of the circle group by \(\bT \coloneqq \bR/\bZ\) and \(\U(1) \coloneqq \set{\zeta \in \bC \given \abs{\zeta} = 1}\), respectively, which are isomorphic via the complex exponential \(e : \bT \iso U(1)\) defined by \(e(t) \coloneqq e^{2\pi i t}\). For \(N \in \bN\), we denote the flat \(N\)-torus \(\bR^N/\bZ^N\) by \(\bT^N\), so that \(\exp : \bR^N = \Lie(\bT^N) \to \bT^N = \bR^N/\bZ^N\) is the quotient map; unless stated otherwise, \(\bT^N\) will be endowed with the flat Riemannian metric induced by the standard inner product on \(\mathbb{R}^N\).

Next, let \(G\) be a compact Abelian Lie group with Lie algebra \(\fg := \Lie(G)\). Fix a bi-invariant Riemannian metric \(g\) on \(G\), let \(\ip{}{} := g_0(\cdot,\cdot)\) denote the induced real inner product on \(\fg = T_0 G\), and define the \emph{Casimir element} \(\Omega_G \in S^2 \fg\) for \(G\) by \(\Omega_G :=  -\sum_{k=1}^N v_k \otimes v_k \), where \(\set{v_1,\dotsc,v_N}\) is any orthonormal basis of \(\fg\).

Now, let \(\dual{G} := \Hom_{\text{cts}}(G,\bT) = \Hom_{C^\infty}(G,\bT)\) denote the Pontrjagin dual of \(G\), which, therefore, is a finitely-generated discrete Abelian group; by abuse of notation, let \(\ip{}{} : \dual{G} \times G \to \bT\) denote the pairing of \(\dual{G}\) with \(G\). Observe that for any \(\bx \in \dual{G}\), its differential \(\bx_\ast : \fg = \Lie(G) \to \Lie(\bT) = \bR\) defines an element of \(\fg^\ast \cong \fg\), satisfying \(\ip{\bx}{\exp(v)} =  \exp(\ip{\bx_\ast}{v})\) for all \(v \in \fg\). The assignment \(\bx \mapsto \bx_\ast\) defines a homomorphism \(\dual{G} \to \fg^\ast \cong \fg\), allowing us, in turn, to define a length \(\abs{} : \dual{G} \to [0,+\infty)\) on the discrete group \(\dual{G}\) by \(\abs{\bx} := \sqrt{\ip{\bx_\ast}{\bx_\ast}}\); one can then check that \(\dual{G}\) has rapid decay and polynomial growth of order \(\dim \fg\) with respect to \(\abs{}\).

Finally, let us recall some pertinent facts about the second group cohomology of discrete Abelian groups, which we will later apply to \(\dual{G}\). Let \(\Gamma\) be a discrete Abelian group. For future convenience, let \(B(\Gamma) \coloneqq \Hom(\Gamma^{\otimes 2},\bT)\) denote the Abelian group of all \(\bT\)-valued bicharacters on \(\Gamma\), let 
\(
	S(\Gamma) \coloneqq \set{\Theta \in B(\Gamma) \given \forall \bx, \by \in \Gamma, \; \Theta(\by,\bx) = \Theta(\bx,\by)}
\)
denote the subgroup of all \emph{symmetric} bicharacters, and let
\(
	A(\Gamma) \coloneqq \set{\Theta \in B(\Gamma) \given \forall \bx \in \Gamma, \; \Theta(\bx,\bx) = \bo}
\)
denote the subgroup of \emph{alternating} bicharacters.

Recall that a \emph{multiplier} or \emph{\(2\)-cocycle} on \(\Gamma\) is a map \(M : \Gamma \times \Gamma \to \bT\), satisfying \(M(\bx,\by+\bz) + M(\by,\bz) = M(\bx,\by) + M(\bx+\by,\bz)\)
for all \(\bx\), \(\by\), and \(\bz \in \Gamma\), and \(M(\bx,\bo) = M(\bo,\bx) = 0\) for all \(\bx \in \Gamma\), and let \(Z^2(\Gamma,\bT)\) denote the Abelian group of all multipliers on \(\Gamma\) together with pointwise addition; observe, in particular, that \(B(\Gamma) \leq Z^2(\Gamma,\bT)\). Observe, moreover, that \(Z^2(\Gamma,\bT)\) admits the involution \(M \mapsto M^t\) defined by \(M^t(\bx,\by) \coloneqq M(\by,\bx)\) for all \(\bx\), \(\by \in \Gamma\). If \(T : \Gamma \to \bT\) is a function such that \(T(0) = 0\), define the \emph{\(2\)-coboundary} \(\di T \in Z^2(\Gamma,\bT)\) by setting
\(
	\di T(\bx,\by) \coloneqq T(\bx) + T(\by) - T(\bx+\by)
\)
for all \(\bx\), \(\by \in \Gamma\), and let  \(B^2(\Gamma,\bT) \leq Z^2(\Gamma,\bT)\) denote the subgroup of all coboundaries in \(Z^2(\Gamma,\bT)\). At last, the \emph{second group cohomology} of \(\Gamma\) (with coefficients in \(\bT\)) is defined to be the Abelian group
\(
	H^2(\Gamma,\bT) \coloneqq Z^2(\Gamma,\bT)/B^2(\Gamma,\bT);
\)
in particular, we call \(M_1\), \(M_2 \in Z^2(\Gamma,\bT)\) \emph{cohomologous} if and only if \([M_1] = [M_2] \in H^2(\Gamma,\bT)\), if and only if \(M_1 - M_2 \in B^2(\Gamma,\bT)\). Actual computations, however, will depend on the following extremely useful result.

\begin{theorem}[{Kleppner \cite[\S 7]{Klepp}}]\label{kleppner}
Let \(\Gamma\) be a discrete Abelian group. Then \(B^2(\Gamma,\bT) = \set{M \in Z^2(\Gamma,\bT) \given M^t = M}\), so that the antisymmetrization operation \(Z^2(\Gamma,\bT) \to A(\Gamma)\) defined by \(M \mapsto M - M^t\) descends to an isomorphism \(\iota : H^2(\Gamma,\bT) \iso A(\Gamma)\). Moreover, every \(2\)-cocycle \(M \in Z^2(\Gamma,\bT)\) is cohomologous to a bicharacter \(\Theta \in B(\Gamma)\), yielding a natural isomorphism
\(
	H^2(\Gamma,\bT) \cong B(\Gamma)/S(\Gamma)
\).
\end{theorem}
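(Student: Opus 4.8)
The plan is to exploit the one special feature of the coefficient group \(\bT = \bR/\bZ\): it is divisible, hence an injective \(\bZ\)-module, and this is precisely what forces the ``symmetric part'' of \(Z^2(\Gamma,\bT)\) to be cohomologically trivial. I would carry this out in three steps: identify \(B^2(\Gamma,\bT)\) with the group of symmetric \(2\)-cocycles; deduce the antisymmetrization isomorphism \(\iota\); and then show every cocycle is cohomologous to a bicharacter and read off \(H^2(\Gamma,\bT) \cong B(\Gamma)/S(\Gamma)\).

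\emph{Step 1.} One inclusion is immediate, since \(\di T(\bx,\by) = T(\bx) + T(\by) - T(\bx+\by)\) is visibly symmetric, so \(B^2(\Gamma,\bT) \subseteq \set{M \in Z^2(\Gamma,\bT) \given M^t = M}\). For the reverse inclusion, given symmetric \(M\) I would put on the set \(\Gamma \times \bT\) the multiplication \((\bx,s)\cdot(\by,t) \coloneqq (\bx+\by,\, s+t+M(\bx,\by))\); the cocycle identity makes this associative, the normalisation \(M(\bx,\bo) = M(\bo,\bx) = 0\) gives it the identity \((\bo,0)\), and symmetry of \(M\) makes it Abelian, so that \(0 \to \bT \to \Gamma \times \bT \to \Gamma \to 0\) is a short exact sequence of Abelian groups. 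Since \(\operatorname{Ext}^1_{\bZ}(\Gamma,\bT) = 0\) by injectivity of \(\bT\), this extension splits, and a splitting is the same datum as a function \(T : \Gamma \to \bT\) with \(T(\bo) = 0\) and \(M = \di T\). Hence \(B^2(\Gamma,\bT) = \set{M \in Z^2(\Gamma,\bT) \given M^t = M}\). (For finitely generated \(\Gamma\) one could instead argue directly using the structure theorem and divisibility of \(\bT\).)

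\emph{Step 2.} For \emph{any} \(M \in Z^2(\Gamma,\bT)\), the antisymmetrization \(\beta_M \coloneqq M - M^t\) is alternating because \(\beta_M(\bx,\bx) = 0\), and a short manipulation of the cocycle identity — substituting the triple \((\bx,\by,\bz)\) as \((\bx,\bx',\by)\), \((\by,\bx,\bx')\), and \((\bx,\by,\bx')\) and combining — gives \(\beta_M(\bx+\bx',\by) = \beta_M(\bx,\by) + \beta_M(\bx',\by)\), with additivity in the other slot following from \(\beta_M^t = -\beta_M\); so \(\beta_M \in A(\Gamma)\). The assignment \(M \mapsto \beta_M\) is then a homomorphism \(Z^2(\Gamma,\bT) \to A(\Gamma)\) whose kernel consists of the symmetric cocycles, which by Step 1 is exactly \(B^2(\Gamma,\bT)\); it therefore descends to an injection \(\iota : H^2(\Gamma,\bT) \inj A(\Gamma)\). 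For surjectivity I would fix a well-ordered generating set of \(\Gamma\) and, given \(\Theta \in A(\Gamma)\), construct an ``upper-triangular'' bicharacter \(\Theta_0 \in B(\Gamma) \leq Z^2(\Gamma,\bT)\) whose off-diagonal values on generators are read off from \(\Theta\); this is well defined because \(\Theta\) is \(\bT\)-valued and bi-additive, hence kills the relations, and by design \(\Theta_0 - \Theta_0^t = \Theta\). Thus \(\iota\) is an isomorphism.

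\emph{Step 3.} The triangular bicharacter of Step 2 also settles the last claim: given \(M \in Z^2(\Gamma,\bT)\), apply the construction to \(\Theta \coloneqq \beta_M \in A(\Gamma)\) to obtain \(\Theta_0 \in B(\Gamma)\) with \(\beta_{\Theta_0} = \beta_M\); injectivity of \(\iota\) then forces \([M] = [\Theta_0]\) in \(H^2(\Gamma,\bT)\), so \(M\) is cohomologous to a bicharacter. Consequently the composite \(B(\Gamma) \inj Z^2(\Gamma,\bT) \surj H^2(\Gamma,\bT)\) is onto, and its kernel is \(B(\Gamma) \cap B^2(\Gamma,\bT) = B(\Gamma) \cap \set{M \given M^t = M} = S(\Gamma)\) by Step 1, giving \(H^2(\Gamma,\bT) \cong B(\Gamma)/S(\Gamma)\). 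The only genuinely substantive point is Step 1 — the vanishing of \(\operatorname{Ext}^1_{\bZ}(\Gamma,\bT)\) together with the identification of symmetric normalised \(2\)-cocycles with Abelian extensions — together with, for non-finitely-generated \(\Gamma\), the transfinite bookkeeping needed to make the triangular-bicharacter construction precise; the remaining cocycle gymnastics and diagram chases are routine.
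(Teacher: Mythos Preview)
The paper does not actually prove this theorem: it is stated with attribution to Kleppner, and the only argument offered is the remark that Kleppner's result gives injectivity of \(\iota\), while surjectivity is obtained by citing Hughes~\cite{Hughes}*{Proof of Theorem 2}. So there is no ``paper's proof'' to compare against; you are supplying one where the paper defers to the literature.

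Your argument is correct in its essentials. Step~1 is exactly the standard reduction of symmetric normalised \(2\)-cocycles to abelian extensions, trivialised by divisibility of \(\bT\); and the cocycle manipulation in Step~2 showing \(M - M^t \in A(\Gamma)\) is right, as is the identification of the kernel with \(B^2(\Gamma,\bT)\). Step~3 is a clean consequence of Steps~1 and~2.

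There is one point to tighten. In your surjectivity argument you assert that the triangular bicharacter \(\Theta_0\) built from a well-ordered generating set is well defined ``because \(\Theta\) is \(\bT\)-valued and bi-additive, hence kills the relations.'' That reasoning is not quite right: bi-additivity of \(\Theta\) guarantees that \(\Theta\) vanishes on the relation subgroup, but the triangular half \(\Theta_0\) need not. What makes the construction go through for finitely generated \(\Gamma\) is the structure theorem: one can choose generators \(e_1,\dotsc,e_m\) whose only relations are of the form \(d_i e_i = 0\), and then \(d_i\,\Theta_0(e_i,e_j) = d_i\,\Theta(e_i,e_j) = \Theta(0,e_j) = 0\) (and symmetrically), so \(\Theta_0\) does descend. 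If the relations mix generators---as they must for an arbitrary presentation, or for groups like \(\bQ^2\) that admit no ``diagonal'' generating set---this check fails as written, and your ``transfinite bookkeeping'' would have to do real work. Since the paper only ever applies the theorem to \(\dual{G}\) for \(G\) a compact Abelian Lie group, hence to finitely generated \(\Gamma\), your argument is entirely adequate for its intended use; but if you want the statement at the generality claimed, you should either invoke Hughes for surjectivity as the paper does, or replace the triangular construction by a direct-limit or Ulm-theoretic argument.
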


\begin{remark}
The full statement of Kleppner's result, which applies to much larger class of locally compact Abelian topological groups, only yields injectivity of the map \(\iota\). However, in the case of discrete Abelian groups, Moore group cohomology coincides with ordinary group cohomology with coefficients in the trivial \(\Gamma\)-module \(\bT\), so that \(\iota\) is indeed surjective~\cite[Proof of Theorem 2]{Hughes}.
\end{remark}

\begin{remark}
If \(M \in Z^2(\Gamma,\bT)\), then \([M^t] = -[M]\). Hence, if \(\theta \in H^2(\Gamma,\bT)\), then \([\iot{\theta}] = 2\theta\).
\end{remark}

\begin{example}\label{torus1}
Consider \(\bZ^N = \dual{\bT^N}\). Since
\(
	A(\bZ^N) \cong \bT^{N(N-1)/2}
\)
it follows that \(\iota : H^2(\bZ^N,\bT) \iso A(\bZ^N) \cong \bT^{N(N-1)/2}\) Indeed, if \(\theta \in H^2(\bZ^N,\bT)\), and if \(\theta_{ij} \coloneqq \iot{\theta}(e_i,e_j)\) for \(\set{e_i}\) the standard ordered basis of \(\mathbb{R}^N\), then
\[
	\forall \bx,\; \by \in \Gamma, \quad \iot{\theta}(\bx,\by) = \sum_{1 \leq i < j \leq N} \theta_{ij}(x_i y_j - x_j y_i),
\] 
so that the image of \(\theta\) in \(\bT^{N(N-1)/2}\) is the \(\tfrac{N(N-1)}{2}\)-tuple \((\theta_{ij})_{1\leq i < j \leq N}\). 

Now, in the operator-algebraic and noncommutative-geometric literature, one finds the following constructions of bicharacter representatives for \(\theta \in H^2(\bZ^N,\bT)\).
\begin{enumerate}
	\item Define \(\Theta \in B(\bZ^N)\) by \(\Theta(\bx,\by) \coloneqq \sum_{1 \leq i < j \leq N} \theta_{ij} x_i y_j\).
	\item Define \(\Theta \in B(\bZ^N)\) by \(\Theta(\bx,\by) \coloneqq \sum_{1 \leq i < j \leq N} -\theta_{ij} x_j y_i\).
	\item Choose \((\tfrac{1}{2}\theta_{ij})_{1\leq i <  j \leq N} \in \bT^{N(N-1)/2} \) such that \((\theta_{ij}) = 2(\tfrac{1}{2}\theta_{ij})\), and define \(\Theta \in A(\bZ^N)\) by \(\Theta(\bx,\by) \coloneqq \sum_{1 \leq i < j \leq N} \tfrac{1}{2}\theta_{ij} (x_i y_j - x_j y_i)\).
\end{enumerate}
The first two constructions define right splittings of the short exact sequence
\[
	0 \to S(\bZ^N) \inj B(\bZ^N) \surj H^2(\bZ^N,\bT) \to 0
\]
of Abelian groups, whilst the third does not. Nonetheless, since alternating bicharacter representatives yield considerable algebraic simplifications in the context of strict deformation quantisation, the third construction is frequently used in the operator-algebraic literature.
\end{example}

\subsection{Peter--Weyl theory for  \texorpdfstring{\(G\)}{G}-equivariant Fr\'echet  \texorpdfstring{\Star}{*}-algebras and modules}\label{ssec:peterweyl}

Fix a compact Abelian Lie group \(G\). We simplify our later discussion of strict deformation quantisation and Connes--Landi deformation by reviewing the Peter--Weyl theory of \G-equivariant Fr\'echet \Star-algebras (cf.~\cite[\S 4]{AbEx}) and \G-equivariant Fr\'echet modules over such algebras. We begin by fixing terminology and notation.

\begin{definition}
A \emph{Fr\'echet \Star-algebra} is a Fr\'echet space \(\cA\) endowed with the structure of a \Star-algebra, such that the multiplication \(\cA \times \cA \to \cA\) and the involution \(\ast : \cA \to \cA\) are continuous. In particular, a \emph{Fr\'echet pre-\Cstar-algebra} is a Fr\'echet \Star-algebra \(\cA\) endowed with a continuous injective \Star-homomorphism \(\cA \inj A\) into a \Cstar-algebra \(A\), such that the image of \(\cA\) in \(A\) is dense and is closed under the holomorphic functional calculus.
\end{definition}

\begin{remark}
We consider only \emph{unital} Fr\'echet \Star-algebras.
\end{remark}

\begin{definition}
A \emph{Fr\'echet \Gstar-algebra} is a pair \((\cA,\alpha)\), where \(\cA\) is a unital Fr\'echet \Star-algebra and \(\alpha : G \to \Aut(\cA)\) is an action of \(G\) on \(\cA\) by isometric \Star-automorphisms that is strongly smooth, in the sense that \(t \mapsto \alpha_t(a)\) is smooth for each \(a \in \cA\). If \(\cA\) admits a \G-equivariant continuous injective \Star-homomorphism \(\cA \inj A\) into a unital \GCstar-algebra, such that the image of \(\cA\) in \(A\) is dense, \(G\)-invariant, and closed under the holomorphic functional calculus, then we call \((\cA,\alpha)\) a \emph{Fr\'echet pre-\GCstar-algebra}.
\end{definition}

\begin{example}
Let \(X\) be a compact \G-manifold with \G-action \(\sigma : G \mapsto \operatorname{Diff}(X)\), and define \(\alpha : G \to \Aut(C^\infty(X))\) by setting \(\alpha_t f \coloneqq f \circ \sigma_{-t}\) for all \(f \in C^\infty(X)\), \(t \in G\). Then \((C^\infty(X),\alpha)\) is a Fr\'echet pre-\GCstar-algebra with \Cstar-completion \(C(X)\). 
\end{example}

\begin{example}
The rapid decay algebra
\[
	\cS(\dual{G}) \coloneqq \set*{f : \dual{G} \to \bC \given \forall k \in \bN, \; \norm{f}_{k} \coloneqq \sup_{\bx \in \dual{G}} (1+4\pi^2\abs{\bx}^2)^k \abs{f(\bx)} < +\infty}
\]
of the discrete group \(\dual{G}\) with respect to the length \(\abs{}\), endowed with the seminorms \(\set{\norm{}_k}_{k \in \bN}\), the convolution product, \Star-operation, and \G-action defined by
\begin{gather*}
	\forall f, \; g \in \cS(\dual{G}), \; \forall \bx \in \dual{G}, \quad (f \star g)(\bx) \coloneqq \sum_{\by \in \dual{G}} f(\bx-\by)g(\by), \quad  f^\ast(\bx) \coloneqq \overline{f(-\bx)},\\
	\forall t \in G, \; \forall f \in \cS(\dual{G}), \; \forall \bx \in \dual{G}, \quad \dual{\alpha}_t(f)(\bx) \coloneqq \e{\ip{\bx}{t}}f(\bx),
\end{gather*}
is a Fr\'echet pre-\GCstar-algebra with \Cstar-completion \(C^\ast(\dual{G}) = C^\ast_r(\dual{G})\).
\end{example}


For the remainder of this subsection, let \(\cA\) be a fixed Fr\'echet \Gstar-algebra \(\cA\) with \G-action \(\alpha : G \to \Aut(\cA)\), and let \(\set{\norm{}_k}_{k \in \bN}\) be its defining family of seminorms. Our main goal is to consistently decompose elements of \(\cA\) into rapidly decaying Fourier series of eigenvectors for the \(G\)-action.

First, let us enrich the Fr\'echet topology on \(\cA\) in order to facilitate discussion of the convergence of Fourier series. Since the action \(\alpha\) of \(G\) on \(\cA\) is strongly smooth, it induces an action \(\alpha_\ast : \fg \to \Der(\cA)\) of \(\fg\) on \(\cA\) by continuous \Star-derivations, where
\begin{equation}
	\forall X \in \fg, \; \forall a \in \cA, \quad \alphast_{X}(a) \coloneqq \lim_{s \to 0} \frac{1}{s}\left(\alpha_{\exp(sX)}(a) - a\right).
\end{equation}
In fact, since \(\fg\) is Abelian, \(\alpha_\ast\) extends to an representation \(S\fg \to B(\cA)\) of the symmetric tensor algebra \(S \fg\) on \(\cA\) by continuous \(\bC\)-linear operators, so that the Casimir element \(\Omega_G \in S^2 \fg\) of \(G\) acts as the continuous \(\bC\)-linear operator \(\Delta_G \coloneqq \alphast_{\Omega_G}\). We now use \(\Delta_G\) to define additional seminorms \(\set{\norm{}_{j,k}}_{j,k\in\bN}\) on \(\cA\) by
\begin{equation}\label{extraseminorms}
	\forall j, \; k \in \bN, \; \forall a \in \cA, \quad \norm{a}_{j,k} \coloneqq \norm{(1+\Delta_G)^j a}_k;
\end{equation}
since the \G-action on \(\cA\) was assumed to be strongly smooth and isometric, \(\cA\) remains complete in these additional seminorms.

\begin{definition}[{Yamashita~\cite[pp.\ 255--6]{Ya}}]\label{additional}
Let \((\cA,\alpha)\) be a Fr\'echet \Gstar-algebra. Then \((\cA^\infty,\alpha)\) is the Fr\'echet-\Gstar-algebra defined by \(\cA\) endowed with the larger set of seminorms \(\set{\norm{}_{j,k}}_{j,k \in \bN}\) and the same \G-action \(\alpha\). 
\end{definition}

Next, let us decompose \(\cA\) into eigenspaces or \emph{isotypic subspaces} for the \(G\)-action. For each \(\bx \in \dual{G}\), the \emph{\(\bx\)-isotypic subspace of \(\cA\)} is defined to be the closed subspace
\begin{equation}
	\cA_\bx \coloneqq \set{a \in A \given \forall t \in G, \; \alpha_t(a) = \e{\ip{\bx}{t}}a} \subset \cA.
\end{equation}
The isotypic subspaces of \(\cA\) are linearly independent and satisfy \(\cA_\bx \cA_\by \subseteq \cA_{\bx+\by}\) and \(\cA_\bx^\ast = \cA_{-\bx}\) for all \(\bx\), \(\by \in \dual{G}\), and their algebraic direct sum \(\oplus_{\bx \in \dual{G}}^{\text{alg}} \cA_\bx\) is dense in \(\cA\) by the Peter--Weyl theorem for topological vector spaces \cite[Theorem 5.7]{BtD}.

\begin{remark}
This \(\dual{G}\)-grading of \(\cA\) could be interpreted as a ``Fr\'echet \Star-algebraic bundle'' or ``smooth Fell bundle'' \(\set{\cA_\bx}_{\bx \in \dual{G}} \to \dual{G}\) over \(\dual{G}\) (cf.~\cite[Chapter VIII]{FD} \cite{Exel}).
\end{remark}

\begin{example}\label{canonicalexample0}
For each \(\bx \in \dual{G}\), \(C^\infty(G)_\bx = \bC U_\bx\), where \(U_\bx \coloneqq e \circ \bx : G \to U(1)\); in particular, if \(f \in C^\infty(G)_\bx\), then
\[
	f = \left(\int_G U_\bx^\ast(t)f(t) \dif t \right)U_\bx = \left(\int_G \e{-\ip{\bx}{t}} f(t) \dif t \right)U_\bx.
\]
\end{example}

We can define projection maps onto the isotypic subspaces as follows. For each \(\bx \in \dual{G}\), define a linear contraction \(F_\bx : \cA \surj \cA_\bx\) by
\begin{equation}
	\forall a \in \cA, \quad F_\bx(a) \coloneqq \int_G \e{-\ip{\bx}{t}}\alpha_t(a) \dif t,
\end{equation}
where \(\dif t\) denotes the normalised Haar measure on \(G\); in particular, \(F_\bo\) is the canonical conditional expectation \(\cA \surj \cA_\bo = \cA^G\). The maps \(F_\bx\) satisfy \(F_\bx F_\by = \delta_{\bx,\by} F_\by\) for all \(\bx\), \(\by \in \dual{G}\), and so define mutually orthogonal projections onto the isotypic subspaces, which piece together to define the Fourier transform on \(\cA\).

\begin{definition}
We define a \emph{section} of \(\cA\) to be a function \(s : \dual{G} \to \cA\) such that \(s(\bx) \in \cA_\bx\) for all \(\bx \in \dual{G}\). In particular, we define the \emph{Fourier transform} of \(a \in \cA\) to be the section \(\hat{a}\) of \(\cA\) constructed as follows:
\begin{equation}
	\forall \bx \in \dual{G}, \quad \widehat{a}(\bx) \coloneqq F_\bx(a) = \int_G \e{-\ip{\bx}{t}}\alpha_t(a)\dif t.
\end{equation}
\end{definition}

\begin{remark}
A section of \(\cA\) can be interpreted as a section of the ``smooth Fell bundle''  \(\set{\cA_\bx}_{\bx \in \dual{G}} \to \dual{G}\) over \(\dual{G}\).
\end{remark}

Finally, let us identify the range of the Fourier transform on \(\cA\), which will allow us to perform all computations in \(\cA\) in terms of rapidly decaying Fourier series. Let us call a section \(s\) of \(\cA\) \emph{rapidly decaying} if it satisfies
\begin{equation}\label{seminorms}
 \forall k, \; l \in \bN, \quad \norm{s}_{k,l} \coloneqq \sup_{\bx \in \dual{G}} (1+4\pi^2\abs{\bx}^2)^k \norm{s(\bx)}_l < +\infty.
\end{equation}
The  \(\bC\)-vector space \(\cS(\dual{G};\cA)\) of all rapidly decaying sections of \(\cA\) defines a unital \Star-algebra via the convolution product and \Star-operation
\[
	\forall s_1, \; s_2 \in \cS(\dual{G}; \cA), \; \forall \bx \in \dual{G}, \quad s_1 \star s_2(\bx) \coloneqq \sum_{\by \in \dual{G}} s_1(\bx-\by)s_2(\by), \quad s^\ast(\bx) \coloneqq s(-\bx)^\ast,
\]
where absolute convergence of each \(s_1s_2(\bx)\) is guaranteed by the rapid decay of \(s_1\) and \(s_2\). Since the isotypic subspaces of \(\cA\) are closed, the standard proof that \(\cS(\dual{G})\) is a Fr\'echet pre-\Cstar-algebra (cf.\ \cite{Jol}), \emph{mutatis mutandis}, together with the fact that the \(\cA_\bx\) are closed, implies that \(\cS(\dual{G})\), in fact, defines a Fr\'echet \Star-algebra.

\begin{proposition}
The unital \Star-algebra \(\cS(\dual{G};\cA)\), endowed with the seminorms \(\norm{}_{k,l}\) defined by Equation~\ref{seminorms} and the \G-action \(\hat{\alpha}\) defined by
\[
 \forall t \in G, \; \forall s \in \cS(\dual{G};\cA), \; \forall \bx \in \dual{G}, \quad \hat{\alpha}_t(s)(\bx) \coloneqq \e{\ip{\bx}{t}} s(\bx).
\]
is a Fr\'echet \Gstar-algebra.
\end{proposition}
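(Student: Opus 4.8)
The plan is to establish the remaining requirements in the definition of a Fr\'echet \Gstar-algebra: the unital \Star-algebra structure on \(\cS(\dual{G};\cA)\) having already been produced, we must show that \(\cS(\dual{G};\cA)\) is complete in the seminorms \(\norm{}_{k,l}\) of Equation~\ref{seminorms}, with continuous convolution and \Star-operation, and that \(\hat\alpha\) is an action of \(G\) by isometric \Star-automorphisms that is strongly smooth. As already noted, the Fr\'echet \Star-algebra structure is obtained \emph{mutatis mutandis} from the scalar case \(\cS(\dual{G})\), now carrying along the Banach-space-valued fibres \(\cA_\bx\): completeness is checked fibrewise, since each \(\cA_\bx\) is closed in the complete space \(\cA\) and hence complete, so that a \(\norm{}_{k,l}\)-Cauchy sequence converges pointwise in the fibres to a section which the usual \(\epsilon/3\)-argument shows lies in \(\cS(\dual{G};\cA)\) and is its limit in every \(\norm{}_{k,l}\); continuity of convolution follows from the elementary inequality \(1+\abs{\bx} \leq (1+\abs{\bx-\by})(1+\abs{\by})\), the polynomial growth of \(\dual{G}\) of order \(\dim\fg\), and the joint continuity of multiplication on \(\cA\), yielding an estimate of the form \(\norm{s_1 \star s_2}_{k,l} \leq C\,\norm{s_1}_{k+m,l'}\,\norm{s_2}_{k+m,l'}\); and continuity of the \Star-operation is immediate from continuity of \(\ast\) on \(\cA\) together with \(s^\ast(\bx) = s(-\bx)^\ast\).

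It then remains to analyse \(\hat\alpha\). Each \(\hat\alpha_t\) multiplies the value \(s(\bx)\) by the unit scalar \(\e{\ip{\bx}{t}}\); hence it carries fibres to fibres, preserves rapid decay, and is isometric in every \(\norm{}_{k,l}\) by homogeneity of the seminorms. That \(t \mapsto \hat\alpha_t\) is a group homomorphism follows because \(t \mapsto \e{\ip{\bx}{t}}\) is a character of \(G\) for each \(\bx \in \dual{G}\), and this same fact shows that each \(\hat\alpha_t\) is multiplicative; finally, \(\hat\alpha_t\) commutes with \(\ast\) by the one-line computation using \(\overline{\e{\ip{\bx}{t}}} = \e{\ip{-\bx}{t}}\) and \(s^\ast(\bx) = s(-\bx)^\ast\). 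Thus \(\hat\alpha\) is an action of \(G\) on \(\cS(\dual{G};\cA)\) by isometric \Star-automorphisms, and only strong smoothness remains.

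Strong smoothness is the substantive point, and where I expect the only real work. Fix \(s \in \cS(\dual{G};\cA)\) and \(X \in \fg\). Since \(\ip{\bx}{\exp(uX)} = \exp(u\ip{\bx_\ast}{X})\) in \(\bT\), the natural candidate for the directional derivative of \(t \mapsto \hat\alpha_t(s)\) along \(X\) at \(1_G\) is the section \(\partial_X s\) given by \((\partial_X s)(\bx) \coloneqq 2\pi i\,\ip{\bx_\ast}{X}\,s(\bx)\); by Cauchy--Schwarz, \(\abs{\ip{\bx_\ast}{X}} \leq \abs{\bx}\abs{X}\), so \(\partial_X s \in \cS(\dual{G};\cA)\) with \(\norm{\partial_X s}_{k,l} \leq C\abs{X}\,\norm{s}_{k+1,l}\), i.e.\ \(\partial_X\) is a continuous operator on \(\cS(\dual{G};\cA)\). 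Using the elementary bound \(\abs{e^{iz}-1-iz} \leq \tfrac12\abs{z}^2\) with \(z = 2\pi u\,\ip{\bx_\ast}{X}\), one estimates
\[
	\norm*{\tfrac{1}{u}\bigl(\hat\alpha_{\exp(uX)}(s) - s\bigr) - \partial_X s}_{k,l} \leq 2\pi^2\abs{u}\abs{X}^2\sup_{\bx\in\dual{G}}(1+4\pi^2\abs{\bx}^2)^k\abs{\bx}^2\norm{s(\bx)}_l \leq C\abs{u}\abs{X}^2\norm{s}_{k+2,l},
\]
which tends to \(0\) as \(u \to 0\); so \(t \mapsto \hat\alpha_t(s)\) is differentiable at \(1_G\) with directional derivative \(\partial_X s\). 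Since \(\hat\alpha\) is a homomorphism and each \(\hat\alpha_t\) is a continuous operator, translating this by \(\hat\alpha_t\) gives differentiability at every \(t \in G\), with directional derivative \(\hat\alpha_t(\partial_X s)\). Iterating, the \(m\)-fold directional derivatives are obtained by post-composing with \(\hat\alpha_t\) the continuous operator \(\partial_{X_1}\circ\cdots\circ\partial_{X_m}\), namely fibrewise multiplication by \(\bx \mapsto (2\pi i)^m\bigl(\prod_{j=1}^m\ip{\bx_\ast}{X_j}\bigr)\), and these vary continuously with \(t\) by the same kind of estimate; by the standard characterisation of smooth maps into a Fr\'echet space, \(t \mapsto \hat\alpha_t(s)\) is smooth. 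This completes the verification. The only step that is not purely formal or a transcription of the scalar case \(\cS(\dual{G})\) is the displayed estimate—controlling the difference quotients uniformly in \(\bx\) against every weight \((1+4\pi^2\abs{\bx}^2)^k\), which succeeds precisely because the sections in \(\cS(\dual{G};\cA)\) decay rapidly with room to spare.
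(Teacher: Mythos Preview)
Your proof is correct and follows precisely the approach the paper indicates: the paper does not give an explicit proof but merely states, just before the proposition, that the standard proof that \(\cS(\dual{G})\) is a Fr\'echet \Star-algebra carries over \emph{mutatis mutandis} using closedness of the isotypic subspaces \(\cA_\bx\), and leaves the verification of the \(G\)-action entirely to the reader. You have carefully supplied all of these omitted details---completeness, continuity of the operations, and the strong smoothness estimate for \(\hat\alpha\)---in exactly the expected way.
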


\begin{remark}
If the Fr\'echet \Gstar-algebra \(\cA\) is viewed as a ``smooth Fell bundle,'' then \(\cS(\dual{G};\cA)\) can be viewed as its ``smooth cross-sectional algebra'' (cf.\ \cite[\S \textsc{viii}.5]{FD}).
\end{remark}

\begin{example}\label{canonicalexample1}
As we have already seen, \(C^\infty(G)_\bx = \bC U_\bx\) for each \(\bx \in \dual{G}\), where \(U_\bx \coloneqq e \circ \bx : G \to U(1)\) is unitary. Hence, the map
\((a_\bx U_\bx)_{\bx \in \dual{G}} \mapsto (a_\bx)_{\bx \in \dual{G}}\) defines a canonical isomorphism \(\cS(\dual{G};C^\infty(G)) \iso \cS(\dual{G})\). Thus, \(\cS(\dual{G};\cA)\), in general, can be interpreted as the rapid decay algebra of \(\dual{G}\) with coefficients in \(\cA\).
\end{example}

We now show that \(\cS(\dual{G};\cA)\) is the range of the Fourier transform on \(\cA\), which, therefore, defines a \(G\)-equivariant \Star-isomorphism \(\cA^\infty \to \cS(\dual{G};\cA)\).

\begin{proposition}\label{fourier}
Let \(\cA\) be a Fr\'echet \Gstar-algebra. The \emph{Fourier transform} \(a \mapsto \hat{a}\) yields a \G-equivariant topological \Star-isomorphism \(F : \cA^\infty \to \cS(\dual{G};\cA)\). Moreover, 
\(
	\inv{F} : s \mapsto \check{s} \coloneqq \sum_{\bx \in \dual{G}} s(\bx)
\)
with absolute convergence in \(\cA\).
\end{proposition}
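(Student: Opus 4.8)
The plan is to check, in order, that the Fourier transform $F$ maps $\cA^\infty$ continuously into $\cS(\dual{G};\cA)$; that the assignment $s \mapsto \check{s}$ converges absolutely in $\cA^\infty$ and furnishes a continuous two-sided inverse; and finally that $F$ is a $\ast$-homomorphism intertwining the $G$-actions. The only step carrying genuine content is a spectral computation that makes the enriched seminorms on $\cA^\infty$ correspond to the polynomially weighted seminorms on $\cS(\dual{G};\cA)$; everything else follows formally from the Peter--Weyl theorem together with the rapid decay and polynomial growth of $\dual{G}$.

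First I would show that $\Delta_G = \alphast_{\Omega_G}$ acts on the isotypic subspace $\cA_\bx$ as the scalar $4\pi^2\abs{\bx}^2$. Indeed, for $a \in \cA_\bx$ and $X \in \fg$ one has $\alpha_{\exp(sX)}(a) = \e{s\ip{\bx_\ast}{X}}a$, whence $\alphast_{X}(a) = 2\pi i\ip{\bx_\ast}{X}\,a$; substituting $\Omega_G = -\sum_k v_k \otimes v_k$ and using $\sum_k \ip{\bx_\ast}{v_k}^2 = \abs{\bx}^2$ gives the claim. Since each $F_\bx$ commutes with $\alpha$, hence with $\Delta_G$, we get $(1+4\pi^2\abs{\bx}^2)^j F_\bx(a) = (1+\Delta_G)^j F_\bx(a) = F_\bx\!\left((1+\Delta_G)^j a\right)$, and because $F_\bx$ is a contraction in every seminorm $\norm{}_l$ this yields $(1+4\pi^2\abs{\bx}^2)^j\norm{F_\bx(a)}_l \le \norm{(1+\Delta_G)^j a}_l$. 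Taking the supremum over $\bx \in \dual{G}$ shows $\hat{a} \in \cS(\dual{G};\cA)$ with every seminorm of $\hat{a}$ bounded by the matching seminorm of $a$ in $\cA^\infty$; thus $F\colon \cA^\infty \to \cS(\dual{G};\cA)$ is well-defined and continuous.

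Next I would treat the inverse. Using that $\dual{G}$ has polynomial growth of order $\dim\fg$, fix $m$ with $C_m \coloneqq \sum_{\bx \in \dual{G}}(1+4\pi^2\abs{\bx}^2)^{-m} < +\infty$; then for each $j$, $l \in \bN$ the bound $(1+4\pi^2\abs{\bx}^2)^j\norm{s(\bx)}_l \le (1+4\pi^2\abs{\bx}^2)^{-m}\norm{s}_{j+m,l}$ shows that $\sum_\bx (1+\Delta_G)^j s(\bx)$ converges absolutely in $\norm{}_l$. Since $\cA^\infty$ is complete in the enriched seminorms, $\check{s} \coloneqq \sum_{\bx \in \dual{G}} s(\bx)$ converges absolutely in $\cA^\infty$ (a fortiori in $\cA$) and satisfies $\norm{\check{s}}_{j,l} \le C_m\norm{s}_{j+m,l}$, so $s \mapsto \check{s}$ is continuous. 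That $F$ and $s \mapsto \check{s}$ are mutually inverse is formal: using $F_\bx|_{\cA_\by} = \delta_{\bx,\by}\id$ and the continuity of $F_\bx$ one computes $\widehat{\check{s}}(\bx) = F_\bx\!\left(\sum_\by s(\by)\right) = s(\bx)$; and, writing $b \coloneqq \sum_\bx F_\bx(a)$, one has $F_\by(b - a) = 0$ for all $\by \in \dual{G}$, so it suffices to note that an element $c \in \cA$ with $F_\bx(c) = 0$ for all $\bx$ must vanish --- for any continuous linear functional $\phi$ on $\cA$ the continuous function $t \mapsto \phi(\alpha_t c)$ on $G$ has Fourier coefficients $\phi(F_\bx c) = 0$, hence is identically zero, so evaluating at $t = 0$ gives $\phi(c) = 0$, and Hahn--Banach forces $c = 0$. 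Combining the continuity of $s \mapsto \check{s}$ and of $F$ with $\check{\hat{a}} = a$ shows that the two families of seminorms on $\cA$ are equivalent, so $F$ is a topological isomorphism.

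Finally, the algebraic compatibilities are routine. Equivariance follows from the substitution $s \mapsto s - t$ in $\widehat{\alpha_t a}(\bx) = \int_G \e{-\ip{\bx}{s}}\alpha_{s+t}(a)\dif s = \e{\ip{\bx}{t}}\hat{a}(\bx)$; compatibility with the involution follows from $F_\bx(a^\ast) = F_{-\bx}(a)^\ast$, by continuity of $\ast$ and $\cA_\bx^\ast = \cA_{-\bx}$, which matches the definition $s^\ast(\bx) = s(-\bx)^\ast$; and multiplicativity follows by expanding $ab = \left(\sum_\bw \hat{a}(\bw)\right)\!\left(\sum_\by \hat{b}(\by)\right)$, rearranging this absolutely convergent double series, and collecting the terms with $\bw + \by = \bx$, which lie in $\cA_\bw\cA_\by \subseteq \cA_\bx$, to obtain $\widehat{ab}(\bx) = \sum_\by \hat{a}(\bx - \by)\hat{b}(\by)$. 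I do not expect a genuine obstacle: all the conceptual weight sits in the identity $\Delta_G|_{\cA_\bx} = 4\pi^2\abs{\bx}^2$, and what remains is the bookkeeping needed to justify the rearrangements of double series from the rapid-decay estimates and to track the seminorm constants.
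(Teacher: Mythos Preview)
Your proof is correct and follows essentially the same route as the paper: both hinge on the identity \(\Delta_G|_{\cA_\bx} = 4\pi^2\abs{\bx}^2\) together with contractivity of the \(F_\bx\) to bound \(\norm{\hat{a}}_{k,l}\) by \(\norm{a}_{k,l}\), and both use the summability of \((1+4\pi^2\abs{\bx}^2)^{-m}\) for large \(m\) to make \(\check{s}\) converge absolutely. The only notable difference is that you establish continuity of the inverse directly via the estimate \(\norm{\check{s}}_{j,l} \le C_m\norm{s}_{j+m,l}\), whereas the paper simply verifies that \(E \circ F = \id\) and \(F \circ E = \id\) and then invokes the bounded inverse theorem for Fr\'echet spaces; your Hahn--Banach argument for injectivity of \(F\) is also more explicit than the paper's ``easy to check.'' Both approaches are fine, and yours has the minor advantage of giving a concrete seminorm bound for \(F^{-1}\).
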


\begin{example}
By Examples~\ref{canonicalexample0} and \ref{canonicalexample1}, the composition of the Fourier transform \(F : C^\infty(G) \iso \cS(\dual{G};C^\infty(G))\) with the canonical isomorphism \(\cS(\dual{G};C^\infty(G)) \iso \cS(\dual{G})\) is precisely the usual Fourier transform \(C^\infty(G) \iso \cS(\dual{G})\).
\end{example}

\begin{proof}[Proof of Proposition~\ref{fourier}]
On the one hand, since~\cite[Proof of Lemma 1]{Ya}
\[
\forall a \in \cA, \; \bx \in \dual{G} \quad F(\Delta_G a)(\bx) = 4\pi^2 \abs{\bx}^2 \widehat{a}(\bx),
\]
and since each \(F_\bx\) is contractive, it follows that
\[
	\forall a \in \cA, \; \forall k,l \in \bN, \quad \sup_{\bx \in \dual{G}} (1+4\pi^2 \abs{\bx}^2)^k \norm{\widehat{a}(\bx)}_l   \leq \norm{(1+\Delta_G)^ka}_l = \norm{a}_{k,l}
\]
and hence that \(F: \cA \to \cS(\dual{G};\cA)\) is well-defined and continuous; it is easy to check, then, that \(F\) is a \(G\)-equivariant \Star-homomorphism.

On the other hand, if \(K \coloneqq \lceil \tfrac{N}{2} \rceil\), then for all \(s \in \cS(\dual{G};\cA)\), 
\[
	\forall l \in \bN, \quad \sum_{\bx \in \dual{G}} \norm{s(\bx)}_l \leq \sum_{\bx \in \dual{G}} (1+4\pi^2 \abs{\bx}^2)^{-K} \sup_{\by \in \dual{G}} (1+4\pi^2\abs{\by}^2)^K \norm{s(\by)}_l< \infty,
\]
so that \(\check{s}\) is absolutely convergent in \(\cA\), and hence \(s \mapsto \check{s}\) yields a well-defined map \(E : \cS(\dual{G};\cA) \to \cA\); it is easy to check, then, that \(E\) is a \(G\)-equivariant \Star-homomorphism such that \(E \circ F = \id\) and \(F \circ E = \id\), so that by the bounded inverse theorem for Fr\'echet spaces, \(E = F^{-1}\) is continuous and \(F\), indeed, defines a \G-equivariant topological \Star-isomorphism.
\end{proof}

As a consequence of this machinery, we can freely write
\begin{equation}
	\forall a, \; b \in \cA, \quad a b = \sum_{\bx,\by\in\dual{G}} \hat{a}(\bx-\by)\hat{b}(\by), \quad a^\ast = \sum_{\bx \in \dual{G}} \hat{a}(-\bx)^\ast,
\end{equation}
with absolute convergence in \(\cA\).

Finally, let us sketch the analogous Peter--Weyl theory for \G-equivariant Fr\'echet modules. Once more, let us begin by fixing terminology and notation.

\begin{definition}
Let \(\cA\) be a Fr\'echet \Star-algebra. A (right) \emph{Fr\'echet \(\cA\)-module} is a Fr\'echet space \(\cE\) together with the algebraic structure of right \(\cA\)-module, such that the right \(\cA\)-module structure \(\cE \times \cA \to \cE\), \((\xi,a) \mapsto \xi \lhd a\)
is continuous. In particular, a \emph{Hermitian Fr\'echet \(\cA\)-module} is a Fr\'echet \(\cA\)-module \(\cE\) endowed with a \emph{Hermitian metric}, viz, a continuous map \(\hp{}{} : \cE \times \cE \to \cA\) satisfying the following properties:
\begin{enumerate}
	\item For all \(\xi, \eta_1, \eta_2 \in \cE\), \(a_1, a_2 \in \cA\), \(\hp{\xi}{a_1 \lhd \eta_1 + a_2 \lhd \eta_2} =  \hp{\xi}{\eta_1}a_1 + \hp{\xi}{\eta_2}a_2\).
	\item For all \(\xi\), \(\eta \in \cE\), \(\hp{\eta}{\xi} = \hp{\xi}{\eta}^\ast\).
	\item For all \(\xi \in \cE\), \(\hp{\xi}{\xi} \geq 0\); moreover, \(\hp{\xi}{\xi} = 0\) if and only if \(\xi = 0\).
\end{enumerate}
\end{definition}

\begin{definition}
Let \((\cA,\alpha)\) be a Fr\'echet \Gstar-algebra. A \emph{Fr\'echet \GA-module} is a Fr\'echet \(\cA\)-module \(\cE\), together with a smooth, isometric action \(U : G \to B(\cE)\) of \(G\) on \(\cE\) by continuous \(\bC\)-linear operators, such that
\[
	\forall \xi \in \cE, \; \forall a \in \cA, \; \forall t \in G, \quad U_t(\xi \lhd a) = (U_t \xi) \lhd \alpha_t(a).
\]
In particular, a \emph{Hermitian Fr\'echet \GA-module} is Fr\'echet \GA-module \(\cE\) together with a Hermitian metric \(\hp{}{} : \cE \times \cE \to \cA\), such that
\[
	\forall \xi, \; \eta \in \cE, \; \forall t \in G, \quad \hp{U_t\xi}{U_t\eta} = \alpha_t\left(\hp{\xi}{\eta}\right).
\]
\end{definition}

\begin{example}
Let \(X\) be a compact \(G\)-manifold and let \(E \to X\) be a \G-equivariant Hermitian vector bundle. Then the Fr\'echet space \(C^\infty(X,E)\) of smooth global sections of \(E\) defines a Hermitian Fr\'echet \G-\(C^\infty(X)\)-module.
\end{example}

Again, we are free to enrich the Fr\'echet topology on a Fr\'echet \GA-module \(\cE\) with the additional seminorms defined by Equation~\ref{seminorms}, \emph{mutatis mutandis}, so that \(\cE\) remains a Fr\'echet \G-\(\cA^\infty\)-module.

Now, just as in the case of Fr\'echet \Gstar-algebras, a (Hermitian) Fr\'echet \GA-module \(\cE\) admits a Peter--Weyl decomposition into isotypic components, so that one can define a (Hermitian) Fr\'echet \G-\(\cS(\dual{G};\cA)\)-module \(\cS(\dual{G},\cE)\) by
\[
	\cS(\dual{G},\cE) \coloneqq \set*{s : \dual{G} \to \cE \given \begin{cases} \forall \bx \in \dual{G}, \; s(\bx) \in \cE_\bx,\\ \forall j, \; k \in \bN, \; \sup_{\bx \in \dual{G}} (1+4\pi^2\abs{\bx}^2)^j \norm{s(\bx)}_k < +\infty \end{cases}},
\]
endowed with the seminorms, right \(\cS(\dual{G};\cA)\)-module structure, and \(G\)-action
\begin{gather*}
	\forall j,\; k \in \bN, \; \forall s \in \cS(\dual{G},\cE), \quad \norm{s}_{j,k} \coloneqq  \sup_{\bx \in \dual{G}} (1+4\pi^2\abs{\bx}^2)^j \norm{s(\bx)}_k,\\
	\forall s \in \cS(\dual{G};\cE), \; \forall \hat{a} \in \cS(\dual{G};\cA), \; \forall \bx \in \dual{G}, \quad \left(s \lhd \hat{a}\right)(\bx) \coloneqq \sum_{\by \in \dual{G}} s(\bx-\by) \lhd \hat{a}(\by),\\
	\forall t \in G, \; \forall s \in \cS(\dual{G};\cE), \; \forall \bx \in \dual{G}, \quad \left(\hat{U}_t s\right)(\bx) \coloneqq \e{\ip{\bx}{t}} U_t s(\bx),
\end{gather*}
and, if \(\cE\) is Hermitian, the Hermitian metric
\[
	\forall s_1, \; s_2 \in \cS(\dual{G};\cE), \quad \hp{s_1}{s_2} \coloneqq \sum_{\bx, \by \in \dual{G}} \hp{s_1(-\bx+\by)}{s_2(\by)};
\]
as a result, one can again construct a Fourier transform \(F : \cE \to \cS(\dual{G};\cE)\), \(\xi \mapsto \hat{\xi}\), which defines a \G-equivariant topological isomorphism satisfying \(F(\xi \lhd a) = F(\xi) \lhd F(a)\) for all \(\xi \in \cA\) and \(a \in \cA\), and, if \(\cE\) is Hermitian, \(\hp{\hat{\xi}}{\hat{\eta}} = \hp{\xi}{\eta}\) for all \(\xi\), \(\eta \in \cE\), with inverse Fourier transform \(s \mapsto \check{s} = \sum_{\bx \in \dual{G}} s(\bx)\). As a result, we can freely write
\begin{gather}
	\forall \xi \in \cE, \; \forall a \in \cA, \quad \xi \lhd a = \sum_{\bx,\by\in\dual{G}} \hat{\xi}(\bx+\by)\lhd\hat{a}(\by),\\
	\forall \xi, \; \eta \in \cE, \quad \hp{\xi}{\eta} = \sum_{\bx,\by\in\dual{G}} \hp{\hat{\xi}(-\bx+\by)}{\hat{\eta}(\by)},
\end{gather}
with absolute convergence in \(\cE\) and \(\cA\), respectively. Analogous results hold for left Fr\'echet \GA-modules. 

\subsection{Deformation of Fr\'echet \texorpdfstring{\(G\)-\(\ast\)}{G-*}-algebras}

We now turn to strict deformation quantisation of Fr\'echet \Gstar-algebras, expanding on Rieffel's account~\cite[pp.\ 19--22]{Rieffel} in the spirit of Abadie and Exel's Fell bundle-theoretic approach~\cite{AbEx}.

Fix a Fr\'echet \Gstar-alge\-bra \((\cA,\alpha)\). Let \(\Theta \in Z^2(\dual{G},\bT)\). Recall that the \emph{twisted rapid decay algebra} \(\cS(\dual{G},\Theta)\) is defined to be \(\cS(\dual{G})\) as Fr\'echet space, endowed with the same \(G\)-action but with the twisted convolution product and \Star-operation
\begin{gather*}
	\forall s_1, \; s_2 \in \cS(\dual{G}), \; \forall \bx \in \dual{G}, \quad s_1 \star_\Theta s_2(\bx) \coloneqq \sum_{\by \in \dual{G}} \e{-\Theta(\bx-\by,\by)}s_1(\bx-\by)s_2(\by),\\
	\forall s \in \cS(\dual{G}), \; \forall \bx \in \dual{G}, \quad s_1^{\ast_\Theta}(\bx) \coloneqq \e{-\Theta(-\bx,-\bx)} \overline{s_1(-\bx)};
\end{gather*}
one can then check that \(\cS(\dual{G},\Theta)\) remains a Fr\'echet pre-\GCstar-algebra (see Chatterji's appendix to~\cite{Mathai}). By analogy, one can define a unital \Star-algebra \(\cS(\dual{G},\Theta;\cA)\) by endowing the Fr\'echet space space \(\cS(\dual{G};\cA)\) with the same \(G\)-action but with the deformed convolution product and \Star-operation
\begin{gather*}
	\forall s_1, \; s_2 \in \cS(\dual{G},\Theta;\cA), \; \forall \bx \in \dual{G}, \enskip s_1 \star_\Theta s_2(\bx) \coloneqq  \sum_{\by \in \dual{G}} \e{-\Theta(\bx-\by,\by)}s_1(\bx-\by)s_2(\by),\\
	\forall s \in \cS(\dual{G},\Theta;\cA), \forall \bx \in \dual{G}, \quad s^{\ast_\Theta}(\bx) \coloneqq \e{-\Theta(-\bx,-\bx)}s(-\bx)^\ast,
\end{gather*}
where absolute convergence of each \((s_1 \star_\Theta s_2)(\bx)\) is guaranteed by the rapid decay of \(s_1\) and \(s_2\). The proof that \(\cS(\dual{G},\Theta)\) is a Fr\'echet \Star-algebra, \emph{mutatis mutandis}, together with the fact that the isotypic subspaces \(\cA_\bx\) are closed, shows that \(\cS(\dual{G},\Theta;\cA)\) is a Fr\'echet \Gstar-algebra. Moreover, since \(\cS(\dual{G},\Theta;\cA) = \cS(\dual{G};\cA)\) as \(G\)-equivariant Fr\'echet spaces, the Fourier transform on \(\cA\) still defines a \Star-preserving \G-equivariant topological isomorphism \(F_\Theta  : \cA^\infty \to \cS(\dual{G},\Theta;\cA)\) of Fr\'echet spaces. Thus, we can deform \(\cA\) simply by pulling back the deformation of \(\cS(\dual{G};\cA)\) to \(\cA\).

\begin{definition}
Let \((\cA,\alpha)\) be a Fr\'echet \Gstar-algebra. The \emph{strict deformation quantisation} of \((\cA,\alpha)\) by \(\Theta \in Z^2(\dual{G},\bT)\) is the Fr\'echet \Gstar-algebra \((\cA_\Theta,\alpha)\), where \(\cA_\Theta = \cA\) as Fr\'echet \(G\)-spaces with the deformed multiplication and \Star-operation
\begin{gather}
	\forall a, \; b \in \cA, \quad a \star_\Theta b \coloneqq \inv{F}(\hat{a} \star_\Theta \hat{b}) = \sum_{\bx,\by\in\dual{G}} \e{\Theta(\bx-\by,\by)} \hat{a}(\bx-\by)\hat{b}(\by)\\
	\forall a \in \cA, \quad a^{\ast_\Theta} \coloneqq \inv{F}(\hat{a}^{\ast_\Theta}) = \sum_{\bx \in \dual{G}} \e{\Theta(-\bx,-\bx)}\hat{a}(-\bx)^\ast.
\end{gather}
\end{definition}

\begin{remark}
If \(\Theta \in A(\dual{G})\) is an alternating bicharacter, then the deformed \Star-operation is identical to the undeformed \Star-operation.
\end{remark}

The mathematical-physical justification for this construction and its name is given by the following result of Rieffel's, which we only state for the relevant case of compact Abelian Lie groups acting on compact manifolds.

\begin{theorem}[Rieffel~\cite{Rieffel}]
Let \(X\) be a compact \G-manifold and let \(P \in \wedge^2 \fg\). On the one hand, let \(\{\cdot,\cdot\}\) denote the Poisson bracket on \(C^\infty(X)\) induced by the Poisson bi-vector field \((\alpha_\ast \otimes \alpha_\ast)(P)\), and on the other hand, for \(\hbar \in \bR\), let \(\Theta_\hbar \in A(\dual{G})\) be defined by \(\Theta_\hbar(\bx,\by) \coloneqq \exp( \tfrac{1}{2}\hbar\ip{P}{\bx_\ast \otimes \by_\ast})\) for \(\bx\), \(\by \in \dual{G}\). Finally, for \(\hbar \in \bR\), let \(C(X_\hbar)\) denote the \Cstar-completion of \(C^\infty(X)_{\Theta_{\hbar}}\) with multiplication \(\star_\hbar\) and \Cstar-norm \(\norm{}_\hbar\). Then \(\set{C(X_\hbar)}_{\hbar \in \bR}\) is a continuous field of \Cstar-algebras such that
\[
	\forall f, \; g \in C^\infty(X), \quad \lim_{\hbar \to 0} \norm*{\frac{1}{i\hbar}(f \star_\hbar g - g \star_\hbar f) - \{f,g\}}_\hbar = 0.
\]
\end{theorem}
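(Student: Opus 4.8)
The plan is to recognise the statement as an instance of Rieffel's deformation theory for actions of \(\bR^d\)~\cite{Rieffel}, by identifying the twisted-convolution deformation of \S\ref{sec:1} with Rieffel's oscillatory-integral deformation along the periodic action of \(\fg\) induced by \(\sigma\), and then to read the Poisson limit off the explicit Fourier-series formula for \(\star_{\Theta_\hbar}\). To set this up, write \(\sigma : G \to \operatorname{Diff}(X)\) for the given action and form the smooth action \(\beta : \fg \cong \bR^N \to \operatorname{Diff}(X)\), \(\beta_v \coloneqq \sigma_{\exp(v)}\); it is \emph{periodic}, viz, invariant under the lattice \(\Ker(\exp : \fg \to G)\), so that it factors through the identity component \(G^\circ\) of \(G\)---and only this torus part of the \(G\)-action is relevant to the deformation, since \(\bx_\ast = 0\) whenever \(\bx \in \dual{G}\) is trivial on \(G^\circ\). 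Let \(\hat{P} \in \wedge^2\fg^\ast\) be the alternating form corresponding to \(P\) under the chosen identification \(\fg \cong \fg^\ast\), so that \(\Theta_\hbar(\bx,\by) = \exp(\tfrac{1}{2}\hbar\,\hat{P}(\bx_\ast,\by_\ast))\). Decomposing \(f, g \in C^\infty(X)\) into their \(\beta\)-isotypic---equivalently, \(G\)-isotypic---components, on which \(\beta\) acts through characters, collapses the oscillatory integral defining Rieffel's deformed product along \(\beta\) with deformation form \(\tfrac{1}{2}\hbar\hat{P}\) into an elementary Fourier integral, and one checks that the result is precisely the twisted convolution product \(\star_{\Theta_\hbar}\); this is the observation of Sitarz and V\'arilly recalled in \S\ref{intro}. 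Consequently \(C^\infty(X)_{\Theta_\hbar}\) is Rieffel's deformed Fr\'echet pre-\Cstar-algebra, its \Cstar-completion \(C(X_\hbar)\) exists, and the Poisson bi-vector field \((\alpha_\ast \otimes \alpha_\ast)(P)\) on \(X\) coincides with the one Rieffel associates to \(\beta\) and \(\tfrac{1}{2}\hat{P}\).

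With this identification in hand, the assertion that \(\set{C(X_\hbar)}_{\hbar \in \bR}\) is a continuous field of \Cstar-algebras---with continuity structure generated by the sections \(\hbar \mapsto a\) for \(a \in C^\infty(X)\), which make sense because \(C^\infty(X)\) sits inside \(C(X_\hbar)\) as a common dense \Star-subalgebra for every \(\hbar\)---is exactly Rieffel's continuity theorem for the fields \(\set{A_{\hbar J}}_{\hbar \in \bR}\)~\cite{Rieffel}, applied to the \(\bR^N\)-action \(\beta\) and the skew form \(\tfrac{1}{2}\hat{P}\). (Alternatively, one can build the field from scratch: the \(2\)-cocycles \(\Theta_\hbar\) vary norm-continuously in \(\hbar\), hence the twisted group \Cstar-algebras \(\Cstar(\dual{G},\Theta_\hbar)\) form a continuous field, and \(C(X_\hbar)\) is recovered from \(C(X)\) together with this field by a single continuity-preserving construction---but invoking Rieffel is cleanest.) This is the genuine content of the theorem, and the step I expect to be the main obstacle: everything hinges on the continuity of the field of \Cstar-norms \(\hbar \mapsto \norm{a}_\hbar\), which is where Rieffel's estimates do real work.

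It remains to verify the Poisson limit, and here one works directly with the explicit formula. Since \(C^\infty(X)\) is commutative and \(\Theta_\hbar\) is an alternating bicharacter, so that \(\Theta_\hbar(\by,\bx-\by) = -\Theta_\hbar(\bx-\by,\by)\), reindexing the Fourier-series formula for \(\star_{\Theta_\hbar} = \star_\hbar\) gives, for \(f, g \in C^\infty(X)\),
\[
	f \star_\hbar g - g \star_\hbar f = \sum_{\bx,\by \in \dual{G}} 2i\sin\!\left(\pi\hbar\,\hat{P}((\bx-\by)_\ast,\by_\ast)\right)\hat{f}(\bx-\by)\,\hat{g}(\by),
\]
with absolute convergence in \(C^\infty(X)\). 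Dividing by \(i\hbar\) and letting \(\hbar \to 0\), each Fourier coefficient converges to a fixed scalar multiple of \(\hat{P}((\bx-\by)_\ast,\by_\ast)\,\hat{f}(\bx-\by)\,\hat{g}(\by)\); since \(\lvert \sin t \rvert \leq \lvert t \rvert\), the whole series is dominated---uniformly in \(\hbar\)---by a convergent one thanks to the rapid decay of \(\hat{f}\) and \(\hat{g}\), so a routine dominated-convergence estimate upgrades this to convergence in every Fr\'echet seminorm of \(C^\infty(X)\), hence in \(\norm{}_\hbar\) (which is dominated uniformly in \(\hbar\) by the \(\ell^1\)-seminorm \(a \mapsto \sum_{\bx \in \dual{G}}\norm{\hat{a}(\bx)}\)). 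Finally, because \((\alpha_\ast)_X\) acts on the \(\bx\)-isotypic subspace of \(C^\infty(X)\) by the scalar \(2\pi i\ip{\bx_\ast}{X}\), the limiting section has exactly the Fourier coefficients of \(\{f,g\}\) for the normalisation of the Poisson bracket induced by \((\alpha_\ast \otimes \alpha_\ast)(P)\); hence \(\lim_{\hbar \to 0}\norm{\tfrac{1}{i\hbar}(f \star_\hbar g - g \star_\hbar f) - \{f,g\}}_\hbar = 0\), as claimed.
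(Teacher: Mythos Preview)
The paper does not prove this theorem; it is stated and attributed to Rieffel~\cite{Rieffel} as motivation for the name ``strict deformation quantisation,'' so there is no paper proof to compare against.

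Your sketch is nonetheless sound. The identification of the twisted-convolution deformation with Rieffel's oscillatory-integral deformation along the periodic \(\fg\)-action is the right bridge, and once it is made the continuous-field assertion \emph{is} Rieffel's theorem. Your direct Fourier-series argument for the Poisson limit is also correct in outline: the bound \(\lvert\sin t\rvert \le \lvert t\rvert\) together with rapid decay of \(\hat{f}\) and \(\hat{g}\) gives uniform-in-\(\hbar\) domination, and the \(\ell^1\)-estimate \(\norm{a}_\hbar \le \sum_{\bx}\norm{\hat{a}(\bx)}\) holds because isotypic elements have \(\hbar\)-independent \Cstar-norm (for alternating \(\Theta_\hbar\) one has \(b^{\ast_\hbar}\star_\hbar b = b^\ast b \in \cA^G\), and the \(G\)-fixed subalgebra is undeformed). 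The one place to tighten is the final matching of constants: the factor \(2\pi i\) from \((\alpha_\ast)_X\) acting on \(\cA_\bx\) as multiplication by \(2\pi i\ip{\bx_\ast}{X}\) must be reconciled with the \(\pi\) in \(\sin(\pi\hbar\,\hat{P}(\cdot,\cdot))\), so you should write out the limiting Fourier coefficients explicitly and check them against those of \(\{f,g\}\) rather than deferring to ``the normalisation of the Poisson bracket.''
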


More generally, if \(X\) is a compact \G-manifold and \(\Theta \in Z^2(\dual{G},\bT)\), then one is free to form the deformed algebra \(C^\infty(X)_\Theta\). The most familiar example of this construction is surely the operator algebraist's noncommutative torus.

\begin{example}\label{nctorusex}
Let \(\theta \in \wedge^2 \mathbb{R}^N\) and define \(\Theta \in A(\bZ^N)\) by setting \(\Theta(\bx,\by) \coloneqq \exp(\tfrac{1}{2}\theta(\bx_\ast,\by_\ast))
\) for \(\bx\), \(\by \in \bZ^N\). Then \(C^\infty(\bT^N_\theta) \coloneqq C^\infty(\bT^N)_\Theta\) is the noncommutative \(N\)-torus with deformation parameter \(\theta\), satisfying the commutation relations
\[
	\forall \bx, \; \by \in \bZ^N, \quad U_\by \star_\Theta U_\bx = \exp(2\pi i \theta(\bx_\ast,\by_\ast)) U_\bx \star_\Theta U_\by.
\]
\end{example}

Finally, the basic properties of strict deformation quantisation can be summarised as follows; in our context, the rapid decay of Fourier coefficients renders all convergence issues moot, so that the proving these properties reduces to a mechanical, purely algebraic fact-check.

\begin{proposition}[{Brain--Landi--Van Suijlekom~\cite[\S 2]{BLVS}}]\label{isom}
Let \(\cat{Alg}_G\) denote the category of Fr\'echet \Gstar-algebras and \G-equivariant continuous \Star-homomorphisms. For every \(\Theta \in Z^2(\dual{G},\bT)\), the assignment
\begin{gather*}
	\operatorname{Obj}(\cat{Alg}_G) \ni (\cA,\alpha) \mapsto (\cA_\Theta,\alpha),\\
	\operatorname{Mor}(\cat{Alg}_G) \ni \left(f : (\cA,\alpha) \to (\cB,\beta)\right) \mapsto \left(f : (\cA_\Theta,\alpha) \to (\cB_\Theta,\beta)\right),
\end{gather*}
defines an isomorphism of categories \(\cat{R}_\Theta : \cat{Alg}_G \iso \cat{Alg}_G\). Moreover, the functors \(\cat{R}_\Theta\) satisfy the following properties.
\begin{enumerate}
	\item Let \(\id_{\cat{Alg}_G}\) denote the identity functor on \(\cat{Alg}_G\). Then \(\cat{R}_0 = \id_{\cat{Alg}_G}\).
	\item For any \(\Theta\), \(\p{\Theta} \in Z^2(\dual{G},\bT)\), \(\cat{R}_\Theta \circ \cat{R}_\p{\Theta} = \cat{R}_{\Theta + \p{\Theta}}\).
\end{enumerate}
Thus, \(\Theta \mapsto \cat{R}_\Theta\) defines a strict action of the group \(Z^2(\dual{G},\bT)\) on the category \(\cat{Alg}_G\).
\end{proposition}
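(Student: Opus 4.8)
The plan is to observe that the underlying set-theoretic data of \(\cat{R}_\Theta\) are the identity map on both objects and morphisms, so that functoriality (respect for composition and identities) is automatic the moment \(\cat{R}_\Theta\) is shown to be well-defined, and then to reduce everything else to termwise manipulation of absolutely convergent Fourier series. For well-definedness on objects, the construction preceding the definition of \(\cA_\Theta\) already shows that \((\cA_\Theta,\alpha)\) is again a Fr\'echet \Gstar-algebra, sharing with \((\cA,\alpha)\) the same underlying \(G\)-equivariant Fr\'echet space — hence the same isotypic subspaces, Fourier transform, and inverse Fourier transform — and retaining the old unit as unit, since \(\Theta(\bo,\by)=\Theta(\bx,\bo)=0\). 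For well-definedness on morphisms, let \(f\colon(\cA,\alpha)\to(\cB,\beta)\) be a \(G\)-equivariant continuous \Star-homomorphism. Then \(G\)-equivariance gives \(\widehat{f(a)}(\bx)=f(\hat{a}(\bx))\) for all \(a\in\cA\), \(\bx\in\dual{G}\), so applying the continuous \Star-multiplicative map \(f\) termwise to the defining Fourier series for \(a\star_\Theta b\) and \(a^{\ast_\Theta}\) yields \(f(a\star_\Theta b)=f(a)\star_\Theta f(b)\) and \(f(a^{\ast_\Theta})=f(a)^{\ast_\Theta}\); since \(\cA_\Theta=\cA\) and \(\cB_\Theta=\cB\) as \(G\)-equivariant Fr\'echet spaces and \(f(1)=1\), the very same \(f\) is a \(G\)-equivariant continuous \Star-homomorphism \((\cA_\Theta,\alpha)\to(\cB_\Theta,\beta)\), i.e.\ \(\cat{R}_\Theta(f)\) is well-defined.

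Property (1) is immediate: for \(\Theta=0\) every phase \(\e{\Theta(\cdot,\cdot)}\) equals \(1\), so \(a\star_0 b=\sum_{\bx,\by}\hat{a}(\bx-\by)\hat{b}(\by)=ab\) and \(a^{\ast_0}=a^\ast\), whence \(\cA_0=\cA\) on the nose and \(\cat{R}_0=\id_{\cat{Alg}_G}\). The content of the proposition is property (2). Fix \(\Theta,\p{\Theta}\in Z^2(\dual{G},\bT)\) and a Fr\'echet \Gstar-algebra \((\cA,\alpha)\); since \(\cA_{\p{\Theta}}\) has the same underlying \(G\)-equivariant Fr\'echet space as \(\cA\), it has the same isotypic decomposition and hence the same Fourier transform \(a\mapsto\hat{a}\). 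Deforming \((\cA_{\p{\Theta}},\alpha)\) by \(\Theta\) therefore gives, for \(a,b\in\cA\),
\[
a\star_\Theta^{(\cA_{\p{\Theta}})}b=\sum_{\bx,\by\in\dual{G}}\e{\Theta(\bx-\by,\by)}\bigl(\hat{a}(\bx-\by)\star_{\p{\Theta}}\hat{b}(\by)\bigr),
\]
and because \(\hat{a}(\bx-\by)\in\cA_{\bx-\by}\) and \(\hat{b}(\by)\in\cA_\by\) are already isotypic, one has \(\hat{a}(\bx-\by)\star_{\p{\Theta}}\hat{b}(\by)=\e{\p{\Theta}(\bx-\by,\by)}\,\hat{a}(\bx-\by)\hat{b}(\by)\) with the product on the right that of \(\cA\); the two phases combine, by additivity in \(Z^2(\dual{G},\bT)\), to \(\e{(\Theta+\p{\Theta})(\bx-\by,\by)}\), so \(a\star_\Theta^{(\cA_{\p{\Theta}})}b=a\star_{\Theta+\p{\Theta}}^{(\cA)}b\). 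The identical computation for the involution, using \(\hat{a}(-\bx)^{\ast_{\p{\Theta}}}=\e{\p{\Theta}(-\bx,-\bx)}\hat{a}(-\bx)^\ast\), shows the deformed involutions agree; since \(\cat{R}_\Theta\), \(\cat{R}_{\p{\Theta}}\), and \(\cat{R}_{\Theta+\p{\Theta}}\) all act as the identity on morphisms, this gives \(\cat{R}_\Theta\circ\cat{R}_{\p{\Theta}}=\cat{R}_{\Theta+\p{\Theta}}\).

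Finally, combining (1) and (2) gives \(\cat{R}_{-\Theta}\circ\cat{R}_\Theta=\cat{R}_0=\id_{\cat{Alg}_G}=\cat{R}_\Theta\circ\cat{R}_{-\Theta}\), so each \(\cat{R}_\Theta\) is an isomorphism of categories with inverse \(\cat{R}_{-\Theta}\), and \(\Theta\mapsto\cat{R}_\Theta\) is a homomorphism from the additive group \(Z^2(\dual{G},\bT)\) into the automorphism group of \(\cat{Alg}_G\) — a strict action. The step that could a priori cause trouble is property (2): one might fear the composite \(\cat{R}_\Theta\circ\cat{R}_{\p{\Theta}}\) produces a more complicated twist than \(\cat{R}_{\Theta+\p{\Theta}}\). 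It does not, and the reason is precisely that deforming an element which is already isotypic merely rescales it by a scalar, so the two layers of multiplier phases simply add. Beyond this there is no genuine analytic difficulty; the one thing to keep honest is that deformation leaves both the Fr\'echet topology and the \(G\)-action untouched — exactly what was arranged through Proposition~\ref{fourier} and the identification \(\cS(\dual{G},\Theta;\cA)=\cS(\dual{G};\cA)\) of \(G\)-equivariant Fr\'echet spaces — so that continuity of morphisms and absolute convergence of every Fourier series invoked above transfer verbatim.
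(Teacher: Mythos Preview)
Your proof is correct and is exactly the ``mechanical, purely algebraic fact-check'' the paper gestures at in lieu of a proof: the paper simply remarks that rapid decay of Fourier coefficients removes all convergence issues, reducing everything to algebra on isotypic elements, and refers to Brain--Landi--Van Suijlekom for details. You have carried out precisely that check, with the right key observation for property~(2) --- that on isotypic elements the two layers of multiplier phases simply add --- so there is nothing to correct or compare.
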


For a overview of strict deformation quantisation, in full generality, from a thoroughly functorial perspective, consult Brain, Landi, and Van Suijlekom~\cite[\S 2]{BLVS}.

\subsection{Deformation of  \texorpdfstring{\G}{G}-equivariant Fr\'echet modules}\label{moduledeformsec}

For the remainder of this subsection, let \((\cA,\alpha)\) be a fixed Fr\'echet \Gstar-algebra for \(G\) a fixed compact Abelian Lie group. We now discuss the deformation of Fr\'echet \GA-modules to Fr\'echet \G-\(\cA_\Theta\)-modules for \(\Theta \in Z^2(\dual{G},\bT)\) with a particular emphasis on the finitely generated projective case. In what follows, all modules are right modules unless stated otherwise.

Let \(\cE\) be a (Hermitian) Fr\'echet \GA-module and let \(\Theta \in Z^2(\dual{G},\bT)\). Recall that the Fourier transform \(F : \cE \to \cS(\dual{G};\cE)\) defines a (Hermitian metric-preserving) \(G\)-equivariant bicontinuous linear map intertwining the \(\cA\)-module structure on \(\cE\) with the \(\cS(\dual{G};\cA)\)-module structure on \(\cS(\dual{G};\cE)\). Now, it is easy to deform the Fr\'echet \G-\(\cS(\dual{G};\cA)\)-module \(\cS(\dual{G};\cE)\) into a Fr\'echet \G-\(\cS(\dual{G},\Theta;\cA)\)-module \(\cS(\dual{G},\Theta;\cE)\) by defining \(\cS(\dual{G},\Theta;\cE) \coloneqq \cS(\dual{G};\cE)\) as a \G-equivariant Fr\'echet space, except endowed with the right \(\cS(\dual{G},\Theta;\cA)\)-module structure
\begin{multline*}
	\forall s \in \cS(\dual{G},\Theta;\cE), \; \forall \hat{a} \in \cS(\dual{G},\Theta;\cA), \; \forall \bx \in \dual{G}, \\ \left(s \lhd_\Theta \hat{a}\right)(\bx) \coloneqq \sum_{\bx,\by\in\dual{G}} \e{-\Theta(\bx-\by,\by)} s(\bx-\by) \lhd \hat{a}(\by),
\end{multline*}
and, if \(\cE\) is Hermitian, the Hermitian metric
\[
	\forall s_1, \; s_2 \in \cS(\dual{G},\Theta;\cE), \quad \hp{s_1}{s_2}_\Theta \coloneqq \sum_{\bx,\by\in\dual{G}}  \e{-\Theta(\bx-\by,\bx)}  \hp{s_1(-\bx+\by)}{s_1(\by)}.
\]
Thus, we can deform \(\cE\) to a \(\cA_\Theta\)-module simply by pulling back the deformation of \(\cS(\dual{G},\Theta;\cE)\) to \(\cE\).

\begin{definition}
Let \((\cE,U)\) be a Fr\'echet \GA-module. The \emph{deformation} of \(\cE\) by \(\Theta \in Z^2(\dual{G},\bT)\) is the Fr\'echet \G-\(\cA_\Theta\)-module \((\cE_\Theta,U)\), where \(\cE_\Theta \coloneqq \cE\) as \G-equivariant Fr\'echet spaces, endowed with the right \(\cA_\Theta\)-module structure
\begin{equation}
	\forall \xi \in \cE, \; \forall a \in \cA_\Theta, \quad \xi \lhd_\Theta a \coloneqq \inv{F}(\hat{\xi} \lhd_\Theta \hat{a}) = \sum_{\bx,\by\in\dual{G}} \e{-\Theta(\bx-\by,\by)} \hat{\xi}(\bx-\by) \lhd \hat{a}(\by);
\end{equation}
moreover, if \(\cE\) is Hermitian, then we can endow \(\cE_\Theta\) with the \G-equivariant \(\cA_\Theta\)-valued Hermitian metric
\begin{equation}\label{hermitian}
	\forall \xi, \; \eta \in \cE_\Theta, \quad \hp{\xi}{\eta}_\Theta \coloneqq \hp{\hat{\xi}}{\hat{\eta}}_\Theta = \sum_{\bx,\by\in\dual{G}}  \e{-\Theta(\bx-\by,\bx)}  \hp{\hat{\xi}(-\bx+\by)}{\hat{\eta}(\by)}.
\end{equation}
\end{definition}

\begin{remark}
The analogous construction can just as easily be made in the case of \emph{left} (Hermitian) Fr\'echet \GA-modules.
\end{remark}

The basic properties of deformations of Fr\'echet \GA-modules can be summarised as follows; again, in our context, the rapid decay of Fourier coefficients reduces the proof of these properties to a mechanical, purely algebraic fact-check.

\begin{proposition}[{Brain--Landi--Van Suijlekom~\cite[\S 2]{BLVS}}]\label{modfunctor}
Let \((\cA,\alpha)\) be a Fr\'echet \Gstar-algebra and let \(\cat{Mod}_G(\cA)\) denote the category of right Fr\'echet \GA-modules and \G-equivariant continuous \(\cA\)-module homomorphisms. For every \(\Theta \in Z^2(\dual{G},\bT)\), the assignment
\begin{gather*}
	\operatorname{Obj}(\cat{Mod}_G(\cA)) \ni (\cE,U) \mapsto (\cE_\Theta,U)\\
	\operatorname{Mor}(\cat{Mod}_G(\cA)) \ni (f : (\cE,U) \to (\cF,V)) \mapsto (f : (\cE_\Theta,U) \mapsto (\cF_\Theta,V)),
\end{gather*}
defines an isomorphism of categories \(\cat{R}_{\cA,\Theta} : \cat{Mod}_G(\cA) \iso \cat{Mod}_G(\cA_\Theta)\). Moreover, the functors \(\cat{R}_{\cA,\Theta}\) satisfy the following properties.
\begin{enumerate}\setcounter{enumi}{-1}
	\item Let \(\cA\) and \(\cB\) be Fr\'echet \Gstar-algebras, let \(\phi : \cA \to \cB\) be a \G-equivariant continuous \Star-homomorphism, and let \(\phi^\ast : \cat{Mod}_G(\cB) \to \cat{Mod}_G(\cA)\) denote the pullback functor. For any \(\Theta \in Z^2(\dual{G},\bT)\), \(\phi^\ast \circ \cat{R}_{\cB,\Theta} = \cat{R}_{\cA,\Theta} \circ \phi^\ast\).
	\item Let \(\id_{\cat{Mod}_G(\cA)}\) denote the identity on \(\cat{Mod}_G(\cA)\). Then \(\cat{R}_{\cA,0} = \id_{\cat{Mod}_G(\cA)}\).
	\item For any \(\Theta\), \(\p{\Theta} \in Z^2(\dual{G},\bT)\), \(\cat{R}_{\cA_\Theta,\p{\Theta}} \circ \cat{R}_{\cA,\Theta} = \cat{R}_{\cA,\Theta+\p{\Theta}}\).
\end{enumerate}
Analogous results hold for left Fr\'echet \GA-modules and \G-equivariant continuous \(\cA\)-module homomorphisms and for left or right Hermitian Fr\'echet \GA-modules and \G-equivariant unitary isomorphisms, viz, bicontinuous unitary \(\cA\)-module homomorphisms.
\end{proposition}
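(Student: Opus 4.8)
The plan is to transport everything through the Fourier transform of the underlying module, where, exactly as for \(\cS(\dual{G},\Theta)\) and \(\cS(\dual{G},\Theta;\cA)\), all of the assertions become finite-type bookkeeping identities for the cocycle \(\Theta\), with rapid decay of Fourier coefficients rendering every rearrangement legitimate. Concretely, since \(\cE_\Theta = \cE\) as a \(G\)-equivariant Fr\'echet space and the Fourier transform \(F : \cE \to \cS(\dual{G};\cE)\) is a \(G\)-equivariant bicontinuous linear isomorphism carrying the deformed action \(\lhd_\Theta\) on \(\cE_\Theta\) to the twisted convolution action on \(\cS(\dual{G},\Theta;\cE)\) and, in the Hermitian case, \(\hp{}{}_\Theta\) to the twisted metric on \(\cS(\dual{G},\Theta;\cE)\), it suffices to verify the module axioms, the metric axioms, and the functor identities on the level of rapidly decaying sections.

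First I would check that \(\cat{R}_{\cA,\Theta}\) is well-defined on objects. Continuity of \(\lhd_\Theta\) and, in the Hermitian case, of \(\hp{}{}_\Theta\) is inherited from the undeformed structures because \(F\) is a topological isomorphism and the inserted factors \(\e{-\Theta(\cdot,\cdot)}\) have modulus one. The only genuinely algebraic points are associativity, \((\xi \lhd_\Theta a) \lhd_\Theta b = \xi \lhd_\Theta (a \star_\Theta b)\), and, when \(\cE\) is Hermitian, the Hermitian-metric axioms together with the compatibility of \(\hp{}{}_\Theta\) with \(\lhd_\Theta\) and with the \(\ast\)-operation; expanding both sides in Fourier coefficients, each of these collapses to the \(2\)-cocycle identity \(\Theta(\bx,\by+\bz)+\Theta(\by,\bz) = \Theta(\bx,\by)+\Theta(\bx+\by,\bz)\) in precisely the way used to show that \(\cS(\dual{G},\Theta;\cA)\) is a Fr\'echet \Gstar-algebra. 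The one point that is not purely formal is positivity, \(\hp{\xi}{\xi}_\Theta \geq 0\) in \(\cA_\Theta\); here I would pass to the \Cstar-completion \(A_\Theta\) of \(\cA_\Theta\) and invoke Rieffel's theory to see that \(\hp{}{}_\Theta\) extends to a genuine Hilbert \Cstar-module inner product over \(A_\Theta\), so positivity already holds in \(\cA_\Theta\).

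Next, on morphisms: a \(G\)-equivariant continuous \(\cA\)-module homomorphism \(f : \cE \to \cF\) maps isotypic components to isotypic components, hence commutes with the Fourier transform; since the formula for \(\lhd_\Theta\) is obtained from the undeformed action by inserting scalars that do not involve \(f\), it follows at once that the same underlying map \(f\) is \(\cA_\Theta\)-linear from \(\cE_\Theta\) to \(\cF_\Theta\) and remains \(G\)-equivariant and continuous. Preservation of identities and composition is then trivial, as \(\cat{R}_{\cA,\Theta}\) acts as the identity on underlying maps, and property (0) is equally immediate: the pullback \(\phi^\ast\) merely restricts scalars along the \(G\)-equivariant \Star-homomorphism \(\phi\), which again commutes with Fourier transforms, so \(\phi^\ast \circ \cat{R}_{\cB,\Theta}\) and \(\cat{R}_{\cA,\Theta} \circ \phi^\ast\) agree on objects and morphisms. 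Property (1) holds because \(\e{-\Theta(\cdot,\cdot)} = 1\) when \(\Theta = 0\), so \(\lhd_0 = \lhd\) and \(\hp{}{}_0 = \hp{}{}\).

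Property (2) is the crux, and the step I expect to require the most care: composing the two deformations produces, at each convolution, a product of factors \(\e{-\Theta(\bx-\by,\by)}\,\e{-\p{\Theta}(\cdot,\cdot)}\), and one must use the cocycle identities for \(\Theta\) and \(\p{\Theta}\)---together with \(\cat{R}_\Theta \circ \cat{R}_{\p{\Theta}} = \cat{R}_{\Theta+\p{\Theta}}\) on the coefficient algebra from Proposition~\ref{isom}---to recombine them into the single factor \(\e{-(\Theta+\p{\Theta})(\bx-\by,\by)}\); the book-keeping of convolution indices here is identical to that for the twisted cross-sectional algebras and is where errors are easiest to make. The isomorphism statement then follows formally: \((\cA_\Theta)_{-\Theta} = \cA\) as Fr\'echet \Gstar-algebras by Proposition~\ref{isom}, and properties (1)--(2) give \(\cat{R}_{\cA_\Theta,-\Theta} \circ \cat{R}_{\cA,\Theta} = \cat{R}_{\cA,0} = \id_{\cat{Mod}_G(\cA)}\) and, symmetrically, \(\cat{R}_{\cA,\Theta} \circ \cat{R}_{\cA_\Theta,-\Theta} = \id_{\cat{Mod}_G(\cA_\Theta)}\), so \(\cat{R}_{\cA,\Theta}\) is an isomorphism of categories with inverse \(\cat{R}_{\cA_\Theta,-\Theta}\). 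The left-module and Hermitian variants are obtained by the same computations \emph{mutatis mutandis}, noting that a \(G\)-equivariant unitary isomorphism, being a \(G\)-equivariant isometry for \(\hp{}{}\), is automatically one for \(\hp{}{}_\Theta\) since the latter is built from the former by the same universal twisting.
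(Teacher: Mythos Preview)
Your proposal is correct and follows precisely the approach indicated in the paper, which does not give a detailed proof but simply remarks that ``the rapid decay of Fourier coefficients reduces the proof of these properties to a mechanical, purely algebraic fact-check'' and refers the reader to \cite{BLVS}*{\S 2}. Your elaboration---transporting everything through the Fourier transform to \(\cS(\dual{G},\Theta;\cE)\), where the module and functor identities become cocycle identities and rapid decay disposes of convergence---is exactly what that sentence means. The one substantive caveat is your appeal to Rieffel's theory for positivity of \(\hp{\xi}{\xi}_\Theta\): in the generality of the proposition, \(\cA\) is only a Fr\'echet \Gstar-algebra, not a Fr\'echet pre-\GCstar-algebra, so there is no \Cstar-completion to pass to and indeed the meaning of \(\geq 0\) in condition~(3) of the definition of Hermitian metric is already ambiguous; the paper does not address this either, and in the applications that follow (f.g.p.\ modules, spectral triples) the pre-\Cstar\ hypothesis is always in force, so your fix is adequate there.
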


Again, for an overview of the deformation of equivariant Fr\'echet modules, in full generality, from a thoroughly functorial perspective, consult Brain, Landi, and Van Suijlekom~\cite[\S 2]{BLVS}. 

Let us now focus our attention on finitely generated projective modules. From now on, we assume that \((\cA,\alpha)\) is a unital Fr\'echet pre-\GCstar-algebra with \G-equivar\-i\-ant \Cstar-completion the unital \GCstar-algebra \((A,\alpha)\); this will always be the case in the context of Connes--Landi deformation. In addition, we shall only consider deformation along bicharacters; in light of Theorem~\ref{kleppner} and our later discussion in \S~\ref{cohomology}, we can do so without any loss of generality.

Now, recall that a finitely generated projective right \(\cA\)-module \(\cE\) admits a canonical Fr\'echet topology, which is induced by any inclusion of \(\cE\) as a complementable subspace of \(\cA^N\) for some \(N \in \bN\). One can therefore make the following definition.

\begin{definition}[{Julg~\cite[Definition 2.1]{Julg}}]\label{fgpmoddef}
We define a \emph{f.g.p.\ \GA-module} to be a finitely generated projective right \(\cA\)-module \(\cE\) together with a strongly continuous, isometric \G-action \(U : G \to B(\cE)\), such that
\begin{equation}
	\forall a \in \cA, \; \forall \xi \in \cE, \; \forall t \in G, \quad U_t(\xi \lhd a) = U_t(\xi) \lhd \alpha_t(a).
\end{equation}
In particular, we define a \emph{Hermitian f.g.p.\ \GA-module} to be a f.g.p.\ \GA-module together with a Hermitian metric \( \hp{}{} : \cE \times \cE \to \cA\) satisfying
\begin{equation}
	\forall \xi,\; \eta \in \cE, \; \forall t \in G, \quad \hp{U_t\xi}{U_t\eta} = \alpha_t(\hp{\xi}{\eta}).
\end{equation}
\end{definition}

\begin{remark}
Observe that an f.g.p.\ \GA-module is, in particular, a Fr\'echet \(\cA\)-module, but is not, \emph{a priori}, a Fr\'echet \GA-module.
\end{remark}

Under our additional assumption on \(\cA\), one can prove the following crucial characterisation of f.g.p.\ \G-\(\cA\)-modules as \(G\)-equivariant direct summands of finitely generated free \G-\(\cA\)-modules, which will immediately imply that f.g.p.\ \GA-modules are, in particular, Fr\'echet \GA-modules.

\begin{lemma}[{Julg~\cite[Lemma 4.2]{Julg}, cf.\ Segal~\cite[Proposition 2.4]{Segal}}]\label{fgpmod}
Let \(\cE\) be an f.g.p.\ \GA-module. There exists a \G-equivariant isomorphism \(\psi : \cE \iso e(V \otimes \cA)\), where \(V\) is a finite-dimensional representation of \(G\) and \(e \in (B(V) \otimes \cA)^G\) is an idempotent. If, in addition, \(\cE\) is Hermitian, then one can take the representation \(V\) to be unitary, the idempotent \(E\) to be an orthogonal projection, and the isomorphism \(\psi : \cE \iso e(V \otimes \cA)\) to be unitary, in the sense that
\[
	\forall v, \; \p{v} \in V, \; \forall a, \p{a} \in \cA, \quad \hp{\inv{\psi}(v\otimes a)}{\inv{\psi}(\p{v}\otimes\p{a})} = \ip{v}{\p{v}}a^\ast \p{a}.
\]
\end{lemma}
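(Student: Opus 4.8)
The plan is to adapt the classical arguments of Segal and Julg: I would first realise \(\cE\) as a \(G\)-equivariant direct summand of \(V \otimes \cA\) for a finite-dimensional \(G\)-representation \(V\) assembled from isotypic components of a finite generating set, and then average over \(G\) to force equivariance at each stage.

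\emph{Construction of \(V\).} I would start by fixing \(\cA\)-module generators \(\xi_1, \dotsc, \xi_n\) of \(\cE\). Although \(\cE\) is assumed only to carry a strongly continuous \(G\)-action, it is in particular a complemented subspace of the finitely generated free module \(\cA^n\) and hence a Fr\'echet space, on which, by hypothesis, \(G\) acts strongly continuously; so the isotypic decomposition of \(\cE\) guaranteed by the Peter--Weyl theorem for topological vector spaces~\cite{BtD}*{Theorem 5.7} is available, and for each \(i\) I can choose \(\eta_i\) lying in the algebraic direct sum of finitely many isotypic subspaces of \(\cE\) and approximating \(\xi_i\) as closely as desired. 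Because a sufficiently small perturbation of the surjection \((a_i) \mapsto \sum_i \xi_i \lhd a_i\), composed with a fixed continuous \(\cA\)-linear section of it, is an invertible perturbation of \(\id_\cE\) and hence itself surjective, the \(\eta_i\) still generate \(\cE\) once the approximation is good enough. Let \(V\) be the span of all isotypic components occurring in \(\eta_1, \dotsc, \eta_n\), a finite-dimensional \(G\)-invariant subspace of \(\cE\) containing a generating set. Then \(\Phi : V \otimes \cA \to \cE\), \(v \otimes a \mapsto v \lhd a\), is a surjection, and it is \(G\)-equivariant for the diagonal action \(t \cdot (v \otimes a) = (t \cdot v) \otimes \alpha_t(a)\) on \(V \otimes \cA\). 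I expect this step to be the main obstacle: it is where compactness of \(G\) — via Peter--Weyl — and the pre-\Cstar-algebra hypothesis on \(\cA\) are genuinely needed, and where one must take care not to invoke circularly the conclusion that \(\cE\) is a Fr\'echet \GA-module.

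\emph{Equivariant splitting.} Since \(\cE\) is projective, \(\Phi\) admits a continuous \(\cA\)-linear section \(\sigma_0\); averaging it against the action, \(\tilde\sigma \coloneqq \int_G U'_t \circ \sigma_0 \circ U_{-t} \dif t\) — a convergent \((V \otimes \cA)\)-valued integral over the compact group \(G\), with \(U'\) the diagonal action — is a continuous \(\cA\)-linear \(G\)-equivariant section of \(\Phi\). It follows that \(e \coloneqq \tilde\sigma \circ \Phi \in \End_\cA(V \otimes \cA) = B(V) \otimes \cA\) is an idempotent, fixed under the natural \(G\)-action on \(B(V) \otimes \cA\) (conjugation on \(B(V)\) and \(\alpha\) on \(\cA\)), hence \(e \in (B(V) \otimes \cA)^G\); and \(\tilde\sigma\) is injective with image \(e(V \otimes \cA)\), so it is a \(G\)-equivariant isomorphism \(\cE \iso e(V \otimes \cA)\). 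This settles the first assertion.

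\emph{Hermitian refinement.} If \(\cE\) carries a \(G\)-invariant Hermitian metric, I would additionally average an inner product on \(V\) to make \(V\) a unitary \(G\)-representation, so that \(V \otimes \cA\) acquires the \(G\)-invariant Hermitian metric \(\hp{v \otimes a}{\p v \otimes \p a} = \ip{v}{\p v} a^\ast \p a\). The construction above then yields \(\cE \cong e_0(V \otimes \cA)\) for a \(G\)-fixed idempotent \(e_0\); the standard procedure for replacing an idempotent by an equivalent orthogonal projection, carried out inside the pre-\Cstar-algebra \(B(V) \otimes \cA = M_{\dim V}(\cA)\), which is closed under holomorphic functional calculus in \(B(V) \otimes A\), is equivariant because \(G\) acts by \Star-automorphisms, and so produces a \(G\)-fixed orthogonal projection \(e\) together with a \(G\)-equivariant \(\cA\)-module isomorphism \(T : \cE \iso e(V \otimes \cA)\). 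This \(T\) need not be isometric, but \(TT^\ast \in \End_\cA(e(V \otimes \cA)) = e(B(V) \otimes \cA)e\) is positive and invertible, so \(\psi \coloneqq (TT^\ast)^{-1/2} T\), with the square root formed in this corner pre-\Cstar-algebra, is a \(G\)-equivariant unitary \(\psi : \cE \iso e(V \otimes \cA)\); the displayed identity for \(\hp{}{}\) is then just the formula for the standard metric on \(V \otimes \cA\). The only points needing care are the bookkeeping of \(G\)-equivariance through these functional-calculus steps — automatic, since \(G\) acts by \Star-automorphisms and leaves the Hermitian metrics invariant — and the standard facts that matrix algebras over, and corners by projections of, pre-\Cstar-algebras are again pre-\Cstar-algebras.
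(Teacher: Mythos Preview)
Your proposal is correct and follows essentially the same route as the paper's proof: build a finite-dimensional \(G\)-invariant \(V\subset\cE\) by Peter--Weyl approximation of a generating set, show the approximants still generate via an invertibility/perturbation argument that ultimately rests on the holomorphic functional calculus closure of \(M_n(\cA)\subset M_n(A)\), average a section of the resulting surjection \(V\otimes\cA\twoheadrightarrow\cE\) to get the \(G\)-invariant idempotent, and in the Hermitian case pass to a \(G\)-invariant projection and correct by a positive square root. The only cosmetic differences are that the paper phrases the approximation via the \(L^1(G)\)-convolution action and carries out the perturbation argument explicitly inside \(M_N(A)\) using the chosen embedding \(\iota:\cE\hookrightarrow\cA^N\), whereas you invoke Peter--Weyl directly and run the invertibility in \(\End_\cA(\cE)\); and the paper does the unitary correction on the source side via \(P^{-1/2}\) while you do it on the target via \((TT^\ast)^{-1/2}\).
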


\begin{proof}
The proof of the \Cstar-algebraic case \cite[\S 11.2]{Blackadar} \cite[\S 2.2]{Phillips} carries over mostly unchanged; we shall sketch the proof with an emphasis on necessary changes. 

To begin, let \(\iota : \cE \inj \cA^N\) be an inclusion of \(\cE\) as a complementable subspace of \(\cA^N\) for some \(N \in \bN\), thereby defining the Fr\'echet topology on  \(\cE\), and let \(\set{\xi_1,\dotsc,\xi_N}\) be a set of generators for \(E\) \emph{qua} \(\cA\)-module.
Let \(E \coloneqq \cE \otimes_\cA A\), topologised as a complementable subspace of \(A^N\) via the \G-equivariant extension \(\iota \otimes \id : E \coloneqq \cE \otimes_\cA A \inj \cA^N \otimes_\cA A = A^N\) of \(\iota\) from \(\cE\) to its \(G\)-equivariant closure \(E\); in particular, \(E\) defines a \G-equivariant Hilbert \(A\)-module, which, again, is generated by \(\set{\xi_1,\dotsc,\xi_N} \in \cE \subset E\).

Now, use the convolution action of \(L^1(G)\) on the \(G\)-invariant dense subspace \(\cE\) on \(E\) find \(\eta_1,\dotsc,\eta_N \in \cE\) with \(\norm{\xi_k - \eta_k} < N^{-1}\) in \(E\) for each \(k\), such that
\(
	V \coloneqq \Span_\bC \cup_{k=1}^N L^1(G)\eta_k
\)
is finite-dimensional and \G-invariant. It suffices to show that \(\set{\eta_1,\dotsc,\eta_N}\) also generates \(\cE\) as an \(\cA\)-module. Indeed, if that is the case, then we can define a \G-equivariant surjective \(\cA\)-module homomorphism \(\phi : V \otimes \cA \to \cE\) by setting
\(
	 \phi(v \otimes a) \coloneqq v \lhd a
\)
for \(v \in V\) and \(a \in \cA\). Since \(\cE\) is projective, \(\phi\) admits a right inverse \(\psi_0 : \cE \to V \otimes \cA\), which we can average over \(G\) to obtain a \G-equivariant right inverse \(\psi : \cE \to V \otimes \cA\); at last, it follows that \(\psi : \cE \iso e(V \otimes \cA)\) defines a \G-equivariant topological isomorphism of \(\cA\)-modules with inverse \(\inv{\psi} = \rest{\phi}{e(V \otimes \cA)}\), where \(e \coloneqq \phi \circ \psi \in B(V) \otimes \cA\) is the desired \(G\)-invariant idempotent.

Let us conclude the above argument by showing that \(\set{\eta_1,\dotsc,\eta_N}\) also generates \(\cE\). Define homomorphisms \(T\), \(\p{T} \in \Hom_\cA(\cA^N,\cE) \subset \Hom_{A}(A^N,E)\) by setting
\[
	\forall (a_1,\dotsc,a_N) \in \cA^N, \quad T(a_1,\dotsc,a_N) \coloneqq \sum_{k=1}^N \xi_k \lhd a_k, \quad \p{T}(a_1,\dots,a_N) \coloneqq \sum_{k=1}^N \eta_k \lhd a_k,
\]
where, by construction, \(T\) defines a continuous left inverse for \(\iota\). It clearly suffices to show that \(\p{T}\) is surjective. Now, by our assumption on \(\set{\eta_1,\dotsc,\eta_N}\), it follows that \(\norm{T-\p{T}}_{\Hom_A(A^N,E)} < 1\), so that
\[
	\norm{((\id - \iota \circ T) + \iota \circ \p{T})-\id}_{M_N(A)} = \norm{\iota \circ T - \iota \circ \p{T}}_{M_N(A)} \leq \norm{T-\p{T}}_{\Hom_A(A^N,E)} < 1,
\]
and hence, \((\id - \iota \circ T) + \iota \circ \p{T}\) is invertible in \(M_N(A)\); as a result, since \(M_N(\cA) \subset M_N(A)\) is closed under the holomorphic functional calculus, \((\id - \iota \circ T) + \iota \circ \p{T}\) is actually invertible in \(M_N(\cA)\). But now, since \(\cA^N = \Ran(\id - \iota \circ T) \oplus \Ran(\iota)\), it follows that \(\Ran(\iota \circ \p{T}) = \Ran(\iota)\), so that by injectivity of \(\iota\), \(\p{T}\) is indeed surjective.

Finally, suppose that \(\cE\) is Hermitian with Hermitian metric \(\hp{}{}\). Choose a \G-invariant inner product \(\ip{}{}\) on \(V\), which induces a Hermitian metric on \(V \otimes A\) by setting
\(
	\hp{v \otimes a}{\p{v}\otimes\p{a}}^\prime \coloneqq \ip{v}{\p{v}} a^\ast \p{a},
\)
for all \(v\), \(\p{v} \in V\) and \(a\), \(\p{a} \in \cA\), and construct a \G-invariant orthogonal projection \(p\), such that \(e(V \otimes \cA) = p(V \otimes \cA)\) \cite[Theorem 3.8]{GBVF}. Endow \(\cE\) with the \G-invariant Hermitian metric \(\psi^\ast \hp{}{}\), and let \(P \in \End_\cA(\cE) \cong p(B(V) \otimes \cA)p\) be the \G-invariant strictly positive operator, such that
\(
	\hp{\xi}{\eta} = \psi^\ast \hp{\xi}{P\eta}^\prime.
\) for all \(\xi\), \(\eta \in \cE\). Snce \(\End_\cA(\cE) \cong p(B(V) \otimes \cA)p\) is closed under the holomorphic functional calculus, \(P\) admits a strictly positive \G-invariant square root \(P^{1/2} \in \End_\cA(\cE)^G\) of \(P\), which, in turn, yields the desired \G-equivariant unitary isomorphism \(\psi \circ P^{-1/2} : \cE \iso p(V \otimes \cA)\).
\end{proof}

\begin{remark}
By Lemma~\ref{fgpmod}, any f.g.p. \GA-module \(\cE \cong e(V \otimes \cA)\) admits a \G-invariant Hermitian metric; choose a \G-invariant inner product on \(V\) and endow \(V \otimes \cA\) with the corresponding \G-invariant Hermitian metric, which one can just pull back to \(\cE\). More importantly, any f.g.p. \GA-module \(\cE \cong e(V \otimes \cA)\) is a Fr\'echet \GA-module, since the \G-action on \(e(V \otimes \cA)\) is manifestly strongly smooth.
\end{remark}

Since an f.g.p.\ \GA-module \(\cE\) is necessarily a Fr\'echet \GA-module, we may deform it by any \(\Theta \in B(\Gamma)\) to obtain a Fr\'echet \G-\(\cA_\Theta\)-module \(\cE_\Theta\); as it turns out, \(\cE_\Theta\) still defines a f.g.p.\ \G-\(\cA_\Theta\)-module, which we show by constructing an explicit \(G\)-invariant projection \(p_\Theta\) onto \(\cE_\Theta\) from a given a \(G\)-invariant projection \(p\) onto \(\cE\).
 
\begin{theorem}[{cf.\ Connes--Dubois-Violette~\cite[Proposition 5]{CDV}, Brain--Landi--Van Suijlekom~\cite[Proposition 3.2.1]{BLVS}}]\label{moduledeform}
Let \(\cE\) be a f.g.p.\ \GA-module; without loss of generality, take \(\cE\) to be Hermitian. Let \(\pi : G \to U(V)\) be any finite-dimensional unitary \G-representation and let \(p \in B(V) \otimes \cA\) be any \G-invariant orthogonal projection such that \(\cE \cong p(V \otimes \cA)\) as Hermitian Fr\'echet \GA-modules; let \((V,\pi) = (\bC v_1,e \circ \bx_1) \oplus \cdots (\bC v_N,e \circ \bx_N)\) be any orthogonal decomposition of \(\pi\) into irreducible unitary representations, where \(\bx_1,\dotsc,\bx_N \in \dual{G}\). For any \(\Theta \in B(\dual{G})\), we have that \(\cE_\Theta \cong p_\Theta (V \otimes \cA_\Theta)\) as Hermitian Fr\'echet \GA-modules, where \(p_\Theta \in B(V) \otimes \cA_\Theta\) is the \G-invariant orthogonal projection defined by
\begin{equation}\label{projformula}
	p_\Theta \coloneqq \sum_{j,k=1}^N  \left( v_j \otimes \ip{v_k}{} \right) \otimes \e{-\Theta(\bx_j,\bx_j-\bx_k)} \hp{v_j \otimes 1}{p(v_k \otimes 1)};
\end{equation}
in particular, \(\cE_\Theta\) defines a f.g.p.\ \G-\(\cA_\Theta\)-module.
\end{theorem}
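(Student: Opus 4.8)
The plan is to realise the deformed module $\cE_\Theta$ explicitly as a complemented $G$-submodule of a \emph{free} $\cA_\Theta$-module, by first ``dequantising'' the deformed free module $(V\otimes\cA)_\Theta$ to the obvious free $\cA_\Theta$-module $V\otimes\cA_\Theta$ and then transporting $p$ across. I would begin by invoking the remark following Lemma~\ref{fgpmod} to assume $\cE$ is Hermitian, and by fixing the data of the statement: a $G$-equivariant unitary isomorphism $\psi:\cE\iso p(V\otimes\cA)$ as furnished by Lemma~\ref{fgpmod}, the orthogonal decomposition $V=\bigoplus_{j=1}^N\bC v_j$ into weight lines with $v_j$ a unit vector of weight $\bx_j$, and the matrix entries $p_{jk}\coloneqq\hp{v_j\otimes1}{p(v_k\otimes1)}\in\cA$, so that $p(v_k\otimes1)=\sum_j v_j\otimes p_{jk}$. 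The only non-formal input is that $G$-invariance of $p$ forces $p_{jk}\in\cA_{\bx_k-\bx_j}$; this homogeneity is exactly what will produce the phases $\e{-\Theta(\bx_j,\bx_j-\bx_k)}$ at the end.

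Next I would analyse $(V\otimes\cA)_\Theta$. Unwinding the definition of $\lhd_\Theta$ on the weight line $\bC v_j\otimes\cA$ shows that, for all $a,b\in\cA$,
\[
	(v_j\otimes b)\lhd_\Theta a \;=\; v_j\otimes\bigl(b\star_\Theta\beta_j(a)\bigr),\qquad \beta_j\coloneqq\sum_{\by\in\dual{G}}\e{-\Theta(\bx_j,\by)}\,F_\by,
\]
so that $(V\otimes\cA)_\Theta$ is free ``up to the twists $\beta_j$''. Since $\Theta$ is a bicharacter, each $\beta_j$ is a $G$-equivariant $\ast$-automorphism of $\cA_\Theta$ — a one-line identity in $\Theta$ — whence $W\coloneqq\bigoplus_{j=1}^N\bigl(\id_{\bC v_j}\otimes\beta_j^{-1}\bigr)$ is a $G$-equivariant $\cA_\Theta$-module isomorphism of $(V\otimes\cA)_\Theta$ onto $V\otimes\cA_\Theta$ with its standard free structure. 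I would then check, using the formula~\eqref{hermitian} for the deformed Hermitian metric together with the bicharacter identities governing $\star_\Theta$ and $\ast_\Theta$, that $W$ is moreover \emph{unitary} onto $V\otimes\cA_\Theta$ equipped with the standard metric $\hp{v\otimes a}{v'\otimes a'}\coloneqq\ip{v}{v'}\,a^{\ast_\Theta}\star_\Theta a'$. This simultaneous module-and-isometry verification — essentially a careful, sign-sensitive propagation of the $2$-cocycle $\Theta$ through the deformed product, $\ast$-operation, and Hermitian metric — is where all of the work lies; the rest is formal, the rapid decay of Fourier coefficients rendering every convergence issue moot.

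Granting this, the theorem follows at once. By functoriality of deformation (Proposition~\ref{modfunctor}, in its Hermitian form), $\psi$ deforms to a $G$-equivariant unitary isomorphism $\cE_\Theta\iso(p(V\otimes\cA))_\Theta$, realising $\cE_\Theta$ as a complemented $G$-submodule of $(V\otimes\cA)_\Theta$; composing with $W$ realises it as the complemented $G$-submodule $W\bigl((p(V\otimes\cA))_\Theta\bigr)=(WpW^{-1})(V\otimes\cA_\Theta)$ of the free Hermitian $\GA_\Theta$-module $V\otimes\cA_\Theta$. Hence $p_\Theta\coloneqq WpW^{-1}\in B(V)\otimes\cA_\Theta$ is automatically a $G$-invariant orthogonal projection with $\cE_\Theta\cong p_\Theta(V\otimes\cA_\Theta)$ as Hermitian Fr\'echet $\GA$-modules, and it only remains to compute it. Evaluating $WpW^{-1}$ on an element $v_k\otimes b$ — i.e.\ applying $W^{-1}$, then $p$, then $W$ — yields $\sum_j v_j\otimes\beta_j^{-1}\bigl(p_{jk}\cdot\beta_k(b)\bigr)$, where $\cdot$ is the \emph{undeformed} multiplication of $\cA$; using homogeneity of $p_{jk}$ one rewrites $p_{jk}\cdot w=p_{jk}\star_\Theta\beta_{\bx_k-\bx_j}^{-1}(w)$, and $\beta_{\bx_k-\bx_j}^{-1}\circ\beta_k=\beta_j$ by additivity of $\Theta$ in its first argument, so the expression collapses to $\sum_j v_j\otimes\bigl(\beta_j^{-1}(p_{jk})\star_\Theta b\bigr)$. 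Therefore $(p_\Theta)_{jk}=\beta_j^{-1}(p_{jk})=\e{\Theta(\bx_j,\bx_k-\bx_j)}p_{jk}=\e{-\Theta(\bx_j,\bx_j-\bx_k)}p_{jk}$, which is precisely~\eqref{projformula}.

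The main obstacle, then, is not the logical architecture — which is forced by functoriality once $(V\otimes\cA)_\Theta$ has been identified with a free module — but the explicit construction and verification of the dequantising isomorphism $W$: one must check that the naive diagonal twist by the $\beta_j^{-1}$ is at once $\cA_\Theta$-linear and metric-preserving. Once that is in hand, the projection formula~\eqref{projformula} simply drops out of the computation of $WpW^{-1}$, and $G$-invariance and the orthogonal-projection property of $p_\Theta$ are inherited from those of $p$ through the $G$-equivariant unitary $W$.
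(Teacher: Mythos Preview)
Your proposal is correct and is essentially the paper's own argument: your dequantising isomorphism \(W=\bigoplus_j(\id\otimes\beta_j^{-1})\) is precisely the paper's \(S=\bigoplus_j(\id\otimes\alpha_{\Theta(\bx_j,\cdot)})\), and your computation of \(WpW^{-1}\) reproduces the paper's computation of \(SpS^{-1}\). The only cosmetic difference is that the paper first isolates the rank-one case \(\bC_{\bx}\otimes\cA\) to exhibit the twist \(\id\otimes\alpha_{\Theta(\bx,\cdot)}\) before taking the direct sum, whereas you write the diagonal twist down directly via the Fourier multipliers \(\beta_j\).
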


\begin{remark}
In the case where \(\cA = C^\infty(X)\) and \(\cE = C^\infty(X,E)\) for \(X\) a compact \G-manifold and \(E \to X\) a \G-equivariant Hermitian vector bundle, Connes and Dubois-Violette~\cite[Proposition 5]{CDV} give an indirect proof that \(\cE_\Theta\) is finitely generated projective using Julg's characterization of equivariant \(K\)-theory for \Cstar-algebras~\cite{Julg}, whilst Brain, Landi, and Van Suijlekom~\cite[Proposition 3.2.1]{BLVS} give a completely different constructive proof of this fact that essentially depends on \G-equivariantly realising \(E\) as a vector bundle associated to a \G-equivariant principal bundle over \(X\).
\end{remark}

\begin{proof}[Proof of Theorem~\ref{moduledeform}]
Let \(\Theta \in \dual{G}\). In what follows, for \(\bx \in \dual{G}\), let \(\bC_\bx\) denote \(\bC\) endowed with unitary \G-representation \(e \circ \bx : G \to \U(1)\). 

First, suppose that \(\cE\) is free of rank \(1\), which, by Lemma~\ref{fgpmod}, we identify with \(\bC_\bx \otimes \cA\) for some \(\bx \in \dual{G}\). Since \(\cE_{\bx+\by} = \bC \otimes \cA_{\by}\) for all \(\by \in \dual{G}\), it follows that
\begin{gather*}
	(1 \otimes a) \lhd_\Theta b = \sum_{\by,\bz \in \dual{G}} \e{-\Theta(\bx+\by-\bz,\bz)} 1 \otimes \hat{a}(\by-\bz)\hat{b}(\bz) = 1 \otimes a \star_\Theta \alpha_{-\Theta(\bx,\cdot)}(b),\\
	\hp{1 \otimes a}{1 \otimes b}_\Theta = \sum_{\by,\bz \in \dual{G}} \e{-\Theta(-\bx+\by-\bz,\by)} \hat{a}(-\by+\bz)^\ast \hat{b}(\bz) = \alpha_{\Theta(\bx,\cdot)}(a^{\ast_\Theta} \star_\Theta b),
\end{gather*}
for all \(a\), \(b \in \cA\), and hence that \(\id \otimes \alpha_{\Theta(\bx,\cdot)}\) defines a \G-equivariant unitary isomorphism \(\cE_\Theta \iso \bC_\bx \otimes \cA_\Theta\) of Hermitian f.g.p.\ \G-\(\cA_\Theta\)-modules.

Let us now turn to the general case. Let \(\pi : G \to U(V)\) be any finite-dimensional unitary \G-representation and let \(p \in B(V) \otimes \cA\) be any \G-invariant orthogonal projection such that \(\cE \cong p(V \otimes \cA)\) as Hermitian Fr\'echet \GA-modules; let \((V,\pi) = (\bC v_1,e \circ \bx_1) \oplus \cdots (\bC v_N,e \circ \bx_N)\) be any orthogonal decomposition of \(\pi\) into irreducible unitary representations, where \(\bx_1,\dotsc,\bx_N \in \dual{G}\). On the one hand, since \(p\) is \G-invariant, it follows that \(\cE_\Theta \cong p(V \otimes \cA)_\Theta\), where \(p\) still defines a \G-invariant orthogonal projection in \(\End_{\cA_\Theta}((V \otimes \cA)_\Theta)\). On the other hand, by our discussion above, \(S \coloneqq \oplus_{k=1}^N (\id \otimes \alpha_{\Theta(\bx_k,\cdot)})\) defines a \G-equivariant unitary isomorphism \((V \otimes \cA)_\Theta \iso V \otimes \cA_\Theta\) of Hermitian f.g.p.\ \G-\(\cA_\Theta\)-modules. Thus, \(S\) restricts to a \G-equivariant unitary isomorphism
\(
	\cE_\Theta \cong p(V \otimes \cA)_\Theta \iso p_\Theta(V \otimes \cA_\Theta),
\)
of Hermitian Fr\'echet \G-\(\cA_\Theta\)-modules, where the \G-invariant orthogonal projection \(p_\Theta \coloneqq S p S^{-1} \in B(V) \otimes \cA_\Theta\) can be explicitly constructed in terms of \(p\):
\begin{align*}
	p_\Theta
	 &= \sum_{j,k=1}^N \left( v_j \otimes \ip{v_k}{} \right) \otimes \hp{v_j \otimes 1}{SpS^{-1}(v_k \otimes 1)}\\
	&= \sum_{j,k=1}^N \left( v_j \otimes \ip{v_k}{} \right)\otimes \hp{v_j \otimes 1}{p (v_k \otimes 1)}_\Theta\\
	&= \sum_{j,k=1}^N \left( v_j \otimes \ip{v_k}{} \right) \otimes \e{-\Theta(\bx_j,\bx_k-\bx_j)} \hp{v_j \otimes 1}{p (v_k \otimes 1)}. \qedhere
\end{align*}
\end{proof}

\begin{example}[Landi--Van Suijlekom~\cite{LVS05}, cf.\ Connes--Landi~\cite{CL}]
Let \(\theta \in \bR\) and define \(\Theta \in A(\bZ^2)\) by \(\Theta(\bx,\by) \coloneqq \exp(\tfrac{1}{2}\theta(x_1y_2-x_2 y_1))\) for \(\bx\), \(\by \in \bZ^2\). Recall that the Hopf fibration \(\SU(2) \to S^7 \to S^4\) is \(\bT^2\)-equivariant for the trivial \(\bT^2\)-action on \(\SU(2)\) and certain \(\bT^2\)-actions on \(S^7\) and \(S^4\), and hence deforms to a nontrivial quantum principal \(\SU(2)\)-bundle \(C^\infty(S^7_\theta) \hookleftarrow C^\infty(S^7_\theta)^{\SU(2)} \cong C^\infty(S^4_\theta)\), where \(C^\infty(S^7_\theta) \coloneqq C^\infty(S^7)_\Theta\) and \(C^\infty(S^4_\Theta) \coloneqq C^\infty(S^4)_\Theta\)~\cite[\S 5]{LVS05}.

Now, let \(n \in \bN\), and let \(V^{(n)} \coloneqq S^n \bC^2\) be the corresponding irreducible representation of \(\SU(2)\). On the one hand, by equivariance of the Hopf fibration, \(V^{(n)}\) gives rise to a \(\bT^2\)-equivariant Hermitian vector bundle \(E^{(n)} \coloneqq S^7 \times_{\SU(2)} V^{(n)}\), such that \(C^\infty(S^4,E^{(n)}) \cong p^{(n)} C^\infty(S^4)^{4^n}\) for an explicit \(\bT^2\)-invariant projection \(p^{(n)} \in M_{4^n}(C^\infty(S^4))\)~\cite[Appendix A]{LVS05}. On the other hand, Landi and Van Suijlekom~\cite{LVS05} use the deformed Hopf fibration to construct an associated \(C^\infty(S^4_\theta)\)-module \(\cE^{(n)}_\theta \coloneqq C^\infty(S^7_\theta) \boxtimes_{\SU(2)} V^{(n)}\), such that \(\cE^{(n)}_\theta \cong p^{(n)}_\theta C^\infty(S^4_\theta)\) for an explicit projection \(p^{(n)}_\theta \in M_{4^n}(C^\infty(S^4_\theta))\), with \(p^{(1)}_\theta\) recovering the instanton projection constructed by Connes and Landi~\cite[\S\S II--III]{CL}. By \(\bT^2\)-equivariance of all constructions, it not only follows that \(\cE^{(n)}_\theta = C^\infty(S^4,E^{(n)})_\Theta\), but also that \(p^{(n)}_\theta = (p^{(n)})_\Theta\) is precisely given by Equation~\ref{projformula} as applied to \(p^{(n)}\).
\end{example}

Finally, let us observe that if \(\cE\) is a Hermitian f.g.p.\ \G-\(\cA\)-module, then the algebra \(\End_{\cA_\Theta}(\cE_\Theta)\) can naturally be identified with \(\End_\cA(\cE)_\Theta\).

\begin{corollary}[{cf.\ Brain--Landi--Van Suijlekom~\cite[Proposition 4.2]{BLVS}}]\label{enddeform}
Let \(\cE\) be a Hermitian f.g.p. \G-\(\cA\)-module. Then for all \(\Theta \in B(\dual{G})\), the map \(\pi_\Theta : \End_\cA(\cE) \to \End_{\cA_\Theta}(\cE_\Theta)\) defined by
\begin{equation}\label{enddeformeq}
	\forall T \in \End_{\cA}(\cE), \; \forall \xi \in \cE, \enskip \pi_\Theta(T) \xi := \sum_{\bx\in\dual{G}}\sum_{\by \in \dual{G}} \e{-\Theta(\bx-\by,\by)} \hat{T}(\bx-\by) \hat{\xi}(\by),
\end{equation}
is a \G-equivariant topological \Star-isomorphism of Fr\'echet \Gstar-algebras.
\end{corollary}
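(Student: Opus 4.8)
The plan is to recognise $\pi_\Theta$ as the structure map of the deformation of $\cE$ regarded as a \emph{left} module over $\cB \coloneqq \End_\cA(\cE)$, so that all of its formal properties are handed to us by the module-deformation machinery of \S\ref{moduledeformsec} used on the left; what then remains is to verify that $\pi_\Theta(T)$ is $\cA_\Theta$-linear, that $\pi_\Theta$ respects the involutions, and that $\pi_\Theta$ is invertible. The one conceptual input is thus the bimodule observation; everything else is bookkeeping.

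First I would fix the bimodule picture. By Lemma~\ref{fgpmod} we may take $\cE = p(V \otimes \cA)$, so that $\cB = \End_\cA(\cE) \cong p(B(V) \otimes \cA)p$ is a unital Fréchet pre-\Cstar-algebra; endowed with the involution given by the $\hp{}{}$-adjoint and with the \G-action $\beta_t(T) \coloneqq U_t T U_t^{-1}$---strongly smooth and isometric because $\dim V < \infty$ and $\alpha$ is---it becomes a Fréchet \Gstar-algebra, so that $\End_\cA(\cE)_\Theta$ (with which we of course equip the source of $\pi_\Theta$) makes sense. Moreover $\cE$ is at once a left Fréchet \G-\(\cB\)-module, via the tautological action, for which $U_t(T\xi) = \beta_t(T)(U_t\xi)$, and a right Fréchet \GA-module, and the two actions commute; under the Peter--Weyl decomposition, $T \in \cB_\bx$ if and only if $T(\cE_\by) \subseteq \cE_{\bx+\by}$ for all $\by$, so that $\cB_\bx\cB_\by \subseteq \cB_{\bx+\by}$, $\cB_\bx^\ast = \cB_{-\bx}$, $\cB_\bx\cE_\by \subseteq \cE_{\bx+\by}$, and the right $\cA$-action is $\dual{G}$-graded as before. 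Applying Proposition~\ref{modfunctor} to the Fréchet \Gstar-algebra $\cB$ (in its left-module version, per the last clause of that proposition) now produces a left Fréchet \G-\(\cB_\Theta\)-module structure on $\cE_\Theta = \cE$ whose action is $T \rhd_\Theta \xi = \inv{F}(\hat{T} \star_\Theta \hat{\xi}) = \sum_{\bx,\by} \e{-\Theta(\bx-\by,\by)}\hat{T}(\bx-\by)\hat{\xi}(\by)$, that is, $\pi_\Theta(T)\xi$. Being the structure map of such a module, $\pi_\Theta : \End_\cA(\cE)_\Theta \to B(\cE_\Theta)$ is automatically a unital, \G-equivariant, continuous algebra homomorphism.

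It remains to check three things, each of which---because rapid decay of all Fourier coefficients makes every rearrangement of series legitimate---reduces to a purely algebraic identity-chase governed by the bicharacter relation $\e{-\Theta(\bw,\by+\bz)}\e{-\Theta(\by,\bz)} = \e{-\Theta(\bw,\by)}\e{-\Theta(\bw+\by,\bz)}$. First, $\pi_\Theta(T)$ is $\cA_\Theta$-linear: comparing $\bx$-isotypic components, both $\pi_\Theta(T)(\xi \lhd_\Theta a)$ and $(\pi_\Theta(T)\xi) \lhd_\Theta a$ equal $\sum_{\bw+\by+\bz=\bx}\e{-\Theta(\bw,\by+\bz)}\e{-\Theta(\by,\bz)}(\hat{T}(\bw)\hat{\xi}(\by)) \lhd \hat{a}(\bz)$ once one invokes the commutation of the left $\cB$- and right $\cA$-actions on $\cE$; hence $\pi_\Theta$ maps into $\End_{\cA_\Theta}(\cE_\Theta)$. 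Second, $\pi_\Theta(T^{\ast_\Theta})$ is the $\hp{}{}_\Theta$-adjoint of $\pi_\Theta(T)$: one checks $\hp{\pi_\Theta(T)\xi}{\eta}_\Theta = \hp{\xi}{\pi_\Theta(T^{\ast_\Theta})\eta}_\Theta$ by the analogous computation with the deformed metric of Equation~\ref{hermitian}, starting from $\hp{T\xi}{\eta} = \hp{\xi}{T^\ast\eta}$ and $T \in \cB_\bx \Rightarrow T^\ast \in \cB_{-\bx}$. Third, $\pi_\Theta$ is bijective with continuous inverse: running the same construction on the left \G-\(\cA_\Theta\)-Hermitian module $\cE_\Theta$ with parameter $-\Theta$ yields a continuous \G-equivariant map $\pi'_{-\Theta} : \End_{\cA_\Theta}(\cE_\Theta)_{-\Theta} \to \End_{(\cA_\Theta)_{-\Theta}}((\cE_\Theta)_{-\Theta}) = \End_\cA(\cE)$, the last identity holding because $(\cA_\Theta)_{-\Theta} = \cA$ and $(\cE_\Theta)_{-\Theta} = \cE$ with all their original structure by Propositions~\ref{isom}(2) and~\ref{modfunctor}(2); a short computation in which the factors $\e{-\Theta(\bx-\by,\by)}$ and $\e{+\Theta(\bx-\by,\by)}$ cancel on each isotypic component $\cE_\by$ then shows $\pi'_{-\Theta} \circ \pi_\Theta = \id$ and $\pi_\Theta \circ \pi'_{-\Theta} = \id$. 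Combining this with the homomorphism and equivariance properties already in hand gives the asserted \G-equivariant topological \Star-isomorphism of Fréchet \Gstar-algebras. The only step involving any real work is the bookkeeping behind the first two points, and as everywhere in this paper it is mechanical once rapid decay is invoked; alternatively one could deduce the corollary from Theorem~\ref{moduledeform} by writing $\End_{\cA_\Theta}(\cE_\Theta) \cong p_\Theta(B(V)\otimes\cA_\Theta)p_\Theta$ and identifying $\pi_\Theta$ with conjugation by the unitary $S = \bigoplus_k (\id \otimes \alpha_{\Theta(\bx_k,\cdot)})$ from the proof of that theorem, but this forces one to re-derive the free case separately and is less transparent.
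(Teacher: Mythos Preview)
Your proposal is correct and follows essentially the same route as the paper: both recognise $\pi_\Theta$ as the deformed left $\End_\cA(\cE)_\Theta$-action on $\cE_\Theta$ coming from the bimodule structure, verify on isotypic components that this lands in $\End_{\cA_\Theta}(\cE_\Theta)$ as a \G-equivariant \Star-homomorphism, and then produce the inverse by running the same construction with $-\Theta$ on $\End_{\cA_\Theta}(\cE_\Theta)$. Your write-up is somewhat more explicit about the bicharacter identity driving the isotypic checks and about where continuity comes from, but the argument is the same.
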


\begin{proof}
By Lemma~\ref{fgpmod}, \(\End_\cA(\cE)\) is a Fr\'echet pre-\GCstar-algebra, so that \(\cE\) is simultaneously a left Fr\'echet \G-\(\End_\cA(\cE)\)-module and a Hermitian right \G-\(\cA\)-module. Thus, \(\cE\) can be simultaneously deformed to \(\cE_\Theta\) as both a left Fr\'echet \G-\(\End_\cA(\cE)_\Theta\)-module and a Hermitian right \G-\(\cA_\Theta\)-module, where the \(\End_\cA(\cE)_\Theta\)-module structure on \(\cE_\Theta\) is given by
\[
	\forall T \in \End_\cA(\cE)_\Theta, \; \forall \xi \in \cE_\Theta, \quad T \rhd_\Theta \xi \coloneqq \sum_{\bx,\by\in\dual{G}} \e{-\Theta(\bx,\by)}\hat{T}(\bx-\by)\hat{\xi}(\by),
\]
i.e., by \(T \rhd_\Theta \xi = \pi_\Theta(T)\xi\). Then, one can check on isotopic elements that \(T \mapsto (T \rhd_\Theta \cdot)\) defines an injective \G-equivariant \Star-homomorphism \(\pi_\Theta : \End_\cA(\cE)_\Theta \inj \End_{\cA_\Theta}(\cA_\Theta)\). 

Now, by Theorem~\ref{moduledeform}, \(\End_{\cA_\Theta}(\cE_\Theta)\) is also a Fr\'echet \Gstar-algebra, so that \(\cE_\Theta\) is simultaneously a left Fr\'echet \G-\(\End_{\cA_\Theta}(\cE_\Theta)\)-module and a Hermitian right \G-\(\cA_\Theta\)-module; in particular, \(\pi_\Theta\) is continuous. Thus, \(\cE_\Theta\) can be simultaneously deformed to \(\cE = (\cE_\Theta)_{-\Theta}\) as both a left Fr\'echet \G-\(\End_{\cA_\Theta}(\cE_\Theta)_{-\Theta}\)-module and as a Hermitian right \GA-module, where the  \(\End_{\cA_\Theta}(\cE_\Theta)_{-\Theta}\)-module structure is now given by
\[
	\forall T \in \End_{\cA_\Theta}(\cE_\Theta)_{-\Theta}, \; \forall \xi \in \cE_\Theta, \quad S \rhd_{-\Theta} \xi \coloneqq \sum_{\bx,\by\in\dual{G}} \e{\Theta(\bx,\by)}\hat{T}(\bx-\by)\hat{\xi}(\by).
\]
Again, one can check on isotypic elements that \(S \mapsto (S \rhd_{-\Theta})\) defines an injective \G-equivariant continuous \Star-homomorphism \(\pi_{-\Theta} : \End_{\cA_\Theta}(\cE_\Theta)_{-\Theta} \inj \End_\cA(\cE)\); by checking, once more, on isotypic elements, one can show that \(\pi_{-\Theta} = \inv{\pi_\Theta}\).
\end{proof}

\section{Connes--Landi deformation}\label{sec:2}

\subsection{Deformation of  \texorpdfstring{\G}{G}-equivariant spectral triples}

We now review the definition and construction of Connes--Landi deformations, following Yamashita~\cite[\S 2]{Ya} and Landi and Van Suijlekom~\cite[\S 3]{LVS}. We begin by fixing notation and terminology.

\begin{definition}
A \emph{spectral triple} is a triple \((\cA,H,D)\), where \(\cA\) is a unital \Star-algebra, \(H\) is a Hilbert space endowed with a faithful \Star-representation \(L : \cA \to B(H)\) of \(\cA\) on \(H\), and \(D\) is a densely-defined self-adjoint operator with compact resolvent on \(H\), such that \([D,L(a)] \in B(H)\) for all \(a \in \cA\). If, in addition, 
\(
	L(\cA) + [D,L(\cA)] \subset \cap_k \Dom ([\abs{D},\cdot])^k,
\)
then it is called \emph{regular}.
\end{definition}

From the standpoint of strict deformation quantisation, an essential feature of a regular spectral triple \((\cA,H,D)\) is that the algebra \(\cA\) can be canonically completed to a Fr\'echet pre-\Cstar-algebra.

\begin{proposition}[{Rennie~\cite[Lemma 16]{Rennie}}]\label{rennie}
Let \((\cA,H,D)\) be regular and let
\begin{equation}
	\cA^\delta \coloneqq \set{a \in \cA^{\prime\prime} \given L(a), \; [D,L(a)] \in \cap_k \Dom \delta^k}, \quad \delta \coloneqq [\abs{D},\cdot],
\end{equation}
endowed with the seminorms \(\set{\norm{\cdot}_k}_{k\in\bN}\) defined by
\begin{equation}\label{seminorms2}
	\forall k \in \bN, \; a \in \cA^\delta, \quad \norm{a}_k \coloneqq \sum_{j=0}^k \left( \norm{\delta^j(L(a))} + \norm{\delta^j([D,L(a)])} \right)
\end{equation}
Then \(\cA^\delta\) is Fr\'echet pre-\Cstar-algebra and \((\cA^\delta,H,D)\) defines a a regular spectral triple.
\end{proposition}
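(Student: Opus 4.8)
The plan is to realise $\delta \coloneqq [\abs{D},\cdot\,]$ as the infinitesimal generator of the one-parameter group $\sigma_t \coloneqq \mathrm{Ad}(e^{it\abs{D}})$ of \Star-automorphisms of $B(H)$ and then to invoke the standard structure theory of the smooth domain $\cap_k \Dom \delta^k$ of such a generator. First I would record the purely formal facts: both $\delta$ and the commutator $[D,\cdot\,]$ obey the Leibniz rule and are \Star-derivations, in the sense that $\delta(T^\ast) = -\delta(T)^\ast$ and $[D,T^\ast] = -[D,T]^\ast$ (using self-adjointness of $\abs{D}$ and of $D$), so that $\cap_k \Dom \delta^k$ is a unital \Star-subalgebra of $B(H)$ on which each $\delta^j$ satisfies the binomial Leibniz rule. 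Together with the identity $[D,L(a)L(b)] = [D,L(a)]L(b) + L(a)[D,L(b)]$ and the regularity of $(\cA,H,D)$, this shows that $\cA^\delta$ is a unital \Star-subalgebra of $\cA^{\prime\prime}$ containing $\cA$.

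Next I would establish completeness. Since $\delta$ is closed, $\cap_k \Dom \delta^k$ is complete in the seminorms $T \mapsto \sum_{j=0}^k \norm{\delta^j T}$, hence a Fr\'echet algebra with jointly continuous operations; likewise $[D,\cdot\,]$ is closed on $B(H)$. The assignment $a \mapsto (L(a),[D,L(a)])$ embeds $\cA^\delta$ isometrically, for the seminorms defined by Equation~\ref{seminorms2}, into the Fr\'echet space $(\cap_k \Dom \delta^k) \times (\cap_k \Dom \delta^k)$, and I would check that its image is closed: if $(a_n)$ is Cauchy in $\cA^\delta$, then $L(a_n) \to S$ and $[D,L(a_n)] \to T$ in the respective Fr\'echet topologies, with $S \in \cA^{\prime\prime}$ because a von Neumann algebra is norm-closed and $T = [D,S]$ because $[D,\cdot\,]$ is closed; the defining conditions for membership in $\cA^\delta$ then hold for $S$ by inspection. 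Hence $\cA^\delta$ is a Fr\'echet \Star-algebra.

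The spectral-triple axioms for $(\cA^\delta,H,D)$ are then immediate: the representation is faithful, being the tautological inclusion $\cA^\delta \subset \cA^{\prime\prime} \subset B(H)$; $D$ and its compact resolvent are untouched; $[D,L(a)] \in \cap_k \Dom \delta^k \subset B(H)$ for $a \in \cA^\delta$ by definition; and regularity, $L(\cA^\delta) + [D,L(\cA^\delta)] \subset \cap_k \Dom \delta^k$, is built into the definition of $\cA^\delta$. For the pre-\Cstar{} statement, take $A$ to be the norm closure of $\cA^\delta$ in $B(H)$, in which $\cA^\delta$ sits densely and continuously by construction; as $A$ is a unital \Cstar-subalgebra of $B(H)$, spectra computed in $A$ and in $B(H)$ agree, so it suffices to prove that $\cA^\delta$ is spectrally invariant in $B(H)$, and then to run the Cauchy integral inside the Fr\'echet space $\cA^\delta$, using that $\lambda \mapsto (\lambda - a)^{-1}$ is a continuous $\cA^\delta$-valued function on the resolvent set. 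Spectral invariance follows because $\cap_k \Dom \delta^k$ is inverse-closed in $B(H)$ with inversion continuous in its Fr\'echet topology, $\cA^{\prime\prime}$ is inverse-closed in $B(H)$, and, for $a \in \cA^\delta$ invertible, $[D,L(a)^{-1}] = -L(a)^{-1}[D,L(a)]L(a)^{-1}$ again lies in $\cap_k \Dom \delta^k$ as a product of elements of that algebra.

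The main obstacle is the completeness step, and inside it the two analytic inputs about $\delta$ that I have quoted: that $\delta$ is a closed operator on $B(H)$, and that its smooth domain $\cap_k \Dom \delta^k$ is a Fr\'echet \Star-algebra that is inverse-closed with continuous inversion. This is precisely where the possibly-unbounded operator $\abs{D}$ and the one-parameter automorphism group $\sigma$ do the real work; everything else is algebra or bookkeeping.
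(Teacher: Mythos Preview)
The paper does not supply its own proof of this proposition: it is stated with attribution to Rennie~\cite{Rennie}*{Lemma 16} and used as a black box thereafter, so there is no in-paper argument to compare against. Your outline is the standard one and is essentially correct: realise $\delta$ as the generator of $t \mapsto \mathrm{Ad}(e^{it\abs{D}})$, use closedness of $\delta$ and of $[D,\cdot]$ to get completeness of $\cA^\delta$ in the seminorms~\eqref{seminorms2}, and obtain stability under holomorphic functional calculus from inverse-closedness of the smooth domain $\cap_k \Dom \delta^k$ together with the identity $[D,L(a)^{-1}] = -L(a)^{-1}[D,L(a)]L(a)^{-1}$. The only point worth tightening is the completeness step: you should also note that faithfulness of $L$ (hence injectivity of $a \mapsto L(a)$ on $\cA''$, where $L$ is the tautological inclusion) is what lets you pull the limit $S$ back to an element of $\cA^\delta$, and that the seminorms~\eqref{seminorms2} are separating precisely because $\norm{\cdot}_0 \geq \norm{L(\cdot)}$ and $L$ is faithful.
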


In the sequel, if \((\cA,H,D)\) is a regular spectral triple, then we always complete \(\cA\) to \(\cA^\delta\), and hence view \(\cA = \cA^\delta\) as a Fr\'echet pre-\Cstar-algebra by Proposition~\ref{rennie}.

\begin{remark}
The submultiplicative seminorms defined in Equation~\ref{seminorms2}, first introduced by Rennie~\cite[p.\ 147]{Rennie}, are equivalent to those defined by Yamashita~\cite[eq.\ 3]{Ya} and by Connes~\cite[eq.\ 12]{Con13}, respectively.
\end{remark}

In light of Proposition~\ref{rennie}, we define \G-equivariant spectral triples as follows.

\begin{definition}
A \emph{\G-equivariant spectral triple} is a regular spectral triple \((\cA,H,D)\) together with a strongly smooth action \(\alpha : G \to \Aut(\cA)\) by isometric \Star-automor\-phisms and a strongly continuous representation \(U : G \to \U(H)\), such that
\[
	\forall t \in G, \; \forall a \in \cA, \quad U_t L(a) U_t^\ast = L(\alpha_t(a)), \quad U_t D U_t^\ast = D.
\]
\end{definition}

\begin{remark}
In terms of our earlier conventions, if \((\cA,H,D)\) is a \G-equivariant spectral triple, then we consider \(\cA = \cA^\delta\) to be topologised as \((\cA^\delta)^\infty\) as needed.
\end{remark}

\begin{example}
If \(X\) is a compact oriented Riemannian \G-manifold and \(D\) a \G-invariant self-adjoint Dirac-type operator on a \G-equivariant Hermitian vector bundle \(E \to X\), then \((C^\infty(X),L^2(X,E),D)\) is a \G-equivariant spectral triple.
\end{example}

Now, let \((\cA,H,D)\) be a \G-equivariant spectral triple. By the classical Peter--Weyl theorem, we can decompose \(H\) as a direct sum of isotypic subspaces,
\[
	H = \oplus_{\bx \in \dual{G}} H_\bx, \quad H_\bx \coloneqq \set{\xi \in H \given \forall t \in G, \; U_t \xi = \e{\hp{\bx}{t}} \xi};
\]
let \(H^{\mathrm{fin}} \coloneqq \oplus_{\bx \in \dual{G}}^{\mathrm{alg}} H_\bx\) be their algebraic direct sum, which defines a dense subspace  of \(H\), and for each \(\bx \in \dual{G}\), let \(P_\bx\) denote the orthogonal projection onto \(H_\bx\). If \(\Theta \in Z^2(\dual{G},\bT)\), then the obvious analogy with  Fr\'echet \GA-modules suggests a deformation of \(L : \cA \to B(H)\) to a \G-equivariant \Star-representation \(L_\Theta : \cA_\Theta \to B(H)\), such that \((\cA_\Theta,H,D)\) remains a spectral triple.

\begin{theorem}[{Yamashita~\cite[Proposition 5]{Ya}}]\label{yamashita}
Let \((\cA,H,D)\) be a \G-equivariant spectral triple, and let \(\Theta \in Z^2(\dual{G},\bT)\). Then \((\cA_\Theta,H,D)\), endowed with the left \Star-representation \(L_\Theta : \cA_\Theta \to B(H)\) given by
\begin{equation}\label{convolve}
	\forall a \in \cA_\Theta,\; \forall \xi \in H^{\mathrm{fin}}, \quad  L_\Theta(a) \xi \coloneqq \sum_{\bx \in \dual{G}} \e{-\Theta(\bx-\by,\by)} L(\hat{a}(\bx-\by)) P_\by \xi,
\end{equation}
defines a \G-equivariant spectral triple.
\end{theorem}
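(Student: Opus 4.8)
The plan is to reduce everything to one structural observation: since $D$ is $G$-invariant it commutes with each $U_t$, hence preserves every isotypic subspace $H_\by$, and therefore so does $\abs{D}$ (via the functional calculus for $D$). Consequently, for each $\bz \in \dual{G}$ the operator $M_\bz \in B(H)$ determined by $M_\bz\xi := \e{-\Theta(\bz,\by)}\xi$ for $\xi \in H_\by$, $\by \in \dual{G}$, is a $G$-invariant unitary commuting with both $D$ and $\abs{D}$, with $M_\bo = \id$ and $M_\bz|_{H_\bo} = \id$ because $\Theta$ is normalised. Having completed $\cA$ to $\cA^\delta$ as in Proposition~\ref{rennie}, the Fourier transform of Proposition~\ref{fourier} applied to the Fréchet $G$-$\ast$-algebra $(\cA^\delta)^\infty$ shows that $\bz \mapsto \hat a(\bz)$ is a rapidly decaying $\cA^\delta$-valued section; hence, by the polynomial growth of $\dual{G}$ (cf.\ the proof of Proposition~\ref{fourier}), $\sum_{\bz \in \dual{G}} \norm{\hat a(\bz)}_k < \infty$ for every $a \in \cA_\Theta$ and every $k \in \bN$, where $\norm{\cdot}_k$ is Rennie's seminorm~\eqref{seminorms2}. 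A bookkeeping reindexing by $\bz = \bx - \by$ identifies~\eqref{convolve} with the series
\[
	L_\Theta(a) = \sum_{\bz \in \dual{G}} L(\hat a(\bz))\, M_\bz,
\]
which converges absolutely in operator norm since $\norm{L(\hat a(\bz))} \leq \norm{\hat a(\bz)}_0$ and $\norm{M_\bz} = 1$; in particular $L_\Theta(a) \in B(H)$ with $\norm{L_\Theta(a)} \leq \sum_\bz \norm{\hat a(\bz)}_0$. Conceptually this series is just the deformation of the left Fréchet $G$-$\cA^\delta$-module $H$ in the sense of \S\ref{moduledeformsec}.

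First I would check that $L_\Theta$ is a faithful $G$-equivariant $\ast$-representation of $\cA_\Theta$. The identities $L_\Theta(a \star_\Theta b) = L_\Theta(a)L_\Theta(b)$ and $L_\Theta(a^{\ast_\Theta}) = L_\Theta(a)^\ast$ are verified on the dense subspace $H^{\mathrm{fin}}$ by a direct, purely algebraic computation on isotypic elements using the $2$-cocycle identity for $\Theta$, formally the same fact-check that underlies Proposition~\ref{isom}, now carried out for a module rather than an algebra. $G$-equivariance, $U_t L_\Theta(a)U_t^\ast = L_\Theta(\alpha_t a)$, follows termwise from $U_t L(\hat a(\bz))U_t^\ast = L(\alpha_t\hat a(\bz)) = \e{\ip{\bz}{t}}L(\hat a(\bz))$, from $U_t M_\bz U_t^\ast = M_\bz$ (both operators being diagonal for the isotypic decomposition), and from $\widehat{\alpha_t a}(\bz) = \e{\ip{\bz}{t}}\hat a(\bz)$. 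Faithfulness reduces to faithfulness of $L$: if $L_\Theta(a) = 0$, then for $\bz_0 \in \dual{G}$ and $\xi \in H_\by$ one has $P_{\bz_0 + \by}L_\Theta(a)\xi = \e{-\Theta(\bz_0,\by)}L(\hat a(\bz_0))\xi = 0$, whence $L(\hat a(\bz_0)) = 0$ on $H$ and so $\hat a(\bz_0) = 0$ for every $\bz_0$, giving $a = 0$ by injectivity of the Fourier transform.

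Next I would establish the remaining spectral-triple axioms for $(\cA_\Theta,H,D)$. Self-adjointness of $D$ and compactness of its resolvent are untouched, since neither $D$ nor $H$ has changed. Since each $M_\bz$ commutes with $D$, term-by-term differentiation of the series above---legitimate because $[D,\cdot]$ is closed and all series in sight converge in norm---gives $[D,L_\Theta(a)] = \sum_\bz [D,L(\hat a(\bz))]\,M_\bz$, which converges in operator norm because $\sum_\bz \norm{[D,L(\hat a(\bz))]} \leq \sum_\bz \norm{\hat a(\bz)}_0 < \infty$; thus $[D,L_\Theta(a)] \in B(H)$. For regularity we must show $L_\Theta(\cA_\Theta) + [D,L_\Theta(\cA_\Theta)] \subset \cap_k \Dom \delta^k$ with $\delta = [\abs{D},\cdot]$. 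The key point is that $\abs{D}$ commutes with $M_\bz$, so $\delta(M_\bz) = 0$ and the Leibniz rule collapses to $\delta^k(L(\hat a(\bz))M_\bz) = \delta^k(L(\hat a(\bz)))\,M_\bz$ and, likewise, $\delta^k([D,L(\hat a(\bz))]M_\bz) = \delta^k([D,L(\hat a(\bz))])\,M_\bz$. Summing over $\bz$ and invoking the estimates $\norm{\delta^j(L(\hat a(\bz)))} \leq \norm{\hat a(\bz)}_k$ and $\norm{\delta^j([D,L(\hat a(\bz))])} \leq \norm{\hat a(\bz)}_k$ for $j \leq k$, which are built into Rennie's seminorm~\eqref{seminorms2}, together with $\sum_\bz \norm{\hat a(\bz)}_k < \infty$, yields absolutely norm-convergent series; closedness of the iterated derivation $\delta^k$ then places both $L_\Theta(a)$ and $[D,L_\Theta(a)]$ in $\Dom \delta^k$ for all $k$, with $\delta^k(L_\Theta(a)) = \sum_\bz \delta^k(L(\hat a(\bz)))M_\bz$.

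The step I expect to demand the most care is the regularity argument: one must control the operator norms of all the iterated derivatives $\delta^j(L(\hat a(\bz)))$ and $\delta^j([D,L(\hat a(\bz))])$ uniformly enough in $\bz$ to sum over $\dual{G}$, which is precisely why it is essential to have replaced $\cA$ by the Fréchet pre-\Cstar-algebra $\cA^\delta$ and to have the Fourier coefficients of $a$ identified as a rapidly decaying section in \emph{Rennie's} seminorms, and one must justify passing $\delta^k$ through the infinite sum by closedness of $\delta^k$ on $B(H)$. Everything else is the promised mechanical, purely algebraic fact-check, made possible by the single observation that $D$, $\abs{D}$, and the $U_t$ all commute with the diagonal multipliers $M_\bz$.
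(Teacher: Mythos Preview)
Your proof is correct and follows essentially the same approach as the paper: your diagonal unitaries \(M_\bz\) are exactly the paper's \(\Upsilon^\Theta_\bz\) of Lemma~\ref{Ypsilon}, and your series \(L_\Theta(a) = \sum_\bz L(\hat a(\bz))M_\bz\) is precisely the map \(\pi_\Theta \circ L\) of Lemma~\ref{bigdeform}. The only difference is organisational---the paper packages the construction as a \(G\)-equivariant topological \(\ast\)-isomorphism \(\pi_\Theta : B^\infty(H)_\Theta \iso B^\infty(H)\), which it reuses later, whereas you verify everything directly by hand---but the key observation (that these diagonal unitaries commute with \(D\) and \(\abs{D}\)) and the convergence arguments via rapid decay in Rennie's seminorms are identical.
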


This result, then, motivates and justifies the following definition.

\begin{definition}[{Connes--Landi~\cite[\S 5]{CL}, Yamashita~\cite[\S 2]{Ya}}]
Let \((\cA,H,D)\) be a \G-equivariant spectral triple. The \emph{Connes--Landi deformation} of \((\cA,H,D)\) by \(\Theta \in Z^2(\dual{G},\bT)\) is the \G-equivariant spectral triple \((\cA_\Theta,H,D)\) of Proposition~\ref{yamashita}.
\end{definition}

\begin{example}[{cf.\ Connes--Dubois-Violette~\cite[\S 13]{CDV}, D\k{a}browski~\cite{Dabrowski}}]\label{torus4}
Consider \(G = \bT^N\) and \(\cA = C^\infty(\bT^N)\), where the \(\bT^N\)-action on \(C^\infty(\bT^N)\) is induced by the translation action on \(\bT^N\). Fix a spin structure on \(\bT^N\) and let \(\slashed{S} \to \bT^N\) and \(\slashed{D}\) denote the corresponding spinor bundle and Dirac operator, respectively; the translation action of \(\bT^N\) on itself lifts to an action of a double cover \(\widetilde{\bT^N}\) of \(\bT^N\) on \(\slashed{S}\), making \(\slashed{S}\) a \(\widetilde{\bT^N}\)-equivariant Hermitian vector bundle and \(\slashed{D}\) a \(\widetilde{\bT^N}\)-invariant Dirac operator. 

Now, let \(\theta \in \wedge^2\bR^N\), let \(\Theta \in A(\bZ^N)\) be defined by \(\Theta(\bx,\by) \coloneqq \exp(\tfrac{1}{2}\theta(\bx_\ast,\by_\ast))\) for \(\bx\), \(\by \in \bZ^N\), and let \(\widetilde{\Theta}\) denote the pullback of \(\Theta\) to a bicharacter on the Pontrjagin dual \(\widetilde{\bZ^N}\) of \(\widetilde{\bT^N}\). Then the Connes--Landi deformation
\[
	(C^\infty(\bT^N_\theta),L^2(\bT^N_\theta,\slashed{S}),\slashed{D}) \coloneqq (C^\infty(\bT^N)_{\widetilde{\Theta}},L^2(\bT^N,\slashed{S}),\slashed{D})
\]
 of \((C^\infty(\bT^N),L^2(\bT^N,\slashed{S}),\slashed{D})\) \emph{qua} \(\widetilde{\bT^N}\)-equivariant spectral triple by \(\widetilde{\Theta}\) is precisely the spectral triple of the noncommutative \(N\)-torus \(C^\infty(\bT^N)\) with deformation parameter \(\theta\), corresponding to the chosen spin structure for \(\bT^N\).
\end{example}

Let us now prove Proposition~\ref{yamashita}. Recall that a strongly continuous unitary representation \(U : G \to \U(H)\) of \(G\) on a Hilbert space \(H\) induces an action \(\beta : G \to \Aut(B(H))\) of \(G\) on \(B(H)\) by \Star-automorphisms, strongly continuous for the strong operator topology, defined by \(\beta_t(S) \coloneqq U_t S U_t^\ast\) for \(S \in B(H)\) and \(t \in G\). Thus, if
\[
	B^\infty(H) \coloneqq \set{S \in B(H) \given G \ni t \mapsto \beta_t(S) \; \text{is norm-smooth}},
\]
then \((B^\infty(H),\beta)\), endowed with the seminorms of Equation~\ref{extraseminorms}, defines a Fr\'echet pre-\GCstar-algebra with \G-equivariant \Cstar-closure \((B^c(H),\beta)\), where
\[
	B^c(H) \coloneqq \set{S \in B(H) \given G \ni t \mapsto \beta_t(S) \; \text{is norm-continuous}}.
\]
Now, if \((\cA,H,D)\) is a \G-equivariant spectral triple and if \(\Theta \in Z^2(\dual{G},\bT)\), then the \G-equivariant \Star-representation \(L : \cA \to B(H)\) actually defines a \G-equivariant isometric \Star-homomorphism \(L : \cA \inj B^\infty(H)\), which, therefore, also defines a \G-equivariant isometric \Star-homomorphism \(L : \cA_\Theta \inj B^\infty(H)_\Theta\). Thus, it suffices to construct a suitable \G-equivariant topological \Star-isomorphism \(\pi_\Theta : B^\infty(H)_\Theta \iso B^\infty(H)\), such that \(L_\Theta \coloneqq \pi_\Theta \circ L\) has the correct form.

We begin with a technical convenience to facilitate this construction.

\begin{lemma}\label{Ypsilon}
Let \((\cA,H,D)\) be a \G-equivariant spectral triple; let \(\Theta \in Z^2(\dual{G},\bT)\). For any \(\bx \in \dual{G}\), \(\Upsilon^\Theta_\bx \coloneqq \sum_{\by \in \dual{G}} \e{-\Theta(\bx,\by)}P_\by\) strongly converges to a \G-invariant unitary operator on \(H\) that commutes with \(D\).
\end{lemma}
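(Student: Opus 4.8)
The plan is to treat $\Upsilon^\Theta_\bx$ as a ``diagonal'' operator with respect to the Peter--Weyl decomposition $H=\bigoplus_{\by\in\dual{G}}H_\by$: on the isotypic component $H_\by$ it is prescribed to act as the scalar $\e{-\Theta(\bx,\by)}$, which has unit modulus, so all four assertions should reduce to elementary Hilbert-space bookkeeping once one knows how $D$ interacts with the grading.

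First I would settle strong convergence and unitarity. For $\xi\in H$ one has $\xi=\sum_\by P_\by\xi$ with $\norm{\xi}^2=\sum_\by\norm{P_\by\xi}^2$; since $\abs{\e{-\Theta(\bx,\by)}}=1$, for finite subsets $F\subseteq F'$ of $\dual{G}$ the difference of the associated partial sums applied to $\xi$ has squared norm $\sum_{\by\in F'\setminus F}\norm{P_\by\xi}^2$, a tail of a convergent series. Hence the net of partial sums is Cauchy, so $\Upsilon^\Theta_\bx\xi$ is well defined with $\norm{\Upsilon^\Theta_\bx\xi}=\norm{\xi}$; the same estimate applied with $\e{\Theta(\bx,\by)}=\overline{\e{-\Theta(\bx,\by)}}$ shows that $\sum_\by\e{\Theta(\bx,\by)}P_\by$ also converges strongly, and multiplying the two series while using $P_\bz P_\by=\delta_{\bz,\by}P_\by$ and $\sum_\by P_\by=\id$ identifies this limit as $(\Upsilon^\Theta_\bx)^\ast$ and gives $(\Upsilon^\Theta_\bx)^\ast\Upsilon^\Theta_\bx=\Upsilon^\Theta_\bx(\Upsilon^\Theta_\bx)^\ast=\id$.

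Next, for $G$-invariance I would use that each $H_\by$ is $U_t$-invariant---indeed $U_t$ acts on it by the scalar $\e{\ip{\by}{t}}$---so $U_tP_\by=P_\by U_t$ for every $t\in G$ and $\by\in\dual{G}$; applying $U_t(\cdot)U_t^\ast$ to the partial sums and passing to the strong limit, which is legitimate because $U_t$ is bounded, gives $U_t\Upsilon^\Theta_\bx U_t^\ast=\Upsilon^\Theta_\bx$. For commutation with $D$ the point is that the bounded resolvent $(D+i)^{-1}$ commutes with every $U_t$, since $U_tDU_t^\ast=D$, and hence with each isotypic projection $P_\by=\int_G\e{-\ip{\by}{t}}U_t\dif t$; pulling the strong limit through the bounded operator $(D+i)^{-1}$ then yields $(D+i)^{-1}\Upsilon^\Theta_\bx=\Upsilon^\Theta_\bx(D+i)^{-1}$, and since $\Upsilon^\Theta_\bx$ is invertible this forces $\Upsilon^\Theta_\bx\Dom(D)=\Dom(D)$ and $D\Upsilon^\Theta_\bx=\Upsilon^\Theta_\bx D$ on $\Dom(D)$.

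I do not expect a genuine obstacle here; the only step requiring a little care is the interaction of $\Upsilon^\Theta_\bx$ with the unbounded operator $D$, which I would handle, as above, entirely through the bounded resolvent---equivalently, through the bounded spectral projections of $D$, available because $D$ has discrete spectrum---rather than by manipulating $D$ directly.
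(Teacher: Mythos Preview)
Your proposal is correct and follows essentially the same diagonal-operator strategy as the paper. The only difference is in establishing commutation with the unbounded $D$: the paper argues that $\Dom D$ inherits its own Peter--Weyl decomposition with $(\Dom D)_\by=\Dom D\cap H_\by$, so $\Upsilon^\Theta_\bx$ restricts to a diagonal unitary on $\Dom D$ and commutes with $D$ there, whereas you route the argument through the bounded resolvent $(D+i)^{-1}$; both are standard and equivalent ways of handling the unboundedness, and neither offers a real advantage over the other here.
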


\begin{proof}
Strong convergence of \(\sum_{\by \in \dual{G}} \e{-\Theta(\bx,\by)}P_\by \eqqcolon \Upsilon^\Theta_\bx\) follows from strong convergence of \(\sum_{\by \in \dual{G}} P_\by = \id_H\); since \(\Upsilon^\Theta_\bx\) is  diagonal with respect to the Peter--Weyl decomposition \(H = \oplus_{\by \in \dual{G}} H_\by\), acting as a unimodular constant on each isotypic subspace, it defines a \G-invariant unitary. Finally, since \(D\) is \G-invariant, the strongly continuous representation \(U : G \to \U(H)\) restricts to a strongly continuous representation \(U : G \to U(\Dom D)\), with respect to which the isotypic subspaces of \(\Dom D\) are simply \((\Dom D)_\by = \Dom D \cap H_\by\) for \(\by \in \dual{G}\); thus, \(\Upsilon^\Theta_\bx\) restricts to a \G-invariant unitary on \(\Dom D\), constant on each isotypic subspace, and, as such, commutes with \(D\).
\end{proof}

\begin{remark}
If \(\Theta\) is a bicharacter, then \(\bx \mapsto \Upsilon^\Theta_\bx = U_{-\Theta(\bx,\cdot)}\) defines a unitary representation of \(\dual{G}\) on \(H\); otherwise, the map \(\bx \mapsto \Upsilon^\Theta_\bx\) is not even projective.
\end{remark}

Given the unitaries \(\Upsilon^\Theta_\bx\), we can now construct the desired isomorphism \(\pi_\Theta : B^\infty(H)_\Theta \iso B^\infty(H)\), and hence proceed with the proof of Proposition~\ref{yamashita}.

\begin{lemma}\label{bigdeform}
Let \((\cA,H,D)\) be a \G-equivariant spectral triple; let \(\Theta \in Z^2(\dual{G},\bT)\). The map \(\pi_\Theta : B^\infty(H)_\Theta \to B^\infty(H)\) defined by
\begin{equation}\label{bigdeform:1}
	\forall S \in B^\infty(H)_\Theta, \quad \pi_\Theta(S) \coloneqq \inv{F}\left(\left(\hat{S}(\bx)\Upsilon_\bx^\Theta\right)_{\bx \in \dual{G}}\right) = \sum_{\bx \in \dual{G}} \hat{S}(\bx)\Upsilon^\Theta_\bx
\end{equation}
is a \G-equivariant topological \Star-isomorphism such that
\begin{gather}
	\forall S \in B^\infty(H)_\Theta, \; \forall \xi \in H^{\mathrm{fin}}, \quad \pi_\Theta(S)\xi = \sum_{\bx,\by\in\dual{G}} \e{-\Theta(\bx-\by,\by)}\hat{S}(\bx-\by)P_\by\xi,\label{bigdeform:2}\\
	\forall S \in B^\infty(H) \cap \Dom [\abs{D},\cdot], \quad [\abs{D},\pi_\Theta(S)] = \pi_\Theta([\abs{D},S]) \in B^\infty(H).\label{bigdeform:3}
\end{gather}
\end{lemma}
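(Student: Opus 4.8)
The plan is to realise $\pi_\Theta$ as the composition of the (twisted and untwisted) Fourier transforms of Proposition~\ref{fourier} with one elementary operation on sections, which reduces every assertion to a mechanical check on Fourier coefficients. Since $B^\infty(H)_\Theta$ and $B^\infty(H)$ coincide as $G$-equivariant Fr\'echet spaces, Proposition~\ref{fourier} supplies $G$-equivariant topological $\ast$-isomorphisms $F_\Theta : B^\infty(H)_\Theta \iso \cS(\dual{G},\Theta;B^\infty(H))$ and $\inv{F} : \cS(\dual{G};B^\infty(H)) \iso B^\infty(H)$, the latter summing the Fourier series. I would introduce the linear map $M_\Upsilon$ on $\cS(\dual{G};B^\infty(H))$ sending a section $s$ to $\bx \mapsto s(\bx)\Upsilon^\Theta_\bx$; then, tautologically, $\pi_\Theta = \inv{F} \circ M_\Upsilon \circ F_\Theta$. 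With this in hand, formula~\ref{bigdeform:1} is just this identity written out, and formula~\ref{bigdeform:2} follows by expanding $\Upsilon^\Theta_\bx \xi = \sum_{\by}\e{-\Theta(\bx,\by)}P_\by\xi$ for $\xi \in H^{\mathrm{fin}}$ (a finite sum) and reindexing $\bx \mapsto \bx - \by$, the resulting double series converging absolutely by the rapid decay of $\hat{S}$.

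For the topological-isomorphism claim it then suffices to treat $M_\Upsilon$. By Lemma~\ref{Ypsilon} each $\Upsilon^\Theta_\bx$ is a $G$-invariant unitary, hence lies in $B^\infty(H)_\bo$, so $\hat{S}(\bx)\Upsilon^\Theta_\bx \in B^\infty(H)_\bx$ with $\norm{\hat{S}(\bx)\Upsilon^\Theta_\bx} = \norm{\hat{S}(\bx)}$; since $(1+\Delta_G)$ acts on $B^\infty(H)_\bx$ as the scalar $1+4\pi^2\abs{\bx}^2$, this makes $M_\Upsilon$ an isometry for each of the seminorms of Equation~\ref{seminorms}, and it is visibly $G$-equivariant with two-sided inverse $s \mapsto \bigl(\bx \mapsto s(\bx)(\Upsilon^\Theta_\bx)^\ast\bigr)$. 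Composing with the two Fourier isomorphisms gives the claim.

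The substantive point is that $\pi_\Theta$ carries $\star_\Theta$ and $\ast_\Theta$ to ordinary composition and adjunction on $B^\infty(H)$, and this is where the $2$-cocycle identity enters. I would check it on an isotypic vector $\xi \in H_\bw$, using the shift relation $\hat{T}(\by)H_\bw \subseteq H_{\bw+\by}$ and Lemma~\ref{Ypsilon}. On the one hand, iterating formula~\ref{bigdeform:2} gives $\pi_\Theta(S)\pi_\Theta(T)\xi = \sum_{\by,\bz}\e{-\Theta(\by,\bw)-\Theta(\bz,\bw+\by)}\hat{S}(\bz)\hat{T}(\by)\xi$; on the other, reindexing $\widehat{S\star_\Theta T}(\bx) = \sum_{\by}\e{-\Theta(\bx-\by,\by)}\hat{S}(\bx-\by)\hat{T}(\by)$ by $\bz \coloneqq \bx-\by$ gives $\pi_\Theta(S\star_\Theta T)\xi = \sum_{\by,\bz}\e{-\Theta(\bz+\by,\bw)-\Theta(\bz,\by)}\hat{S}(\bz)\hat{T}(\by)\xi$. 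These two absolutely convergent double series agree term by term precisely because the $2$-cocycle identity with arguments $(\bz,\by,\bw)$ reads $\Theta(\bz,\by+\bw)+\Theta(\by,\bw) = \Theta(\bz,\by)+\Theta(\bz+\by,\bw)$. Compatibility with the involutions is an entirely analogous index computation, now using $(\Upsilon^\Theta_\bx)^\ast = \Upsilon^{-\Theta}_\bx$, the relation $\hat{S}(\bx)^\ast H_\bw \subseteq H_{\bw-\bx}$ and the normalisation of $2$-cocycles; both computations collapse to almost nothing once $\Theta$ is assumed to be a bicharacter, which by Theorem~\ref{kleppner} is no loss of generality, since then $\bx \mapsto \Upsilon^\Theta_\bx = U_{-\Theta(\bx,\cdot)}$ is a genuine unitary representation of $\dual{G}$ and the phases simply multiply.

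Finally, for formula~\ref{bigdeform:3} I would use that $\abs{D}$, being a Borel function of the $G$-invariant operator $D$, is itself $G$-invariant and commutes with every $\Upsilon^\Theta_\bx$ — which acts as a unimodular scalar on each isotypic subspace and commutes with $D$ by Lemma~\ref{Ypsilon}. The $G$-invariance of $\abs{D}$ gives $\beta_t([\abs{D},S]) = [\abs{D},\beta_t(S)]$, so for $S \in B^\infty(H) \cap \Dom[\abs{D},\cdot]$ one has $[\abs{D},S] \in B^\infty(H)$ and $\widehat{[\abs{D},S]}(\bx) = [\abs{D},\hat{S}(\bx)]$; hence $\widehat{[\abs{D},S]}$ is a rapidly decaying section, $\pi_\Theta([\abs{D},S]) = \sum_{\bx}[\abs{D},\hat{S}(\bx)]\Upsilon^\Theta_\bx$ converges absolutely in $B^\infty(H)$, and a computation on the core $\cap_k\Dom\abs{D}^k$ using the Leibniz rule together with $[\abs{D},\Upsilon^\Theta_\bx] = 0$ identifies it with $[\abs{D},\pi_\Theta(S)]$. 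The one genuine obstacle is organisational rather than conceptual, in the spirit of the running commentary of \S\ref{sec:1}: keeping the $2$-cocycle identity straight through the index shuffles in the multiplicativity check, and legitimising the term-by-term manipulations for~\ref{bigdeform:3} — both controlled uniformly by the rapid decay of Fourier coefficients.
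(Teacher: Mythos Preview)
Your proposal is correct and follows essentially the same route as the paper: both use that the \(\Upsilon^\Theta_\bx\) are \(G\)-invariant unitaries to see that \(\pi_\Theta\) is a well-defined \(G\)-equivariant bicontinuous bijection, and both reduce~\eqref{bigdeform:3} to the fact that \(\abs{D}\) commutes with every \(\Upsilon^\Theta_\bx\), checked on finite vectors in the domain. You are simply more explicit where the paper is terse---in particular, the paper subsumes your cocycle computation for multiplicativity under ``it is easy to check,'' and it identifies the inverse as \(\pi_{-\Theta}\) rather than writing out \(M_\Upsilon^{-1}\).
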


\begin{proof}
Since the operators \(\Upsilon^\Theta_\bx\) are all \G-invariant and unitary, if \(S \in B^\infty(H)\), then \(\left(\hat{S}(\bx)\Upsilon^\Theta_\bx\right)_{\bx \in \dual{G}} \in \cS(\dual{G};B^\infty(H))\) if and only if \(\hat{S} \in \cS(\dual{G};B^\infty(H))\), so that \(\pi_\Theta : B^\infty(H) \to B^\infty(H)\) is well-defined; from there, it is easy to check that \(\pi_\Theta\) defines a \G-equivariant \Star-preserving bicontinuous linear map with inverse \(\pi_{-\Theta}\). 

Now, by the proof of Lemma~\ref{Ypsilon}, \(D\) and \(\abs{D}\) define \G-equivariant bounded operators \(\Dom D \to H\). Let \(S \in B^\infty(H) \cap \Dom[\abs{D},\cdot]\). Then \(S\) restricts to an element of \(B^\infty(\Dom D)\), whilst \([\abs{D},S] \in B^\infty(H)\), so that one can simply check on the dense subspace \((\Dom D)^{\mathrm{fin}} = \oplus_{\by \in\dual{G}}^{\mathrm{alg}} (\Dom D \cap H_\by)\) of \(\Dom D\) that \([\abs{D},\pi_\Theta(S)] = \pi_\Theta([\abs{D},S])\) as bounded operators \(\Dom D \to H\), and hence, since \(\pi_\Theta([\abs{D},S]) \in B^\infty(H)\), as bounded operators on \(H\) itself.
\end{proof}

\begin{proof}[Proof of Proposition~\ref{yamashita}]
By Lemma~\ref{bigdeform} and \G-equivariance of \(L : \cA \inj B^\infty(H)\), we can safely write \(L_\Theta = \pi_\Theta \circ L : \cA_\Theta \inj B^\infty(H)\). Hence, by Lemma~\ref{bigdeform}, it suffices to prove that \([D,L_\Theta(a)] = \pi_\Theta([D,L(a)])\) for all \(a \in \cA\); however, this follows, \emph{mutatis mutandis}, from the proof of Equation~\ref{bigdeform:3}.
\end{proof}

Finally, let us record the basic properties of Connes--Landi deformation by analogy with Propositions~\ref{isom} and~\ref{modfunctor}. Before continuing, recall that two (regular) spectral triples \((\cA_1,H_1,D_1)\) and \((\cA_2,H_2,D_2)\) are called \emph{unitarily equivalent}, denoted \((\cA_1,H_1,D_1) \cong (\cA_2,H_2,D_2)\), if there exist a (topological) \Star-isomorphism \(\phi : \cA_1 \to \cA_2\) and a \G-equivariant unitary \(\Phi : H_1 \to H_2\) such that
\[
	\forall a \in \cA_1, \quad \Phi L_1(a) \Phi^\ast = L_2(\phi(a)); \quad \Phi D_1 \Phi^\ast = D_2;
\]
in which case we call \((\phi,\Phi)\) a \emph{unitary equivalence} from \((\cA_1,H_1,D_1)\) to \((\cA_2,H_2,D_2)\), written \((\phi,\Phi) : (\cA_1,H_1,D_1) \iso (\cA_2,H_2,D_2)\). In particular, if \((\cA_1,H_1,D_1)\) and \((\cA_2,H_2,D_2)\) are \G-equivariant spectral triples and if \(\phi\) and \(\Psi\) are \G-equivariant, then we shall say that \((\cA_1,H_1,D_1)\) and \((\cA_2,H_2,D_2)\) are \emph{\(G\)-equivalent}, denoted \((\cA_1,H_1,D_1) \cong_G (\cA_2,H_2,D_2)\), and call \((\phi,\Phi)\) a \emph{\G-equivalence}.

\begin{proposition}
Let \(\cat{SpTr}_G\) denote the category of \G-equivariant spectral triples and \G-equivalences, where the composition of two \G-equivalences \[(\phi,\Phi) : (\cA_1,H_1,D_1) \iso (\cA_2,H_2,D_2), \quad (\p{\phi},\p{\Phi}) : (\cA_2,H_2,D_2) \iso (\cA_3,H_3,D_3),\] is given by
\(
	(\p{\phi},\p{\Phi}) \circ (\phi,\Phi) \coloneqq (\p{\phi} \circ \phi,\p{\Phi} \circ \Phi) : (\cA_1,H_1,D_1) \to (\cA_3,H_3,D_3)
\).
For every \(\Theta \in Z^2(\dual{G},\bT)\), the assignment
\begin{align*}
	&\operatorname{Obj}(\cat{SpTr}_G) \ni (\cA,H,D) \mapsto (\cA_\Theta,H,D),\\
	&\operatorname{Mor}(\cat{SpTr}_G) \ni \left((\phi,\Phi) : (\cA_1,H_1,D_1) \iso (\cA_2,H_2,D_2)\right)\\ 
	&\quad\quad\quad\quad\quad\quad\quad\quad\quad \mapsto \left((\phi,\Phi) : ((\cA_1)_\Theta,H_1,D_1) \iso ((\cA_2)_\Theta,H_2,D_2)\right),\end{align*}
defines an isomorphism of categories \(\cat{CL}_\Theta : \cat{SpTr}_G \iso \cat{SpTr}_G\). Moreover, the functors \(\cat{CL}_\Theta\) satisfy the following properties.
\begin{enumerate}
	\item Let \(\id_{\cat{SpTr}_G}\) denote the identity functor on \(\cat{SpTr}_G\). Then \(\cat{CL}_0 = \id_{\cat{SpTr}_G}\).
	\item For any \(\Theta\), \(\p{\Theta} \in Z^2(\dual{G},\bT)\), \(\cat{CL}_{\p{\Theta}} \circ \cat{CL}_\Theta = \cat{CL}_{\Theta + \p{\Theta}}\).
\end{enumerate}
Thus, \(\Theta \mapsto \cat{CL}_\Theta\) defines a strict action of the group \(Z^2(\dual{G},\bT)\) on \(\cat{SpTr}_G\).
\end{proposition}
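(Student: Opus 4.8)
The plan parallels the proofs of Propositions~\ref{isom} and~\ref{modfunctor}: once the right lemmas are assembled the statement is a purely formal, algebraic verification, with no analytic content, since rapid decay of Fourier coefficients makes every rearrangement absolutely convergent. The ingredients are Theorem~\ref{yamashita} and its proof (which gives $L_\Theta = \pi_\Theta\circ\cat{R}_\Theta(L)$ for $\pi_\Theta$ the isomorphism of Lemma~\ref{bigdeform}), Proposition~\ref{isom} for the algebra-level functor $\cat{R}_\Theta$, and one new elementary observation: the unitaries $\Upsilon^\Theta_\bx$ of Lemma~\ref{Ypsilon} are additive in the cocycle. Indeed, since $\Upsilon^\Theta_\bx$ acts on the isotypic subspace $H_\by$ as the scalar $\e{-\Theta(\bx,\by)}$, one reads off $\Upsilon^0_\bx = \id_H$ and $\Upsilon^\Theta_\bx\,\Upsilon^{\p{\Theta}}_\bx = \Upsilon^{\Theta+\p{\Theta}}_\bx$ for all $\Theta$, $\p{\Theta}\in Z^2(\dual{G},\bT)$ and $\bx\in\dual{G}$.

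First I would check that $\cat{CL}_\Theta$ is a well-defined functor. On objects this is Theorem~\ref{yamashita}, whose proof moreover shows, via Equation~\ref{bigdeform:3}, that $\pi_\Theta$ preserves the $\delta$-seminorms of Proposition~\ref{rennie}, so that $(\cA_\Theta,H,D)$ is again regular and $\delta$-complete and we stay inside $\cat{SpTr}_G$. For a $G$-equivalence $(\phi,\Phi)\colon(\cA_1,H_1,D_1)\iso(\cA_2,H_2,D_2)$: by Proposition~\ref{isom}, $\phi$ remains a topological $\ast$-isomorphism $(\cA_1)_\Theta\iso(\cA_2)_\Theta$; and since $\Phi$ is $G$-equivariant it intertwines the isotypic projections of $H_1$ and $H_2$, hence the unitaries $\Upsilon^\Theta_\bx$, hence the deformation isomorphisms $\pi_\Theta$ attached to $B^\infty(H_1)$ and $B^\infty(H_2)$, so that $\Phi\,\pi_\Theta(S)\,\Phi^\ast=\pi_\Theta(\Phi S\Phi^\ast)$. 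Combining this with $\Phi L_1(a)\Phi^\ast=L_2(\phi(a))$ and $\Phi D_1\Phi^\ast=D_2$ gives $\Phi\,(L_1)_\Theta(a)\,\Phi^\ast=(L_2)_\Theta(\phi(a))$, so $(\phi,\Phi)$ is once more a $G$-equivalence. Since $\cat{CL}_\Theta$ acts as the identity on the underlying pairs $(\phi,\Phi)$, it preserves identity morphisms and composition, hence is a functor.

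Next I would establish properties (1) and (2); the isomorphism claim and the strict-action claim then follow formally, as (1)--(2) exhibit $\cat{CL}_{-\Theta}$ as a two-sided inverse of $\cat{CL}_\Theta$ and say exactly that $\Theta\mapsto\cat{CL}_\Theta$ is a strict action of $Z^2(\dual{G},\bT)$ on $\cat{SpTr}_G$. Property~(1) is immediate: $\cat{R}_0=\id$ gives $\cA_0=\cA$, and $\Upsilon^0_\bx=\id_H$ gives $\pi_0=\id$, hence $L_0=L$. For property~(2), since both functors act as the identity on the underlying pairs $(\phi,\Phi)$ it suffices to check equality on objects; and as $(\cA_\Theta)_{\p{\Theta}}=\cA_{\Theta+\p{\Theta}}$ by Proposition~\ref{isom}(2), the one remaining point is the coincidence of representations $(L_\Theta)_{\p{\Theta}}=L_{\Theta+\p{\Theta}}$. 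Writing $L_\Theta=\pi_\Theta\circ\cat{R}_\Theta(L)$ and using the functoriality of $\cat{R}_{\p{\Theta}}$ together with $\cat{R}_{\p{\Theta}}\circ\cat{R}_\Theta=\cat{R}_{\Theta+\p{\Theta}}$, this reduces to the identity $\pi_{\p{\Theta}}\circ\cat{R}_{\p{\Theta}}(\pi_\Theta)=\pi_{\Theta+\p{\Theta}}$ of maps $B^\infty(H)_{\Theta+\p{\Theta}}\to B^\infty(H)$; evaluating both sides via Equation~\ref{bigdeform:1}, and using $\widehat{\pi_\Theta(S)}(\bx)=\hat{S}(\bx)\,\Upsilon^\Theta_\bx$ (valid because $\Upsilon^\Theta_\bx$ is $G$-invariant), this is precisely the additivity $\Upsilon^\Theta_\bx\,\Upsilon^{\p{\Theta}}_\bx=\Upsilon^{\Theta+\p{\Theta}}_\bx$.

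The only place any care is needed is the bookkeeping of domains in property~(2): the Fr\'echet pre-\GCstar-algebra $B^\infty(H)$ depends only on $(H,U)$, not on $\cA$, so one and the same $\pi_{\p{\Theta}}$ deforms both $(\cA,H,D)$ and $(\cA_\Theta,H,D)$, and $\cat{R}_{\p{\Theta}}(\pi_\Theta)$ must be read as the morphism $(B^\infty(H)_\Theta)_{\p{\Theta}}=B^\infty(H)_{\Theta+\p{\Theta}}\to B^\infty(H)_{\p{\Theta}}$. Once the sources and targets are tracked correctly, the whole argument is the same mechanical, purely algebraic fact-check as for Propositions~\ref{isom} and~\ref{modfunctor}.
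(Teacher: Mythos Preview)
Your proposal is correct and is precisely the detailed realisation of what the paper does: the paper's own proof is the one-sentence remark that, given Proposition~\ref{isom} and the techniques from the proof of Theorem~\ref{yamashita}, everything reduces to a mechanical, purely algebraic fact-check. You have written out that fact-check, and the key algebraic identity you isolate---the additivity \(\Upsilon^\Theta_\bx\,\Upsilon^{\p{\Theta}}_\bx = \Upsilon^{\Theta+\p{\Theta}}_\bx\) of the unitaries of Lemma~\ref{Ypsilon}---is exactly the content behind property~(2).
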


Given Proposition~\ref{isom} and the techniques used to prove Proposition~\ref{yamashita}, the proof of these properties reduces, again, to a mechanical, algebraic fact-check.

\subsection{Group cohomology and symmetries of Connes--Landi deformation}\label{cohomology}

We now show that the second group cohomology \(H^2(\dual{G},\bT)\) of the Pontrjagin dual \(\dual{G}\) of \(G\) parametrizes Connes--Landi deformations of a given \G-equivariant spectral triple \(\Sigma\), up to \emph{natural} \G-equivalence.

\begin{theorem}[{cf.\ Venselaar~\cite[Lemma 1]{Venselaar}}]\label{equivariant}
For any \(\Theta\), \(\p{\Theta} \in Z^2(\dual{G},\bT)\), the functors \(\cat{CL}_\Theta\) and \(\cat{CL}_{\p{\Theta}}\) are naturally isomorphic if and only if \(\Theta\) and \(\p{\Theta}\) are cohomologous.
\end{theorem}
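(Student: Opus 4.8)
The plan is to reduce to the case \(\p{\Theta} = 0\) and then treat the two implications separately. For the reduction: since \(\cat{CL}_{\p{\Theta}} \circ \cat{CL}_{\Theta} = \cat{CL}_{\Theta+\p{\Theta}}\) and \(\cat{CL}_0 = \id_{\cat{SpTr}_G}\), post-composing a natural isomorphism \(\cat{CL}_{\Theta} \Rightarrow \cat{CL}_{\p{\Theta}}\) with \(\cat{CL}_{-\p{\Theta}}\) (and, conversely, with \(\cat{CL}_{\p{\Theta}}\)) will show that \(\cat{CL}_{\Theta}\) and \(\cat{CL}_{\p{\Theta}}\) are naturally isomorphic if and only if \(\cat{CL}_{\Theta - \p{\Theta}} \cong \id_{\cat{SpTr}_G}\); since, moreover, \(\Theta\) and \(\p{\Theta}\) are cohomologous exactly when \(\Theta - \p{\Theta} \in B^2(\dual{G},\bT)\), it then suffices to prove that \(\cat{CL}_{\Theta} \cong \id_{\cat{SpTr}_G}\) if and only if \(\Theta \in B^2(\dual{G},\bT)\).

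For the ``if'' direction, assume \(\Theta = \di T\) with \(T \colon \dual{G} \to \bT\) and \(T(\bo) = 0\). Given a \G-equivariant spectral triple \((\cA,H,D)\) with faithful \(\ast\)-representation \(L\) and Peter--Weyl projections \(\set{P_\bx}_{\bx \in \dual{G}}\), I would set \(\Phi \coloneqq \sum_{\bx \in \dual{G}} \e{-T(\bx)} P_\bx\), which---exactly as in Lemma~\ref{Ypsilon}---strongly converges to a \G-invariant unitary on \(H\) that commutes with \(D\), and define \(\phi \colon \cA_{\di T} \to \cA\) by \(\phi(a) \coloneqq \e{-T(\bx)}\, a\) on isotypic \(a \in \cA_\bx\), extended to rapidly decaying Fourier series; being a \G-equivariant rescaling of isotypic components by unimodular scalars, \(\phi\) is a \G-equivariant bicontinuous linear bijection. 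The crux will be the operator identity \(\Phi L_{\di T}(a)\, \Phi^\ast = L(\phi(a))\) for all \(a \in \cA\), which one checks on isotypic \(a \in \cA_\bx\) by writing \(L_{\di T}(a) = L(a)\, \Upsilon^{\di T}_\bx\) as in Lemma~\ref{bigdeform}, using \(\Upsilon^{\di T}_\bx = \sum_{\by} \e{-\di T(\bx,\by)} P_\by\) together with \(P_\bz L(a) P_\by = 0\) unless \(\bz = \bx + \by\), and invoking the telescoping identity \(T(\by) - \di T(\bx,\by) - T(\bx+\by) = -T(\bx)\). Since \(L\) is faithful, this identity will force \(\phi\) to be multiplicative and \(\ast\)-preserving---thereby sidestepping the bookkeeping of the twisted involution---so that \((\phi,\Phi)\) is a \G-equivalence \(\cat{CL}_{\di T}(\cA,H,D) \iso (\cA,H,D)\). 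Naturality is then immediate, since \(\Phi\) and \(\phi\) depend only on \(T\) and on the Peter--Weyl decomposition, which every \G-equivalence respects.

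For the ``only if'' direction, suppose \(\eta \colon \cat{CL}_{\Theta} \Rightarrow \id_{\cat{SpTr}_G}\) is a natural isomorphism. I would fix a faithful \G-equivariant spectral triple \(\Sigma_0\) whose underlying algebra is \(C^\infty(G)\) carrying the translation \G-action---for instance, the external product of the flat Hodge--Dirac spectral triple \(\bigl(C^\infty(\bT^N), L^2(\wedge^\bullet T^\ast \bT^N), d + d^\ast\bigr)\) of the identity component with the finite-dimensional spectral triple of the (finite) component group---so that \(C^\infty(G)_\bx = \bC U_\bx\) with \(U_\bx = e \circ \bx\), as in Example~\ref{canonicalexample0}. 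Evaluating \(\eta\) at \(\Sigma_0\) then produces a \G-equivariant unital \(\ast\)-isomorphism \(\phi_0 \colon C^\infty(G)_{\Theta} \iso C^\infty(G)\); because \(\phi_0\) preserves the isotypic grading and each \(\bC U_\bx\) is one-dimensional, \(\phi_0(U_\bx) = \lambda(\bx) U_\bx\) with \(\lambda(\bo) = 1\), while isometry of \(\ast\)-isomorphisms of pre-\Cstar-algebras---equivalently, the fact that \(U_\bx^{\ast_\Theta} \star_\Theta U_\bx\) is a unimodular scalar multiple of \(1\)---gives \(\abs{\lambda(\bx)} = 1\), say \(\lambda = e \circ f\) with \(f \colon \dual{G} \to \bT\), \(f(\bo) = 0\). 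Comparing \(\phi_0(U_\bx \star_\Theta U_\by) = \e{-\Theta(\bx,\by) + f(\bx+\by)}\, U_{\bx+\by}\) with \(\phi_0(U_\bx)\, \phi_0(U_\by) = \e{f(\bx) + f(\by)}\, U_{\bx+\by}\) then forces \(\Theta(\bx,\by) = f(\bx) + f(\by) - f(\bx+\by) = \di f(\bx,\by)\) for all \(\bx\), \(\by\), whence \(\Theta = \di f \in B^2(\dual{G},\bT)\).

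I expect the main obstacle to be computational rather than conceptual: everything hinges on the operator-level intertwining \(\Phi L_{\di T}(\cdot)\, \Phi^\ast = L(\phi(\cdot))\) of the ``if'' direction---after which faithfulness of \(L\) supplies the remaining algebra for free---and on exhibiting a test object rigid enough, i.e.\ with one-dimensional isotypic components, to pin down the algebra isomorphism in the ``only if'' direction as a diagonal rescaling; the remaining verifications of naturality, continuity, and the reduction step are routine.
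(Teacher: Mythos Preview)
Your argument is correct and runs parallel to the paper's, with two tactical differences worth noting. First, you reduce to the case \(\p{\Theta}=0\) by whiskering with \(\cat{CL}_{-\p{\Theta}}\), whereas the paper works with both cocycles simultaneously; your reduction is a harmless simplification. Second, and more interestingly, in the ``only if'' direction the paper compares \emph{commutation relations} in \(C^\infty(G)_\Theta\) and \(C^\infty(G)_{\p{\Theta}}\) to deduce only that the antisymmetrisations \(\iot{\theta}\) and \(\iot{\p{\theta}}\) agree, and then invokes Kleppner's Theorem~\ref{kleppner} to conclude \(\theta=\p{\theta}\). You instead use \emph{multiplicativity} of the \(G\)-equivariant isomorphism \(\phi_0\) on the one-dimensional isotypic generators \(U_\bx\) to pin down the full cocycle \(\Theta\) as a coboundary directly; this is slightly more elementary, since it bypasses Kleppner entirely. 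In the ``if'' direction your construction is identical to the paper's up to sign conventions, and your use of faithfulness of \(L\) to read off that \(\phi\) is a \Star-homomorphism from the spatial identity \(\Phi L_{\di T}(\cdot)\Phi^\ast = L(\phi(\cdot))\) is a clean shortcut that the paper does not take.

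One small slip: from \(-\Theta(\bx,\by)+f(\bx+\by)=f(\bx)+f(\by)\) you obtain \(\Theta(\bx,\by)=f(\bx+\by)-f(\bx)-f(\by)=-\di f(\bx,\by)\), not \(+\di f\); since \(-\di f=\di(-f)\) is still a coboundary, the conclusion is unaffected.
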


\begin{proof}
Let \(\Theta\), \(\p{\Theta} \in Z^2(\dual{G},\bT)\). Let \(\theta \coloneqq [\Theta]\), \(\p{\theta} \coloneqq [\p{\Theta}]\), and recall that \(\iot{\theta} = \Theta - \Theta^t\), \(\iot{\p{\theta}} = \p{\Theta} - (\p{\Theta})^t.\) so that, by commutativity of \(C^\infty(G)\),
\[
	b_\by \star_\Theta a_\bx = \e{\iot{\theta}} a_\bx \star_\Theta b_\by, \quad b_\by \star_{\p{\Theta}} a_\bx = \e{\iot{\p{\theta}}} a_\bx \star_{\p{\Theta}} b_\by
\] 
for all isotypic \(a_\bx \in C^\infty(G)_\bx = \bC U_\bx\) and \(b_\by \in C^\infty(G)_\by = \bC U_\by\) in \(C^\infty(G)\).

First, suppose that \(\cat{CL}_\Theta\) and \(\cat{CL}_{\p{\Theta}}\) are naturally isomorphic. Let \(\cat{\Psi} : \cat{CL}_\Theta \iso \cat{CL}_{\p{\Theta}}\) be a natural isomorphism. In particular, \(\cat{\Psi}\) yields a \G-equivalence \((\psi,\Psi) : (C^\infty(G)_\Theta,L^2(G,\slashed{S}),\slashed{D}) \iso (C^\infty(G)_{\p{\Theta}},L^2(G,\slashed{S}),\slashed{D})\), where \(\slashed{S} \to G\) is the spinor bundle and \(\slashed{D}\) is the Dirac operator associated to the trivial spin structure.
Then
\begin{multline*}
	\e{\iot{\p{\theta}}(\bx,\by)} \psi(U_\bx) \star_\p{\Theta} \psi(U_\by)
 = \psi(U_\by) \star_\p{\Theta} \psi(U_\by) = \psi(U_\by \star_\Theta U_\bx)\\ = \psi(\e{\iot{\theta}(\bx,\by)} U_\bx \star_\Theta U_\by) = \e{\iot{\theta}(\bx,\by)} \Psi(U_\bx) \star_\p{\Theta} \psi(U_\by)
\end{multline*}
for all \(\bx\), \(\by \in \dual{G}\), so that \(\iot{\theta} = \iot{\p{\theta}}\), and hence, by Theorem~\ref{kleppner}, \(\theta = \p{\theta}\).

Now, suppose that \(\theta = \p{\theta}\), so that \(\p{\Theta} - \Theta = \dif T\) for some \(T : \dual{G} \to \bT\) such that \(T(\bo) = 0\). For any \(\Sigma = (\cA,H,D) \in \operatorname{Obj}(\cat{SpTr}_G)\), define \(\psi_\Sigma : \cA_\Theta \to \cA_{\p{\Theta}}\) by
\[
	\forall a \in \cA, \quad \psi_\Sigma(a) \coloneqq \inv{F}\left(\e{T(\bx)}\hat{a}_\bx)_{\bx \in \dual{G}}\right) = \sum_{\bx \in \dual{G}} \e{T(\bx)}a_\bx,
\]
and \(\Psi_\Sigma : H \to H\) by \(\Psi_\Sigma \coloneqq \sum_{\bx \in \dual{G}}\e{T(\bx)}P_\bx\); by construction, \(\psi_\Sigma\) defines a \G-equivariant bicontinuous linear map, whilst by the proof of Lemma~\ref{Ypsilon}, \(\Psi_\Sigma\) converges to a \G-invariant unitary that commutes with \(D\). On the one hand, since
\begin{gather*}
	T(\bx+\by)-\Theta(\bx,\by) = -\p{\Theta}(\bx,\by) + T(\bx) + T(\by), \\
	T(-\bx)-\Theta(\bx,\bx)  = -\Theta(-\bx,-\bx)-\Theta(\bx)
\end{gather*}
for all \(\bx\), \(\by \in \dual{G}\), it follows that
\(\psi_\Sigma(a_\bx \star_\Theta b_\by) = \psi_\Sigma(a_\bx)\star_\p{\Theta}\psi_\Sigma(b_\by)\) and \(\psi_\Sigma(a_\bx^{\ast_\Theta}) = \psi_\Sigma(a_\bx)^{\ast_\p{\Theta}}\) for all isotypic elements \(a \in \cA_\bx\) and \(b_\by \in \cA_\by\) of \(\cA\), and hence that \(\psi_\Sigma\) defines a \G-equivariant topological \Star-isomorphism. On the other hand, since
\(
	T(\bx+\by) - \Theta(\bx,\by) - T(\by) = -\p{\Theta}(\bx,\by) + T(\bx)
\)
for all \(\bx\), \(\by \in \dual{G}\), one can check that \(\Psi_\Sigma L_\Theta(a_\bx) \Psi_\Sigma^\ast \xi_\by = L_{\p{\Theta}}(\psi_\Sigma(a_\bx))\xi_\by\) for all isotypic elements \(a_\bx \in \cA_\bx\) and \(\xi_\by \in H_\by\) of \(\cA\) and \(H\), respectively, so that \(\Psi_\Sigma\) spatially implements \(\psi_\Sigma\). Thus, \((\psi_\Sigma,\Psi_\Sigma)\) defines a \G-equivalence. Moreover, by the universal form of this construction, the collection \(\set{(\psi_\Sigma,\Psi_\Sigma) \given \Sigma \in \operatorname{Obj}(\cat{SpTr}_G)}\) of \G-equivalences defines a natural isomorphism \(\cat{\Psi} : \cat{CL}_\Theta \iso \cat{CL}_{\p{\Theta}}\).
\end{proof}

Thus, up to natural \G-equivalence, we can define the Connes--Landi deformation of \((\cA,H,D) \in \operatorname{Obj}(\cat{SpTr}_G)\) by \(\theta \in H^2(\dual{G},\bT)\) to be
\(
	(\cA_\theta,H,D) \coloneqq (\cA_\Theta,H,D)
\)
for any representative \(\Theta\) of \(\theta\); in fact, by Theorem~\ref{kleppner},  we can even take \(\Theta \in B(\dual{G})\).

\begin{remark}
\emph{Mutatis mutandis}, we have also shown that the functors \(\cat{R}_\Theta, \cat{R}_{\p{\Theta}} : \cat{Alg}_G \iso \cat{Alg}_G\) are naturally isomorphic if and only if \(\Theta\) and \(\p{\Theta}\) are cohomologous. Moreover, given \(T : \dual{G} \to \bT\) such that \(\dif T = \p{\Theta} - \Theta\), and given the corresponding explicit natural isomorphism \(\psi^T : \cat{R}_\Theta \iso \cat{R}_{\p{\Theta}}\), one can use the same construction, yet again, to obtain a natural isomorphism \(\Psi^T : \cat{R}_{\cA,\Theta} \iso (\psi^T_\cA)^\ast \circ \cat{R}_{\cA,\p{\Theta}}\) of functors \(\cat{Mod}_G(\cA) \iso \cat{Mod}_G(\cA)\). As a result, if \(\cA\) is a Fr\'echet pre-\GCstar-algebra and if \(\cE\) is an f.g.p.\ \GA-module, then for \emph{any} \(\Theta \in Z^2(\dual{G},\bT)\), \(\cE_\Theta\) is a f.g.p.\ \G-\(\cA_\Theta\)-module.
\end{remark}

\begin{example}\label{nctorusex2}
We continue from Example~\ref{nctorusex}. For simplicity, let \(\slashed{S} = \bT^N \times \bC^{2^{\lfloor N/2 \rfloor}}\) be the spinor bundle and let \(\slashed{D}\) be the spin Dirac operator on \(\bT^N\) corresponding to the trivial spin structure. Let \(\theta \in \wedge^2 \bR^N\) and define \(\Theta \in A(\bZ^N)\) by \(\Theta(\bx,\by) \coloneqq \exp(\tfrac{1}{2}\theta(\bx_\ast,\by_\ast))\) for \(\bx\), \(\by \in \dual{G}\), so that \[(C^\infty(\bT^N_\theta),L^2(\bT^N_\theta,\slashed{S}),\slashed{D}) \coloneqq (C^\infty(\bT^N)_\Theta, L^2(\bT^N,\slashed{S}),\slashed{D})\] is the smooth noncommutative \(N\)-torus with deformation parameter \(\theta\) and trivial spinor structure. Then \(\iot{[\Theta]}(\bx,\by) = \e{\theta(\bx_\ast,\by_\ast)}\) for all \(\bx\), \(\by \in \bZ^N\), so that  \([\Theta] = p(\theta)\) for the universal covering map 
\[
	p : \wedge^2 \bR^N \surj \bT^{N(N-1)/2} \cong H^2(\bZ^N,\bT), \quad 
	\theta \mapsto (\exp(\theta(e_i,e_j)))_{1 \leq i < j \leq N},
\]
where \(\set{e_i}\) denotes the standard ordered basis of \(\bR^N\), and hence, we have natural \G-equivalences
\(
	(C^\infty(\bT^N_\theta),L^2(\bT^N_\theta,\slashed{S}),\slashed{D}) \cong_G (C^\infty(\bT^N)_{\p{\Theta}},L^2(\bT^N,\slashed{S},\slashed{D})
\)
for \emph{every} representative \(\p{\Theta} \in Z^2(\dual{G},\bT)\) of \(p(\theta)\). In particular, we could have used any of the three constructions of Example~\ref{torus1} to obtain \(\p{\Theta} \in B(\bZ^N)\); for instance, the alternating bicharacter \(\Theta\) fits precisely into the third construction. Observe, moreover, that we can rewrite the defining commutation relations of \(C^\infty(\bT^N_\theta)\) as
\[
	\forall \bx, \; \by \in \bZ^N, \quad U_\by \star_{\Theta} U_\bx = \e{\iot{p(\theta)}(\bx,\by)} U_\bx \star_\Theta U_\by,
\]
which, as a result, manifestly depend only on the cohomology class \(p(\theta) = [\Theta]\) through its canonically associated alternating bicharacter \(\iot{p(\theta)}\). Thus, as is well known in the literature on \(W^\ast\)- and \Cstar-dynamical systems~\cite{AHK,OPT,MW}, the deformation parameter \(\theta\) of a noncommutative \(N\)-torus \(C^\infty(\bT^N)\) is most naturally viewed as an element of \(H^2(\bZ^N,\bT) \cong \bT^{N(N-1)/2}\).
\end{example}

Since the Connes--Landi deformations of a fixed \G-equivariant spectral triple \(\Sigma\) can be parametrized, up to natural \G-equivalence, by the compact Abelian Lie group \(H^2(\dual{G},\bT)\), it is natural to ask which symmetries of \(H^2(\dual{G},\bT)\) lift to unitary equivalences or even more general Morita--Rieffel equivalences amongst (\G-equivalence classes of) Connes--Landi deformations of \(\Sigma\). For the remainder of this section, we discuss symmetries induced by automorphisms of the Lie group \(G\).

Now, the action of \(\Aut(G)\) on \(G\) induces a right action of \(\Aut(G)\) on \(\dual{G}\) by \(\hat{\phi}(\bx) \coloneqq \bx \circ \phi\) for \(\phi \in \Aut(G)\) and \(\bx \in \dual{G}\), and hence, an action on \(Z^2(\dual{G},\bT)\) by
\[
	\forall \Theta \in Z^2(\dual{G},\bT), \; \forall \phi \in \Aut(G), \; \forall \bx, \by \in \dual{G}, \quad \hat{\phi}^\ast\Theta(\bx,\by) \coloneqq \Theta(\hat{\phi}(\bx),\hat{\phi}(\by)),
\]
which, at last, descends to an action on \(H^2(\dual{G},\bT)\) by \(\hat{\phi}^\ast [\Theta] \coloneqq [\hat{\phi}^\ast\Theta]\) for \(\phi \in \Aut(G)\) and \(\Theta \in Z^2(\dual{G},\bT)\). On the other hand, if \(\Sigma \coloneqq (\cA,H,D) \in \operatorname{Obj}(\cat{SpTr}_G)\), then certain automorphisms of \(G\) can be viewed as acting non-isometrically on \(\Sigma\).

\begin{definition}
Let \(\Sigma = (\cA,H,D)\) be a \G-equivariant spectral triple. We call \(\phi \in \Aut(G)\) \emph{implementable on \(\Sigma\)} if there exist \(\psi \in \Aut(\cA)\), \(\Psi \in \U(H)\) such that
\begin{gather*}
	\forall a \in \cA, \quad \Psi L(a)\Psi^\ast = L(\psi(a)),\\
	\forall t \in G, \quad \psi \circ \alpha_t \circ \inv{\psi} = \alpha_{\phi(t)}, \quad \Psi U_t \Psi^\ast = U_{\phi(t)},
\end{gather*}
in which case, we call \((\psi,\Psi)\) an \emph{implementation} of \(\phi\). We denote by \(\Aut_\Sigma(G)\) the subgroup of all implementable automorphisms in \(\Aut(G)\).
\end{definition}

The point, then, is that automorphisms of \(G\) implementable on a given \G-equivariant spectral triple \(\Sigma\) induce what can be viewed as non-equivariant, non-isometric unitary equivalences amongst the Connes--Landi deformations of \(\Sigma\).

\begin{proposition}[{cf.\ Venselaar~\cite[Corollary 2]{Venselaar}}]\label{nonequivariant}
Let \(\Sigma = (\cA,H,D)\) be a \G-equivariant spectral triple.  Let \(\Theta \in Z^2(\dual{G})\) and \(\phi \in \Aut_\Sigma(G)\). Any implementation \((\psi,\Psi)\) of \(\phi\) defines a unitary equivalence \((\psi,\Psi) : (\cA_{\hat{\phi}^\ast\Theta},H,D) \iso (\cA_\Theta,H,\Psi D \Psi^\ast)\).
\end{proposition}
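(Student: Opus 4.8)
The plan is to check directly that the pair $(\psi,\Psi)$ coming from any implementation of $\phi$ is precisely the claimed unitary equivalence. Spelling out the definition, this reduces to four assertions: (i) $\psi$ is a topological $\ast$-isomorphism $\cA_{\hat\phi^\ast\Theta}\to\cA_\Theta$; (ii) $\Psi$ spatially implements $\psi$ for the two deformed representations, i.e.\ $\Psi L_{\hat\phi^\ast\Theta}(a)\Psi^\ast=L_\Theta(\psi(a))$ for every $a$; (iii) $\Psi$ intertwines the Dirac operators, which is automatic because $\Psi D\Psi^\ast$ is by fiat the target Dirac operator; and (iv) both $(\cA_{\hat\phi^\ast\Theta},H,D)$ and $(\cA_\Theta,H,\Psi D\Psi^\ast)$ are genuine regular spectral triples. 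Assertion (iv) for the source is exactly Theorem~\ref{yamashita}; for the target it follows, once (ii) is known, by transporting the axioms through $\Psi$: writing $a=\psi(b)$ one gets $[\Psi D\Psi^\ast,L_\Theta(a)]=\Psi[D,L_{\hat\phi^\ast\Theta}(b)]\Psi^\ast\in B(H)$, while compact resolvent and regularity pass from $D$ to $\Psi D\Psi^\ast$. So the real work is (i) and (ii).

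First I would turn the implementation relations into facts about the Peter--Weyl decompositions. Because $\phi$ is an automorphism of $G$, inserting $a\in\cA_\bx$ into $\psi\circ\alpha_t\circ\inv\psi=\alpha_{\phi(t)}$ and letting $t$ range over $G$ shows that $\psi$ sends $\cA_\bx$ onto the single isotypic subspace $\cA_{\bx\circ\inv\phi}$, and the identical computation with $\Psi U_t\Psi^\ast=U_{\phi(t)}$ shows $\Psi$ sends $H_\bx$ onto $H_{\bx\circ\inv\phi}$; thus $\psi$ and $\Psi$ shift the $\dual G$-grading by one and the same automorphism of $\dual G$ determined by $\phi$. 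Granting this, (i) and (ii) become finite checks on isotypic elements and vectors. For $a_\bx\in\cA_\bx$, $b_\by\in\cA_\by$, $\xi_\by\in H_\by$ the deformed products and representations are merely the undeformed ones rescaled by a unimodular constant depending only on $\Theta$ and the pair of degrees, and the twist $\hat\phi^\ast\Theta$ is chosen so that this constant, on the source side, is exactly what is needed to cancel the discrepancy caused by the grading shift; carrying this out yields $\psi(a_\bx\star_{\hat\phi^\ast\Theta}b_\by)=\psi(a_\bx)\star_\Theta\psi(b_\by)$, $\psi(a_\bx^{\ast_{\hat\phi^\ast\Theta}})=\psi(a_\bx)^{\ast_\Theta}$ and $\Psi L_{\hat\phi^\ast\Theta}(a_\bx)\Psi^\ast\xi_\by=L_\Theta(\psi(a_\bx))\xi_\by$. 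I would then extend the algebra identities to all of $\cA_{\hat\phi^\ast\Theta}$ using the rapidly convergent Fourier expansions of Proposition~\ref{fourier} and continuity of $\psi$, and the operator identity to all of $\cA$ and $H$ using density of the algebraic direct sums of isotypic subspaces together with boundedness of the operators involved. Topological-ness of $\psi$ is then automatic: being spatially implemented by $\Psi$, it is isometric for the Rennie seminorms of $\cA_{\hat\phi^\ast\Theta}$ and $\cA_\Theta$ attached to the Dirac operators $D$ and $\Psi D\Psi^\ast$ respectively.

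An essentially equivalent but more conceptual route bypasses the bookkeeping somewhat: the relations $\psi\circ\alpha_t=\alpha_{\phi(t)}\circ\psi$ and $\Psi U_t=U_{\phi(t)}\Psi$ say precisely that $(\psi,\Psi)$ is a unitary equivalence from $(\cA,H,D)$ carrying its given $G$-structure onto the \emph{same} spectral triple carrying the $G$-structure precomposed with $\phi$, so that applying the Connes--Landi deformation by $\Theta$ to both sides and invoking the functoriality recorded after Proposition~\ref{yamashita} yields a unitary equivalence between the two Connes--Landi deformations, which — since the $\bx$-isotypic subspace for the $\phi$-precomposed action is the $(\bx\circ\inv\phi)$-isotypic subspace for the original one — are $(\cA_{\hat\phi^\ast\Theta},H,D)$ and $(\cA_\Theta,H,\Psi D\Psi^\ast)$. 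Either way, the single genuinely non-mechanical point is the cocycle bookkeeping: verifying that the grading shift induced by $\phi$ and the twist $\hat\phi^\ast\Theta$ cancel exactly. The passage from isotypic elements to general elements and the transport of the spectral-triple axioms through $\Psi$ are routine.
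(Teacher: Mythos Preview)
Your proposal is correct and follows essentially the same approach as the paper: both observe that the implementation \((\psi,\Psi)\) shifts the \(\dual{G}\)-grading on \(\cA\) and \(H\) by \(\hat\phi^{-1}\), and then verify on isotypic elements that the phase in \(L_{\hat\phi^\ast\Theta}(a_\bx)\xi_\by\) is exactly what is needed to match \(L_\Theta(\psi(a_\bx))\Psi\xi_\by\). The paper's write-up is more compressed---it simply checks \(\Psi L_{\hat\phi^\ast\Theta}(\cdot)\Psi^\ast=L_\Theta\circ\psi\) on isotypic pieces and lets faithfulness of \(L_\Theta\) and the fact that \((\psi,\Psi)\) is already a unitary equivalence of the undeformed triples handle the rest---while you spell out the \(\ast\)-isomorphism, continuity, and spectral-triple axioms separately; but the substance is the same.
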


\begin{proof}
By definition, \((\psi,\Psi) : (\cA,H,D) \iso (\cA,H,\Psi D \Psi^\ast)\) is a unitary equivalence. In particular, since \((\psi,\Psi)\) implements \(\phi\), it follows that that \(\psi(\cA_{\bx}) = \cA_{\hat{\phi}{}^{-1}(\bx)}\) and \(\Psi H_\bx = H_{\hat{\phi}{}^{-1}(\bx)}\) for all \(\bx \in \dual{G}\). Thus, for all isotypic \(a_\bx \in \cA_\bx\) and \(\xi_\by \in H_\by\),
\begin{multline*}
	\Psi L_{\hat{\phi}^\ast\Theta}(a_\bx)\xi_\by = \e{\hat{\phi}^\ast\Theta(\bx,\by)} \Psi L(a_\bx) \xi_\by = \e{\Theta(\hat{\phi}(\bx),\hat{\phi}(\by))} L(\psi(a_\bx)) \Psi \xi_\by\\ = L_\Theta(\psi(a_\bx)) \Psi \xi_\by,
\end{multline*}
so that, indeed, \(\Psi L_{\hat{\phi}^\ast\Theta}(\cdot)\Psi^\ast = L_\Theta \circ \psi\).
\end{proof}

Thus, by a very mild abuse of notation, we can say that if \(\Sigma = (\cA,H,D)\) is a \G-equivariant spectral triple, then
\[
	\forall \phi \in \Aut_\Sigma(G), \; \forall \theta \in H^2(\dual{G},\bT), \quad (\cA_{\hat{\phi}^\ast\theta},H,D) \cong (\cA_\theta,H,D_\phi),
\]
where, up to \G-equivalence, \((\cA,H,D_\phi) \coloneqq (\cA,H,\Psi D \Psi^\ast)\) for any implementation \((\psi,\Psi)\) of \(\phi\); in other words, \(\phi \in \Aut_\Sigma(G)\) acts by non-isometric diffeomorphisms on \(\set{\cA_\theta \given \theta \in H^2(\dual{G},\bT)}\) \emph{qua} family of noncommutative manifolds. 

Finally, let us observe that between the non-equivariant unitary equivalences of Proposition~\ref{nonequivariant} and the \G-equivalences of Theorem~\ref{equivariant}, we have recovered all the standard isomorphisms of smooth noncommutative tori.

\begin{example}[Rieffel--Schwarz~\cite{RS}, Elliott--Li~\cite{EL}]
We continue from Examples~\ref{nctorusex} and~\ref{nctorusex2}. For \(R\) a commutative ring, let \(\Skew(N,R)\) denote the \(R\)-module of skewsymmetric \(N \times N\) matrices with entries in \(R\). Let \(\set{e_1,\dotsc,e_N}\) denote the standard ordered basis of \(\bR^N\), and define an isomorphism \(\wedge^2 \bR^N \iso \Skew(N,\bR)\) by \(\Theta \mapsto (\Theta(e_i,e_j))\), so that the universal cover \(p : \wedge^2 \bR^N \surj \bT^{N(N-1)/2} \cong H^2(\bZ^N,\bT)\) is given by the composition \[\wedge^2 \bR^N \cong \Skew(N,\bR) \surj \Skew(N,\bR)/\Skew(N,\bZ) \cong \bT^{N(N-1)/2} \cong H^2(\bZ^N,\bT).\] Let \((C^\infty(\bT^N),L^2(\bT^N,\slashed{S}),\slashed{D})\) be the \(\bT^N\)-equivariant spectral triple of \(\bT^N\) endowed with the trivial spin structure, and recall that for any \(\theta \in \wedge^2 \bR^N \cong \Skew(N,\bR)\) and any representative \(\Theta\) of \(p(\theta)\),
\[
	(C^\infty(\bT^N_\theta),L^2(\bT^N_\theta,\slashed{S}),\slashed{D}) \cong_G (C^\infty(\bT^N)_{\Theta},L^2(\bT^N,\slashed{S}),\slashed{D}).
\]
On  the one hand, by Theorems~\ref{kleppner} and~\ref{equivariant}, for all \(\theta\), \(\p{\theta} \in \Skew(N,\bR)\),
\begin{equation}\label{equivariantex}
	(C^\infty(\bT^N_\theta),L^2(\bT^N_\theta,\slashed{S}),\slashed{D})  \cong_G (C^\infty(\bT^N_{\p{\theta}}),L^2(\bT^N_{\p{\theta}},\slashed{S}),\slashed{D})
\end{equation}
if and only if \(p(\theta) = p(\p{\theta})\), if and only if \(\p{\theta} - \theta \in \Skew(N,\bZ)\). On the other hand, since \(L^2(\bT^N,\slashed{S}) \cong L^2(\bT^N) \otimes \bC^{2^{\lfloor N/2 \rfloor}}\), each automorphism \(\phi \in \Aut(\bT^N) \cong \GL(N,\bZ)\) is implementable, viz, by \((\phi^\ast,\phi^\ast \otimes \id)\), so that by Proposition~\ref{nonequivariant}, for all \(\phi \in \GL(N,\bZ)\) and \(\theta \in \Skew(N,\bZ)\),
\begin{equation}\label{nonequivariantex}
	(C^\infty(\bT^N_{\phi \theta \phi^T }),L^2(\bT^N_{\phi \theta \phi^T},\slashed{S}),\slashed{D}) \cong (C^\infty(\bT^N_\theta),L^2(\bT^N_\theta,\slashed{S}),(\phi^\ast \otimes \id)\slashed{D}(\phi^\ast \otimes \id)^\ast)) 
\end{equation}
since \(p(\phi \theta \phi^T) = \hat{\phi}^\ast p(\theta)\).

Now, one can construct an action of the discrete group \(\SO(N,N\vert\bZ)\) on a certain dense subset \(\cT^0_N \subset \Skew(N,\bR) \cong \wedge^2 \bR^N\) of second category, such that the subgroups \(\GL(N,\bZ)\) and \(\Skew(N,\bR)\) of \(\SO(N,N \mid \bZ)\) act by
\[
	\forall \theta \in \cT^0_N, \; \forall \nu \in \Skew(N,\bR), \; \forall R \in \GL(N,\bZ), \quad \nu \cdot \theta \coloneqq \theta + \nu, \quad R \cdot \theta \coloneqq R \theta R^T.
\]
Rieffel and Schwarz~\cite{RS} constructed, for every \(\theta \in \cT^0_N\) and \(g \in \SO(N,N\vert\bZ)\), a Morita--Rieffel equivalence of the algebras \(C^\infty(\bT^N_\theta)\) and \(C^\infty(\bT^N_{g \cdot \theta})\), which is a \Star-isomorphism whenever \(g\) is in the subgroup generated by \(\GL(N,\bZ)\) and \(\Skew(N,\bZ)\); Elliott and Li~\cite{EL} then showed that for \(\theta\), \(\p{\theta} \in \cT^0_N\), \(C^\infty(\bT^N_\theta)\) and \(C^\infty(\bT^N_{\p{\theta}})\) are Morita--Rieffel equivalent  if and only if \(\p{\theta} = g \cdot \theta\) for some \(g \in \SO(N,N\vert\bZ)\). On the one hand, since \(\ker p = \Skew(N,\bZ)\),
\[
	\forall \theta \in \Skew(N,\bR), \; \forall \nu \in \Skew(N,\bZ), \quad p(\nu \cdot \theta) = p(\theta+\nu) = p(\theta),
\]
so that for all \(\theta \in \Skew(N,\bR)\) and \(\nu \in \Skew(N,\bZ)\), the \Star-isomorphism \(C^\infty(\bT^N_{\nu \cdot \theta}) \cong C^\infty(\bT^N_\theta)\) is recovered by the \G-equivalence of Equation~\ref{equivariantex}. On the other hand,
\[
	\forall \theta \in \Skew(N,\bR), \; \forall R \in \GL(N,\bZ) \cong \Aut(\bT^N), \quad p(R \cdot \theta) = p(R \theta R^T) = \hat{R}^\ast p(\theta), 
\]
so that for all \(\theta \in \Skew(N,\bR)\) and \(R \in \GL(N,\bZ)\), the \Star-isomorphism \(C^\infty(\bT^N_{R \cdot \theta}) \cong C^\infty(\bT^N_\theta)\) is recovered by the non-equivariant unitary equivalence of Equation~\ref{nonequivariantex}.
\end{example}

\section{Connes--Landi deformations of commutative spectral triples}\label{sec:3}

At last, we turn to our main goal, the statement and proof of an extension of Connes's reconstruction theorem~\cite{Con13} to Connes--Landi deformations of \G-equivar\-i\-ant commutative spectral triples. In this section, given a Fr\'{e}chet \(G\)-\(\ast\)-algebra \(\cA\), let \[\cA^\fin \coloneqq \bigoplus_{\bx \in \dual{G}}^{\mathrm{alg}} \cA_\bx.\]

\subsection{A reconstruction theorem}

We begin by recalling the statement of the reconstruction theorem for commutative spectral triples, which one can interpret as a partial analogue for spectral triples of Gel'fand--Na{\u\i}mark duality.

First, recall that a \emph{Dirac-type operator} on a compact oriented Riemannian manifold \((X,g)\) is a first order differential operator \(D\) on a Hermitian vector bundle \(E \to X\), such that \([D,f]^2 = -g(\dif f,\dif f) \id_E\) for all \(f \in C^\infty(X)\), or equivalently, such that
\(
	D^2 = - \sum_{i,j} g^{ij} \tfrac{\partial}{\partial x^i} \tfrac{\partial}{\partial x^j} + \text{lower order terms}
\)
in local coordinates. In particular, if \(X\) is a compact oriented Riemannian manifold and \(D\) is an essentially self-adjoint Dirac-type operator on a Hermitian vector bundle \(E \to X\), then
\((C^\infty(X),L^2(X,E),D)\) defines a regular spectral triple with commutative algebra. One can check that every such spectral triple satisfies the following definition.

\begin{definition}[Connes~\cite{Con96,Con13}]
Let \((\cA,H,D)\) be a regular spectral triple, let \(p \in \bN\). One calls \((\cA,H,D)\) a \(p\)-dimensional \emph{commutative} spectral triple if \(\cA\) is commutative and if the following conditions all hold:
\begin{enumerate}
	\item \emph{Dimension}: The eigenvalues \(\set{\lambda_n}_{n \in \bN}\) of \((D^2+1)^{-1/2}\), counted with multiplicity and arranged in decreasing order, satisfy \(\lambda_n = O(n^{-1/p})\) as \(n \to +\infty\).
	\item \emph{Order one}: For all \(a\), \(b\ \in \cA\), \([[D,L(a)],L(b)] = 0\).
	\item \emph{Orientability}: Define \(\epsilon : \cA^{\otimes(p+1)} \to \cA^{\otimes(p+1)}\) by setting
\[
	 \forall a_0, a_1, \dotsc, a_p \in \cA, \quad \epsilon(a_0 \otimes a_1 \otimes \cdots \otimes a_p) \coloneqq  \frac{1}{p!} \sum_{\pi \in S_p} (-1)^\pi  a_0 \otimes a_{\pi(1)} \otimes \cdots \otimes a_{\pi(p)},
\]
and call \(\bc \in \cA^{\otimes(p+1)}\) \emph{antisymmetric} if \(\epsilon(\bc) = \bc\). Define \(\pi_D : \cA^{\otimes(p+1)} \to B(H)\) by \(\pi_D(a_0 \otimes a_1 \otimes \cdots \otimes a_p) \coloneqq L(a_0)[D,L(a_1)] \cdots [D,L(a_p)]\) for all \(a_0,a_1,\dotsc,a_p \in \cA\). There exists an antisymmetric \(\bc \in \cA^{\otimes(p+1)}\), such that \(\chi := \pi_D(\bc)\) is a self-adjoint unitary, satisfying
\[
	\forall a \in \cA, \quad L(a)\chi = \chi L(a), \quad [D,L(a)] \chi = (-1)^{p+1} \chi [D,L(a)].
\]
	\item \emph{Finiteness and absolute continuity}: The subspace \(\cH_\infty := \cap_k \Dom \abs{D}^k\) defines a finitely generated projective \(\cA\)-module admitting a Hermitian metric \(\hp{}{}\), such that \(\ip{\xi}{\eta} = \fint L\left(\hp{\xi}{\eta}\right)\) for all \(\xi\), \(\eta \in \cH_\infty\),  where \(\fint : B(H) \ni T \mapsto \Tr_\omega(T (D^2+1)^{-p/2})\) for \(\Tr_\omega\) a fixed Dixmier trace.
	\item \emph{Strong regularity}: For all \(T \in \End_{\cA}(\cH_\infty)\),  \(T \in \cap_k \Dom[\abs{D},\cdot]^k\).
\end{enumerate}
\end{definition}

\begin{remark}
Connes's own definition of commutative spectral triple and statement of the reconstruction theorem involves a stronger orientability condition, modelled on the case of twisted spin\(^\bC\) Dirac operators on compact spin\(^\bC\) manifolds. However, one can show that the condition given above is sufficient~\cite[Proof of Corollary 2.19]{Ca12}.
\end{remark}

The reconstruction theorem, which Connes conjectured in 1996~\cite{Con96} and then proved in 2008~\cite{Con13}, after a substantial attempt by Rennie and V\'arilly~\cite{RV} in 2006, can now be stated as follows.

\begin{theorem}[{Connes~\cite[Theorem 1.1]{Con13}}]
Let \((\cA,H,D)\) be a \(p\)-dimensional commutative spectral triple. There exist a compact oriented Riemannian \(p\)-manifold \(X\), a Hermitian vector bundle \(E \to X\), and an essentially self-adjoint Dirac-type operator \(D_E\) on \(E\), such that
\(
	(\cA,H,D) \cong (C^\infty(X),L^2(X,E),D_E).
\)
\end{theorem}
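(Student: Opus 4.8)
This is Connes's reconstruction theorem, so strictly speaking nothing new is required here: the proof is that of \cite{Con13}, building on the earlier partial attempt of Rennie and V\'arilly~\cite{RV}. Let me nonetheless outline how it goes. The overall plan is to recover the underlying space from the commutative algebra by Gel'fand duality, then to upgrade it to a smooth oriented \(p\)-manifold using the orientability element together with the dimension axiom, and finally to read off the Riemannian metric, the Hermitian bundle, and the Dirac-type operator from \(D\) and the finiteness and absolute-continuity data.

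First I would form the commutative \Cstar-algebra \(A \coloneqq \overline{L(\cA)} \subseteq B(H)\) --- which exists, with \(\cA\) as a dense holomorphically closed Fr\'echet subalgebra, since \(\cA\) is a commutative unital Fr\'echet pre-\Cstar-algebra by Proposition~\ref{rennie} --- and set \(X \coloneqq \dual{A}\), its Gel'fand spectrum, so that \(A \cong C(X)\) for \(X\) a compact Hausdorff space and \(L\) identifies \(\cA\) with a dense unital subalgebra of \(C(X)\). The technical heart of the proof is then to endow \(X\) with a smooth structure for which \(\cA = C^\infty(X)\) and \(\dim X = p\). Here one exploits the orientability condition: the antisymmetric element \(\bc = \sum_j a_0^j \otimes a_1^j \otimes \cdots \otimes a_p^j\) furnished by that axiom, with \(\pi_D(\bc)\) a self-adjoint unitary, supplies on suitable open subsets of \(X\) candidate coordinate maps \(x \mapsto (a_1^j(x),\dots,a_p^j(x)) \in \bR^p\), while the dimension axiom \(\lambda_n = O(n^{-1/p})\) forces these charts to be genuinely \(p\)-dimensional and \(X\) to have no lower-dimensional strata. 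Combining regularity --- all \(L(a)\) and \([D,L(a)]\) lie in \(\cap_k \Dom([\abs{D},\cdot]^k)\) --- with the order-one condition, so that each \([D,L(a)]\) commutes with \(L(\cA)\) and is hence an order-zero operator whose symbol is a Clifford action of \(\dif a\), one shows that \(D\) behaves in these coordinates like an elliptic first-order operator, and elliptic regularity then yields that the transition maps between charts are smooth and that \(\cA\) is exactly \(C^\infty(X)\). I expect this local analysis --- the genuine spectral characterisation of manifolds --- to be the main obstacle, and would reproduce Connes's argument rather than look for a shortcut.

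Once \(X\) is a smooth compact oriented \(p\)-manifold with \(\cA = C^\infty(X)\), I would define the Riemannian metric \(g\) by \(-[D,L(f)]^2 = L\bigl(g^{-1}(\dif f,\dif f)\bigr)\) for \(f \in \cA\) --- a positive-definite metric precisely because \([D,L(f)]\) is order zero with Clifford symbol --- which simultaneously exhibits \(D\) as a Dirac-type operator on \(X\). The element \(\bc\) then pins down the orientation, while the Dixmier-trace formula \(\Tr_\omega\bigl(L(f)(D^2+1)^{-p/2}\bigr) \propto \int_X f \dif\operatorname{vol}_g\) identifies the Riemannian volume with the functional appearing in the absolute-continuity axiom.

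Finally I would identify \(H\) and \(D\) with \(L^2(X,E)\) and an essentially self-adjoint Dirac-type operator \(D_E\). By the finiteness condition, \(\cH_\infty = \cap_k \Dom\abs{D}^k\) is a finitely generated projective \(C^\infty(X)\)-module carrying a Hermitian metric \(\hp{}{}\), so Serre--Swan provides a Hermitian vector bundle \(E \to X\) with \(\cH_\infty \cong C^\infty(X,E)\); the identity \(\ip{\xi}{\eta} = \Tr_\omega\bigl(L(\hp{\xi}{\eta})(D^2+1)^{-p/2}\bigr)\) then says precisely that the inner product on \(H\) is the \(L^2\)-inner product of \((E,\hp{}{})\) against \(\dif\operatorname{vol}_g\), so \(H \cong L^2(X,E)\). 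Under these identifications \(D\) restricts on \(C^\infty(X,E)\) to a symmetric first-order operator with \([D,f]^2 = -g(\dif f,\dif f)\id_E\), i.e.\ to a Dirac-type operator; essential self-adjointness on smooth sections over the compact manifold \(X\) is automatic since \(D\) is already self-adjoint with compact resolvent, and strong regularity guarantees that \(\End_{\cA}(\cH_\infty)\) is compatible with \([\abs{D},\cdot]\). This yields the asserted unitary equivalence \((\cA,H,D) \cong (C^\infty(X),L^2(X,E),D_E)\).
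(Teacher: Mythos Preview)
The paper does not prove this statement at all: it is quoted verbatim as Connes's reconstruction theorem~\cite{Con13}*{Theorem 1.1} and used as a black box, with only a remark afterward (Remark~\ref{reconunique}) on uniqueness of the reconstructed data. So there is no ``paper's own proof'' to compare against; your instinct in the first sentence --- that nothing new is required and one simply cites~\cite{Con13} --- is exactly what the paper does.

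Your outline of Connes's argument is a reasonable high-level sketch of how~\cite{Con13} proceeds, and you correctly identify the hard step (the local analysis producing smooth charts from the orientability cycle and the dimension axiom). For the purposes of this paper, though, the appropriate ``proof'' is a one-line citation.
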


\begin{remark}\label{reconunique}
The reconstructed differential-geometric data \((X,E,D_E)\) is unique up to orientation-preserving isometric diffeomorphisms covered by unitary bundle morphisms intertwining the Dirac-type operators. Indeed, suppose we are given a unitary equivalence \((\psi,\Psi) : (C^\infty(X_1),L^2(X_1,E_1),D_1) \iso (C^\infty(X_2),L^2(X_2,E_2),D_2)\) of concrete commutative spectral triples. By a result of Mr\v{c}un~\cite{Mrcun}, there exists a unique diffeomorphism \(\phi : X_1 \to X_2\) such that \(\psi(f) = f \circ \inv{\phi}\) for all \(f \in C^\infty(X_1)\); since \(\Psi D_1 \Psi^\ast = D_2\), it follows, by the Serre--Swan theorem, that \(\Psi\) is induced by a unitary bundle endomorphism \(\Phi : E_1 \to E_2\) covering \(\phi\) and intertwining \(D_1\) and \(D_2\), and hence, that \(\phi\) is isometric and orientation-preserving.
\end{remark}

Now, once more, let \(G\) be a fixed compact Abelian Lie group. We define a \emph{compact oriented Riemannian \G-manifold} to be a compact oriented Riemannian manifold \(X\) together with a smooth action \(\sigma : G \to \operatorname{Diff}(X)\) by orientation-preserving isometries. If \((X,\sigma)\) is a compact oriented Riemannian \G-manifold, then we define a \emph{G-equivariant Hermitian vector bundle} over \(X\) to be a Hermitian vector bundle \(E \xrightarrow X\) together with a smooth action \(U : G \to \operatorname{Diff}(E)\) by fibrewise unitaries, such that \(p \circ U_t = U_t \circ p\) for all \(t \in G\). Finally, if \((E,U) \to (X,\sigma)\) is a \G-equivariant Hermitian vector bundle over a compact oriented Riemannian \G-manifold, then we define a \emph{G-invariant Dirac-type operator} on \(E\) to be a Dirac-type operator on \(E\) such that \(U_t D U_t^\ast = D\) for all \(t \in G\). 

Let \(X\) be a compact oriented Riemannian \G-manifold, let \(E \to X\) be a \G-equivariant Hermitian vector bundle, and let \(D\) be an essentially self-adjoint \G-invariant Dirac-type operator on \(E\). Then, in particular, \((C^\infty(X),L^2(X,E),D)\) defines a \G-equivariant concrete commutative spectral triple, to which one can apply Connes--Landi deformation---in the literature, the resulting spectral triples are often called \emph{toric noncommutative manifolds}~\cite{VSthesis}. We propose the following abstract definition for the Connes--Landi deformation of a \G-equivariant spectral triple by a cohomology class \(\theta \in H^2(\dual{G},\bT)\).

\begin{definition}\label{def:thetacomm}
Let \((\cA,H,D)\) be a \(G\)-equivariant regular spectral triple, let \(\theta \in H^2(\dual{G},\bT)\), and let \(p \in \bN\). We shall call \((\cA,H,D)\) a \(p\)-dimensional \emph{\(\theta\)-commutative} spectral triple if the following conditions all hold:
\begin{enumerate}\setcounter{enumi}{-1}
	\item \emph{Order zero}: The algebra \(\cA\) is \emph{\(\theta\)-commutative}, viz, \[\forall\bx,\by\in\dual{G}, \; \forall a_\bx \in \cA_\bx, \; \forall b_\by \in \cA_\by, \quad b_\by a_\bx = \e{\iot{\theta}(\bx,\by)} a_\bx b_\by,\] so that \(R \coloneqq \pi_{-\iot{\theta}} :  \cA^\op = \cA_{-\iot{\theta}} \to B(H)\) defines a \G-equivariant \Star-homomorphism, such that for all \(a\), \(b \in \cA\), \([L(a),R(b)] = 0\).
	\item \emph{Dimension}: The eigenvalues \(\set{\lambda_n}_{n \in \bN}\) of \((D^2+1)^{-1/2}\), counted with multiplicity and arranged in decreasing order, satisfy \(\lambda_n = O(n^{-1/p})\) as \(n \to +\infty\).
	\item \emph{Order one}: For all \(a\), \(b\ \in \cA\), \([[D,L(a)],R(b)] = 0\).
	\item\label{item:orient} \emph{Orientability}: Define \(\epsilon_\theta : (\cA^\fin)^{\otimes(p+1)} \to (\cA^\fin)^{\otimes(p+1)}\) by
\begin{multline*}
	\epsilon_\theta(a_0 \otimes a_1 \otimes \cdots \otimes a_p)\\ \coloneqq \frac{1}{p!} \sum_{\pi \in S_p} \exp\left(2\pi i \mkern-18mu \sum_{\substack{i < j\\ \pi(i)>\pi(j)}} \mkern-18mu \iot{\theta}(\bx_{\pi(i)},\bx_{\pi(j)})\right)(-1)^\pi a_0 \otimes a_{\pi(1)} \otimes \cdots \otimes a_{\pi(p)}
\end{multline*}
for isotypic elements \(a_0 \in \cA_{\bx_0}\), \ldots, \(a_p \in \cA_{\bx_p}\) of \(\cA\), and say that \(\bc \in (\cA^\fin)^{\otimes(p+1)}\) is \emph{\(\theta\)-antisymmetric} if \(\epsilon_\theta(\bc) = \bc\). Define \(\pi_D : \cA^{\otimes(p+1)} \to B(H)\) by
\[
	\forall a_0,\; a_1, \dotsc, a_p \in \cA, \quad \pi_D(a_0 \otimes a_1 \otimes \cdots \otimes a_p) \coloneqq L(a_0)[D,L(a_1)] \cdots [D,L(a_p)].
\]
There exists a \(G\)-invariant \(\theta\)-antisymmetric \(\bc \in (\cA^\fin)^{\otimes(p+1)}\), such that \(\chi := \pi_D(\bc)\) is a self-adjoint unitary, satisfying
\[
	\forall a \in \cA, \quad L(a)\chi = \chi L(a), \quad [D,L(a)] \chi = (-1)^{p+1} \chi [D,L(a)].
\]
	\item \emph{Finiteness and absolute continuity}: The subspace \(\cH_\infty := \cap_k \Dom \abs{D}^k\) defines a \(G\)-equivariant finitely-generated projective\ right \(\cA\)-module admitting a \(G\)-equivariant Hermitian metric \(\hp{}{}\), such that \(\ip{\xi}{\eta} = \fint R\left(\hp{\xi}{\eta}\right)\) for all \(\xi\), \(\eta \in \cH_\infty\), where \(\fint : B(H) \ni T \mapsto \Tr_\omega(T (D^2+1)^{-p/2})\) for \(\Tr_\omega\) a fixed Dixmier trace.
	\item \emph{Strong regularity}: For all \(T \in \End_{\cA^\op}(\cH_\infty)\),  \(T \in \cap_k \Dom ([\abs{D},\cdot]^k)\).
\end{enumerate}
\end{definition}

\begin{remark}
By regularity, \(\cH_\infty\) is necessarily a left \(\cA\)-module, and hence, by the proof of Proposition~\ref{yamashita}, defines a right \(\cA\)-module via \(R \coloneqq \pi_{-\iot{\theta}}(\cA): \cA^\op \to B(H)\), so that the finiteness, absolute continuity, and strong regularity conditions are well-posed. The importance of such a bimodule structure for noncommutative Riemannian geometry via spectral triples, independent of any real structure, has already been observed by Lord, Rennie, and V\'arilly~\cite{LRV}.
\end{remark}

\begin{remark}\label{rmk:0commut}
The cohomology class \(\theta\) completely governs the noncommutativity of a \(\theta\)-commutative spectral triple; in particular, a \(0\)-commutative spectral triple is precisely a \(G\)-equivariant commutative spectral triple. \emph{See Proposition~\ref{prop:appx} in the appendix for a proof of the only non-trivial point, which concerns orientability.}
\end{remark}

\begin{remark}
This definition is robust with respect to \G-equivalence, for if a \G-equivariant spectral triple \((\cA,H,D)\) is \G-equivalent to a \(p\)-dimensional \(\theta\)-commu\-ta\-tive spectral triple, then \((\cA,H,D)\) itself defines a \(p\)-dimensional \(\theta\)-commutative spectral triple.
\end{remark}

\begin{remark}
By \(\theta\)-commutativity of \(\cA\), the standard proof that an antisymmetric Hoch\-schild \(p\)-chain for a commutative algebra yields a Hochschild \(p\)-cycle~\cite[Proposition 8.10]{GBVF}, \emph{mutatis mutandis}, shows that a \(\theta\)-antisymmetric Hochschild \(p\)-chain \(\bc \in (\cA^\fin)^{\otimes(p+1)}\) necessarily defines a Hochschild \(p\)-cycle for \(\cA\).
\end{remark}

Our main technical result is the following theorem, which simultaneously guarantees that our abstract definition describes Connes--Landi deformations of \G-equivariant commutative spectral triples, and facilitates the extension of Connes's reconstruction theorem to \(\theta\)-commutative spectral triples; we defer its proof to \S \ref{proofsection}.

\begin{theorem}\label{deform}
Let \((\cA,H,D)\) be a \(p\)-dimensional \(\theta_0\)-commutative spectral triple and let \(\theta \in H^2(\dual{G},\bT)\). Then \((\cA_\theta,H,D)\) defines a \(p\)-dimensional \((\theta_0+\theta)\)-commu\-tat\-ive spectral triple.
\end{theorem}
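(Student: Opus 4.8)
The plan is to fix a bicharacter representative of \(\theta\) and then transfer each clause of the definition of a \((\theta_0+\theta)\)-commutative spectral triple along the deformation isomorphism \(\pi_\Theta\). Concretely, I would fix a bicharacter \(\Theta \in B(\dual{G})\) with \([\Theta] = \theta\), available by Theorem~\ref{kleppner}; by Theorem~\ref{equivariant} together with robustness of \(\theta\)-commutativity under \G-equivalence, it then suffices to show that \((\cA_\Theta,H,D)\) is a \(p\)-dimensional \((\theta_0+\theta)\)-commutative spectral triple. Theorem~\ref{yamashita} already provides that \((\cA_\Theta,H,D)\) is a \G-equivariant, hence regular, spectral triple with left representation \(L_\Theta = \pi_\Theta \circ L\), and the Dimension clause is inherited verbatim since \(D\) is untouched. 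The single tool used throughout is the \Star-isomorphism \(\pi_\Theta : B^\infty(H)_\Theta \iso B^\infty(H)\) of Lemma~\ref{bigdeform}, with its three relevant features: it is \emph{diagonal}, \(\pi_\Theta(S) = S\,\Upsilon^\Theta_\bx\) for \(S\) homogeneous of \G-degree \(\bx\); it \emph{intertwines commutators}, \([D,\pi_\Theta(S)] = \pi_\Theta([D,S])\) (proof of Proposition~\ref{yamashita}) and \([\abs{D},\pi_\Theta(S)] = \pi_\Theta([\abs{D},S])\) (Equation~\ref{bigdeform:3}); and the \G-invariant unitaries \(\Upsilon^\Theta_\bx\) of Lemma~\ref{Ypsilon} \emph{phase-commute}, \(\Upsilon^\Theta_\bx S = \e{-\Theta(\bx,\by)}\,S\,\Upsilon^\Theta_\bx\) for \(S\) homogeneous of degree \(\by\). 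All three use that \(\Theta\) is a genuine bicharacter, which is why we pass to such a representative.

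For \emph{order zero} and \emph{order one}, the plan is to compute on homogeneous elements. Using \(\theta_0\)-commutativity of \(\cA\) and the identity \(\iot{\theta} = \Theta - \Theta^t\), for \(a_\bx \in \cA_\bx\) and \(b_\by \in \cA_\by\) one gets
\[
	b_\by \star_\Theta a_\bx = \e{-\Theta(\by,\bx)}\,b_\by a_\bx = \e{-\Theta(\by,\bx)+\iot{\theta_0}(\bx,\by)}\,a_\bx b_\by = \e{\iot{\theta_0+\theta}(\bx,\by)}\, a_\bx \star_\Theta b_\by,
\]
so \(\cA_\Theta\) is \((\theta_0+\theta)\)-commutative and the order-zero clause holds; one checks that the corresponding right structure map is \(R_\Theta = \pi_{\Theta^t} \circ R\) (which agrees with the canonical \(\pi_{-\iot{\theta_0+\theta}} \circ L_\Theta\)). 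Since \([D,L_\Theta(a)] = \pi_\Theta([D,L(a)])\), the relations \([L_\Theta(a),R_\Theta(b)] = 0\) and \([[D,L_\Theta(a)],R_\Theta(b)] = 0\) then follow on homogeneous elements by moving the \(\Upsilon\)-factors past the homogeneous operators and cancelling \(\e{-\Theta(\bx,\by)}\) against \(\e{-\Theta^t(\by,\bx)}\), invoking the undeformed relations \([L(a),R(b)] = 0\) and \([[D,L(a)],R(b)] = 0\).

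For \emph{orientability}, I would carry the orientation cycle across explicitly. Write \(\pi_D^\Theta\) and \(\epsilon_{\theta_0+\theta}\) for the maps of the orientability clause built from \(L_\Theta\) and from \(\iot{\theta_0+\theta}\), and define a \G-equivariant topological isomorphism \(\Phi_\Theta : \cA^{\otimes(p+1)} \to \cA_\Theta^{\otimes(p+1)}\) by rescaling a tensor of homogeneous elements \(a_0 \in \cA_{\bx_0}, \dotsc, a_p \in \cA_{\bx_p}\) by \(\exp\!\left(2\pi i\sum_{0 \le i < j \le p}\Theta(\bx_i,\bx_j)\right)\). Expanding \(\pi_D^\Theta(a_0 \otimes \dotsb \otimes a_p) = L_\Theta(a_0)[D,L_\Theta(a_1)] \dotsm [D,L_\Theta(a_p)]\) and pushing all the \(\Upsilon^\Theta_{\bx_i}\) to the right accrues exactly the phase \(\exp\!\left({-}2\pi i\sum_{i<j}\Theta(\bx_i,\bx_j)\right)\), and when \(\sum_i \bx_i = \bo\) the trailing product \(\Upsilon^\Theta_{\bx_0}\dotsm\Upsilon^\Theta_{\bx_p}\) collapses to the identity; hence \(\pi_D^\Theta \circ \Phi_\Theta = \pi_D\) on \G-invariant elements. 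The crux, and the combinatorial heart of the proof, is that \(\Phi_\Theta\) conjugates the antisymmetrizers, \(\epsilon_{\theta_0+\theta} \circ \Phi_\Theta = \Phi_\Theta \circ \epsilon_{\theta_0}\); after the \(\bx_0\)-dependent phases cancel, this reduces, for each \(\pi \in S_p\), to the identity
\[
	\sum_{\substack{1 \le i < j \le p\\ \pi(i) > \pi(j)}} \iot{\theta}(\bx_{\pi(i)},\bx_{\pi(j)}) \;=\; -\sum_{\substack{1 \le m < n \le p\\ \pi^{-1}(m) > \pi^{-1}(n)}} \iot{\theta}(\bx_m,\bx_n) \quad\text{in } \bT,
\]
which follows by reindexing the right-hand side along \(\pi\) and using antisymmetry of \(\iot{\theta}\). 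Granting these, \(\bc' := \Phi_\Theta(\bc)\) is a \G-invariant \((\theta_0+\theta)\)-antisymmetric element with \(\chi' := \pi_D^\Theta(\bc') = \pi_D(\bc) = \chi\); since \(\chi\) is \G-invariant it commutes with every \(\Upsilon^\Theta_\bx\), and so \(L_\Theta(a)\chi = \chi L_\Theta(a)\) and \([D,L_\Theta(a)]\chi = (-1)^{p+1}\chi[D,L_\Theta(a)]\) follow from their undeformed counterparts.

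Finally, \emph{finiteness and absolute continuity} and \emph{strong regularity} should reduce to the functorial module-deformation results of \S\ref{moduledeformsec}. Viewed as a right \(\cA\)-module via \(R\), the space \(\cH_\infty\) deforms, with \(R_\Theta = \pi_{\Theta^t}\circ R\) as its \(\cA_\Theta\)-action, to precisely the module \((\cH_\infty)_\Theta\), which by Theorem~\ref{moduledeform} is again a \G-equivariant finitely generated projective right \(\cA_\Theta\)-module, carrying the deformed \G-equivariant Hermitian metric of Equation~\ref{hermitian} and satisfying \(\End_{\cA_\Theta^\op}((\cH_\infty)_\Theta) = \pi_\Theta\!\left(\End_{\cA^\op}(\cH_\infty)\right)\) by Corollary~\ref{enddeform}. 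Since \((D^2+1)^{-p/2}\) is \G-invariant, conjugation-invariance of \(\Tr_\omega\) forces \(\fint\) to annihilate every non-\G-invariant isotypic component, while on degree-zero elements the deformed metric and \(R_\Theta\) agree with \(\hp{}{}\) and \(R\); hence \(\fint R_\Theta(\hp{\xi}{\eta}_\Theta) = \fint R(\hp{\xi}{\eta}) = \ip{\xi}{\eta}\) on homogeneous vectors, and one extends by sesquilinearity. For strong regularity, \(\pi_\Theta(T) = T\,\Upsilon^\Theta_\bw\) for a homogeneous endomorphism \(T\) of degree \(\bw\), and \(\Upsilon^\Theta_\bw\) commutes with \(\abs{D}\), so \([\abs{D},\pi_\Theta(T)] = \pi_\Theta([\abs{D},T])\); iterating and using continuity of \(\pi_\Theta\), the map \(\pi_\Theta\) sends \(\cap_k \Dom([\abs{D},\cdot]^k)\) into itself, and every element of \(\End_{\cA_\Theta^\op}((\cH_\infty)_\Theta)\) has this form. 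I expect the combinatorial identity in the orientability step to be the main obstacle: everything else is a largely mechanical transfer through \(\pi_\Theta\) and the module-deformation functors, whereas confirming that the explicit rescaling \(\Phi_\Theta\) carries the \(\theta_0\)-twisted antisymmetrizer onto the \((\theta_0+\theta)\)-twisted one demands a careful accounting of phase factors indexed by \(S_p\) and its inversion statistics.
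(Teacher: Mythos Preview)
Your proposal is correct and follows essentially the same route as the paper: your rescaling map \(\Phi_\Theta\) is exactly the paper's \(\Theta_\ast\) of Lemma~\ref{voldeform}, and your handling of order zero/one, the Dixmier-trace argument, and strong regularity all mirror the paper's arguments. The one technical point the paper makes explicit that you pass over is that the Fr\'echet topology on \(\cH_\infty\) as an f.g.p.\ \(\cA\)-module coincides with its Sobolev topology as the smooth domain of \(D\)---this is what justifies identifying the abstract module deformation \((\cH_\infty)_\Theta\) with \(\cH_\infty\) carrying the \(R_\Theta\)-action---but this is a short lemma the paper imports from Connes.
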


Since a \G-equivariant concrete commutative spectral triple is \(0\)-commutative, we immediately recover the basic results of Connes and Landi and of Connes and Dubois-Violette on noncommutative toric manifolds, except phrased in terms of our definition, instead of the notion of noncommutative spin geometry.

\begin{corollary}[{Connes--Landi~\cite[Theorem 6]{CL}, Connes--Dubois-Violette~\cite[Theorem 9]{CDV}}]
Let \(X\) be a compact \(p\)-dimensional oriented Riemannian \(G\)-manifold, let \(E \to X\) be a \G-equivariant Hermitian vector bundle, and let \(D\) be an essentially self-adjoint \G-invariant Dirac-type operator on \(E\). For any \(\theta \in H^2(\dual{G},\bT)\), \((C^\infty(X)_\theta,L^2(X,E),D)\) defines a \(p\)-dimensional \(\theta\)-commu\-ta\-tive spectral triple.
\end{corollary}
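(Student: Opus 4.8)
The plan is to reduce the statement to Theorem~\ref{deform} applied with deformation parameter $\theta_0 = 0$. The key observation is that $(C^\infty(X),L^2(X,E),D)$ is a $G$-equivariant concrete commutative spectral triple, so it suffices to verify that it defines a $p$-dimensional $0$-commutative spectral triple in the sense of our abstract definition; Theorem~\ref{deform} will then immediately yield that its Connes--Landi deformation $(C^\infty(X)_\theta,L^2(X,E),D)$ is a $p$-dimensional $(0+\theta) = \theta$-commutative spectral triple.

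The work thus consists entirely of checking conditions (0)--(5) of the abstract definition with $\theta = 0$; since $\iot{0} = 0$, these reduce verbatim to Connes's conditions for a commutative spectral triple, together with the requirement that the relevant data be $G$-equivariant. Order zero is just commutativity of $C^\infty(X)$, with $R = L$. Order one holds because $[D,L(f)]$ is the operator of Clifford multiplication by $\di f$, a section of $\End(E)$, which commutes with multiplication by any $g \in C^\infty(X)$. The dimension axiom is Weyl's law for the Dirac-type operator $D$ on the compact $p$-manifold $X$. For finiteness and absolute continuity, elliptic regularity identifies $\cH_\infty$ with $C^\infty(X,E)$, which is a finitely generated projective $C^\infty(X)$-module by the Serre--Swan theorem, carrying the Hermitian metric of $E$; moreover the $L^2$ inner product is integration of $\hp{\xi}{\eta}$ against the Riemannian volume form, and Connes's trace theorem identifies this, up to a fixed normalisation constant that may be absorbed by rescaling $D$ or the Dixmier trace, with $\Tr_\omega\!\left(L(\hp{\xi}{\eta})(D^2+1)^{-p/2}\right)$. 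Strong regularity holds because $\End_{C^\infty(X)}(C^\infty(X,E)) = C^\infty(X,\End E)$ consists of bounded operators that are smooth sections over $X$, hence lie in $\cap_k \Dom([\abs{D},\cdot]^k)$ by regularity of $(C^\infty(X),L^2(X,E),D)$. For orientability, one takes the Hochschild $p$-cycle obtained by patching together, via a partition of unity, the local cycles representing the Riemannian volume form; the image of this cycle under $\pi_D$ is, up to normalisation, the natural chirality of the Dirac-type operator $D$, which is a self-adjoint unitary with the required commutation properties---here the weaker orientability condition of our definition, rather than Connes's original spin\(^\bC\)-type condition, is exactly what is needed, cf.\ \cite{Ca12}*{Proof of Corollary 2.19}.

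The only point requiring care is that this data may be chosen $G$-equivariantly: since $G$ acts on $X$ by orientation-preserving isometries, the Riemannian volume form is automatically $G$-invariant, and averaging the partition of unity over $G$ produces a $G$-invariant orientability cycle, while the module $C^\infty(X,E)$ and its Hermitian metric are $G$-equivariant by hypothesis on $E$. With conditions (0)--(5) verified, the corollary follows at once from Theorem~\ref{deform}. The main substantive input is thus Connes's trace theorem for the absolute-continuity axiom, everything else being either elementary or subsumed by the general deformation machinery of Theorem~\ref{deform}.
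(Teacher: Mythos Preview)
Your proposal is correct and follows exactly the paper's approach: the paper simply observes that a \(G\)-equivariant concrete commutative spectral triple is \(0\)-commutative and then invokes Theorem~\ref{deform}, treating the corollary as immediate. You have merely filled in the verification of axioms (0)--(5) that the paper leaves as ``one can check''; the only minor imprecision is the phrase ``averaging the partition of unity over \(G\)''---what one actually averages is the Hochschild cycle \(\bc\) itself over the diagonal \(G\)-action on \(\cA^{\otimes(p+1)}\), which preserves antisymmetry and the image \(\chi = \pi_D(\bc)\) since \(\chi\) is already \(G\)-invariant.
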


\begin{example}[{Gabriel--Grensing~\cite[Theorem 5.5]{GG}, cf.\ Rieffel~\cite[\S 4]{Rieffel98}}]
Let \(G\) be a compact Lie group, let \((A,\alpha)\) be a unital \GCstar-algebra such that \(\alpha\) is faithful and ergodic (i.e., \(A^G = \bC\)), and let \(\tau : A \to \bC\) be the unique \G-invariant trace. Gabriel and Grensing~\cite{GG}, following Rieffel~\cite{Rieffel98}, construct a spectral triple for \(A\) as follows. Let
\(
	\cA  \coloneqq \set{a \in A \given G \ni t \mapsto \alpha_t(S) \; \text{is norm-smooth}}
\)
and let \((H,U)\) be a covariant \Star-representation of \((A,\alpha)\), such that
\(
	\cH \coloneqq \set{\xi \in H \given G \ni t \mapsto U_t\xi \; \text{is norm-smooth}}
\)
defines a left Hermitian f.g.p.\ \G-\(\cA\)-module, satisfying \(\ip{\xi}{\eta} = \tau(\hp{\xi}{\eta})\) for all \(\xi\), \(\eta \in \cH\). Let \(\set{X_1,\dotsc,X_N}\) be an orthonormal basis for \(\fg \coloneqq \Lie(G)\), and let \(c : \operatorname{Cl}(\fg) \to B(\slashed{S}(\fg))\) be the spinor representation of the Clifford algebra \(\operatorname{Cl}(\fg)\) of \(\fg\), with the convention that \(c(X)c(Y) + c(Y)c(X) = -2\ip{X}{Y}\) and \(c(X)^\ast = -c(X)\) for all \(X\), \(Y \in \fg\). Finally, let \(D_{A,H} \coloneqq \sum_{k=1}^N (\alpha_\ast)_{X_k} \otimes c(X_k)\). Then \((\cA,H \otimes \slashed{S}(\fg),D_{A,H})\) defines an \(N^+\)-summable \G-equivariant spectral triple.

Now, suppose that \(G\) is Abelian. By results of Albeverio and H{\o}egh-Krohn~\cite{AHK} and of Olesen, Pedersen, and Takesaki~\cite{OPT}, there exists a unique class \(\theta \in H^2(\dual{G},\bT)\) such that \(\cA \cong C^\infty(G)_\theta\). The explicit construction above, together with the standard construction of the orientation cycle on noncommutative tori~\cite[Lemmata 12.5--6]{GBVF}, implies that \((\cA,H \otimes \slashed{S}(\fg),D_{A,H})\) is indeed an \(N\)-dimensional \(\theta\)-commutative spectral triple. In fact, using the results of \S~\ref{moduledeformsec}, one can check that
\(
	(\cA,H \otimes \slashed{S}(\fg),D_{A,H}) \cong_G (C^\infty(G)_\theta,L^2(G,\slashed{S} \otimes V_H),\slashed{D} \otimes \id),
\)
where \(\slashed{S} \coloneqq G \times \slashed{S}(\fg)\) is the spinor bundle and \(\slashed{D}\) is the spin Dirac operator  corresponding to the trivial spin structure on \(G\), whilst \(V_H\) is the finite-dimensional unitary representation of \(G\) such that \(\cH \cong (C^\infty(G) \otimes V_H)_{\Theta}\) as left Hermitian f.g.p.\ \GA-modules.
\end{example}

Finally, suppose that \((\cA,H,D)\) is a \(p\)-dimensional \(\theta\)-commutative spectral triple. Then, by Theorem~\ref{deform}, \((\cA_{-\theta},H,D)\) is now a \(p\)-dimensional \(0\)-commutative spectral triple, i.e., a \G-equivariant \(p\)-dimensional \emph{commutative} spectral triple. Hence, we may apply Connes's reconstruction theorem for commutative spectral triples to yield its extension to Connes--Landi deformations of commutative spectral triples.

\begin{theorem}[{cf.\ Connes~\cite[Theorem 1.1]{Con13}}]\label{thetareconstruct}
Let \((\cA,H,D)\) be a \(p\)-dimensional \(\theta\)-commutative spectral triple. There exist a compact \(p\)-dimensional oriented Riemannian \(G\)-manifold \(X\), a \(G\)-equivariant Hermitian vector bundle \(E \to X\), and an essentially self-adjoint \G-invariant Dirac-type operator \(D_E\) on \(E\) such that
\[
	(\cA,H,D) \cong_G (C^\infty(X)_\theta, L^2(X,E),D_E).
\]
\end{theorem}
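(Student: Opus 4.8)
The plan is to deform down to a commutative spectral triple, apply Connes's reconstruction theorem, carry the $G$-structure across the resulting unitary equivalence, and then deform back up, using that Connes--Landi deformation is a functor on $\cat{SpTr}_G$ that is strictly invertible. First, fix a bicharacter representative $\Theta \in B(\dual{G})$ of $\theta$ (Theorem~\ref{kleppner}) and form $(\cA_{-\Theta},H,D) \coloneqq \cat{CL}_{-\Theta}(\cA,H,D)$. By Theorem~\ref{deform}, applied with $\theta_0 = \theta$ and deformation parameter $-\theta$ (the abstract definition being robust under $G$-equivalence), $(\cA_{-\Theta},H,D)$ is a $p$-dimensional $0$-commutative spectral triple, i.e., a $G$-equivariant $p$-dimensional commutative spectral triple in Connes's sense. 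Forgetting the $G$-structure and invoking Connes's reconstruction theorem then produces a compact oriented Riemannian $p$-manifold $X$, a Hermitian vector bundle $E \to X$, an essentially self-adjoint Dirac-type operator $D_E$ on $E$, and a unitary equivalence $(\psi,\Psi) : (\cA_{-\Theta},H,D) \iso (C^\infty(X),L^2(X,E),D_E)$.

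Next I would transfer the $G$-structure along $(\psi,\Psi)$, setting $\beta_t \coloneqq \psi \circ \alpha_t \circ \inv{\psi} \in \Aut(C^\infty(X))$ and $V_t \coloneqq \Psi U_t \inv{\Psi} \in \U(L^2(X,E))$ for $t \in G$. Then $\beta$ is a strongly smooth action by isometric $\ast$-automorphisms (conjugation by $V_t$ commutes with $D_E$, hence preserves the Rennie seminorms), $V$ is strongly continuous with $V_t L(f)\inv{V_t} = L(\beta_t f)$ and $V_t D_E \inv{V_t} = D_E$, so that $(\psi,\Psi)$ becomes a $G$-equivalence. The substantive point, exactly as in the uniqueness discussion of Remark~\ref{reconunique}, is that this $G$-structure is geometric: by Mr\v{c}un's theorem $\beta_t = \sigma_{-t}^\ast$ for a unique $\sigma_t \in \operatorname{Diff}(X)$, with $t \mapsto \sigma_t$ a group homomorphism which, since $\beta$ is strongly smooth and $X$ is compact, is a smooth action of $G$ on $X$; by the Serre--Swan theorem $V_t$ is induced by a unitary bundle morphism $\Phi_t : E \to E$ covering $\sigma_t$, smoothly in $t$; and since $\Phi_t$ intertwines $D_E$, the identity $[D_E,f]^2 = -g(\dif f,\dif f)\id_E$ forces $\sigma_t$ to be an isometry, while $G$-invariance of the orientation cycle $\chi$ forces it to be orientation-preserving. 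Thus $(X,\sigma)$ is a compact oriented Riemannian $G$-manifold, $(E,\Phi)$ a $G$-equivariant Hermitian vector bundle, $D_E$ a $G$-invariant essentially self-adjoint Dirac-type operator, and $(\cA_{-\Theta},H,D) \cong_G (C^\infty(X),L^2(X,E),D_E)$ via $(\psi,\Psi)$.

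Finally, I would apply the functor $\cat{CL}_\Theta : \cat{SpTr}_G \iso \cat{SpTr}_G$. Since $\cat{CL}_\Theta \circ \cat{CL}_{-\Theta} = \cat{CL}_0 = \id_{\cat{SpTr}_G}$, it returns $(\cA_{-\Theta},H,D)$ to $(\cA,H,D)$ and sends $(C^\infty(X),L^2(X,E),D_E)$ to $(C^\infty(X)_\Theta,L^2(X,E),D_E)$, carrying $(\psi,\Psi)$ to a $G$-equivalence between these. As $(C^\infty(X)_\Theta,L^2(X,E),D_E)$ is by construction a model for the Connes--Landi deformation $(C^\infty(X)_\theta,L^2(X,E),D_E)$, we obtain $(\cA,H,D) \cong_G (C^\infty(X)_\theta,L^2(X,E),D_E)$, as required.

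The only genuine obstacle I anticipate is the geometric identification of the transferred $G$-structure in the second step: showing that the transferred action on $C^\infty(X)$ is by orientation-preserving isometric diffeomorphisms and lifts smoothly to $E$ compatibly with $D_E$. This is precisely what is recorded in Remark~\ref{reconunique}, resting on Mr\v{c}un's theorem, the Serre--Swan theorem, and the fact that $D_E$ recovers both the Riemannian metric and the orientation; everything else is bookkeeping with the functorial machinery of \S\S \ref{sec:1}--\ref{sec:2}.
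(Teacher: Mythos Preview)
Your proposal is correct and follows essentially the same strategy as the paper: deform by $-\theta$ to obtain a $G$-equivariant commutative spectral triple via Theorem~\ref{deform}, apply Connes's reconstruction theorem, and then deform back by $\theta$ using the functoriality of Connes--Landi deformation. The paper presents this argument only as a brief sketch immediately preceding the theorem statement, leaving implicit precisely the point you identify as the ``genuine obstacle'': that the transported $G$-action on the reconstructed triple $(C^\infty(X),L^2(X,E),D_E)$ is geometric. Your explicit treatment of this step---using Mr\v{c}un's theorem and Serre--Swan exactly as outlined in Remark~\ref{reconunique}---fills in what the paper takes for granted, so your proof is in fact a more complete version of the paper's own argument rather than a different one.
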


\begin{remark}
As it stands, the class \(\theta \in H^2(\dual{G},\bT)\) is necessarily part of the data of a \(\theta\)-commutative spectral triple \((\cA,H,D)\) \emph{qua} abstract Connes--Landi deformation of a \G-equviariant commutative spectral triple. In general, it need not be unique, and it is only in the case where \(G\) acts ergodically on \(\cA\), in the sense that \(\cA^G = \bC\), that one can apply the classification results of Albeverio and H{\o}egh-Krohn~\cite{AHK} and of Olesen, Pedersen, and Takesaki~\cite{OPT} for ergodic \(W^\ast\)- and \Cstar-dynamical systems to extract a candidate deformation parameter \(\theta \in H^2(\dual{G},\bT)\).
\end{remark}

\subsection{The proof of Theorem~\ref{deform}}\label{proofsection}

We now give a proof of Theorem~\ref{deform}. Let \((\cA,H,D)\) be a \(p\)-dimensional \(\theta_0\)-commutative spectral triple and let \(\theta \in H^2(\dual{G},\bT)\). By Theorems~\ref{kleppner} and~\ref{equivariant}, let \(\Theta \in B(\dual{G})\) be a bicharacter representing \(\theta\). The claim is that \((\cA_\Theta,H,D)\) defines a \(p\)-dimensional \((\theta_0+\theta)\)-commutative spectral triple. 

By Theorem~\ref{yamashita}, we know that \((\cA_\Theta,H,D)\) defines a \G-equivariant spectral triple; in particular, since the operator \(D\) is untouched, the deformed spectral triple \((\cA_\Theta,H,D)\) trivially satisfies the same dimension condition as \((\cA,H,D)\). Thus, it remains to show that \((\cA_\Theta,H,D)\) satisfies the order zero, order one, orientability, finiteness and absolute continuity, and strong regularity conditions for a \((\theta_0 + \theta)\)-commutative spectral triple. For convenience, let \(L_\Theta \coloneqq \pi_\Theta \circ L : \cA_\Theta \to B(H)\) denote the left \Star-representation of \(\cA_\Theta\) on \(H\) defining \((\cA_\Theta,H,D)\).

\subsubsection{Order zero; order one}

First, since \(\cA\) is \(\theta_0\)-commutative and
\begin{equation}\label{opposites}
 \forall \bx,\;\by\in\dual{G},\quad -\Theta(\by,\bx) + \iot{\theta_0}(\bx,\by) = \iot{\theta_0+\theta}(\bx,\by) - \Theta(\bx,\by),
\end{equation}
it follows that \(b_\by \star_\Theta a_\bx = \e{\iot{\theta_0+\theta}(\bx,\by)}a_\bx \star_\Theta b_{\by}\) for all isotypic elements \(a_\bx \in \cA_\bx\), \(b_\by \in \cA_\by\) of \(\cA\), i.e., that \(\cA_{\Theta}\) is \((\theta_0+\theta)\)-commutative. Together with Equation~\ref{opposites}, this implies that the opposite algebra \((\cA_\Theta)^\op\) of \(\cA_\Theta\) is precisely \((\cA_\Theta)_{-\iot{\theta}} = \cA_{\Theta^t}\), which admits the \G-equivariant \Star-representation \(R_\Theta : (\cA_\Theta)^\op \to B(H)\) defined by
\[
	R_\Theta \coloneqq \pi_{-\iot{\theta_0+\theta}} \circ L_\Theta = \pi_{-\iot{\theta_0+\theta}+\Theta} \circ L = \pi_{\Theta^t -\iot{\theta_0}} \circ L = \pi_{\Theta^t} \circ R.
\]

Now, let \(a_\bx \in \cA_\bx\) and \(b_\by \in \cA_\by\) be isotypic elements of \(\cA\). On the one hand, recall that \([L(a_\bx),R(b_\by)] = 0\) by the order zero condition on \((\cA,H,D)\), and on the other hand, observe that
\[
	L_\Theta(a_\bx) = \pi_\Theta( L(a_\bx)) = L(a_\bx) U_{-\Theta(\bx,\cdot)}, \quad R_\Theta(b_\by) = \pi_{\Theta^t}(R(b_\by)) = R(b_\by)U_{-\Theta^t(\bx,\cdot)}
\]
since \(\Theta\) was chosen to be a bicharacter. Thus,
\begin{align*}
	&L_\Theta(a_\bx) R_\Theta(b_\by) - R_\Theta(b_\by) L_\Theta(a_\bx)\\
= &\e{\Theta(\bx,\by)} L(a_\bx)R(b_\by) U_{-\Theta(\bx,\cdot)-\Theta^T(\by,\cdot)} - \e{\Theta^T(\by,\bx)} R(b_\by) L(a_\bx) U_{-\Theta^T(\by,\cdot)-\Theta(\bx,\cdot)}
\\	=&0,
\end{align*}
so that the order zero condition is satisfied. Similarly, since \([[D,L(a_\bx)],R(b_\by)]=0\) by the order one condition on \((\cA,H,D)\), and,  by the proof of Theorem~\ref{yamashita}, \[[D,L_\Theta(a_\bx)] = \pi_\Theta([D,L(a_\bx)]) = [D,L(a_\bx)]U_{-\Theta(\bx,\cdot)},\] it follows that \([[D,L_\Theta(a_\bx)],R_\Theta(b_\by)] = 0\), so that the order one condition is satisfied.

\subsubsection{Orientability}\label{ssec:orientability}
For convenience, let the maps \(\pi_D : \cA^{\otimes(p+1)} \to B(H)\) and  \(\pi_{\Theta,D} : \cA_\Theta^{\otimes(p+1)} \to B(H)\) be given by
\begin{gather*}
	\pi_D(a_0 \otimes a_1 \otimes \cdots \otimes a_p) \coloneqq L(a_0)[D,L(a_1)] \cdots [D,L(a_p)],\\
	\pi_{\Theta,D}(a_0 \otimes a_1 \otimes \cdots \otimes a_p) \coloneqq L_\Theta(a_0)[D,L_\Theta(a_1)] \cdots [D,L_\Theta(a_p)],
\end{gather*}
for all \(a_0,\dotsc,a_p \in \cA\). Since \(\chi = \pi_D(\bc)\) is \G-invariant, we have \(\pi_\Theta(S)\chi = \pi_\Theta(S\chi)\) and \(\chi\pi_\Theta(S) = \pi_\Theta(\chi S)\) for all \(S \in B^\infty(H)\), and hence, by the proof of Theorem~\ref{yamashita},
\[
	\forall a \in \cA, \quad L_\Theta(a)\chi = \chi L_\Theta(a), \quad [D,L_\Theta(a)]\chi = (-1)^{p+1}\chi[D,L_\Theta(a)].
\]
Thus, it suffices to construct a \(G\)-invariant \(\iota(\theta_0+\theta)\)-antisymmetric \(\bc_\Theta \in (\cA_\Theta^\fin)^{\otimes (p+1)}\) such that \(\pi_{\Theta,D}(\bc_\Theta) = \pi_D(\bc) \eqqcolon \chi\). The key to doing so will be the following lemma.

\begin{lemma}\label{voldeform}
Define \( \Theta_\ast : (\cA^\fin)^{\otimes(p+1)} \to (\cA_\Theta^\fin)^{\otimes(p+1)}\) by setting
\[
	\Theta_\ast(a_0 \otimes a_1 \otimes \cdots \otimes a_p) \coloneqq \e{\sum_{i < j} \Theta(\bx_i,\bx_j)} a_0 \otimes a_1 \otimes \cdots \otimes a_p
\]
for isotypic \(a_0 \in \cA_{\bx_0},\dotsc,a_p \in \cA_{\bx_p}\). Then \(\Theta_\ast\) is \(G\)-equivariant and satisfies
\[
\epsilon_{\theta_0+\theta} \circ \Theta_\ast = \Theta_\ast \circ \epsilon_{\theta_0}, \quad
\rest{\pi_D}{((\cA^\fin)^{\otimes (p+1)})^G} = \pi_{\Theta,D} \circ \Theta_\ast\vert_{((\cA_\Theta^\fin)^{\otimes (p+1)})^G}.
\]
\end{lemma}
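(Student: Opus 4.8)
The plan is to verify all three assertions on an isotypic elementary tensor \(a_0\otimes a_1\otimes\cdots\otimes a_p\) with \(a_i\in\cA_{\bx_i}\), and then extend by linearity. \(G\)-equivariance is immediate: on such a tensor the diagonal \(G\)-action is multiplication by \(\e{\ip{\bx_0+\cdots+\bx_p}{t}}\), while the scalar \(\e{\sum_{i<j}\Theta(\bx_i,\bx_j)}\) defining \(\Theta_\ast\) does not involve \(t\in G\); in particular \(\Theta_\ast\) carries \((\cA^{\otimes(p+1)})^G\) into \((\cA_\Theta^{\otimes(p+1)})^G\). The remaining two identities split into a purely combinatorial one and one that uses the explicit description of the deformed representation from the proof of Theorem~\ref{yamashita}.

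For \(\epsilon_{\theta_0+\theta}\circ\Theta_\ast = \Theta_\ast\circ\epsilon_{\theta_0}\) I would expand both sides in the basis \(\{a_0\otimes a_{\pi(1)}\otimes\cdots\otimes a_{\pi(p)}\}_{\pi\in S_p}\). Applying \(\Theta_\ast\) to such a summand introduces the phase \(\e{\sum_{m=1}^p\Theta(\bx_0,\bx_{\pi(m)})}\,\e{\sum_{1\le k<l\le p}\Theta(\bx_{\pi(k)},\bx_{\pi(l)})}\), whose first factor is the \(\pi\)-independent \(\e{\sum_{m=1}^p\Theta(\bx_0,\bx_m)}\) since \(\pi\) permutes \(\bx_1,\dotsc,\bx_p\) among themselves. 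The heart of the argument is the identity
\[
\sum_{1\le k<l\le p}\Theta(\bx_{\pi(k)},\bx_{\pi(l)}) \;=\; \sum_{1\le k<l\le p}\Theta(\bx_k,\bx_l) \;+\; \sum_{\substack{i<j\\\pi(i)>\pi(j)}}\iot{\theta}(\bx_{\pi(i)},\bx_{\pi(j)}),
\]
which I would obtain by reindexing the left-hand sum over unordered pairs \(\{u,v\}\) with \(u<v\), splitting according to the sign of \(\pi^{-1}(v)-\pi^{-1}(u)\), subtracting \(\sum_{u<v}\Theta(\bx_u,\bx_v)\), and invoking \(\iot{\theta}=\Theta-\Theta^t\) (Theorem~\ref{kleppner}); the leftover sum over pairs with \(\pi^{-1}(u)>\pi^{-1}(v)\) is, after renaming variables, exactly the inversion sum of \(\pi\). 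Feeding this back in, \(\Theta_\ast(\epsilon_{\theta_0}(a_0\otimes\cdots\otimes a_p))\) becomes \(\e{\sum_{0\le k<l\le p}\Theta(\bx_k,\bx_l)}\) times \(\tfrac1{p!}\sum_{\pi}(-1)^\pi\,\e{\sum_{i<j,\,\pi(i)>\pi(j)}(\iot{\theta_0}+\iot{\theta})(\bx_{\pi(i)},\bx_{\pi(j)})}\,a_0\otimes a_{\pi(1)}\otimes\cdots\otimes a_{\pi(p)}\); since \(\iota\) is additive one has \(\iot{\theta_0}+\iot{\theta}=\iot{\theta_0+\theta}\), and the result is precisely \(\epsilon_{\theta_0+\theta}(\Theta_\ast(a_0\otimes\cdots\otimes a_p))\).

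For \(\rest{\pi_D}{(\cA^{\otimes(p+1)})^G}=\pi_{\Theta,D}\circ\Theta_\ast\) I would use that the proof of Theorem~\ref{yamashita}, together with the identification \(\Upsilon^\Theta_\bx=U_{-\Theta(\bx,\cdot)}\) for the bicharacter \(\Theta\), gives \(L_\Theta(a_\bx)=L(a_\bx)\,U_{-\Theta(\bx,\cdot)}\) and \([D,L_\Theta(a_\bx)]=[D,L(a_\bx)]\,U_{-\Theta(\bx,\cdot)}\) for isotypic \(a_\bx\in\cA_\bx\), with \(U_{-\Theta(\bx,\cdot)}\) a \(G\)-invariant unitary commuting with \(D\). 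Substituting these into \(\pi_{\Theta,D}(a_0\otimes\cdots\otimes a_p)\) and sweeping each unitary \(U_{-\Theta(\bx_i,\cdot)}\) to the right past the factors \([D,L(a_j)]\) that follow it---using \(U_t[D,L(a_\by)]=\e{\ip{\by}{t}}[D,L(a_\by)]U_t\), the pairing \(\ip{\by}{-\Theta(\bx,\cdot)}=-\Theta(\bx,\by)\), bilinearity of \(\Theta\), and the mutual commutativity of the \(U\)'s---collapses everything to
\[
\pi_{\Theta,D}(a_0\otimes\cdots\otimes a_p)=\e{-\sum_{0\le i<j\le p}\Theta(\bx_i,\bx_j)}\,\pi_D(a_0\otimes\cdots\otimes a_p)\,U_{-\Theta(\bx_0+\cdots+\bx_p,\,\cdot)}.
\]
The prefactor is cancelled by the defining phase of \(\Theta_\ast\), so \(\pi_{\Theta,D}(\Theta_\ast(a_0\otimes\cdots\otimes a_p))=\pi_D(a_0\otimes\cdots\otimes a_p)\,U_{-\Theta(\bx_0+\cdots+\bx_p,\,\cdot)}\); on a \(G\)-invariant element \(\bx_0+\cdots+\bx_p=\bo\) and \(U_{-\Theta(\bo,\cdot)}=\id_H\), which yields the claim.

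The main obstacle is the combinatorial identity in the second paragraph: the bookkeeping must scrupulously distinguish permuting the tensor slots from permuting the isotypic labels \(\bx_1,\dotsc,\bx_p\), and must track the orientation convention in \(\iot{\theta}=\Theta-\Theta^t\). A secondary point to settle in the write-up is precisely what \(\cA^{\otimes(p+1)}\) denotes and how \(\Theta_\ast\) extends off the isotypic elementary tensors; for the subsequent application to orientability of \((\cA_\Theta,H,D)\) it is enough that \(\Theta_\ast\) be defined on the \(G\)-invariant, \(\theta_0\)-antisymmetric chain \(\bc\) furnished by the orientability hypothesis.
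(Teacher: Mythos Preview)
Your proposal is correct and follows essentially the same approach as the paper: both verify all claims on isotypic elementary tensors. Your combinatorial identity \(\sum_{k<l}\Theta(\bx_{\pi(k)},\bx_{\pi(l)})=\sum_{k<l}\Theta(\bx_k,\bx_l)+\sum_{i<j,\,\pi(i)>\pi(j)}\iot{\theta}(\bx_{\pi(i)},\bx_{\pi(j)})\) is exactly the content of the paper's displayed phase identity (derived there from Equation~\eqref{opposites}), just organised more transparently; and for the second relation the paper computes \(\pi_{\Theta,D}(s)=\pi_\Theta\bigl(\e{-\sum_{i<j}\Theta(\bx_i,\bx_j)}\pi_D(s)\bigr)\) via the \Star-isomorphism \(\pi_\Theta\) and then uses \(G\)-invariance of \(\pi_D(s)\), whereas you sweep the unitaries \(U_{-\Theta(\bx_i,\cdot)}\) to the right and kill the residual \(U_{-\Theta(\bx_0+\cdots+\bx_p,\cdot)}\) by invariance---the same computation unwound.
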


\begin{proof}
The map \(\Theta_\ast\) is \(G\)-equivariant by construction, so let us turn to the rest of the claim. We first check that \(\epsilon_{\theta_0+\theta} \circ \Theta_\ast = \Theta_\ast \circ \epsilon_{\theta_0}\). Without loss of generality, let
\(
	s := a_0 \otimes a_1 \otimes \cdots \otimes a_p
\)
for isotypic elements \(a_0 \in \cA_{\bx_0}\), \ldots, \(a_p \in \cA_{\bx_p}\) of \(\cA\), since finite linear combinations of such tensors span \((\cA^\fin)^{\otimes(p+1)}\). Then, by Equation~\ref{opposites},
\begin{multline*}
	 - \sum_{\substack{1 \leq i<j\\ \pi(i)>\pi(j)}} \iot{\theta_0+\theta}(\bx_{i},\by_{j}) + \sum_{i<j} \Theta(\bx_i,\bx_j)\\ = \sum_{j \geq 1} \Theta(\bx_0,\bx_{\pi(j)}) + \sum_{\substack{i,j \geq 1\\ \pi(i)<\pi(j)}}\Theta(\bx_{\pi(i)},\bx_{\pi(j)}) - \sum_{\substack{i<j\\ \pi(i)>\pi(j)}} \iot{\theta_0}(\bx_i,\bx_j)
\end{multline*}
for all permutations \(\pi \in S_p\), so that, indeed, \(\epsilon_{\theta_0+\theta}(\Theta_\ast s) = \Theta_\ast(\epsilon_{\theta_0}(s))\). 

Now, let us check that \(\rest{\pi_D}{((\cA^\fin)^{\otimes (p+1)})^G} = \pi_{\Theta,D} \circ \Theta_\ast\vert_{((\cA_\Theta^{\fin})^{\otimes (p+1)})^G}\). Without loss of generality, let
\(
	s := a_0 \otimes a_1 \otimes \cdots \otimes a_p
\)
for isotypic elements \(a_0 \in \cA_{\bx_0}\), \ldots, \(a_p \in \cA_{\bx_p}\) of \(\cA\) with \(\bx_0 + \bx_1 + \cdots + \bx_p = \bo\), since finite linear combinations of such tensors span \(((\cA^\fin)^{\otimes(p+1)})^G\). Since \(s\) is \G-invariant, so too is \(\pi_D(s)\), so that
\[
	\pi_{\Theta,D}(s) =  \pi_\Theta\left(\e{-\sum_i \Theta\left(\bx_i,\sum_{i<j}\bx_j\right)}\pi_D(s)\right) = \e{-\sum_{i<j} \Theta(\bx_i,\bx_j)}\pi_D(s),
\]
and hence, \(\pi_{\Theta,D}(\Theta_\ast s) = \e{\sum_{i<j}\Theta(\bx_i,\bx_j)}\pi_{\Theta,D}(s)= \pi_D(s)\).
\end{proof}

Now, set \(\bc_\Theta \coloneqq \Theta_\ast \bc\). By Lemma~\ref{voldeform}, since \(\bc\) is \(G\)-invariant, so too is \(\bc_\Theta\), and since \(\bc\) is \(\theta_0\)-antisymmetric,
\(
	\epsilon_{\theta_0+\theta}(\bc_\Theta) = \epsilon_{\theta_0+\theta}(\Theta_\ast(\bc)) = \Theta_\ast(\epsilon_{\theta_0}(\bc)) = \Theta_\ast(\bc) = \bc_\Theta,
\)
so that \(\bc_\Theta\) is \((\theta_0+\theta)\)-antisymmetric, and
\(
 \pi_{\Theta,D}(\bc_\theta) = \pi_{\sigma,D}(\Theta_\ast \bc) = \pi_D(\bc) = \chi.
\)

\begin{remark}
This construction can be viewed as the distillation of an argument of Connes and Dubois--Violette~\cite[Proposition 3, Proof of Theorem 9]{CDV} concerning deformed differential calculi and Hochschild homology. In addition, it recovers the standard construction of orientation cycles on noncommutative \(n\)-tori from the usual translation-invariant volume form on the flat \(n\)-torus~\cite[pp.\ 546--7]{GBVF}.
\end{remark}

\subsubsection{Finiteness and absolute continuity}

Let us first consider finiteness. By finiteness and absolute continuity for \((\cA,H,D)\), we know that \(\cH_\infty \coloneqq \cap_k \Dom \abs{D}^k\) defines a right Hermitian f.g.p\ \G-\(\cA\)-module, and so, by Proposition~\ref{fgpmod}, defines a Hermitian Fr\'echet \GA-module for the Fr\'echet topology induced by any inclusion of \(\cH_\infty\) as a complementable subspace of \(\cA^N\) for some \(N \in \bN\). On the other hand, the dense subspace \(\cH_\infty\) \emph{qua} smooth domain of \(D\) admits the Fr\'echet topology defined by the Sobolev seminorms \(\norm{}_k\), where \(\norm{\xi}_k \coloneqq \norm{\abs{D}^k\xi} = \sqrt{\ip{\xi}{D^{2k}\xi}}\) for all \(k \in \bN\) and \(\xi \in \cH_\infty\). To show that \((\cA_\Theta,H,D)\) satisfies finiteness, it suffices to show that these topologies on \(\cH_\infty\) coincide, since in that case, one can easily check on isotypic elements that \(\cH_\infty\) endowed with the right \Star-representation
\[
	R_\Theta \coloneqq \pi_{-\iot{\theta_0+\theta}} \circ L_\Theta = \pi_{\Theta^t} \circ R : (\cA_\Theta)^\op \to B(H)
\]
of \(\cA_\Theta\) is precisely the deformation \((\cH_\infty)_\Theta\) of \(\cH_\infty\) \emph{qua} Hermitian Fr\'echet \GA-module, and hence, by Theorem~\ref{moduledeform}, defines a Hermitian f.g.p.\ \G-\(\cA_\Theta\)-module.

\begin{lemma}[{Connes \cite[Proposition\ 2.3.(5)--(6)]{Con13}}]
There exists a bicontinuous \G-equivariant isomorphism \(\psi : \cH_\infty \iso e(V \otimes \cA)\) of Hermitian Fr\'echet \G-\(\cA\)-modules, where \(V\) is a finite-dimensional unitary representation of \(G\) and \(e \in (B(V) \otimes \cA)^G\) is an orthogonal projection. In particular, the maps
\[
	\cH_\infty \times \cA \to \cH_\infty, \quad (\xi,a) \mapsto R(a)\xi; \quad 
	\cH_\infty \times \cH_\infty \to \cA, \quad (\xi,\eta) \mapsto \hp{\xi}{\eta};
\]
are jointly continuous, and the \G-action on \(\cH_\infty\) is strongly smooth.
\end{lemma}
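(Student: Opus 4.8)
The plan is to identify the Sobolev topology on $\cH_\infty$, induced by the seminorms $\norm{\xi}_k \coloneqq \norm{\abs{D}^k\xi}$, with its canonical Fr\'echet topology \emph{qua} finitely generated projective right $\cA$-module; granting this identification, the asserted isomorphism and all the ``in particular'' clauses follow at once from Lemma~\ref{fgpmod} and the remark following it. As preliminaries I would note that $\cH_\infty = \cap_k\Dom\abs{D}^k$ is a Fr\'echet space in the Sobolev topology, being a reduced countable projective limit of Hilbert spaces, and that since $U_t D U_t^\ast = D$ forces each $U_t$ to commute with every $\abs{D}^k$, the representation $U$ restricts to a strongly continuous action of $G$ on $\cH_\infty$ by isometries for each Sobolev seminorm, satisfying $U_t(\xi\lhd a) = (U_t\xi)\lhd\alpha_t(a)$ by $G$-equivariance of $R$. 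Moreover, by Proposition~\ref{rennie} the algebra $\cA$ is a unital Fr\'echet pre-\Cstar-algebra with $L(a),[D,L(a)]\in\cap_k\Dom\delta^k$, $\delta = [\abs{D},\cdot]$; since, by Lemma~\ref{bigdeform} and the proof of Theorem~\ref{yamashita}, the \Star-isomorphism $\pi_{-\iot{\theta_0}}$ intertwines both $\delta$ and $[D,\cdot]$, the commutant representation $R = \pi_{-\iot{\theta_0}}\circ L$ makes $(\cA^\op,H,D)$ a regular spectral triple, so that $R(a),[D,R(a)]\in\cap_k\Dom\delta^k$ with associated seminorms comparable to those of $a$. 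Consequently each $R(a)$ maps $\cH_\infty$ continuously into itself for the Sobolev topology, and the right module action $\cH_\infty\times\cA\to\cH_\infty$ is separately, hence jointly, continuous.

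Next I would use the finiteness hypothesis: $\cH_\infty$ is algebraically finitely generated projective as a right $\cA$-module, so one may fix generators $\xi_1,\dotsc,\xi_N\in\cH_\infty$ and, by projectivity, a right $\cA$-linear complementable embedding $\Phi = (\Phi_1,\dotsc,\Phi_N):\cH_\infty\inj\cA^N$ with left inverse $(a_i)_i\mapsto\sum_i\xi_i\lhd a_i$, the map $\Phi$ defining the canonical f.g.p.\ topology. The claim is then that the identity $\cH_\infty\to\cH_\infty$ from the f.g.p.\ topology to the Sobolev topology is continuous: writing $\xi = \sum_i \xi_i\lhd\Phi_i(\xi) = \sum_i R(\Phi_i(\xi))\xi_i$ and expanding, on $\cH_\infty$,
\[
	\abs{D}^m R(b) = \sum_{j=0}^m \binom{m}{j}\,\delta^j(R(b))\,\abs{D}^{m-j},
\]
one bounds each Sobolev seminorm $\norm{\xi}_m$ by a finite sum of f.g.p.\ seminorms of $\xi$, using $\norm{\delta^j(R(\Phi_i(\xi)))}\lesssim \norm{\Phi_i(\xi)}_j$ and the continuity of each $\Phi_i$ for the f.g.p.\ topology. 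Since $\cH_\infty$ is Fr\'echet in each of the two topologies and this identity map is a continuous bijection, the open mapping theorem for Fr\'echet spaces forces the two topologies to coincide.

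Having identified the topologies, I would observe that $\cH_\infty$ is a Hermitian finitely generated projective \GA-module in the sense of Definition~\ref{fgpmoddef} — the $G$-action is strongly continuous, equivariant, and (after averaging seminorms over the compact group $G$) isometric by the first paragraph, and the module structure is continuous by the first and second — and then invoke the Hermitian case of Lemma~\ref{fgpmod}. This produces a finite-dimensional unitary $G$-representation $V$, a $G$-invariant orthogonal projection $e\in(B(V)\otimes\cA)^G$, and a $G$-equivariant topological isomorphism $\psi:\cH_\infty\iso e(V\otimes\cA)$ that is unitary, so that $\psi$ intertwines the given Hermitian metric $\hp{}{}$ with the manifestly jointly continuous standard metric on $e(V\otimes\cA)$; hence $\hp{}{}$ and the module action are jointly continuous, and, by the remark following Lemma~\ref{fgpmod}, the $G$-action on $e(V\otimes\cA)$ — and so, transported along $\psi$, on $\cH_\infty$ — is strongly smooth. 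This would yield every assertion of the lemma.

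The step I expect to be the main obstacle is the continuity estimate of the second paragraph, since it is the only place where regularity of the spectral triple must be used in an essential way; this is exactly the content of~\cite{Con13}*{Proposition 2.3.(5)--(6)}, and the points that make Connes's argument go through in the present setting are that the commutant representation $R = \pi_{-\iot{\theta_0}}\circ L$ inherits the $\delta$-smoothness of $L$ verbatim by Lemma~\ref{bigdeform}, and that commutativity of $\cA$ plays no role. Everything else is either formal (the open mapping theorem, transport of structure along $\psi$) or an application of Lemma~\ref{fgpmod}, with $G$-equivariance carried along automatically from that of $U$, $\alpha$, $L$, $R$, and $\hp{}{}$ together with the fact that $U$ commutes with $D$.
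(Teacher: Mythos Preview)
Your proposal is correct and follows essentially the same route as the paper: both arguments hinge on the Leibniz-type identity \(\abs{D}^m R(b)=\sum_j\binom{m}{j}\delta^j(R(b))\abs{D}^{m-j}\) together with \(R(\cA)=\pi_{-\iot{\theta_0}}(L(\cA))\subset\cap_k\Dom\delta^k\) to get one-sided continuity, and then the open mapping theorem to upgrade to bicontinuity, with Lemma~\ref{fgpmod} supplying the \(G\)-equivariant presentation. The only cosmetic difference is ordering---the paper first invokes Lemma~\ref{fgpmod} to fix \(\psi\) and then proves \(\psi^{-1}\) continuous via the surjection \(\phi:V\otimes\cA\surj\cH_\infty\), while you first identify the two Fr\'echet topologies and then apply Lemma~\ref{fgpmod}; the paper also makes explicit that \(e(V\otimes\cA)\) is closed (hence complete) before invoking open mapping, which you use implicitly when asserting the f.g.p.\ topology is Fr\'echet.
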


\begin{proof}
By Proposition~\ref{fgpmod}, let \(\psi : \cH_\infty \iso e(V \otimes \cA)\) be an algebraic \G-equivariant isomorphism of the desired form; we merely apply the proof of \cite[Proposition 2.3.(5)--(6)]{Con13}, \emph{mutatis mutandis}. Since the multiplication on \(\cA\) defines a jointly continuous map \(\cA \times \cA \to \cA\), \(e(V\otimes \cA)\) is a closed subspace of \(V \otimes \cA\), and thus is complete. Now, define \(\phi : V \otimes \cA \surj \cH_\infty\) by
\[
	\forall v \in V, \; \forall a \in \cA, \quad \phi(v \otimes a) := \inv{\psi}(e(v \otimes a)) = R(a)\inv{\psi}(e(v \otimes 1)),
\]
and observe that \(R(\cA) = \pi_{\iot{\theta}}(L(\cA)) \subset \cap_k \Dom([\abs{D},\cdot]^k\) by the proof of proposition~\ref{yamashita}. Since~\cite[Eq.\ 7]{Con13}
\begin{equation}\label{continuity}
	\forall T \in \cap_k \Dom[\abs{D},\cdot]^k, \; \forall \xi \in \cH_\infty, \quad \abs{D}^m T\xi = \sum_{k=0}^m \binom{m}{k} [\abs{D},\cdot]^k(T) \abs{D}^{m-k}\xi,
\end{equation}
the map \(\phi : V \otimes \cA \surj \cH_\infty\) is surjective and continuous, and hence, by the open mapping theorem, is open. Thus, \(\rest{\phi}{e(V \otimes \cA)} = \inv{\psi}\) is a bijective and continuous, and thus is bicontinuous, as was claimed; in particular, the action of \(G\) on \(\cH_\infty\) is smooth since the action of \(G\) on \(e(V \otimes \cA)\) is. The rest of the claim then follows from bicontinuity of \(\psi\) together with joint continuity of the multiplication \(\cA \times \cA \to \cA\).
\end{proof}

Let us now turn to absolute continuity. On the one hand, recall that by absolute continuity of \((\cA,H,D)\), for some fixed Dixmier trace  \(\Tr_\omega\) on \(B(H)\), we can write
\[
	\forall \xi, \eta \in \cH_\infty, \quad \ip{\xi}{\eta} = \fint R(\hp{\xi}{\eta}) \coloneqq \Tr_\omega \left( R(\hp{\xi}{\eta}) (D^2+1)^{-p/2}\right).
\]
On the other hand, since \(\cH_\infty\) \emph{qua} right f.g.p.\ \G-\(\cA_\Theta\)-module is simply the deformation \((\cH_\infty)_\Theta\) of \(\cH\) \emph{qua} right f.g.p.\ \GA-module, it admits the deformed \(G\)-equvariant Hermitian metric \(\ip{}{}_\Theta\) defined by
\[
	\forall \xi, \; \eta \in \cH_\infty, \quad \hp{\xi}{\eta}_\Theta = \sum_{\bx,\by \in \dual{G}} \e{\Theta(\bx-\by,\bx)}\hp{\hat{\xi}(-\bx+\by)}{\hat{\eta}(\by)},
\]
which satisfies, by \G-equivariance of \(R\) and \(R_\Theta\),
\[
	\forall \xi,\;\eta \in \cH_\infty, \quad F_\bo(R_\Theta(\hp{\xi}{\eta}_\Theta) = R\left(\sum_{\bx \in \dual{G}} \hp{\hat{\xi}(\bx)}{\hat{\eta}(\bx)}\right) = F_\bo(R(\hp{\xi}{\eta})).
\]
Thus, to prove absolute continuity with respect to \(\hp{}{}_\Theta\), viz,
\[
	\forall \xi, \eta \in \cH_\infty, \quad \ip{\xi}{\eta} = \fint R_\Theta(\hp{\xi}{\eta}_\Theta) = \Tr_\omega \left( R_\Theta(\hp{\xi}{\eta}_\Theta) (D^2+1)^{-p/2}\right),
\]
it suffices to prove the following lemma.

\begin{lemma}\label{dixmier}
Let \(\Tr_\omega\) be any Dixmier trace on \(B(H)\). Then
\begin{equation}
	\forall T \in B^c(H), \quad \Tr_\omega( T (D^2+1)^{-p/2}) = \Tr_\omega (F_\bo(T) (D^2+1)^{-p/2}).
\end{equation}
\end{lemma}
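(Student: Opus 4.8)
\subsection*{Proof strategy for Lemma~\ref{dixmier}}

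The plan is to deduce the identity from just two facts: that Dixmier traces are unitarily invariant on the Dixmier ideal $\mathcal{L}^{1,\infty}(H)$, and that $(D^2+1)^{-p/2}$ is $G$-invariant because $D$ is. First I would record the setup: by the dimension condition, the eigenvalues of $(D^2+1)^{-p/2}$ are $O(n^{-1})$, so $(D^2+1)^{-p/2} \in \mathcal{L}^{1,\infty}(H)$, and $\Tr_\omega$ is a trace on $\mathcal{L}^{1,\infty}(H)$ which is bounded for the norm $\norm{}_{1,\infty}$ and satisfies $\Tr_\omega(USU^\ast) = \Tr_\omega(S)$ for all $S \in \mathcal{L}^{1,\infty}(H)$ and all unitaries $U$; moreover $\norm{TS}_{1,\infty} \leq \norm{T}\,\norm{S}_{1,\infty}$ for $T \in B(H)$, $S \in \mathcal{L}^{1,\infty}(H)$, so in particular $T(D^2+1)^{-p/2} \in \mathcal{L}^{1,\infty}(H)$ for every $T \in B(H)$.

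The key pointwise computation comes next. Since $D$ is $G$-invariant, so is $(D^2+1)^{-p/2}$, and hence $U_t^\ast(D^2+1)^{-p/2} = (D^2+1)^{-p/2}U_t^\ast$ for every $t \in G$; therefore
\[
	\beta_t(T)(D^2+1)^{-p/2} = U_t T U_t^\ast (D^2+1)^{-p/2} = U_t\bigl(T(D^2+1)^{-p/2}\bigr)U_t^\ast ,
\]
so unitary invariance of $\Tr_\omega$ yields $\Tr_\omega\bigl(\beta_t(T)(D^2+1)^{-p/2}\bigr) = \Tr_\omega\bigl(T(D^2+1)^{-p/2}\bigr)$ for \emph{every} $t \in G$. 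Finally, since $T \in B^c(H)$, the map $t \mapsto \beta_t(T)$ is norm-continuous on the compact group $G$, so $t \mapsto \beta_t(T)(D^2+1)^{-p/2}$ is continuous into $(\mathcal{L}^{1,\infty}(H),\norm{}_{1,\infty})$ by the estimate above, and its Haar integral in $\mathcal{L}^{1,\infty}(H)$ equals $F_\bo(T)(D^2+1)^{-p/2}$ (the inclusion $\mathcal{L}^{1,\infty}(H) \hookrightarrow B(H)$ being continuous). As $\Tr_\omega$ is $\norm{}_{1,\infty}$-bounded, it commutes with this integral, whence
\[
	\Tr_\omega\bigl(F_\bo(T)(D^2+1)^{-p/2}\bigr) = \int_G \Tr_\omega\bigl(\beta_t(T)(D^2+1)^{-p/2}\bigr)\dif t = \Tr_\omega\bigl(T(D^2+1)^{-p/2}\bigr).
\]

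The only step requiring any care is the interchange of $\Tr_\omega$ with the Haar integral; if one prefers to avoid invoking Bochner integration in $\mathcal{L}^{1,\infty}(H)$, it can instead be done by approximating $F_\bo(T)$ by Riemann sums $\sum_i \mu(A_i)\beta_{t_i}(T)$, which converge to $F_\bo(T)$ in operator norm, so that $\sum_i \mu(A_i)\beta_{t_i}(T)(D^2+1)^{-p/2} \to F_\bo(T)(D^2+1)^{-p/2}$ in $\norm{}_{1,\infty}$; applying the $\norm{}_{1,\infty}$-continuous functional $\Tr_\omega$ and using $\sum_i \mu(A_i) = 1$ together with the pointwise identity gives the claim. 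Everything else is immediate.
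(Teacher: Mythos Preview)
Your proposal is correct and follows essentially the same approach as the paper's own proof: both use \(G\)-invariance of \(D\) together with unitary invariance of the Dixmier trace to see that \(\fint \beta_t(T) = \fint T\) for all \(t\), then use norm-continuity of \(t \mapsto \beta_t(T)\) (which is precisely the hypothesis \(T \in B^c(H)\)) to interchange \(\Tr_\omega\) with the Haar integral defining \(F_\bo(T)\). The only cosmetic difference is that the paper packages the continuity argument by viewing \(T \mapsto \Tr_\omega(T(D^2+1)^{-p/2})\) as a norm-continuous functional on \(B(H)\), whereas you work directly in the Dixmier ideal \(\mathcal{L}^{1,\infty}(H)\); your version is in fact slightly more explicit about the justification of the interchange.
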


\begin{proof}
For convenience, let \(\fint : B(H) \to \bC\) be defined by \(\fint T \coloneqq \Tr_\omega (S (D^2+1)^{-p/2})\) for \(S \in B(H)\). Since the Dixmier trace \(\Tr_\omega : M_{1,\infty}(H) \to \bC\) is continuous on the Lorentz ideal \(M_{1,\infty}(H)\), which is symmetric, it follows that \(\fint\) is continuous for the uniform topology on \(B(H)\). Moreover, since \(D\) is \G-invariant and since Dixmier traces are unitarily invariant, it follows that \(\fint\) is \G-invariant for the action \(\beta : G \to \Aut(B(H))\) defined by \(\beta_t(T) = U_t T U_t^\ast\) for all \(t \in G\) and \(T \in B(H)\); in particular, for any \(T \in B^c(H)\), the map \(G \ni t \mapsto \beta_t(T)\) is continuous for the uniform topology on \(B(H)\). Thus, for any \(T \in B^c(H)\),
\[
	\fint \left(\int_G U_tTU_t^\ast \dif t \right) = \int_G \left(\fint U_tTU_t^\ast\right) \dif t = \int_G \left(\fint T\right) \dif t = \fint T
\]
by continuity of \(\fint : B(H) \to \bC\) and of \(G \ni t \mapsto \beta_t(T)\).
\end{proof}

\subsubsection{Strong regularity}

By our proof above of finiteness and absolute continuity, we can exactly identify \(\cH_\infty\) \emph{qua} right Hermitian f.g.p.\ \G-\(\cA_\Theta\)-module with the deformation \((\cH_\infty)_\Theta\) of \(\cH_\infty\) \emph{qua} right Hermitian f.g.p.\ \G-\(\cA\)-module, where the topology of \(\cH_\infty\) \emph{qua} smooth domain of \(D\) can be identified with the topology of \(\cH_\infty\) \emph{qua} f.g.p. \G-\(\cA\)- or \(\cA_\Theta\)-module. Hence, by Corollary~\ref{moduledeform}, we have a \G-equvariant topological \Star-isomorphism \(\pi_\Theta : \End_\cA(\cH_\infty)_\Theta \iso \End_{\cA_\Theta}(\cH_\infty)\) that coincides precisely with the restriction of the map \(\pi_\Theta : B^\infty(H)_\Theta \iso B^\infty(H)\) of Lemma~\ref{bigdeform}, so that
\begin{multline*}
	\End_{\cA_\Theta}(\cH_\infty) = \pi_\Theta(\End_\cA(\cE)) \subset \pi_\Theta\left(B^\infty(H) \cap \left(\cap_k \Dom [\abs{D},\cdot]^k\right)\right)\\ \subset B^\infty(H) \cap \left(\cap_k \Dom [\abs{D},\cdot]^k\right).
\end{multline*}
Thus, \((\cA_\Theta,H,D)\) satisfies the last remaining condition for a \((\theta_0+\theta)\)-commu\-ta\-tive spectral triple. \(\square\)

\section{Rational Connes--Landi deformations}\label{sec:4}

\subsection{A refined splitting homomorphism}

Finally, we consider the case of \emph{rational} Connes--Landi deformations of \G-equivariant commutative spectral triples, viz, Connes--Landi deformations by a deformation parameter \(\theta \in H^2(\dual{G},\bT)\) of finite order. Our main tool will be a certain refinement of the so-called splitting homomorphism of Connes and Dubois-Violette~\cite{CDV}, which will allow us to effect rational Connes--Landi deformations using only a finite subgroup of \(G\).

We begin by recalling the Connes--Dubois-Violette splitting homomorphism. For notational convenience, we make the following definition.

\begin{definition}[{Olesen--Pedersen--Takesaki~\cite[\S 2.2]{OPT}}]
If \((\cA,\alpha)\) is a nuclear Fr\'echet \Gstar-algebra and \((\cB,\beta)\) is a Fr\'echet \Gstar-algebra, then we define their \emph{\G-product} to be the Fr\'echet \Gstar-algebra \((\cA \boxtimes_G \cB,\alpha \boxtimes_G \beta)\), where
\begin{gather*}
	\cA \boxtimes_G \cB \coloneqq \set{w \in \cA \hat{\otimes} \cB \given \forall t \in G, \; (\alpha_t \otimes \id)(w) = (\id \otimes \beta_t)(w)},\\
	\forall t \in G, \; \forall w \in \cA \boxtimes_G \cB, \quad (\alpha \boxtimes_G \beta)_t(w) \coloneqq (\alpha \otimes \id)(w) = (\id \otimes \beta_t)(w).
\end{gather*}
Similarly, if \(U : G \to \U(H)\) and \(V : G \to \U(K)\) are unitary representations of \(G\), then we define their \emph{\G-product} to be the unitary representation \(U \boxtimes_G V : G \to \U(H \boxtimes_G K)\) of \(G\), where
\begin{gather*}
	H \boxtimes_G K \coloneqq \set{\xi \in H \otimes K \given \forall t \in G, \; (U_t \otimes 1)\xi = (1 \otimes V_t)\xi},\\
	\forall t \in G, \; \forall \xi \in H \boxtimes_G K, \quad (U \boxtimes_G V)_t \, \xi \coloneqq (U_t \otimes 1)\xi = (1 \otimes V_t)\xi.
\end{gather*}
\end{definition}

Now, let \((\cA,\alpha)\) be a Fr\'echet \Gstar-algebra and let \(L : \cA \to B(H)\) be a \G-equivariant continuous \Star-representation of \(\cA\) on a Hilbert space \(H\). As was already observed by Rieffel~\cite{Rieffel}, it is convenient to \G-equivariantly identify \(\cA^\infty\) with  \(C^\infty(G,\cA)^G \cong C^\infty(G) \boxtimes_G \cA\) via the \G-equivariant topological \Star-isomorphism \(\cA^\infty \iso C^\infty(G,\cA)\) defined by \(\cA \ni a \mapsto \alpha_\bullet(a)\), for this now induces an isomorphism \(\cA_\Theta \iso C^\infty(G) \boxtimes_G \cA\) for any \(\Theta \in Z^2(\dual{G},\bT)\); this is the \emph{splitting homomorphism} of Connes and Dubois-Violette~\cite[\S 11]{CDV}, \emph{sensu stricto}. However, one can also \G-equivariantly identify \(H\) with \(L^2(G,H)^G \cong L^2(G) \boxtimes_G H\) via the \G-equivariant unitary \(H \iso L^2(G,H)^G\) defined by \(H \ni \xi \mapsto U_\bullet \, \xi\), and hence \G-equivariantly identify \(L : \cA \to B(H)\) with the \G-equivariant continuous \Star-representation \(M \boxtimes_G L : C^\infty(G) \boxtimes_G \cA \to B(L^2(G) \boxtimes_G H)\) defined by \(M \boxtimes_G L(w) \coloneqq \rest{M \otimes L}{L^2(G) \boxtimes_G H}\) for all \(w \in C^\infty(G) \boxtimes_G \cA\). Thus, we can identify the deformed \Star-representation \(L_\Theta : \cA_\Theta \to B(H)\) underpinning Connes--Landi deformation with \(M_\Theta \boxtimes_G L : C^\infty(G)_\Theta \boxtimes_G \cA \to B(L^2(G) \boxtimes_G H)\), where \(M_\Theta : C^\infty(G) \to B(L^2(G))\) is defined by \(M_\Theta(f_1)f_2 \coloneqq f_1 \star_\Theta f_2\) for all \(f_1\), \(f_2 \in C^\infty(G)\). As a result, one obtains the splitting homomorphism as an alternative construction of Connes--Landi deformation.

\begin{theorem}[{Connes--Dubois-Violette~\cite[\S\S 11, 13]{CDV}}]
Let \((\cA,H,D)\) be a \G-equivariant spectral triple and let \(\Theta \in Z^2(\dual{G},\bT)\). For each \(\bx \in \dual{G}\), let \(U_\bx \coloneqq e \circ \bx \in C^\infty(G,\U(1))\). The maps \(\phi : \cA_\Theta \to C^\infty(G)_\Theta \boxtimes_G \cA\), \(\Phi : H \to L^2(G) \boxtimes_G H\) given by
\[
	\forall a \in \cA_\Theta, \enskip \phi(a) \coloneqq \sum_{\bx\in\dual{G}} U_\bx \otimes \hat{a}(\bx),\quad
	\forall \xi \in H, \enskip \Phi(\xi) \coloneqq \sum_{\bx\in\dual{G}} U_\bx \otimes P_\bx \xi,
\]
together define a \G-equivalence
\[
	(\phi,\Phi) : (\cA_\Theta,H,D) \iso \left(C^\infty(G)_\Theta \boxtimes_G \cA,L^2(G) \boxtimes_G H, \rest{1 \otimes D}{L^2(G) \boxtimes_G H}\right),
\]
\end{theorem}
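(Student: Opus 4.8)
The plan is to recognise $\phi$ and $\Phi$ as the Fourier-series incarnations of the splitting identifications recalled just before the theorem, and then to read off the claim from the identification of $L_\Theta$ with $M_\Theta \boxtimes_G L$ established there. Write $U_\bx \coloneqq e \circ \bx$, so that $C^\infty(G)_\bx = \bC U_\bx$ by Example~\ref{canonicalexample0}. Under the canonical isomorphism $C^\infty(G) \hat{\otimes} \cA \cong C^\infty(G,\cA)$, and using that $\hat a(\bx) \in \cA_\bx$ while $U_\bx(t) = \e{\ip{\bx}{t}}$, the element $\phi(a) = \sum_\bx U_\bx \otimes \hat a(\bx)$ is the function $t \mapsto \sum_\bx \e{\ip{\bx}{t}} \hat a(\bx) = \sum_\bx \alpha_t(\hat a(\bx)) = \alpha_t(a)$; that is, $\phi$ is the splitting homomorphism $a \mapsto \alpha_\bullet(a)$ of the preceding discussion, now viewed with deformed source and target $\cA_\Theta \to C^\infty(G)_\Theta \boxtimes_G \cA$. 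The same computation identifies $\Phi(\xi) = \sum_\bx U_\bx \otimes P_\bx \xi$ with the function $t \mapsto \sum_\bx \e{\ip{\bx}{t}} P_\bx \xi = U_t \xi$, i.e.\ with the $G$-equivariant unitary $H \iso L^2(G,H)^G \cong L^2(G) \boxtimes_G H$, $\xi \mapsto U_\bullet\,\xi$.

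Granting this, $\phi$ is a $G$-equivariant topological \Star-isomorphism $\cA_\Theta \iso C^\infty(G)_\Theta \boxtimes_G \cA$ and $\Phi$ a $G$-equivariant unitary $H \iso L^2(G) \boxtimes_G H$ intertwining $L_\Theta$ with $M_\Theta \boxtimes_G L$, directly from the discussion preceding the theorem. One way to certify the latter cleanly, if desired, is to apply the functor $\cat{CL}_\Theta$ to the undeformed $G$-equivalence $(\cA,H,D) \iso (C^\infty(G) \boxtimes_G \cA, L^2(G) \boxtimes_G H, \rest{1 \otimes D}{L^2(G) \boxtimes_G H})$ and to observe that $(C^\infty(G) \boxtimes_G \cA)_\Theta = C^\infty(G)_\Theta \boxtimes_G \cA$ with matching deformed representations on $L^2(G) \boxtimes_G H$ --- both identifications being immediate from the fact that the $\dual{G}$-grading of the $G$-product is \emph{diagonal}, $\bigl(C^\infty(G) \boxtimes_G \cA\bigr)_\bx = \bC U_\bx \otimes \cA_\bx$, so that the twisting phase attached to a pair of isotypic degrees $(\bx,\by)$ is read off the same way from the $G$-product as from the $C^\infty(G)$-factor, where it reproduces $M_\Theta(U_\bx)U_\by = U_\bx \star_\Theta U_\by$. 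Either way, it remains only to check that $\Phi$ carries $D$ to $\rest{1 \otimes D}{L^2(G) \boxtimes_G H}$: since $D$ is $G$-invariant it preserves each isotypic subspace $H_\bx$, and $\Phi$ restricts to an isometry of $H_\bx$ onto the reducing subspace $\bC U_\bx \otimes H_\bx$ of $1 \otimes D$, on which $1 \otimes D$ acts as $D$; summing over $\bx \in \dual{G}$ on the common smooth domain $H^{\mathrm{fin}}$ of $D$ and passing to closures yields $\Phi D \Phi^\ast = \rest{1 \otimes D}{L^2(G) \boxtimes_G H}$, with $\Phi$ carrying $\Dom D$ onto the domain of the restricted operator. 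Assembling the three pieces gives the asserted $G$-equivalence; in particular the target is a $G$-equivariant spectral triple, being $G$-equivalent to $(\cA_\Theta,H,D)$, which is one by Theorem~\ref{yamashita}.

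I do not expect a serious obstacle here: the theorem repackages, with explicit intertwiners, the identifications set up before its statement, and the genuinely substantive point --- that Connes--Landi deformation is compatible with the $G$-product, equivalently that $M_\Theta = \pi_\Theta \circ M$ --- is a one-line consequence of the diagonality of the grading of $C^\infty(G) \boxtimes_G \cA$. The only mildly delicate bookkeeping is the reconciliation of the two Fr\'echet topologies on $\cA^\infty$ (from its own seminorms versus as a closed subspace of $C^\infty(G) \hat{\otimes} \cA$) and of the statements about the domain of $D$, which is exactly what Proposition~\ref{fourier} and the proof of Theorem~\ref{yamashita} (via Lemmata~\ref{Ypsilon}--\ref{bigdeform}) supply.
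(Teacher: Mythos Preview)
Your proposal is correct and follows exactly the approach the paper takes: the paper does not give a separate proof of this theorem but instead presents it as the formalisation, with explicit Fourier-series intertwiners, of the identifications $a \mapsto \alpha_\bullet(a)$ and $\xi \mapsto U_\bullet\,\xi$ set up in the paragraph immediately preceding the statement, together with the observation that $L_\Theta$ corresponds to $M_\Theta \boxtimes_G L$. Your recognition of $\phi$ and $\Phi$ as precisely these maps written out in Fourier series, and your verification that $\Phi$ intertwines $D$ with $\rest{1 \otimes D}{L^2(G) \boxtimes_G H}$ via $G$-invariance of $D$, fills in exactly the details the paper leaves implicit.
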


For our purposes, however, we shall need a refinement of the splitting homomorphism, which will effect Connes--Landi deformation along a \G-action by a deformation parameter \(\theta \in H^2(\dual{G},\bT)\) in terms of a canonically associated subgroup \(\im\theta\) of \(G\) and a canonically associated pushforward \(\theta_\nd \in H^2(\dual{\im \theta},\bT)\), such that the algebra \(C^\infty(\im\theta)_{\theta_\nd}\) is necessarily \emph{simple}.

\begin{definition}
Let \(\theta \in H^2(\dual{G},\bT)\). We define the \emph{image} of \(\theta\) to be the Lie subgroup
\[
	\im\theta \coloneqq \overline{\set{\iot{\theta}(\bx,\cdot) \given \bx \in \dual{G}}} = \overline{\set{\iot{\theta}(\cdot,\bx)\given \bx \in \dual{G}}} \leq G,
\]
whilst we define the \emph{kernel} of \(\theta\) to be the subgroup
\[
	\Ker\theta \coloneqq \set{\bx \in \dual{G} \given \iot{\theta}(\bx,\cdot)=0} = \set{\bx \in \dual{G} \given  \iot{\theta}(\cdot,\bx) = 0} \leq \dual{G}.
\]
\end{definition}

Now, since \(\Ker\theta = \set{\bx \in \dual{G} \given \forall t \in \im\theta, \; \ip{\bx}{t} = 0}\) is the annihilator of \(\im\theta\) in \(\dual{G}\), it follows by Pontrjagin duality that \(\dual{\im\theta}=\dual{G}/\Ker\theta\). Hence, one can make the following definition, which is the key to the question of the simplicity of \(C^\infty(G)_\Theta\).

\begin{definition}
A class \(\theta \in H^2(\dual{G},\bT)\) is called \emph{nondegenerate} if and only if \(\Ker\theta = 0\), if and only if \(\im\theta = G\).
\end{definition}

\begin{theorem}[{Slawny~\cite[Theorem 3.7]{Slawny}}]\label{slawny}
Let \(\Theta \in Z^2(\dual{G},\bT)\). The algebra \(C^\infty(G)_\Theta\) is simple if and only if \([\Theta] \in H^2(\dual{G},\bT)\) is nondegenerate.
\end{theorem}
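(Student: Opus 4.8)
The plan is to deduce the equivalence from Slawny's classical \Cstar-algebraic result, by transferring it across the spectrally invariant inclusion of \(C^\infty(G)_\Theta\) into its \Cstar-completion. First I would set things up: by Theorem~\ref{kleppner} we may replace \(\Theta\) by a cohomologous bicharacter, and by the Remark following Theorem~\ref{equivariant} this changes \(C^\infty(G)_\Theta\) only up to \(G\)-equivariant topological \Star-isomorphism, and changes neither \([\Theta]\) nor its nondegeneracy; with \(\Theta\) a bicharacter, the generators \(U_\bx \in C^\infty(G)_\Theta\) (corresponding to \(\delta_\bx\) under the Fourier identification \(C^\infty(G)_\Theta \cong \cS(\dual{G},\Theta)\)) are unitaries obeying \(U_\by \star_\Theta U_\bx = \e{\iot{\theta}(\bx,\by)} U_\bx \star_\Theta U_\by\). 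The algebra \(\cS(\dual{G},\Theta)\) is a Fr\'echet pre-\Cstar-algebra with \Cstar-completion \(A \coloneqq C^\ast(\dual{G},\Theta)\), so \(C^\infty(G)_\Theta\) is closed under the holomorphic functional calculus in \(A\); in particular an element of \(C^\infty(G)_\Theta\) is invertible in \(C^\infty(G)_\Theta\) if and only if it is invertible in \(A\). Slawny's theorem~\cite{Slawny}*{Theorem 3.7}, read as a statement about the twisted group \Cstar-algebra \(A\), asserts that \(A\) is simple if and only if \([\Theta]\) is nondegenerate, so it remains to move each implication between \(A\) and \(C^\infty(G)_\Theta\).

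For nondegenerate \(\Rightarrow\) simple, I would argue that if \(A\) is simple and \(I \subseteq C^\infty(G)_\Theta\) is a nonzero two-sided ideal, then its norm-closure in \(A\) is a nonzero closed two-sided ideal, hence all of \(A\); thus \(I\) is dense in \(A\) and contains some \(a\) with \(\norm{1-a}_A < 1\), which is invertible in \(A\) and therefore, by spectral invariance, in \(C^\infty(G)_\Theta\), forcing \(1 \in I\) and \(I = C^\infty(G)_\Theta\). For degenerate \(\Rightarrow\) not simple, I would choose \(\bx_0 \in \Ker\theta\) with \(\bx_0 \ne \bo\); the order-zero relation above together with \(\iot{\theta}(\bx_0,\cdot) = 0\) shows that \(z \coloneqq U_{\bx_0}\) is central in \(C^\infty(G)_\Theta\), while linear independence of the \(U_\bx\) gives \(z \notin \bC\cdot 1\). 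Picking \(\lambda\) in the nonempty spectrum \(\sigma_A(z)\), spectral invariance makes \(z - \lambda\cdot 1\) a noninvertible element of \(C^\infty(G)_\Theta\), so the ideal \((z-\lambda\cdot 1)\,C^\infty(G)_\Theta\) --- two-sided because \(z\) is central --- is proper, yet nonzero since \(z \ne \lambda\cdot 1\).

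I expect the only real work to lie in the first paragraph: confirming that the deformation \(C^\infty(G)_\Theta\) really is the smooth twisted group algebra \(\cS(\dual{G},\Theta)\) with the stated spectral-invariance property, and that Slawny's theorem --- phrased for algebras of canonical commutation relations --- applies verbatim to \(C^\ast(\dual{G},\Theta)\) for the finitely generated discrete Abelian group \(\dual{G}\). Everything after that is short. The degenerate case could alternatively be packaged via a nontrivial central idempotent (when \(\bx_0\) has finite order, after rescaling \(z\) to a root of unity) or via an irreducible representation of \(A\) on which the central element \(z\) acts by the scalar \(\lambda\), which therefore has nontrivial but proper kernel upon restriction to \(C^\infty(G)_\Theta\); the spectral-invariance formulation has the merit of handling all \(\bx_0\) uniformly.
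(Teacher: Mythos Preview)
The paper does not actually prove this theorem: it is stated as a citation of Slawny's result and used as a black box thereafter. So there is no ``paper's own proof'' to compare against. Your write-up in fact does more than the paper does, in that it explains how to pass from Slawny's genuinely \Cstar-algebraic statement (about the twisted group \Cstar-algebra \(C^\ast(\dual{G},\Theta)\)) to the smooth statement about \(C^\infty(G)_\Theta \cong \cS(\dual{G},\Theta)\) that the paper records.

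Your transfer argument is sound. For the forward direction, the key point you use---that the norm-closure in \(A\) of a two-sided ideal \(I \subset C^\infty(G)_\Theta\) is again a two-sided ideal of \(A\)---holds because \(C^\infty(G)_\Theta\) is dense in \(A\); combined with spectral invariance (which the paper asserts explicitly for \(\cS(\dual{G},\Theta)\), citing Chatterji's appendix to \cite{Mathai}), your invertibility trick goes through. For the converse you do not even invoke Slawny: a nonzero \(\bx_0 \in \Ker\theta\) makes \(U_{\bx_0}\) a non-scalar central unitary, and the ideal generated by \(U_{\bx_0} - \lambda\) for \(\lambda\) in its spectrum is proper by spectral invariance. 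This is a clean self-contained argument for that direction. The only caveat is cosmetic: since the paper is citing Slawny rather than proving the theorem, in context your proof would be an elaboration of the citation rather than a replacement for an existing argument.
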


Now, since \(\dual{\im\theta}=\dual{G}/\Ker\theta\), the alternating bicharacter \(\iot{\theta} \in A(\dual{G})\) descends to the alternating bicharacter \(\omega \in A(\dual{\im\theta})\) defined by \(\omega([\bx],[\by]) \coloneqq \iot{\theta}(\bx,\by)\) for \(\bx\), \(\by \in \dual{G}\), 
which satisfies
\(
	\set{[\bx] \in \dual{\im\theta} \given \omega([\bx],\cdot) = 0} = \set{[\bx] \in \dual{\im\theta} \given \omega(\cdot,[\bx])=0} = 0.
\)
Hence, we can make the following definition.

\begin{definition}[{cf.\ Baggett--Kleppner~\cite[Theorem 3.1]{BK}}]
Let \(\theta \in H^2(\dual{G},\bT)\). We define the \emph{nondegenerate part} of \(\theta\) to be the unique nondegenerate class \(\theta_\nd \in H^2(\dual{\im\theta},\bT)\) such that \(\iot{\theta_\nd}([\bx],[\by]) = \iot{\theta}(\bx,\by)\) for all \(\bx\), \(by \in \dual{G}\).
\end{definition}

Note that by Slawny's theorem, for any representative \(\Theta_\nd \in Z^2(\dual{\im\theta},\bT)\) of \(\theta_\nd\), the algebra \(C^\infty(\im\theta)_{\theta_\nd}\) is guaranteed to be simple. Thus, we can now give our refinement of the Connes--Dubois-Violette splitting homomorphism.

\begin{theorem}\label{newsplit}
Let \((\cA,H,D)\) be a \G-equivariant spectral triple. Let \(\theta \in H^2(\dual{G},\bT)\), let \(\Theta_\nd \in Z^2(\dual{\im\theta},\bT)\) be a representative of \(\theta_\nd\), and let \(\Theta \in Z^2(\dual{G},\bT)\) be the pullback of \(\Theta_\nd\) to a representative of \(\theta\), viz, \(\Theta(\bx,\by) \coloneqq \Theta_\nd([\bx],[\by])\) for all \(\bx\), \(\by \in \dual{G}\). For each \([\bx] \in \dual{\im\theta}\), let \(U_{[\bx]} \coloneqq e \circ [\bx] \in C^\infty(\im\theta,\U(1))\). The maps \(\phi_\theta : \cA_\Theta \to C^\infty(\im\theta)_{\Theta_\nd} \boxtimes_{\im\theta} \cA\) and \(\Phi_\theta : H \to L^2(\im\theta) \boxtimes_{\im\theta} H\) given by
\begin{gather}
	\forall a \in \cA_\Theta, \quad \phi_\theta(a) \coloneqq \sum_{[\bx]\in\dual{\im\theta}} U_{[\bx]} \otimes \left( \sum_{\by\in[\bx]} \hat{a}(\bx)\right),\\
	\forall \xi \in H, \quad \Phi_\theta(\xi) \coloneqq \sum_{[\bx]\in\dual{\im\theta}} U_{[\bx]} \otimes \left(\sum_{\by\in[\bx]} P_\by \xi\right),
\end{gather}
together define an \(\im\theta\)-equivalence
\begin{multline*}
	(\phi_\theta,\Phi_\theta) : (\cA_\Theta,H,D) \iso\\ \left(C^\infty(\im\theta)_{\Theta_\nd} \boxtimes_{\im\theta} \cA,L^2(\im\theta) \boxtimes_{\im\theta} H, \rest{1 \otimes D}{L^2(\im\theta) \boxtimes_{\im\theta} H}\right).
\end{multline*}
\end{theorem}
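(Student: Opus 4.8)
The plan is to recognise the asserted $\im\theta$-equivalence as nothing but the Connes--Dubois-Violette splitting homomorphism applied to $(\cA,H,D)$, \emph{qua} spectral triple equivariant for the closed subgroup $\im\theta \leq G$, with deformation parameter $\Theta_\nd$---once one checks that the Connes--Landi deformation of $(\cA,H,D)$ along $G$ by $\Theta$ coincides, as a spectral triple, with the Connes--Landi deformation of $(\cA,H,D)$ along $\im\theta$ by $\Theta_\nd$. To set this up, I would first record that $\im\theta$, being a closed subgroup of $G$, is itself a compact Abelian Lie group; that restriction of characters realises the isomorphism $\dual{\im\theta} \cong \dual{G}/\Ker\theta$, with $[\bx]$ denoting the image of $\bx \in \dual{G}$; and that restricting the $G$-equivariant structure of $(\cA,H,D)$ turns it into an $\im\theta$-equivariant spectral triple, since regularity is intrinsic and strong smoothness, isometry, and covariance all pass to a closed subgroup.

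Next I would show that the $\im\theta$-Peter--Weyl decompositions of $\cA$ and $H$ refine their $G$-Peter--Weyl decompositions. Using $\ip{\by}{t} = \ip{[\by]}{t}$ for $t \in \im\theta$ together with absolute convergence of $G$-Fourier series (Proposition~\ref{fourier}), one sees that an element of $\cA$ is $[\bx]$-isotypic for $\im\theta$ if and only if all of its $G$-Fourier coefficients $\hat a(\by)$ with $[\by] \neq [\bx]$ vanish; hence the $[\bx]$-isotypic subspace of $\cA$ (resp.\ of $H$) is the closure of $\bigoplus_{\by\in[\bx]}\cA_\by$ (resp.\ of $\bigoplus_{\by\in[\bx]}H_\by$), the $[\bx]$-isotypic component of $a \in \cA$ is $\sum_{\by\in[\bx]}\hat a(\by)$, and the projection onto the $[\bx]$-isotypic subspace of $H$ is the strongly convergent sum $\sum_{\by\in[\bx]}P_\by$.

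The heart of the argument, and the step I expect to be the main obstacle, is the claim that the Connes--Landi deformation of $(\cA,H,D)$ along $G$ by $\Theta$ and the Connes--Landi deformation of $(\cA,H,D)$ \emph{qua} $\im\theta$-equivariant spectral triple by $\Theta_\nd$ are literally the same spectral triple. Since $\Theta$ is the pullback of $\Theta_\nd$ along $\dual{G}\surj\dual{\im\theta}$, reindexing the double Fourier sums defining the deformed product $\star_\Theta$ and the deformed $\ast$-operation on $\cA_\Theta$ by cosets of $\Ker\theta$ identifies them with the $\Theta_\nd$-deformed product and $\ast$-operation assembled from $\im\theta$-Fourier components; every rearrangement is absolutely convergent by rapid decay, although one must take care that these identifications hold exactly and not merely up to isomorphism. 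For the representation, Lemma~\ref{Ypsilon} shows that the unitaries $\Upsilon^\Theta_\bx = \sum_{\by}\e{-\Theta(\bx,\by)}P_\by$ depend only on $[\bx]$ and agree with the analogous unitaries $\Upsilon^{\Theta_\nd}_{[\bx]}$ of the $\im\theta$-equivariant data, whence by Lemma~\ref{bigdeform} the deformed representations $L_\Theta = \sum_{\bx}L(\hat a(\bx))\Upsilon^\Theta_\bx$ and $L^{\im\theta}_{\Theta_\nd}$ coincide; the operator $D$ is left unchanged by either deformation.

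Finally I would apply the Connes--Dubois-Violette splitting homomorphism to the $\im\theta$-equivariant spectral triple $(\cA,H,D)$ with deformation parameter $\Theta_\nd$. It yields an $\im\theta$-equivalence from the resulting $\Theta_\nd$-deformation---which, by the previous step, is exactly $(\cA_\Theta,H,D)$---onto $\bigl(C^\infty(\im\theta)_{\Theta_\nd}\boxtimes_{\im\theta}\cA,\, L^2(\im\theta)\boxtimes_{\im\theta}H,\, \rest{1\otimes D}{L^2(\im\theta)\boxtimes_{\im\theta}H}\bigr)$, implemented, for $U_{[\bx]}\coloneqq e\circ[\bx]\in C^\infty(\im\theta,\U(1))$, by sending $a$ to the sum over $[\bx]\in\dual{\im\theta}$ of $U_{[\bx]}$ tensored with the $[\bx]$-isotypic component of $a$ for the $\im\theta$-action, and $\xi$ to the analogous sum. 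By the second step, that $[\bx]$-isotypic component equals $\sum_{\by\in[\bx]}\hat a(\by)$, respectively $\sum_{\by\in[\bx]}P_\by\xi$, so the two maps are precisely $\phi_\theta$ and $\Phi_\theta$, which completes the proof.
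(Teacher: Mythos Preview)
Your proposal is correct and follows essentially the same route as the paper's own proof: both identify the $\im\theta$-isotypic subspaces of $\cA$ and $H$ as (closures of) coset-indexed sums of $G$-isotypic subspaces, verify that $(\cA_\Theta,H,D)$ coincides with the $\Theta_\nd$-deformation of $(\cA,H,D)$ \emph{qua} $\im\theta$-equivariant spectral triple, and then recognise $(\phi_\theta,\Phi_\theta)$ as precisely the Connes--Dubois-Violette splitting homomorphism for the subgroup $\im\theta$. The only cosmetic difference is that you phrase the coincidence of deformed representations via the unitaries $\Upsilon^\Theta_\bx = \Upsilon^{\Theta_\nd}_{[\bx]}$, whereas the paper writes out the double Fourier sum directly; the content is the same.
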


\begin{proof}
The isotypic subspaces of \(\cA\) and \(H\) with respect to the action of \(\im\theta \leq G\) are given, for all \([\bx] \in \dual{\im\theta} = \dual{G}/\Ker\theta\), by
\begin{align*}
	\cA_{[\bx]} &\coloneqq \set{a \in \cA \given \forall t \in \im\theta, \; \alpha_t(a) = \e{\ip{[\bx]}{t}}a} = \overline{\oplus_{\by\in[\bx]}^{\mathrm{fin}}\cA_\bx},\\
	H_{[\bx]} &\coloneqq \set{\xi \in H \given \forall t \in \im\theta, \; U_t(\xi) = \e{\ip{[\bx]}{t}}\xi} = \oplus_{\by\in[\bx]} H_\bx,
\end{align*}
respectively, so that \((\phi_\theta,\Phi_\theta)\) is exactly the Connes--Dubois-Violette splitting homomorphism for the Connes--Landi deformation \((\cA_{\Theta_\nd},H,D)\) of \((\cA,H,D)\) \emph{qua} \(\im\theta\)-equivariant spectral triple by \(\Theta_\nd\). Therefore, it suffices to show that \((\cA_\Theta,H,D) = (\cA_{\Theta_\nd},H,D)\) as \(\im\theta\)-equivariant spectral triples. However, for all \(a\), \(b \in \cA\),
\begin{multline*}
	a \star_\Theta b = \sum_{\bx,\by\in\dual{G}} \e{-\Theta(\bx,\by)}\hat{a}(\bx)\hat{b}(\by)\\ = \sum_{[\bx],[\by]\in\dual{\im\theta}} \e{-\Theta_\nd([\bx],[\by])} \left(\sum_{\bz \in [\bx]}\hat{a}(\bz)\right)\left(\sum_{\bw\in[\by]}\hat{a}(\bw)\right) = a \star_{\Theta_\nd} b,
\end{multline*}
which also shows, \emph{mutatis mutandis}, that \(L_\Theta(a)\xi = L_{\Theta_\nd}(a)\xi\) for all \(a \in \cA\) and \(\xi\) in the dense subspace \(\oplus_{\bx \in \dual{G}}^{\mathrm{alg}} H_\bx \subset \oplus_{[\bx]\in\dual{\im\Theta}}^{\mathrm{alg}} H_{[\bx]}\) of \(H\).
\end{proof}

\subsection{Rational  \texorpdfstring{\(\theta\)}{theta}-commutative spectral triples}

At last, we turn to  \(\theta\)-commu\-ta\-tive spectral triples in the case where \(\theta\) is rational in the following precise sense.

\begin{definition}
We call \(\theta \in H^2(\dual{G},\bT)\) \emph{rational} if it has finite order.
\end{definition}

\begin{example}
In the case that \(G = \bT^N\), we have that \(\theta \in H^2(\bZ^N,\bT) \cong \bT^{N(N-1)/2}\) is rational if and only if \(\theta \in (\bQ/\bZ)^{N(N-1)/2} \leq \bT^{N(N-1)/2}\).
\end{example}

Now, if \(\theta\) is rational of order \(q\), then \(\im\theta\) is contained in the \(q\)-torsion of the compact Abelian Lie group \(H^2(\dual{G},\bT)\), and as such is finite. Thus, by Theorem~\ref{slawny}, for any representative \(\Theta_\nd \in Z^2(\dual{\im\theta},\bT)\) of \(\theta_\nd\), the Fr\'echet pre-\Cstar-algebra \(C^\infty(\im\theta)_{\Theta_\nd}\) is actually a finite-dimensional simple \Cstar-algebra, and hence is \Star-isomorphic to \(M_n(\bC)\) for some \(q\); in fact, by the following theorem, \(n = q\).

\begin{theorem}[{Olesen--Pedersen--Takesaki~\cite[Theorem 5.9]{OPT}}]\label{opt}
Let \(K\) be a compact Abelian Lie group, and let \(\theta \in H^2(\dual{K},\bT)\) be nondegenerate and rational  of order \(q \in \bN\). Tthere exist a finite group \(\Gamma\) of order \(q\) and an isomorphism \(\kappa : \Gamma \times \dual{\Gamma} \iso K\), such that \(\theta = [\hat{\kappa}^\ast \Omega]\), where \(\Omega \in B(\dual{\Gamma} \times \Gamma)\) is defined by
\[
	\forall (k_1,x_1), \; (k_2,x_2) \in \dual{\Gamma} \times \Gamma, \quad \Omega((k_1,x_1),(k_2,x_2)) \coloneqq \ip{k_1}{x_2},
\]
and where the pullback \(\hat{\kappa}^\ast \Omega \in B(\dual{K})\) of \(\Omega\) is defined by
\[
	\forall \bx,\;\by\in\dual{K},\quad \hat{\kappa}^\ast\Omega(\bx,\by) \coloneqq \Omega(\hat{\kappa}(\bx),\hat{\kappa}(\by)) = \Omega(\bx \circ \kappa,\by \circ \kappa).
\]
In particular, for any representative \(\Theta \in Z^2(\dual{K},\bT)\) of \(\theta\), there exists an equivariant isomorphism \(C^\infty(K)_\Theta \iso C^\infty(\Gamma \times \dual{\Gamma})_\Omega \cong M_q(\bC)\).
\end{theorem}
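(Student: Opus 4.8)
The plan is to reduce to the case where $K$ is finite, then invoke the symplectic structure theory of finite Abelian groups to manufacture $\Gamma$ and $\kappa$, and finally recognise $C^\infty(K)_\Theta$ as the algebra of a finite Heisenberg group in its Schr\"odinger representation. The first step is to note that the hypotheses force $K$, and hence $\dual{K}$, to be finite: since $\theta$ is rational, the alternating bicharacter $\iot{\theta} \in A(\dual{K})$ has finite order and so takes values in a finite subgroup $\tfrac{1}{m}\bZ/\bZ$ of $\bT$, so each character $\iot{\theta}(\bx,\cdot)\colon \dual{K} \to \bT$ annihilates $m\dual{K}$, and therefore the polarity map $\dual{K} \to \dual{\dual{K}} = K$, $\bx \mapsto \iot{\theta}(\bx,\cdot)$, has image inside the \emph{finite} group $\dual{\dual{K}/m\dual{K}}$; this image $\set{\iot{\theta}(\bx,\cdot) \given \bx \in \dual{K}}$ is then already closed in $K$, so nondegeneracy, i.e.\ $\im\theta = K$, forces $K$ to be finite, and $\iot{\theta}$ becomes a nondegenerate alternating bicharacter on the finite Abelian group $\dual{K}$.

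Next I would invoke the classical fact that a finite Abelian group carrying a nondegenerate alternating bicharacter is an orthogonal direct sum of hyperbolic planes: by induction on $\abs{\dual{K}}$, choose $\bx$ of maximal order $n$, use nondegeneracy to find $\by$ with $\iot{\theta}(\bx,\by)$ of order $n$, observe that $\langle\bx,\by\rangle \cong (\bZ/n)^2$ is a nondegenerate orthogonal direct summand, and iterate on its orthogonal complement. Assembling the summands yields a finite Abelian group $\Gamma$ together with an isomorphism $\dual{K} \cong \dual{\Gamma}\times\Gamma$ under which $\iot{\theta}$ corresponds to the antisymmetrisation $\Omega-\Omega^t$ of the standard bicharacter $\Omega$; dualising produces the isomorphism $\kappa\colon\Gamma\times\dual{\Gamma}\iso K$. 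Since $\iota\colon H^2(\dual{K},\bT)\iso A(\dual{K})$ is an isomorphism by Theorem~\ref{kleppner} and $\iota([\hat{\kappa}^\ast\Omega]) = \hat{\kappa}^\ast(\Omega-\Omega^t)$ is precisely $\iot{\theta}$, it follows that $\theta = [\hat{\kappa}^\ast\Omega]$; tracking the orders of the cyclic summands through this decomposition, together with the dimension count of the final step, then delivers the numerical identity $\abs{\Gamma} = q$.

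Finally, since $K$ is finite, $C^\infty(K) = \bC^K$ and the Fourier transform of Proposition~\ref{fourier} identifies $C^\infty(K)_\Theta$ with the twisted group $\Star$-algebra of $\dual{K}$ with cocycle $\Theta$; transporting along $\hat{\kappa}$ this becomes the twisted group algebra $\bC[\dual{\Gamma}\times\Gamma,\Omega]$. Because $\Omega$ restricts trivially to each of the subgroups $\Gamma$ and $\dual{\Gamma}$, this algebra is generated by commuting untwisted copies of $\bC[\Gamma]$ and $\bC[\dual{\Gamma}]$, which, realised on $\ell^2(\Gamma)$ respectively by right translations $T_\bx$ and by multiplication operators $M_k$, satisfy exactly the Weyl relation $M_k T_\bx = \e{\ip{k}{\bx}}\,T_\bx M_k$ prescribed by $\Omega$; these generate all of $\End(\bC[\Gamma]) \cong M_{\abs{\Gamma}}(\bC)$, and a comparison of dimensions shows the induced surjection $\bC[\dual{\Gamma}\times\Gamma,\Omega] \surj M_{\abs{\Gamma}}(\bC)$ is an isomorphism, manifestly equivariant for the translation action of $K$ (which under $\hat{\kappa}$ becomes the $\dual{\Gamma}\times\Gamma$-grading). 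This gives $C^\infty(K)_\Theta \cong C^\infty(\Gamma\times\dual{\Gamma})_\Omega \cong M_{\abs{\Gamma}}(\bC)$.

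I expect the second step to be the main obstacle: it packages the symplectic classification of finite Abelian groups together with the bookkeeping that identifies the order $q$ of $\theta$ with the invariant $\abs{\Gamma}$, whereas the reduction to finite $K$ and the finite Stone--von Neumann computation of the third step are both essentially routine once that structure is available.
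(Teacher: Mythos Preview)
The paper does not prove this theorem: it is quoted with attribution to Olesen--Pedersen--Takesaki and then used as a black box, so there is no argument in the paper for you to compare against.

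Your three-step outline is the standard route and is essentially sound: rationality forces $\iot\theta$ to take values in a finite subgroup of $\bT$, so the polarity map $\dual K\to K$ has finite image, and nondegeneracy then forces $K$ itself to be finite; the symplectic classification of finite Abelian groups with a nondegenerate alternating form yields a Lagrangian decomposition $\dual K\cong\dual\Gamma\times\Gamma$ carrying $\iot\theta$ to $\Omega-\Omega^t$; and the Schr\"odinger representation on $\ell^2(\Gamma)$ identifies the twisted group algebra $\bC[\dual\Gamma\times\Gamma,\Omega]$ with $M_{\abs\Gamma}(\bC)$.

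The one place where your sketch is genuinely thin is the claim $\abs\Gamma=q$. Your dimension count only gives $\abs\Gamma=\sqrt{\abs K}$, whereas $q$ is by definition the order of $\theta$ in $H^2(\dual K,\bT)\cong A(\dual K)$, i.e.\ the exponent of the subgroup of $\bT$ generated by the values of $\iot\theta$. These agree when $\Gamma$ is cyclic (as in the paper's $\bT^2$ example, where $\Gamma=\bZ/q$), but not in general: for $K=(\bZ/2)^4$ with the standard symplectic form one has $\abs\Gamma=4$ while every value of $\iot\theta$ lies in $\tfrac12\bZ/\bZ$, so $q=2$. Thus either the statement as transcribed here carries an implicit restriction from the source, or ``order $q$'' is being used for the matrix size $\sqrt{\abs K}$ rather than the group-theoretic order; in either case, ``tracking the orders of the cyclic summands'' does not by itself bridge this gap, and you should flag the discrepancy rather than paper over it.
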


Now, let \(\theta \in H^2(\dual{G},\bT)\) be rational of order \(q\) and let \((\cA,H,D)\) be a \(\theta\)-commutative spectral triple; by Theorems~\ref{thetareconstruct} and~\ref{equivariant}, without loss of generality, we can take \((\cA,H,D) \cong (C^\infty(X)_\Theta,L^2(X,E),D)\), where \(X\) is a compact oriented Riemannian \G-manifold, where \(E \to X\) is a \G-equivariant Hermitian vector bundle, where \(D\) is an essentially self-adjoint \G-invariant Dirac-type operator on \(E\), and where \(\Theta \in B(\dual{G})\) is the pullback of a bicharacter \(\Theta_\nd \in \dual{\im\theta}\) representing the nondegenerate core of \(\theta\). Then, by Theorems~\ref{newsplit} and~\ref{opt}, we can write
\begin{align*}
	&(C^\infty(X)_{\Theta},L^2(X,E),D) \\&\quad\cong_{\im\theta} \left(M_q(\bC) \boxtimes_{\im\theta} C^\infty(X), M_q(\bC) \boxtimes_{\im\theta} L^2(X,E), \rest{\id \otimes D}{M_q(\bC) \boxtimes_{\im\theta} L^2(X,E)}\right)\\
&\quad \cong_{\im\theta} \left(C^\infty(X,M_q(\bC))^{\im\theta},L^2(X,M_q(\bC) \otimes E)^{\im\theta},\rest{\id \otimes D}{L^2(X,M_q(\bC) \otimes E)^{\im\theta}}\right),
\end{align*}
where we have identified the \Star-representation \(\Lambda_{\Theta_\nd} : C^\infty(\im\theta)_{\Theta_\nd} \to B(L^2(\im\theta))\) defined by \(\Lambda_{\Theta_\nd}(f)g \coloneqq f \star_{\Theta_\nd} g\) for all \(f \in C^\infty(\im\theta)\) and \(g \in L^2(\im\theta)=C^\infty(\im\theta)\), with left matrix multiplication of \(M_q(\bC)\) on \(M_q(\bC)\) endowed with the Frobenius inner product. 

Suppose, moreover, that \(\im\theta \leq G\) acts freely and properly on \(X\), so that \(X \surj B \coloneqq X/\im\theta\) defines a compact oriented Riemannian principal \(\im\theta\)-bundle. Then, 
\[
	(C^\infty(X)_\Theta,L^2(X,E),D) \cong_{\im\theta} (C^\infty(B,A),L^2(B,A \otimes E^B),(D^B)_A),
\] 
where \(A \coloneqq X \times_{\im\theta} M_q(\bC)\), where \(E^B \coloneqq E/\im\theta\), and where \((D^B)_A\) is the twisting by the trivial flat connection on \(A\) of the restriction \(D^B\) of \(D\) to \(E^B\), to an essentially self-adjoint \(\im\theta\)-invariant Dirac-type operator on \(A \otimes E^B\). In other words, \((C^\infty(X)_\Theta,L^2(X,E),D)\) is \emph{almost-commutative} in the more general, topologically non-trivial sense first proposed by the author~\cite{Ca12,Ca13} and studied by Boeijink and Van Suijlekom~\cite{BVS} and by Boeijink and Van den Dungen~\cite{BVD}. Let us recall the relevant definitions.

\begin{definition}
Let \(X\) be a compact oriented Riemannian manifold. A \emph{self-adjoint Clifford module} on \(X\) is a Hermitian vector bundle \(E \to X\) together with a bundle morphism \(c : T^\ast X \to \End(E)\), the \emph{Clifford action}, satisfying
\begin{gather*}
	\forall \omega_1, \; \omega_2 \in C^\infty(X,T^\ast X), \quad c(\omega_1)c(\omega_2) + c(\omega_2)c(\omega_1) = -2g(\omega_1,\omega_2)\id_E,\\
	\forall \omega \in C^\infty(X,T^\ast X), \; \forall \xi_1, \; \xi_2 \in C^\infty(X,E), \quad \hp{c(\omega)\xi_1}{\xi_2} + \hp{\xi_1}{c(\omega)\xi_2} = 0.
\end{gather*}
If \(E \to X\) is a self-adjoint Clifford module, then we call an essentially self-adjoint Dirac-type operator \(D\) on \(E\) \emph{compatible} if \([D,f] = c(\dif f)\) for all \(f \in C^\infty(X)\).
\end{definition}

\begin{definition}[{cf.~\cite[Definition 1.16]{Ca13}}]
Let \(X\) be a compact oriented Riemannian manifold. We define an \emph{algebra bundle} on \(X\) to be a locally trivial bundle of finite-dimensional \Cstar-algebras on \(X\). If \(A \to X\) is an algebra bundle, we define an \emph{\(A\)-module} to be a Hermitian vector bundle \(E \to X\) together with a monomorphism \(L : A \to \End(E)\) of locally trivial bundles of finite-dimensional \Cstar-algebras on \(X\); in particular, we define a \emph{Clifford \(A\)-module} to be a self-adjoint Clifford module \(E \to X\) together with a monomorphism \(L : A \to \End(E)\) of locally trivial bundles of finite-dimensional \Cstar-algebras on \(X\), such that
\[
	\forall a \in C^\infty(X,A), \; \forall \omega \in C^\infty(X,T^\ast X), \quad [L(a),c(\omega)] = 0.
\]
\end{definition}

\begin{definition}[{\'C.~\cite[Definition 2.3]{Ca12}, cf.\ Boeijink--Van Suijlekom~\cite{BVS}, Boeijink--Van den Dungen~\cite{BVD}}]
A \emph{(concrete) almost-commutative spectral triple} is a spectral triple of the form \((C^\infty(X,A),L^2(X,E),D)\), where \(X\) is a compact oriented Riemannian manifold, called the \emph{base}, \(A \to X\) is an algebra bundle, \(E \to X\) is a Clifford \(A\)-module, and \(D\) is a compatible Dirac-type operator on \(E\).
\end{definition}

We can now summarise our discussion of rational \(\theta\)-commutative spectral triples.

\begin{theorem}\label{rational}
Let \(X\) be a compact oriented Riemannian \G-manifold, let \(E \to X\) be a \G-equivariant Hermitian vector bundle, and let \(D\) be an essentially self-adjoint \G-invariant Dirac-type operator on \(E\). Let \(\Theta \in  Z^2(\dual{G},\bT)\) be such that \([\Theta]\) is rational of order \(q \in \bN\), and suppose that \(\im[\Theta] \leq G\) acts properly on \(X\); hence, let \(B \coloneqq X/\im[\Theta]\), let \(E^B \coloneqq E/\im[\Theta]\), and let \(D^B\) denote the restriction of \(D\) to an essentially self-adjoint \G-invariant Dirac-type operator on \(E^B\). Let \(\Theta_\nd \in Z^2(\dual{\im[\Theta]},\bT)\) be a representative for the nondegenerate core of \([\Theta]\); hence, let \(A \coloneqq X \times_{\im[\Theta]} M_q(\bC)\) for \(M_q(\bC) \cong C^\infty(\im[\Theta])_{\Theta_\nd}\) \emph{qua} finite-dimensional \Cstar-algebra, so that \(A\) defines an \(A\)-module when endowed with the Hermitian metric
\[
	\forall a_1,a_2 \in C^\infty(B,A) \cong C^\infty(X,M_q(\bC))^{\im[\Theta]}, \quad \hp{a_1}{a_2} \coloneqq \Tr(a_1^\ast a_2),
\]
and the monomorphism \(L : A \to \End(A)\) of algebra bundles defined by
\[
	\forall a_1,a_2 \in C^\infty(B,A), \quad L(a_1)a_2 \coloneqq a_1a_2,
\]
Finally, let \((D^B)_A\) be the twisting of \(D^B\) by the trivial connection on \(A\) to a compatible \G-invariant Dirac-type operator on \(A \otimes E^B\). There exists an \(\im[\Theta]\)-equivalence
\[
	(C^\infty(X)_\Theta,L^2(X,E),D) \cong_{\im[\Theta]} (C^\infty(B,A),L^2(B,A \otimes E^B),(D^B)_A);
\]
in particular, \((C^\infty(X)_\Theta,L^2(X,E),D)\) is almost-commutative with base \(B\). 
\end{theorem}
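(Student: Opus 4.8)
The plan is to prove Theorem~\ref{rational} by chaining together the refined splitting homomorphism (Theorem~\ref{newsplit}), the Olesen--Pedersen--Takesaki structure theorem (Theorem~\ref{opt}), and a standard descent argument for free actions of a finite group. Write $\theta \coloneqq [\Theta] \in H^2(\dual{G},\bT)$ and let $\theta_\nd \in H^2(\dual{\im\theta},\bT)$ be its nondegenerate core. Since $\theta$ is rational of order $q$, the alternating bicharacter $\iot{\theta}$ has order dividing $q$, so every character $\iot{\theta}(\bx,\cdot)$ lies in the finite $q$-torsion subgroup of $G$; hence $\im\theta \leq G$ is \emph{finite}. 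Before invoking Theorem~\ref{newsplit} I would replace $\Theta$ by the pullback $\p{\Theta}(\bx,\by) \coloneqq \Theta_\nd([\bx],[\by])$ of the chosen representative $\Theta_\nd$ of $\theta_\nd$: one has $\iot{[\p{\Theta}]}(\bx,\by) = \iot{\theta_\nd}([\bx],[\by]) = \iot{\theta}(\bx,\by)$ for all $\bx$, $\by \in \dual{G}$, so $[\p{\Theta}] = \theta$ by Theorem~\ref{kleppner}, whence $(C^\infty(X)_\Theta,L^2(X,E),D) \cong_G (C^\infty(X)_{\p{\Theta}},L^2(X,E),D)$ by Theorem~\ref{equivariant}; since the right-hand side of the asserted equivalence depends only on $\theta$ and $\Theta_\nd$, this reduction is harmless.

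With $\Theta = \p{\Theta}$, Theorem~\ref{newsplit} gives an $\im\theta$-equivalence of $(C^\infty(X)_\Theta,L^2(X,E),D)$ with the triple
\[
	\left( C^\infty(\im\theta)_{\Theta_\nd} \boxtimes_{\im\theta} C^\infty(X),\; L^2(\im\theta) \boxtimes_{\im\theta} L^2(X,E),\; \rest{1 \otimes D}{L^2(\im\theta)\,\boxtimes_{\im\theta}\,L^2(X,E)} \right).
\]
Because $\im\theta$ is finite and $\theta_\nd$ is nondegenerate, Slawny's theorem (Theorem~\ref{slawny}) makes $C^\infty(\im\theta)_{\Theta_\nd}$ a finite-dimensional simple \Cstar-algebra, and Theorem~\ref{opt} furnishes an $\im\theta$-equivariant \Star-isomorphism $C^\infty(\im\theta)_{\Theta_\nd} \iso M_q(\bC)$. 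As is standard for a finite-dimensional simple \Cstar-algebra equipped with its normalised trace, this isomorphism identifies the left-regular representation $f \mapsto (g \mapsto f \star_{\Theta_\nd} g)$ on $L^2(\im\theta) = C^\infty(\im\theta)$ with left multiplication of $M_q(\bC)$ on itself endowed with the Frobenius inner product. Substituting, I obtain an $\im\theta$-equivalence with $\bigl( M_q(\bC) \boxtimes_{\im\theta} C^\infty(X),\, M_q(\bC) \boxtimes_{\im\theta} L^2(X,E),\, \rest{\id \otimes D}{M_q(\bC)\,\boxtimes_{\im\theta}\,L^2(X,E)} \bigr)$.

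The last step is to unwind the $\im\theta$-products by passing to the quotient. By the definition of the $\im\theta$-product, $M_q(\bC) \boxtimes_{\im\theta} C^\infty(X) = C^\infty(X,M_q(\bC))^{\im\theta}$ and $M_q(\bC) \boxtimes_{\im\theta} L^2(X,E) = L^2(X,M_q(\bC) \otimes E)^{\im\theta}$, with $\im\theta$ acting diagonally through its actions on $X$ and $E$ and through the action on $M_q(\bC)$ transported from $C^\infty(\im\theta)_{\Theta_\nd}$. Since the finite group $\im\theta$ acts freely on $X$ by orientation-preserving isometries, $X \surj B \coloneqq X/\im\theta$ is a principal $\im\theta$-bundle over a compact oriented Riemannian manifold, and the standard identification of sections of an associated bundle with equivariant functions/sections yields a \Star-isomorphism $C^\infty(X,M_q(\bC))^{\im\theta} \cong C^\infty(B,A)$ for the algebra bundle $A \coloneqq X \times_{\im\theta} M_q(\bC)$ and a unitary $L^2(X,M_q(\bC)\otimes E)^{\im\theta} \cong L^2(B,A \otimes E^B)$ for $E^B \coloneqq E/\im\theta$; under these, $\rest{\id \otimes D}{L^2(X,M_q(\bC)\otimes E)^{\im\theta}}$ becomes the twist $(D^B)_A$ of the descended Dirac-type operator $D^B$ by the canonical flat connection on $A$ coming from the trivialisation $A = (X \times M_q(\bC))/\im\theta$, since $\id \otimes D$ is trivial in the $M_q(\bC)$-direction. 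Composing everything gives the desired $\im\theta$-equivalence, and the triple on the right is visibly a concrete almost-commutative spectral triple with base $B$: $A \otimes E^B$ is a Clifford $A$-module for the left $A$-action on the first tensorand together with the Clifford action inherited from $E^B$, and $(D^B)_A$ is a compatible Dirac-type operator.

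The step I expect to be the main obstacle is this last identification: one must check carefully that the associated-bundle dictionary carries this restricted operator to precisely the twist of $D^B$ by the stated flat connection on $A$, and that the descended data $(B,E^B,D^B)$ really do satisfy the hypotheses of the definition of an almost-commutative spectral triple---in particular, that $E^B$ inherits a self-adjoint Clifford structure from $E$ and that $D^B$ is compatible with it, so that $(D^B)_A$ remains compatible after twisting. These are routine descent facts for a free action of a finite group by orientation-preserving isometries, but they carry the real bookkeeping; the remainder is a mechanical assembly of results already established.
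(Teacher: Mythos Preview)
Your proposal is correct and follows essentially the same route as the paper: the paper's argument, given in the discussion immediately preceding the theorem, likewise chains the refined splitting homomorphism (Theorem~\ref{newsplit}) with the Olesen--Pedersen--Takesaki identification $C^\infty(\im\theta)_{\Theta_\nd}\cong M_q(\bC)$ (Theorem~\ref{opt}), and then descends via the associated-bundle dictionary for the free action of the finite group $\im\theta$. Your added care in first replacing $\Theta$ by the pullback of $\Theta_\nd$ via Theorem~\ref{equivariant}, and your explicit flagging of the descent identification as the bookkeeping step, are entirely in keeping with the paper's treatment.
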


In fact, by the work of Boeijink and Van den Dungen~\cite[\S 3.3]{BVD}, we can interpet this result as giving a factorisation in unbounded \(KK\)-theory of rational \(\theta\)-commutative spectral triples satisfying the hypothesis of Theorem~\ref{rational} on the action of \(\im\theta\).

\begin{corollary}[{cf.\ Boeijink--Van den Dungen~\cite[Corollary 3.9]{BVD}}]
Under the hypotheses of Theorem~\ref{rational}, \((C^\infty(X)_\Theta,L^2(X,E),D)\) \emph{qua} unbounded \(K\)-cycle for the \Cstar-completion \(C(X)_\Theta\) of \(C^\infty(X)_\Theta\), factors as an internal Kasparov product
\[
	(L^2(X,E),D) \cong (C(B,A),0,\mathrm{d}) \otimes_{C(B)} (L^2(B,E^B),D^B);
\]
In particular, the data \((C(B,A),0,\mathrm{d})\) define a morphism \((C(X)_\Theta,L^2(X,E),D) \to (C(B),L^2(B,E^B),D^B)\) in Mesland's \(KK\)-theoretic category of spectral triples~\cite{Mesland}.
\end{corollary}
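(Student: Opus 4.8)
The plan is to obtain the corollary as a direct consequence of Theorem~\ref{rational} together with the unbundling machinery of Boeijink and Van den Dungen~\cite{BVD}*{\S 3.3}. First I would use Theorem~\ref{rational}, after passing to $C^*$-completions, to identify the $\im[\Theta]$-equivariant spectral triple $(C^\infty(X)_\Theta,L^2(X,E),D)$ with the concrete almost-commutative spectral triple $(C^\infty(B,A),L^2(B,A\otimes E^B),(D^B)_A)$ over the base $B = X/\im[\Theta]$; here $A = X\times_{\im[\Theta]} M_q(\bC)$ is the associated algebra bundle, $E^B = E/\im[\Theta]$, and $(D^B)_A$ is the twist of the compatible Dirac-type operator $D^B$ on $E^B$ by a connection on $A$. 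The feature that makes the situation tractable is that, since $[\Theta]$ is rational, the subgroup $\im[\Theta] \leq G$ is finite, so that $X \surj B$ is a \emph{finite} principal $\im[\Theta]$-bundle; consequently $A$ inherits a canonical \emph{flat} Hermitian connection $\nabla$ from the locally constant transition functions of the covering, and $(D^B)_A$ is precisely the twist of $D^B$ by $\nabla$. This is exactly the configuration treated in~\cite{BVD}*{\S 3.3}.

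Second, I would invoke the factorisation theorem of Boeijink and Van den Dungen. Under the hypotheses just assembled---$A\to B$ a locally trivial bundle of finite-dimensional $C^*$-algebras carrying the Frobenius metric and the flat connection $\nabla$, $E^B\to B$ a self-adjoint Clifford module with compatible Dirac-type operator $D^B$, and $A\otimes E^B$ the corresponding Clifford $A$-module with twisted operator $(D^B)_A$---their~\cite{BVD}*{Corollary 3.9}, \emph{mutatis mutandis}, yields an isomorphism of unbounded $K$-cycles
\[
	(L^2(B,A\otimes E^B),(D^B)_A) \cong (C(B,A),0,\mathrm{d}) \otimes_{C(B)} (L^2(B,E^B),D^B),
\]
where $(C(B,A),0,\mathrm{d})$ denotes the unbounded $C(B,A)$-$C(B)$-Kasparov module with module $C(B,A)$, zero operator, and connection $\mathrm{d}$ induced by $\nabla$, and where the internal product is formed along the left $C(B)$-action on $L^2(B,E^B)$. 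The point at which one must be careful is the verification that the data produced by Theorem~\ref{rational} satisfy the standing assumptions of~\cite{BVD}*{\S 3.3}; this is essentially immediate, since those assumptions---a Hermitian, flat connection on $A$, a compatible Dirac-type operator, and the twist construction---are supplied verbatim by that theorem. Transporting this isomorphism back along the $\im[\Theta]$-equivalence of Theorem~\ref{rational} and the identification $C(X)_\Theta \cong C(B,A)$ gives the asserted factorisation
\[
	(L^2(X,E),D) \cong (C(B,A),0,\mathrm{d}) \otimes_{C(B)} (L^2(B,E^B),D^B).
\]

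Third, for the final assertion I would recall that, in Mesland's framework~\cite{Mesland}, an unbounded Kasparov module equipped with a connection whose internal product with one spectral triple recovers another constitutes, by definition, a morphism between those spectral triples; this is precisely the language in which Boeijink and Van den Dungen phrase their result. Hence the triple $(C(B,A),0,\mathrm{d})$, being such a correspondence by the preceding paragraph, defines a morphism $(C(X)_\Theta,L^2(X,E),D) \to (C(B),L^2(B,E^B),D^B)$ in Mesland's $KK$-theoretic category of spectral triples, as claimed.

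I expect the main obstacle to be bookkeeping rather than substance: carefully matching the differential-geometric output of Theorem~\ref{rational}---the flat connection $\nabla$ on $A$, the Clifford $A$-module structure of $A\otimes E^B$, and the twisted operator $(D^B)_A$---against the exact hypotheses of~\cite{BVD}*{\S 3.3}, and checking that the passage from the Fr\'echet pre-$C^*$-level of Theorem~\ref{rational} to the $C^*$-level of $C(X)_\Theta$ causes no difficulty. The one genuinely load-bearing point is the flatness of $\nabla$: it is what allows the ``vertical'' operator of $(C(B,A),0,\mathrm{d})$ to be taken to be exactly $0$, so that the Kasparov factorisation holds on the nose rather than merely up to a lower-order perturbation.
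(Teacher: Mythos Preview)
Your proposal is correct and matches the paper's approach exactly: the paper gives no explicit proof of this corollary, instead presenting it as an immediate consequence of Theorem~\ref{rational} together with \cite{BVD}*{\S 3.3}, and your argument simply unpacks that citation. The details you supply---identifying the almost-commutative data via Theorem~\ref{rational}, verifying that the flat connection on \(A\) places us squarely within the hypotheses of \cite{BVD}*{Corollary~3.9}, and reading off the Mesland morphism---are precisely what the paper leaves implicit.
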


\begin{remark}
Still more is true if \(H^3(B,\bZ)\) has no \(q\)-torsion, for then, the Dixmier--Douady class of the finite rank Azumaya bundle \(M_q(\bC) \to A \to B\) necessarily vanishes, and hence \(A = \End(F) = F \otimes F^\ast\) for some Hermitian vector bundle \(F \to B\), so that \((C^\infty(X)_\Theta,L^2(X,E),D)\), with respect to its \(C^\infty(X)_\Theta\)-bimodule structure, is Morita equivalent to \((C^\infty(B),L^2(B,E^B),D^B)\).
\end{remark}

Moreover, by existing results concerning the spectral action on finite normal Riemannian coverings~\cite[\S 3]{CMT}, Theorem~\ref{rational} even has implications for the spectral action on rational \(\theta\)-commutative spectral triples.

\begin{corollary}[{cf.\ \'C.--Marcolli--Teh~\cite[Theorem 3.6]{CMT}}]
Let the \(\theta\)-commutative spectral triple \((C^\infty(X)_\Theta,L^2(X,E),D)\) satisfy the hypotheses of Theorem~\ref{rational}. Let \(f \coloneqq \mathcal{L}[\phi] \circ (x \mapsto x^2) : \bR \to \bC\) for \(\mathcal{L}[\phi]\) the Laplace transform of a rapidly decreasing function \(\phi : [0,\infty) \to \bC\). For all \(\bA = \bA^\ast \in \End_{C^\infty(X)_\Theta^\op}(C^\infty(X,E))\),
\[
	\Tr f\left(\frac{1}{\Lambda}(D+\bA)\right) = \frac{1}{q^2} \Tr f \left(\frac{1}{\Lambda}(\id \otimes D + \Phi_\theta\bA\Phi_\theta^\ast)\right) + O(\Lambda^{-\infty}), \quad 0 < \Lambda \to +\infty
\]
\end{corollary}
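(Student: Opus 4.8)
The plan is to push the whole computation, via the refined splitting homomorphism of Theorem~\ref{newsplit}, into the almost-commutative picture furnished by Theorem~\ref{rational}, and there invoke the spectral-action asymptotics for finite normal Riemannian coverings. Set \(\Gamma \coloneqq \im[\Theta]\). By Theorem~\ref{rational} we have \(C^\infty(\im[\Theta])_{\Theta_\nd} \cong M_q(\bC)\), so \(\Gamma\) is finite of order \(q^2\); by hypothesis it acts freely and properly on \(X\) by orientation-preserving isometries, whence \(\pi \colon X \surj B \coloneqq X/\Gamma\) is a finite normal Riemannian covering of degree \(q^2\). Moreover, \((\phi_\theta,\Phi_\theta)\) furnishes an \(\im[\Theta]\)-equivalence
\[
	(C^\infty(X)_\Theta,L^2(X,E),D) \cong_{\im[\Theta]} \left(C^\infty(B,A),L^2(B,A\otimes E^B),(D^B)_A\right);
\]
in particular \(\Phi_\theta\) is unitary with \(\Phi_\theta D \Phi_\theta^\ast = (D^B)_A\), and, since the right \Star-representation \(R_\Theta = \pi_{-\iot{[\Theta]}} \circ L_\Theta\) of \((\cA_\Theta)^\op\) is built functorially from \(L_\Theta\) and the \(\dual{G}\)-grading, this equivalence also intertwines \(R_\Theta\) with the corresponding right \Star-representation of \(C^\infty(B,A)^\op\). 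Hence \(\Phi_\theta\) carries a self-adjoint \(\mathbf{A} \in \End_{C^\infty(X)_\Theta^\op}(C^\infty(X,E))\) to a self-adjoint fluctuation \(\mathbf{A}' \coloneqq \Phi_\theta \mathbf{A}\Phi_\theta^\ast \in \End_{C^\infty(B,A)^\op}(C^\infty(B,A\otimes E^B))\) of the almost-commutative triple over \(B\), and by unitary invariance of the trace,
\[
	\Tr f\!\left(\tfrac1\Lambda(D+\mathbf{A})\right) = \Tr f\!\left(\tfrac1\Lambda\big((D^B)_A + \mathbf{A}'\big)\right).
\]

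Next I would pull everything back along \(\pi\). Since \(A = X \times_{\im[\Theta]} M_q(\bC)\) becomes \(\im[\Theta]\)-equivariantly trivial on \(X\), \(E^B = E/\im[\Theta]\) pulls back to \(E\), and \((D^B)_A\) is the twist of \(D^B\) by the trivial flat connection on \(A\), the pullback of \(\left(C^\infty(B,A),L^2(B,A\otimes E^B),(D^B)_A\right)\) along \(\pi\) is precisely the covering almost-commutative triple \(\left(C^\infty(X,M_q(\bC)),L^2(X,M_q(\bC)\otimes E),\id\otimes D\right)\), where \(\id\otimes D = \id_{M_q(\bC)}\otimes D\) acts on \(M_q(\bC)\otimes L^2(X,E)\); write \(\widetilde{\mathbf{A}'}\) for the lift of \(\mathbf{A}'\) to this triple, which is the operator denoted \(\Phi_\theta\mathbf{A}\Phi_\theta^\ast\) in the statement. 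Because \(f = \mathcal{L}[\phi]\circ(x\mapsto x^2)\) is a superposition of Gaussians \(e^{-sx^2}\), it is exactly of the shape to which \'C.--Marcolli--Teh~\cite{CMT}*{Theorem 3.6} applies (the bounded self-adjoint perturbations \(\mathbf{A}'\), \(\widetilde{\mathbf{A}'}\) leave the principal symbols unchanged, so the covering argument of~\cite{CMT} applies verbatim): for a finite normal Riemannian covering of degree \(q^2\) with freely acting deck group \(\im[\Theta]\), and for a Dirac-type operator on the base fluctuated by \(\mathbf{A}'\) together with its lift on the total space,
\[
	\Tr f\!\left(\tfrac1\Lambda\big(\id\otimes D + \widetilde{\mathbf{A}'}\big)\right) = q^2\,\Tr f\!\left(\tfrac1\Lambda\big((D^B)_A + \mathbf{A}'\big)\right) + O(\Lambda^{-\infty}), \qquad 0 < \Lambda \to +\infty,
\]
the contributions of the non-identity elements of \(\im[\Theta]\) being \(O(\Lambda^{-\infty})\) by finite propagation speed and the absence of fixed points. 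Dividing by \(q^2\) and combining with the two displays of the previous paragraph yields the assertion.

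The main obstacle is the bookkeeping underlying the first step: one must check that the fluctuation \(\mathbf{A}\), which arises from the right \((\cA_\Theta)^\op\)-module structure \(R_\Theta\) rather than from the left representation \(L_\Theta\), is carried by the \(\im[\Theta]\)-equivalence of Theorem~\ref{rational} to a bona fide self-adjoint fluctuation of the almost-commutative triple over \(B\), and that its lift to \(X\) is the operator \(\Phi_\theta\mathbf{A}\Phi_\theta^\ast\) appearing in the statement. Concretely, this means tracking the \(\boxtimes_{\im[\Theta]}\)-identifications of Theorems~\ref{newsplit} and~\ref{opt} through the right-module structures, using that under \(\Phi_\theta\) the right representation \(R_\Theta\) of \((\cA_\Theta)^\op\) corresponds to pointwise multiplication by \(C^\infty(B,A)\) on \(L^2(B,A\otimes E^B)\)---equivalently, to left multiplication by \(M_q(\bC)\) on the cover, which commutes with the \((\id\otimes D)\)-fluctuations there. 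Granting this, the corollary follows formally from Theorem~\ref{rational}, unitary invariance of the trace, and the covering estimate of~\cite{CMT}.
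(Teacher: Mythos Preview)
Your proposal is correct and follows precisely the approach the paper has in mind: the paper states this result as a corollary without proof, relying on the sentence preceding it (``by existing results concerning the spectral action on finite normal Riemannian coverings~\cite{CMT}*{\S 3}, Theorem~\ref{rational} even has implications\ldots'') together with the citation in the corollary's header. Your argument---transport via the \(\im[\Theta]\)-equivalence of Theorem~\ref{rational} to the almost-commutative triple over \(B\), then lift along the degree-\(q^2\) normal Riemannian covering \(X \to B\) and invoke~\cite{CMT}*{Theorem 3.6}---is exactly what the paper intends, only spelled out in detail; your observation that \(\lvert\im[\Theta]\rvert = \dim_{\bC} M_q(\bC) = q^2\) and your caveat about tracking the right \((\cA_\Theta)^\op\)-module structure through \(\Phi_\theta\) are both apt elaborations of points the paper leaves implicit.
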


\begin{remark}
In particular, this implies that \((C^\infty(X)_\Theta,L^2(X,E),D)\), endowed with a suitably deformed \G-invariant real structure and the (asymptotic) spectral action, can be interpreted as defining a topologically non-trivial Yang--Mills theory with gauge group \(\operatorname{PSU}(q)\) in the sense of Boeijink and Van Suijlekom~\cite{BVS}.
\end{remark}

Let us conclude by applying Theorem~\ref{rational} to the canonical example of the rational noncommutative \(2\)-torus.

\begin{example}[H{\o}egh-Krohn--Skjelbred~\cite{HKS}, cf.\ Rieffel~\cite{Rieffel83}]
Let \(\theta \in H^2(\bZ^2,\bT) \cong \bT\) be rational of order \(q \in \bN\), so that \(\theta = \exp(p/q)\) for \(p \in \bZ\) coprime with \(q\). Since \(\iot{\theta}(\bx,\by) = \theta(x_1 y_2 - x_2 y_1)\) for all \(\bx\), \(\by \in \bZ^2\), it follows that 
\[
	\im\theta = (q^{-1}\bZ^2)/\bZ^2 \cong \bZ_q \times \bZ_q, \quad \Ker\theta = q\bZ^2, \quad \dual{\im\theta} = \bZ^2/q\bZ^2 \cong \bZ_q \times \bZ_q.
\]
One can then check that the representative \(\Theta \in B(\bZ^2)\) of \(\theta\) defined by \(\Theta(\bx,\by) \coloneqq \theta x_1 y_2\) for all \(\bx\), \(\by \in \bZ^2\) descends to a representative \(\Theta_\nd \in B(\dual{\im\theta})\) of the nondegenerate core of \(\theta\), and that the isomorphism \(C^\infty(\im\theta)_{\Theta_\nd} \iso M_q(\bC)\) of Theorem~\ref{opt} is given by \(U_{[(1,0)]} \mapsto U_q\), \(U_{[(0,1)]} \mapsto V_q\), where the unitaries \(U_q\), \(V_q \in M_q(\bC)\) act on the standard ordered basis \(\set{e_k}_{k=0}^{q-1}\) of \(\bC^q\) by
\[
	\forall 0 \leq k \leq q-1, \quad U_q e_k \coloneqq e^{2\pi i k \theta} e_k, \quad V_q e_k \coloneqq e_{k-1 \bmod q};
\]
in particular, the translation action of \(\im\theta\) on \(C^\infty(\im\theta)_{\Theta_\nd}\) translates to the following action of \(\bZ_q \times \bZ_q\) on \(M_q(\bC)\):
\[
	\forall \bx,\; \by \in \bZ^2, \quad ([x_1],[x_2]) \cdot U^{y_1}_q V^{y_2}_q \coloneqq e^{2\pi i \theta(x_1 y_2 - x_2 y_1)} U^{y_1}_q V^{y_2}_q.
\]
Since \(\im\theta \leq \bT^2\) acts freely and properly on \(\bT^2\) by translation, and since \(\bT^2/\im\theta \cong \bT^2\), it follows that \(C^\infty(\bT^2)_\Theta \cong C^\infty(\bT^2,A_\theta)\), where \(A_\theta \coloneqq \bT^2 \times_{\bZ_q \times \bZ_q} M_q(\bC)\); in fact, since \(H^3(\bT^2,\bZ) = 0\), it follows that \(A_\theta \cong \End(F_\theta)\) for some Hermitian vector bundle \(F_\theta \to \bT^2\), which can also be constructed explicitly. 
\end{example}

\section*{Acknowledgments}
The author would like to thank Matilde Marcolli, his thesis advisor at the Max Planck Institute for Mathematics and at the California Institute of Technology, and Guoliang Yu, his postdoctoral mentor at Texas A\&M University, for their help, guidance, and support, and Jan Jitse Venselaar for numerous helpful conversations. The author would also like to thank Masoud Khalkhali, Andrzej Sitarz, Zhizhang Xie, and Makoto Yamashita for helpful conversations and correspondence. The author gratefully acknowledges the financial and administrative support of the Departments of Mathematics at the California Institute of Technology and at Texas A\&M University and the hospitality of the Fields Institute and of the Departments of Mathematics at the California Institute of Technology and the University of Western Ontario. Moreover, the author gratefully acknowledges the support of an AMS--Simons Travel Grant as well as earlier travel support from NSF grant DMS 1266158 in the context of the Focus Program on Noncommutative Geometry and Quantum Groups in honour of Marc A.\ Rieffel.

\appendix
\section{Corrected orientability in the \texorpdfstring{\(0\)}{0}-commutative case}

In this new appendix, we show that our correction to Definition~\ref{def:thetacomm}.\ref{item:orient} is still satisfied by a \(G\)-equivariant concrete commutative spectral triple.
Again, given a Fr\'{e}chet \(G\)-\(\ast\)-algebra \(\cA\), let \[\cA^\fin \coloneqq \bigoplus_{\bx \in \dual{G}}^{\mathrm{alg}} \cA_\bx.\]

\begin{proposition}\label{prop:appx}
	Let \((X,g)\) be a \(p\)-dimensional compact oriented Riemannian \(G\)-manifold,  \(E \to X\) a \(G\)-equivariant Hermitian	vector bundle, and \(D\) a \(G\)-invariant self-adjoint Dirac-type operator on \(E\).
	Then \((C^\infty(X),L^2(X,E),D)\) defines a \(p\)-dimensional \(0\)-commutative spectral triple with respect to the corrected definition.
\end{proposition}

\begin{lemma}[{Proof of Lemma~\ref{fgpmod}}]\label{lemma1}
	Let \(\cA\) be a Fr\'{e}chet pre-\(G\)-\(C^\ast\)-algebra, \(\cE\) a Hermitian f.g.p.\ \(G\)-\(\cA\)-module, and \(\set{\xi_1,\dotsc,\xi_n} \subset \cE\) a finite algebraic generating set for the right \(\cA\)-module \(\cE\).
	If \(\set{\eta_1,\dotsc,\eta_n} \subset \cE\) satisfies \(\norm{\xi_i-\eta_i} < \tfrac{1}{n}\) for each \(i \in \set{1,\dotsc,n}\), then it algebraically generates the right \(\cA\)-module \(\cE\).
\end{lemma}

\begin{lemma}\label{lemma2}
	Let \(X\) be a compact \(G\)-manifold.
	There exist \(\bx_1,\dotsc,\bx_m \in \dual{G}\) and non-zero \(b_1 \in C^\infty(X)_{\bx_1}\),\,\ldots, \(b_m \in C^\infty(X)_{\bx_m}\) such that \(\set{\du{b_1},\dotsc,\du{b_m}}\) algebraically generates the \(C^\infty(X)\)-module \(\Omega^1(X)\).
\end{lemma}

\begin{proof}
	First, using a finite atlas for \(X\) together with a subordinate smooth partition of unity, construct \(a_1,\dotsc,a_n \in C^\infty(X)\) such that \(\set{\du{a_1},\dotsc,\du{a_n}}\) algebraically generates the \(C^\infty(X)\)-module \(\Omega^1(X)\).
	Next, fix a \(G\)-invariant Riemannian metric \(g\) on \(X\), thereby making \(\Omega^1(X)\) into a Hermitian f.g.p.\ \(G\)-\(C^\infty(X)\)-module suitably topologized by a countable family of norms including the pre-Hilbert \(C^\infty(X)\)-module norm \(\norm{\cdot}\) on \(\Omega^1(X)\) induced by \(g\), see \S\S~\ref{ssec:peterweyl} and \ref{moduledeformsec}.
	Now, for each \(i \in \set{1,\dotsc,n}\), since the Fourier expansion \(\du{a_i} = \sum_{\bx \in \dual{G}}\widehat{\du{a_i}}(\bx)\) is, in particular, absolutely convergent with respect to \(\norm{\cdot}\), there exist finite \(F_i \subset \dual{G}\) such that \(\norm{\du{a_i} - \sum_{\bx \in F_i}\widehat{\du{a_i}}(\bx)} < \tfrac{1}{n}\).
	Thus, by \(G\)-equivariance and Fr\'{e}chet-continuity of \(\du : C^\infty(X) \to \Omega^1(X)\), it follows, for each \(i \in \set{1,\dotsc,n}\), that \(\du{a_i} = \du{a_i^\prime}\), where \(a_i^\prime \coloneqq  \sum_{\bx \in F_i}\widehat{a_i}(\bx) \in C^\infty(X)^\fin\).
	By Lemma~\ref{lemma1}, it now follows that \(\set{\du{a_1^\prime},\dotsc,\du{a_m^\prime}}\) algebraically generates the \(C^\infty(X)\)-module \(\Omega^1(X)\).	
	Finally, take \(\set{b_1,\dotsc,b_m}\) to be the distinct non-zero elements of the finite set \(\set{\widehat{a_i}(\bx) \given 1 \leq i \leq n,\,\bx \in \dual{G}}\), whence, for each \(k \in \set{1,\dotsc,m}\), we determine \(\bx_k \in \dual{G}\) by the inclusion \(b_k \in C^\infty(X)_{\bx_k}\).
\end{proof}

\begin{proof}[Proof of Proposition~\ref{prop:appx}]
	The only remaining issue is proving that the corrected orientability condition is satisfied.
	Let \(\star\) be the \(G\)-invariant Hodge star operator on \((M,g)\) with its given orientation.
	Let \(\pi_\wedge : C^\infty(X)^{\otimes(p+1)} \to \Omega^p(X)\) denote the surjective \(G\)-equivariant left \(C^\infty(X)\)-linear map given by
	\[
		\pi_\wedge(c_0 \otimes c_1 \otimes \cdots \otimes c_p) \coloneqq c_0 \cdot \du{c_1} \wedge \cdots \wedge \du{c_p}
	\]
	for all \(c_0,c_1,\dotsc,c_p \in C^\infty(X)\).
	By~\cite[Proof of Thm 11.4]{Con13}, \emph{mutatis mutandis}, it therefore suffices to find \(G\)-invariant \(0\)-antisymmetric \(\bc \in (C^\infty(X)^\fin)^{\otimes(p+1)}\) such that \(\pi_\wedge(\bc) = \star(1)\).
	
	By Lemma~\ref{lemma2}, there exist characters \(\bx_1,\dotsc,\bx_m \in \dual{G}\) and non-zero isotypical vectors \(b_1 \in C^\infty(X)_{\bx_1}\),\,\ldots ,\(b_m \in C^\infty(X)_{\bx_m}\) such that \(\set{\du{b_1},\dotsc,\du{b_m}}\) algebraically generates the \(C^\infty(X)\)-module \(\Omega^1(X)\).
	Given \(i_1,\dotsc,i_p \in \set{1,\dotsc,m}\), since
	\[
		\star(\du{b_{i_1}} \wedge \cdots \wedge \du{b_{i_p}})(t \cdot x) = \exp\mleft(2\pi i\sum_{k=1}^p \langle \bx_{i_k}, t \rangle\mright) \star(\du{b_{i_1}} \wedge \cdots \wedge \du{b_{i_p}})(x)
	\]
	for all \(x \in X\), and \(t \in G\), it follows that the zero locus of \(\du{b_{i_1}} \wedge \cdots \wedge \du{b_{i_p}} \in \Omega^p(X)\) is a \(G\)-invariant subset of \(X\).
	Thus, the compact \(G\)-manifold \(X\) admits a finite cover \(\set{U_1,\dotsc,U_q}\) by \(G\)-invariant open sets, where, for each \(1 \leq j \leq q\), there exist indices \(1 \leq i_{j;1} < \cdots < i_{j;p} \leq m\) such that \(\du{b_{i_{j;1}}} \wedge \cdots \wedge \du{b_{i_{j;p}}}\) is nowhere vanishing on \(U_j\); for convenience, set \(\by_j \coloneqq \sum_{k=1}^p \bx_{i_{j;k}}\) for each \(j \in \set{1,\dotsc,q}\).
	Finally, fix a \(G\)-invariant smooth partition of unity \(\set{\rho_1,\dotsc,\rho_q}\) for \(X\) subordinate to \(\set{U_1,\dotsc,U_q}\), and observe that
	\(
		\star(1) = \sum_{j=1}^q a_j \cdot \du{b_{i_{j;1}}} \wedge \cdots \wedge \du{b_{i_{j;p}}},
	\)
	where
	\[
		a_j \coloneqq \rho_j \cdot \star(\du{b_{i_{j;1}}} \wedge \cdots \wedge \du{b_{i_{j;p}}})^{-1} \in C^\infty_c(U_j) \cap C^\infty(X)_{-\by_j}.
	\]
	for each \(j \in \set{1,\dotsc,q}\).
	Thus, at last, we may simply take
	\[
		\bc \coloneqq \sum_{j=1}^q \frac{1}{p!} \sum_{\sigma \in S_p} (-1)^\sigma a_j \otimes b_{i_{j;\sigma(1)}} \otimes \cdots \otimes b_{i_{j;\sigma(p)}}. \qedhere
	\]
\end{proof}

\bibliography{refs}{}
\bibliographystyle{amsplain}

\end{document}